\newcommand{\Ket}[1]{| #1 \rangle}
\newcommand{\Bra}[1]{\langle #1|}
\newcommand{\be}{\begin{equation}}
\newcommand{\ee}{\end{equation}}
\newcommand{\bea}{\begin{eqnarray}}
\newcommand{\eea}{\end{eqnarray}}
\newcommand{\bes}{\begin{equation*}}
\newcommand{\ees}{\end{equation*}}
\newcommand{\beas}{\begin{eqnarray*}}
\newcommand{\eeas}{\end{eqnarray*}}
\newtheorem{thm}{Theorem}[section]
\newtheorem{prop}[thm]{Proposition}
\newtheorem{cor}[thm]{Corollary}
\newtheorem{lem}[thm]{Lemma}
\newtheorem{prob}[thm]{Problem}
\newtheorem{def-prop}[thm]{Definition-Proposition}
\theoremstyle{definition}
\newtheorem{defi}[thm]{Definition}
\newtheorem{rmk}[thm]{Remark}
\numberwithin{equation}{section}
\newcommand{\nc}{\newcommand}
\nc{\on}{\operatorname}
\nc{\mb}{\mathbb}
\nc{\mbf}{\mathbf}
\nc{\mc}{\mathcal}
\nc{\mf}{\mathfrak}
\nc{\mr}{\mathrm}
\nc{\msr}{\mathscr}
\nc{\bra}{\langle}
\nc{\ket}{\rangle}
\nc{\tr}{\on{Tr}}
\nc{\Perf}{\on{Perf}}
\nc{\Fq}{{\mb F}_q}
\nc{\Newton}{X_*(T)^{+}_{\mb Q}}
\nc{\Kottwitz}{\pi_1(G_{\overline{E}})_{\Gamma}}
\nc{\Bun}{\on{Bun}}
\nc{\Gr}{\on{Gr}}
\nc{\Hck}{\mc Hck}
\nc{\Dbz}{\mathbb{D}_{\mathrm{BZ}}}
\nc{\Rihom}{R\mathscr Hom}
\nc{\red}{\textcolor{red}}
\def\a{{\boldsymbol{a}}}
\def\g{{\boldsymbol{g}}}
\def\v{{\boldsymbol{v}}}
\def\H{{\mbf H}}
\def\x{{\boldsymbol{x}}}
\def\0{{\boldsymbol{0}}}
\def\y{{\boldsymbol{y}}}
\def\z{{\boldsymbol{z}}}
\def\r{{\boldsymbol{r}}}
\nc{\Cat}{\mathbf{Cat_{\infty}}}
\nc{\Corr}{\on{Corr}(\mc C,E)}
\nc{\LZ}{\on{LZ}_{\mc D}}
\nc{\Cl}{\on{C\ell}}
\newcommand{\overbar}[1]
{\mkern 1.5mu\overline{\mkern-3.0mu#1\mkern-0.5mu}\mkern 1.5mu}
\title{Quantum spectral method for gradient and Hessian estimation}
\author{Yuxin Zhang\thanks{zhangyuxin@amss.ac.cn} \, and Changpeng Shao\thanks{changpeng.shao@amss.ac.cn}}
\affil{SKLMS, Academy of Mathematics and Systems Science, Chinese Academy of Sciences, Beijing, 100190 China}
\begin{document}

\maketitle
\thispagestyle{empty}

\begin{abstract}

Gradient descent is one of the most basic algorithms for solving continuous optimization problems. In 
[\href{https://journals.aps.org/prl/abstract/10.1103/PhysRevLett.95.050501}{Jordan, PRL, 95(5):050501, 2005}], Jordan proposed the first quantum algorithm for estimating gradients of functions close to linear, with exponential speedup in the black-box model. This algorithm was greatly enhanced and developed by [\href{https://epubs.siam.org/doi/abs/10.1137/1.9781611975482.87}{Gily\'{e}n, Arunachalam, and Wiebe, SODA, pp. 1425-1444, 2019}], providing a quantum algorithm with optimal query complexity $\widetilde{\Theta}(\sqrt{d}/\varepsilon)$ for a class of smooth functions of $d$ variables, where $\varepsilon$ is the accuracy. This is quadratically faster than classical algorithms for the same problem.

In this work, we continue this research by proposing a new quantum algorithm for another class of functions, namely, analytic functions $f(\x)$ which are well-defined over the complex field. Given phase oracles to query the real and imaginary parts of $f(\x)$ respectively, we propose a quantum algorithm that returns an $\varepsilon$-approximation of its gradient with query complexity $\widetilde{O}(1/\varepsilon)$. 
%This achieves exponential speedup over classical algorithms in terms of the dimension $d$. 
As an extension, we also propose two quantum algorithms for Hessian estimation, aiming to improve quantum analogs of Newton's method. The two algorithms have query complexity $\widetilde{O}(d/\varepsilon)$ and $\widetilde{O}(d^{1.5}/\varepsilon)$, respectively, under different assumptions. 
Moreover, if the Hessian is promised to be $s$-sparse, we then have two new quantum algorithms with query complexity $\widetilde{O}(s/\varepsilon)$ and $\widetilde{O}(sd/\varepsilon)$, respectively. 
%The former achieves exponential speedup over classical algorithms. 
We also prove a lower bound of $\widetilde{\Omega}(d)$ for Hessian estimation in the general case.

\end{abstract}

\newpage
\pagenumbering{roman}

%\newgeometry{left=2.7cm,right=2.7cm}
\tableofcontents
\newpage
%\restoregeometry
\pagenumbering{arabic}
\setcounter{page}{1}

\section{Introduction}

%\cs{The method is appropriate for computing the derivatives $f^{(n)}(x)$ of any analytic function which can be evaluated at points in the complex plane near $x$.}

Efficient estimation of the gradient and Hessian of functions plays a vital role across a wide range of fields, including mathematical optimization \cite{boyd2004convex}, machine learning \cite{sra2012optimization}, and others. For instance, in optimization problems, the gradient provides essential information about the direction of the steepest ascent or descent of the given function, guiding the search for optimal solutions. The Hessian matrix captures the second-order curvature and local behavior of the function, which is crucial for assessing the nature of these solutions\textemdash whether they are minima, maxima, or saddle points. They are widely used in many fundamental optimization algorithms, such as gradient descent and Newton's method. 

Given a real-valued function $f(\x):\mathbb{R}^d \rightarrow \mathbb{R}$ of $d$ variables, its gradient at point $\a$ is defined as $\nabla f(\a) = (\frac{\partial f(\x)}{\partial x_1}, \ldots, \frac{\partial f(\x)}{\partial x_d})|_{\x=\a}$.
%and its Hessian matrix is defined as $\H_f(\a)=(\frac{\partial^2 f(\x)}{\partial x_i \partial x_j})|_{\x=\a}$.
As the gradient contains $d$ entries, gradient estimation can be computationally expensive, especially in high-dimensional cases. This challenge has stimulated significant efforts toward developing efficient quantum algorithms for gradient computation. Jordan \cite{jordan2005fast} proposed a fast quantum algorithm for numerical gradient estimation, which is a landmark contribution in this area. Particularly, if the function is close to linear, then the query complexity is $O(1)$, which is exponentially faster than classical algorithms.
In the information-theoretic sense, the classical gradient computation algorithm has a complexity of $\Omega(d)$ in general.
Later, Gily\'{e}n, Arunachalam, and Wiebe \cite{gilyen2019optimizing} significantly developed Jordan's algorithm and improved the efficiency using a higher-order finite difference formula \cite{li2005general}. Their algorithm requires $\widetilde{\Theta}(\sqrt{d}/\varepsilon)$ optimal quantum queries for a class of smooth functions, where $\varepsilon$ is the accuracy of approximating the gradient under the $\ell_\infty$-norm. This is quadratically more efficient than classical algorithms.
As a result, these fast quantum algorithms for computing the gradient
become an important subroutine of many quantum algorithms, such
as quantum algorithms for convex optimization \cite{chakrabarti2020quantum,van2020convex}, quantum tomography \cite{van2023quantum}, quantum algorithms for multiple expectation-value estimation \cite{huggins2022nearly}, and quantum reinforcement learning \cite{jerbi2023quantum}, among others.

In the exploration of quantum advantages, quantum algorithms with exponential speed-up are very attractive, and other quantum algorithms based on them can lead to large quantum speedups over classical algorithms for practically relevant applications, such as  \cite{chakrabarti2020quantum,van2020convex,van2023quantum}. These algorithms are indeed still based on Jordan's algorithm. Therefore, it is interesting to know more about the existence of exponential quantum speedups for gradient estimation.

With this motivation in mind, in this work, we propose a new quantum algorithm for gradient estimation for another class of functions. More importantly, this algorithm achieves exponential speedup over classical algorithms.
We will consider analytic functions that are allowed to take values from the complex field, and we assume we are given quantum oracles that can query the real and imaginary parts. Our main result roughly states that for any analytic function with phase oracles to query the real and imaginary parts respectively, there is a quantum algorithm that approximates its gradient, which only costs $\widetilde{O}(1/\varepsilon)$ quantum queries. Unlike previous quantum algorithms \cite{jordan2005fast,gilyen2019optimizing}, which are based on higher-order finite difference method \cite{li2005general}, our algorithm is based on the spectral method \cite{fornberg1981numerical,trefethen2000spectral}. Apart from this, another difference is the oracle, as we are dealing with complex-valued functions. We will explain this in detail in Subsection \ref{subsec: intro, comparison}.
It is interesting to note that previous quantum algorithms for solving differential equations \cite{berry2014high,childs2020quantum} already showed that the spectral method is more efficient than the higher-order finite difference method.

As a generalization of Gily\'{e}n-Arunachalam-Wiebe's algorithm and our algorithm for gradient estimation, we propose two quantum algorithms for estimating the Hessian of analytic functions under the matrix max-norm. Both are polynomially faster than classical algorithms. We will show that polynomial speedup for Hessian estimation is the best we can expect by proving a lower bound of $\widetilde{\Omega}(d)$. However, for sparse Hessians, which are very common in practice, we show that larger quantum speedups than polynomial exist.
Notice that the Hessian is widely used in many fundamental optimization algorithms, perhaps the most famous one is Newton's method. Thus, we hope our quantum algorithms can be used to accelerate Newton's method for some practical applications.

In the following subsections, we explain our results in more detail.

\subsection{Main results}

Let $f: \mb C^d \rightarrow \mb C$ be an analytic function that maps $\mb R^d$ into $\mb R$. 
Since $f$ is analytic, the latter property implies that the gradient satisfies $\nabla f(\mb R^d) \subseteq \mb R^d$. For a complex vector $\x=\x_1+i \x_2 \in \mathbb{C}^d$ and a complex function $f(\x)=f_1(\x)+i f_2(\x) \in \mathbb{C}$, where $i$ denotes the imaginary unit, we naturally store them in quantum registers as follows:
\begin{equation}\label{eq: intro, store complex numbers}
    |\boldsymbol{x}\rangle:=\left|\boldsymbol{x}_1\right\rangle\left|\boldsymbol{x}_2\right\rangle, \quad|f(\boldsymbol{x})\rangle:=\left|f_1(\boldsymbol{x})\right\rangle\left|f_2(\boldsymbol{x})\right\rangle .
\end{equation}
For real numbers $r\in \mb R$, we use the standard and most common storage method, representing $r$ as its finite-precision binary encoding. As before, we also assume phase oracle access to query $f$. Specifically, we assume
%that we are given
phase oracles $O_{f_1}, O_{f_2}$ for the real and imaginary parts of $f=f_1+i f_2$ separately:
\be 
\label{intro: phase oracle}
O_{f_{1}}: \Ket{\x} \rightarrow e^{i f_{1}(\x)}\Ket{\x},
\qquad 
O_{f_{2}}: \Ket{\x} \rightarrow e^{i f_{2}(\x)}\Ket{\x}.
\ee
%where $|\tilde{f}_1(\x) - f_1(\x)|\leq \eta$ and $|\tilde{f}_2(\x) - f_2(\x)|\leq \eta$ for all $\x$.
% use the phase oracle.
% \begin{itemize}
%     % \item Binary oracle $U_f^\eta$:
%     %     \be 
%     %     U_f: \Ket{\x}\Ket{\0} \rightarrow \Ket{\x}\Ket{f'(\x)},
%     %     \ee
%     %     where $f'(\x)$ a fixed-point binary number satisfying $|f'(\x)-f(\x)|\leq \eta$.
%     \item Phase oracles $O_{f_1}, O_{f_2}$ for real and imaginary parts of $f=f_1+i f_2$ separately:
%         \be 
%         O_{f_{1}}: \Ket{\x} \rightarrow e^{if_{1}(\x)}\Ket{\x} \qquad O_{f_{2}}: \Ket{\x} \rightarrow e^{if_{2}(\x)}\Ket{\x}.
%         \ee
% \end{itemize}
These are our basic settings. Although the target function involves complex numbers, the representation remains the typical and natural ways in quantum computing.

\begin{prob}\label{prob: problem 1}
    Let $\varepsilon\in (0,1)$ denote the accuracy. Assume that $f: \mb C^d\rightarrow \mb C$ is an analytic function satisfying $f(\mb R^d)\subseteq \mb R$. Given access to oracles $O_{f_1}, O_{f_2}$, compute an approximate gradient $\boldsymbol{g}\in \mb R^d$ such that $\|\boldsymbol{g}-\nabla f(\0)\|_{\infty} \leq \varepsilon$.\footnote{We can also estimate the gradient at any $\x_0 \in \mathbb{R}^d$ via a linear map. Here $\|\cdot\|_\infty$ is the $\ell_\infty$-norm of vectors. For a vector $\v=(v_1,\ldots,v_d)$, we define $\|\v\|_{\infty} := \max_j |v_j|$. }
\end{prob}

When the function $f$ is only well-defined on $\mb R^d$, \cite{gilyen2019optimizing} obtained the following result for a class of functions satisfying certain smoothness conditions.

\begin{prop}[Finite difference method]
\label{prop: intro GAW's result}
    Assume that $f: \mathbb{R}^d \rightarrow \mathbb{R}$ is analytic and satisfies $\left|\partial^\alpha f(\boldsymbol{x})\right| \leq c^k k^{k/2}$ for every $\x\in \mathbb{R}^d$, $k \in \mathbb{N}$ and $\alpha \in \mb N_0^d$ such that $|\alpha|=k$. Given the phase oracle $O_f$, there exists a quantum algorithm that outputs $\boldsymbol{g} \in \mathbb{R}^d$ such that $\|\boldsymbol{g}-\nabla f(\0)\|_{\infty} \leq \varepsilon$, with query complexity $\widetilde{\Theta}(\sqrt{d}/\varepsilon)$.
\end{prop}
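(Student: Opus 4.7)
The plan is to combine a high-order central-difference approximation of the directional derivative, in the spirit of Li's formulas, with Jordan's phase-kickback trick. I would fix a step size $h > 0$ and a symmetric stencil of length $2m+1$, and form the linear combination
\[
F(\y) \;:=\; \sum_{\ell=-m}^{m} a_\ell^{(m)} \, f(\x_0 + \ell h \y),
\]
where the coefficients $a_\ell^{(m)}$ are chosen so that $F(\y) = h\,\y\cdot\nabla f(\x_0) + R_m(\y)$, with the remainder $R_m(\y)$ vanishing through order $2m$ in the Taylor expansion of $f$ around $\x_0$. The Gevrey-type hypothesis $|\partial^\alpha f(\x)|\le c^k k^{k/2}$ for $|\alpha|=k$ is exactly the growth condition that keeps $R_m$ controllable: summing the Taylor tail, the order-$k$ contribution is bounded by roughly $(c h\|\y\|_2)^k k^{k/2}/k!$, and $k!\gtrsim (k/e)^k$ kills the $k^{k/2}$ factor once $c h\|\y\|_2$ is below a small constant, so $R_m(\y)$ decays super-polynomially in $m$.

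Next I would assemble the standard Jordan estimator on top of this high-order approximation. Prepare the uniform superposition over a centered hypercube grid $G = \{-N/2,\dots,N/2-1\}^d$. Using $O(m)$ calls to $O_f$, one per stencil point (with phase amplifications realized through repeated or controlled applications of $O_f$), construct a combined phase oracle of the form
\[
|\y\rangle \;\longmapsto\; \exp\!\bigl(2\pi i\, c_N\, F(\y)\bigr)\,|\y\rangle
\]
for an appropriate normalizing constant $c_N$. Inserting the expansion of $F$, this oracle equals, up to a $c_N R_m$-error phase, one of the form $|\y\rangle \mapsto e^{2\pi i\,\y\cdot\g/N}|\y\rangle$, with $\g$ a rescaled copy of $\nabla f(\x_0)$. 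An inverse QFT applied coordinate-by-coordinate then outputs, with high probability in each register, the nearest integer to $g_j$, and dividing out the normalization recovers $\partial_j f(\x_0)$ to accuracy $\varepsilon$.

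The main obstacle, and where the $\sqrt d$ enters, is the three-way parameter balance. We need simultaneously (i) the QFT quantization error of size $O(1/(Nh))$ per coordinate to be below $\varepsilon$, forcing $Nh = \Theta(1/\varepsilon)$; (ii) the stencil order $m = \Theta(\log(d/\varepsilon))$ so that the accumulated $R_m$-error on the grid falls below the QFT resolution; and (iii) the grid-induced $\ell^2$-norm $\|\y\|_2 = O(\sqrt d)$ to remain inside the Gevrey convergence regime uniformly over $G$, which forces $h = \Theta(\varepsilon/\sqrt d)$ and hence $N = \Theta(\sqrt d/\varepsilon)$. The total number of calls to $O_f$ is the product of the phase-amplification factor (of order $N$) and the stencil length $O(m)$, giving $\widetilde O(\sqrt d/\varepsilon)$. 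The matching lower bound $\widetilde\Omega(\sqrt d/\varepsilon)$, though not strictly needed for the upper bound, is the genuinely delicate half and would be proved by reducing a continuous Bernstein--Vazirani problem, in which the $d$ gradient coordinates encode a hidden $\pm 1$ string, to gradient estimation and applying a hybrid/adversary argument.
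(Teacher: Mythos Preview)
Your outline matches the Gily\'en--Arunachalam--Wiebe strategy that the paper cites and summarizes in Section~\ref{subsec: intro, comparison}: a high-order central-difference formula $f_{(2m)}$, Jordan's phase-kickback estimator (Lemma~\ref{lem: GAW's gradient estimation thm}), and a parameter balance that sets the grid scale to $\widetilde O(1/\sqrt d)$.

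One step needs more than you have written, and without it the argument only yields $\widetilde O(d/\varepsilon)$. You assert that the order-$k$ Taylor remainder is ``bounded by roughly $(ch\|\y\|_2)^k k^{k/2}/k!$'' and that the constraint $ch\|\y\|_2=O(1)$ can be enforced ``uniformly over $G$''. But the hypothesis $|\partial^\alpha f|\le c^k k^{k/2}$ controls \emph{partial} derivatives, not directional ones; the only pointwise bound the multinomial identity gives is $\bigl|\partial_{\y}^k f\bigr|\le c^k k^{k/2}\|\y\|_1^k$, and for a typical grid point the ratio $\|\y\|_1/\|\y\|_2$ is $\Theta(\sqrt d)$, which would force the grid scale down to $\widetilde O(1/d)$. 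The $\sqrt d$ in GAW comes instead from a \emph{concentration} lemma (their Lemma~36, invoked in this paper's Lemma~\ref{lem: bound of error in fin diff formula}): for a uniformly random grid point $\y\in aG_n^d$, the $k$-th order multilinear form $\sum_{\alpha\in[d]^k}\y^\alpha\,\partial_\alpha f(\0)$ is $O\bigl((ac\sqrt{d})^k k^{k/2}\bigr)$ except on a $4^{-2k}$ fraction of points. This is precisely why Lemma~\ref{lem: GAW's gradient estimation thm} only asks the linear approximation to hold on all but a $1/1000$ fraction of the grid, not uniformly. With that concentration step inserted, your parameter balance and the final $\widetilde O(\sqrt d/\varepsilon)$ count go through as stated.
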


The above functions are known as a special kind of Gevrey class functions \cite{cornelissen2019quantum}.\footnote{Let $\sigma \in \mathbb{R}, c\in \mb R_+$ and $f: \mathbb{R}^d \rightarrow \mathbb{R}$. We call $f$ a {\bf Gevrey class function} if: (i) $f$ is smooth. (ii) The power series of $f$ around any point $\x \in \mathbb{R}^d$ converges. (iii) For all $\x \in \mathbb{R}^d, k \in \mathbb{N}$ and $\alpha \in \mathbb{N}^d$ with $|\alpha|=k$, we have
$\left|\partial^\alpha f(\x)\right| \leq c^k(k!)^\sigma.$}
In \cite{cornelissen2019quantum}, Cornelissen summarized and generalized the above result to a larger class of Gevrey class functions. Note that when $\sigma>1$, these functions lie between smooth functions and real analytic functions; for $\sigma=1$, they are equivalent to real analytic functions. Therefore, it is evident that the assumptions made in Proposition \ref{prop: intro GAW's result} correspond to Gevrey functions with $\sigma=1/2$, which is a proper subset of real analytic functions.

% \begin{defi}[Gevrey functions]\label{def: Gevrey functions}
%     Let $d \in \mathbb{N}, \sigma \in \mathbb{R}, c, M\in \mb R_+, \Omega \subseteq \mathbb{R}^d$ open and $f: \mathbb{R}^d \rightarrow \mathbb{R}$. We define the Gevrey function class $\mathcal{G}_{d,c, M,\Omega}^\sigma$ to be the set of functions $f: \Omega\rightarrow \mb R$ satisfying the conditions\footnote{Condition 2 implies condition 1, but we include condition 1 for clarity. Furthermore, if $\sigma<1$, then condition 3 implies condition 2.}:
%     \begin{enumerate}
%         \item $f$ is smooth, i.e., all its higher-order partial derivatives exist, $f \in C^{\infty}(\Omega)$.
%         \item The power series of $f$ around any point $\x \in \Omega$ converges on all of $\Omega$.
%         \item For all $\x \in \Omega, k \in \mathbb{N}_0$ and $\alpha \in N_0^d, |\alpha|=k$, we have
%         $$
%         \left|\partial^\alpha f(\x)\right| \leq M c^k(k!)^\sigma .
%         $$
%     \end{enumerate}
% \end{defi}

For Problem \ref{prob: problem 1}, our main result is as follows.

\begin{thm}[Informal form of Theorem \ref{thm: gradient estimation using spectral method}, spectral method]\label{thm: intro, gradient estimation, spectral}
    There exists a quantum algorithm that solves Problem \ref{prob: problem 1} with  $\widetilde{O}(1/\varepsilon)$ queries to $O_{f_1},O_{f_2}$.
\end{thm}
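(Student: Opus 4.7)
The plan is to mimic the overall structure of Jordan's algorithm and its refinement in Proposition~\ref{prop: intro GAW's result} --- create a phase-encoded superposition over a grid in $[-1,1]^d$, apply a QFT on each coordinate register, and read off the gradient --- but to replace the higher-order finite-difference stencil by a \emph{one-dimensional spectral differentiation formula} applied ray-by-ray. The extra room comes from the analytic extension of $f$ to $\mb C^d$: through the two phase oracles $O_{f_1}$ and $O_{f_2}$ we have access to $f=f_1+if_2$ at \emph{complex} arguments, which unlocks the exponential convergence of spectral methods in place of the polynomial convergence of finite differences.

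The univariate ingredient is standard. For $g$ analytic in a neighborhood of $0$ I would fix complex weights $c_1,\ldots,c_N$ and complex nodes $t_1,\ldots,t_N$ --- for example, the trapezoidal rule on a circle, $t_k=r\,e^{2\pi i k/N}$ with $c_k=e^{-2\pi i k/N}/(rN)$, which is just Cauchy's integral formula quadratured --- so that
\[
    g'(0)\;=\;\sum_{k=1}^{N} c_k\, g(t_k)\;+\;\text{error},
\]
with the error decaying like $(r/R)^N$ whenever $g$ extends analytically to $\{|z|<R\}$ for some $R>r$. Applying this along the ray through $\x_0$ in direction $\x$, with $g_\x(t):=f(\x_0+\delta t\x)$, yields
\[
    \delta\,\nabla f(\x_0)\cdot\x \;=\; g_\x'(0)\;\approx\;\sum_{k=1}^{N} c_k\, f(\x_0+\delta t_k\x).
\]
The arguments $\x_0+\delta t_k\x$ lie in $\mb C^d$, and this is exactly where $O_{f_1},O_{f_2}$ are meant to be invoked.

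The quantum implementation then has two layers. The inner layer builds a compound phase oracle that, on input $|\x\rangle$, imprints $\exp\bigl(i\gamma\sum_{k} c_k\, f(\x_0+\delta t_k\x)\bigr)$ for a chosen scale $\gamma$. Each $c_k$ is complex, but $c_k f(\x_0+\delta t_k\x)=c_k(f_1+if_2)$ splits into real phase rotations whose rates are governed by the real and imaginary parts of $c_k$, and each is realized by a scaled/controlled call to $O_{f_1}$ or $O_{f_2}$ at the complex argument $\x_0+\delta t_k\x$; summing the $N$ contributions sequentially costs $O(N)$ queries. The outer layer follows Jordan: prepare a uniform superposition over a grid in $[-1,1]^d$, apply the compound oracle once to imprint a phase approximately linear in $\x$ with coefficients proportional to $\nabla f(\x_0)$, and QFT each of the $d$ coordinate registers; the measurement outcomes give an $\varepsilon$-approximation of $\nabla f(\x_0)$ in $\ell_\infty$.

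Budgeting: choosing $N=O(\log(d/\varepsilon))$ drives the spectral truncation error below $\varepsilon$, so the compound oracle costs $\widetilde O(1)$ queries, and the usual Jordan phase-extraction scaling adds a factor $O(1/\varepsilon)$, for a total of $\widetilde O(1/\varepsilon)$ queries --- with $d$ entering only through polylogarithms. The real work is, I expect, twofold. First, a \emph{uniform} quantitative spectral bound: the family $\{g_\x\}$ must be analytic on a complex neighborhood of $0$ whose size is bounded below independently of $\x\in[-1,1]^d$, which forces a careful interplay between $\delta$, the contour radius $r$, and a precise form of the global analyticity hypothesis on $f$. Second, carefully stitching together the complex linear combination of phase oracles and propagating all intermediate approximation errors through the QFT so that the final query count truly remains $\widetilde O(1/\varepsilon)$ and does not pick up an extra polynomial factor from the bookkeeping.
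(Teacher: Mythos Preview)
Your proposal is correct and follows essentially the same route as the paper: spectral differentiation via the trapezoidal rule on a circle in the complex plane, applied ray-by-ray to approximate $\nabla f(\x_0)\cdot\x$, with the resulting phase oracle assembled from real-scaled calls to $O_{f_1},O_{f_2}$ and then fed into the Jordan/GAW extraction lemma (Lemma~\ref{lem: GAW's gradient estimation thm}). The one point you leave slightly implicit --- that the spectral sum $F(\x)=\sum_k c_k f(\x_0+\delta t_k\x)$ is \emph{real}, so that it can be imprinted as a phase at all --- is exactly what the paper isolates in Theorem~\ref{thm: spectral formula for gradient}, and it follows from the hypothesis $f(\mathbb{R}^d)\subset\mathbb{R}$ together with the conjugate symmetry of the nodes $t_k$.
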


%From Theorem \ref{thm: intro, gradient estimation, spectral}, we conclude that for the gradient estimation problem, there is an exponential speedup in dimension $d$ for complex-valued analytic functions. 
%\yz{made some adjustments to the order and content of the following remarks on the main results}

We now provide some remarks on our main result.

\begin{rmk}
The functions considered in Problem \ref{prob: problem 1}, when restricted to $\mb R^d$, %denoted as $f|_{\mb R^d}$, 
are real analytic. Hence, they form a proper subset of the class of real analytic functions. The class of these functions and those considered in \cite{gilyen2019optimizing} intersect, but neither class is contained within the other. This illustrates that our main result\textemdash a quantum algorithm with better query complexity performance\textemdash does not contradict the optimality stated in Proposition \ref{prop: intro GAW's result}. We will provide a more detailed explanation of this in Subsection \ref{subsec: intro, comparison}.
\end{rmk}

As mentioned in \cite[below Theorem 5.2]{gilyen2019optimizing}, for multivariate polynomials $P(\x)$ of degree $O({\rm polylog}(d))$, the quantum algorithm proposed in \cite{gilyen2019optimizing} achieves polylogarithmic query complexity in dimension $d$, offering an exponential speedup over Jordan's \cite{jordan2005fast}. In comparison, Theorem \ref{thm: intro, gradient estimation, spectral} implies that if we have oracles that query the real and imaginary parts of $P(\x)$ for all $\x\in \mathbb{C}^d$, then the quantum query complexity of computing the gradient of $P$ is always polylogarithmic in dimension $d$, regardless of the degree.

\begin{rmk}
Regarding the assumption that $f$ maps $\mathbb{R}^d$ to $\mathbb{R}$, we indeed only rely on the property that $\nabla f(\x)\in \mathbb{R}^d$ for any $\x\in \mathbb{R}^d$. In this work, we find it more natural to state the assumption as $f(\mathbb{R}^d) \subseteq \mathbb{R}$, since it directly concerns the function $f$ rather than its gradient.
Additionally, to approximate $\nabla f(\0)$, our algorithm works for real analytic $f$ that can be evaluated at points in the complex plane near $\0$. It does not require $f$ to be analytic over $\mathbb{C}^d$.
\end{rmk}

Some readers might wonder why we do not consider the analytic continuation of $f$ when it is only real analytic, which would seemingly allow our algorithm to apply in the situation mentioned in Proposition \ref{prop: intro GAW's result}. There are indeed two issues with this idea: (i) computing the analytic continuation of $f(\x)$ typically involves computing its gradient; and (ii) even if the analytic continuation $\tilde{f}(\x)$ of $f(\x)$ is already known, constructing an oracle to query $\tilde{f}(\x)$ based on an oracle that queries $f(\x)$ is non-trivial. This difficulty is clearly exemplified by the simple function $\tilde{f}(\x)=f(\x)=x_t^k$, where $x_t$ is the $t$-th entry of $\x$, and $t, k$ are unknown. Given an oracle that can only evaluate $f(\x) = x_t^k$ for $x_t \in \mathbb{R}$, it is challenging to use this oracle to evaluate $\tilde{f}(\x) = (x_{t1} +i x_{t2})^k$ for $x_{t1}, x_{t2} \in \mathbb{R}$ without knowing $t, k$.

As a continuation of gradient estimation, we also present two quantum algorithms for estimating the Hessian of $f$, which is fundamental to second-order iterative methods for optimization, such as Newton's method. Below, for a matrix $A=(a_{ij})$, the max norm is defined as $\|A\|_{\max}=\max_{i,j} |a_{ij}|$. 
%\yz{add some comments and a new remark.}
We propose a generalized approach for computing the Hessian by leveraging quantum gradient estimation algorithms. This generalization is applicable to both our quantum gradient estimation algorithm based on the spectral method and GAW's algorithm based on the finite difference method. As a result, we derive two corresponding quantum algorithms for Hessian estimation, building upon these two quantum gradient estimation approaches.

\begin{rmk}
    It is important to note that although the Hessian is obtained by computing the gradient of each of the $d$ components of the gradient function $\nabla f$ (i.e., the partial derivative functions), we cannot directly use the gradient algorithm $d$ times to compute the Hessian matrix of $f$ at a point. This is because the gradient algorithm requires an oracle $O_f$ for the function $f$ to compute the gradient value $\nabla f(x_0)$ at a specific point $x_0$, rather than computing the entire gradient function $\nabla f$. If we attempt to use the gradient algorithm to compute the gradient of a partial derivative function $\partial_i f$, we would need access to an oracle that evaluates $\partial_i f$, which we do not have.
\end{rmk}

\begin{thm}[Informal form of Theorem \ref{thm: hessian using spectral}, spectral method]
\label{thm: intro, Hessian using spectral}
Let $f:\mathbb{C}^d\rightarrow \mathbb{C}$ be an analytic function that maps $\mb R^d$ to $\mb R$. Then there is a quantum algorithm that uses $\widetilde{O}(d/\varepsilon)$ queries to $O_{f_1}, O_{f_2}$, and computes $\widetilde{\H}$ such that $\|\widetilde{\H} - \H_f(\0)\|_{\max} \leq \varepsilon$,  where $\H_f(\0)$ is the Hessian of $f$ at $\0$.
\end{thm}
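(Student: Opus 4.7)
My plan is to reduce Hessian estimation to $d$ gradient estimations and invoke Theorem \ref{thm: intro, gradient estimation, spectral} as a subroutine. The $j$-th column of $\H_f(\x_0)$ equals $\nabla(\partial_j f)(\x_0)$, so in order to obtain an entrywise $\varepsilon$-approximation of $\H_f(\x_0)$ it suffices, for each $j\in\{1,\ldots,d\}$, to produce an $\ell_\infty$ $\varepsilon$-approximation of $\nabla(\partial_j f)(\x_0)$ and concatenate the columns.

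The main implementation step is to simulate phase oracles for the real and imaginary parts of $\partial_j f$ from $O_{f_1},O_{f_2}$. Here I exploit both analyticity and the hypothesis $f(\mb R^d)\subset\mb R$: for $\y\in\mb R^d$ and small $h>0$,
\begin{equation*}
\partial_j f(\y) \;=\; \frac{1}{h}\operatorname{Im} f(\y + i h\, e_j) + O(h^2),
\end{equation*}
and a spectral (Fornberg-style) quadrature built from $N=O(\log(1/\varepsilon))$ imaginary shifts $\y+ih_k e_j$ reduces the bias to $\operatorname{poly}(\varepsilon)$. In the encoding \eqref{eq: intro, store complex numbers} the complex argument $\y+ih_k e_j$ corresponds to the register $|\y\rangle|h_k e_j\rangle$, so each sample $f_2(\y + i h_k e_j)$ is one call to $O_{f_2}$. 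Combining the $N$ contributions yields a phase oracle for $\partial_j f$ using $\widetilde O(1)$ calls to $O_{f_1},O_{f_2}$, and Cauchy estimates applied to $f$ on a small polydisk around $\x_0$ show that $\partial_j f$ still satisfies the analyticity hypotheses of Theorem \ref{thm: intro, gradient estimation, spectral}. Feeding this oracle into that theorem produces an $\varepsilon$-accurate estimate of $\nabla(\partial_j f)(\x_0)$ with $\widetilde O(1/\varepsilon)$ queries, so the total cost over $d$ columns is $\widetilde O(d/\varepsilon)$.

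The principal technical obstacle is the scaling overhead and the propagation of phase error across the two levels. A naive conversion of $h^{-1}\operatorname{Im} f(\y+ih\, e_j)$ into a unit-scale phase oracle for $\partial_j f$ costs $\Theta(1/h)$ applications of $O_{f_2}$, so if $h$ is chosen polynomially small in $\varepsilon$ to shrink the bias, the per-query cost explodes; at the same time, every bit of error in the simulated oracle is amplified by the phase rescaling inside the outer algorithm of Theorem \ref{thm: intro, gradient estimation, spectral}. I would keep $h$ constant and push the approximation order up via a spectral combination of imaginary shifts, choosing $N=O(\log(1/\varepsilon))$ so that the per-query bias is $\operatorname{poly}(\varepsilon)$, small enough that after aggregation in the outer algorithm the final max-norm error on $\widetilde{\H}$ is at most $\varepsilon$, while the per-query overhead stays polylogarithmic in $1/\varepsilon$.
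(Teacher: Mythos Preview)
Your overall strategy---recover the $j$-th column of $\H_f(\x_0)$ as $\nabla(\partial_j f)(\x_0)$ and invoke the spectral gradient algorithm $d$ times---is reasonable in spirit, but the specific oracle construction you propose has a genuine gap.

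The complex-step identity $\partial_j f(\y)=\frac{1}{h}\operatorname{Im} f(\y+ih\,e_j)+O(h^2)$ (and any higher-order spectral variant built from purely imaginary shifts) relies on $f(\y)\in\mb R$, which holds only for $\y\in\mb R^d$. Thus your construction yields a phase oracle for $\partial_j f$ \emph{at real arguments only}. But Theorem~\ref{thm: intro, gradient estimation, spectral}, which you invoke as the outer routine, requires phase oracles $O_{g_1},O_{g_2}$ for the real and imaginary parts of $g=\partial_j f$ at \emph{complex} inputs: inside that algorithm one must evaluate $g(\delta\omega^k\x)$ for $\omega^k\in\mb C$, and at those points $f$ is no longer real, so $\operatorname{Im} f(\,\cdot\,+ih\,e_j)$ no longer isolates $\partial_j f$. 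With only a real-argument oracle for $\partial_j f$ you are forced back to the finite-difference gradient algorithm, which costs $\widetilde O(\sqrt{d}/\varepsilon)$ per column and yields $\widetilde O(d^{1.5}/\varepsilon)$ overall, not $\widetilde O(d/\varepsilon)$. Your approach could be repaired by using a full Cauchy/trapezoidal approximation $\partial_j f(\z)\approx\frac{1}{N\delta}\sum_k \omega^{-k}f(\z+\delta\omega^k e_j)$ valid for complex $\z$, and then separating real and imaginary parts, but this is not what you wrote and the error/scaling bookkeeping for the nested oracle would still need to be carried out.

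The paper avoids this nesting altogether. It applies the spectral method once, directly to the \emph{second} derivative: with $h(\tau)=f(\tau\z)$ one has $\z^T\H_f(\0)\z=h''(0)\approx\frac{2}{N\delta^2}\sum_k\omega^{-2k}f(\delta\omega^k\z)$, a single-layer formula in $f$. Then for fixed $\y$ the difference $\frac{1}{2}\big(h(\x+\y)-h(\x)\big)$ is close to the linear function $\x^T\H_f(\0)\y+\text{const}$, so Lemma~\ref{lem: intro, GAW's gradient estimation thm} (not the full spectral gradient theorem) recovers $\H_f(\0)\y$ with $\widetilde O(1/\varepsilon)$ queries; taking $\y=|i\rangle$ for $i\in[d]$ gives the columns. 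The advantage is that the phase oracle for the right-hand side is built directly from $O_{f_1},O_{f_2}$ via one spectral sum, with no need to simulate oracles for $\partial_j f$ at complex points.
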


\begin{thm}[Informal form of Theorem \ref{new thm: Hessian estimation using finite difference formula}, finite difference method]
\label{thm: intro, Hessian using finite difference}
    Let $m \in \mathbb{N}, B\in\mb R_+$ satisfy $m \geq \log(dB/\varepsilon)$.
    Assume $f:\left[-R, R\right]^d$ $\rightarrow \mathbb{R}$ is $(2 m+1)$-times differentiable and $\left|\partial_{\boldsymbol{r}}^{2 m+1} f(\x)\right| \leq B$ for all  $\x \in[-R,R]^d$ and $\r=\x/\|\x\|$.
    Then there is a quantum algorithm that uses $\widetilde{O}(d^{1.5}/\varepsilon)$ queries to $O_f$ and computes $\widetilde{\H}$ such that $\|\widetilde{\H} - \H_f(\0)\|_{\max} \leq \varepsilon$.
\end{thm}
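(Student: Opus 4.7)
The plan is to reduce Hessian estimation to $d$ gradient-estimation subproblems, one per column of $\H_f(\x_0)$, and apply the higher-order finite-difference gradient algorithm underlying Proposition~\ref{prop: intro GAW's result} to each. The $j$-th column of $\H_f(\x_0)$ equals $\nabla(\partial_j f)(\x_0)$, so the task reduces to constructing a phase oracle for $\partial_j f$ and estimating its gradient at $\x_0$ within $\ell_\infty$-error $\varepsilon$, iterated over $j=1,\dots,d$.

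First, I would construct an approximate phase oracle for $\partial_j f$ from $O_f$ using a centered order-$2m$ finite-difference stencil along $\mbf{e}_j$:
\[
\widetilde{\partial_j f}(\x)\;:=\;\frac{1}{h}\sum_{k=-m}^{m} a_k\, f(\x + k h\, \mbf{e}_j),
\]
with $a_k$ the standard first-derivative stencil coefficients. Taylor expansion along $\mbf{e}_j$, combined with the hypothesis $|\partial_{\r}^{2m+1}f(\x)|\le B$, yields $|\widetilde{\partial_j f}(\x)-\partial_j f(\x)|=O(h^{2m}B)$, which is negligible once $m\ge\log(dB/\varepsilon)$. To realize $\widetilde{\partial_j f}$ as a single phase oracle, for each $k$ I shift the input register by $kh\,\mbf{e}_j$, apply $O_f$ with the real multiplicative weight $a_k/h$ (via standard phase-rescaling of a phase oracle, using an ancilla), and unshift. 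Because the shifted versions of $O_f$ are all diagonal in $|\x\rangle$ and therefore commute, their phases add, yielding $|\x\rangle\mapsto e^{i\widetilde{\partial_j f}(\x)}|\x\rangle$ at a cost of $\widetilde O(1)$ queries to $O_f$ per call.

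Next, I would apply the higher-order finite-difference gradient estimation algorithm to $\partial_j f$ at $\x_0$, producing an $\ell_\infty$-approximation of column $j$ of $\H_f(\x_0)$ using $\widetilde O(\sqrt d/\varepsilon)$ queries to the approximate $\partial_j$-phase oracle. The technical heart of the argument is verifying the smoothness premise of that algorithm: it requires a bound on $|\partial_{\boldsymbol{s}}^{2m'+1}(\partial_j f)|$ for appropriate directions $\boldsymbol{s}$ and some $m'\sim m$, which corresponds to a bound on the $(2m'+2)$-nd directional derivative of $f$ along nearby directions. The hypothesis $|\partial_{\r}^{2m+1}f|\le B$ furnishes such control along radial directions, and one propagates it to the grid-aligned directions used by the stencil via a change-of-variables argument around the base point $\x_0$, exploiting that $[-R,R]^d$ is bounded. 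This bookkeeping, together with ensuring all stencil points remain inside $[-R,R]^d$ for suitably small step sizes $h$ and $\delta$, is the main obstacle.

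Finally, the total query complexity is $d\cdot\widetilde O(\sqrt d/\varepsilon)\cdot\widetilde O(1) = \widetilde O(d^{1.5}/\varepsilon)$ queries to $O_f$. The three sources of error -- the $O(h^{2m}B)$ stencil error in each approximate oracle (negligible by the choice of $m$), the $\varepsilon$-accuracy of each gradient call, and a union bound over the $d$ calls (incurring only a $\log d$ overhead hidden in $\widetilde O$) -- combine to ensure $\|\widetilde\H - \H_f(\x_0)\|_{\max}\le\varepsilon$, as required.
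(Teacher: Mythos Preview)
Your reduction---estimate $\nabla(\partial_j f)$ for each $j$ by first building a phase oracle for $\widetilde{\partial_j f}$ via an $e_j$-stencil and then running the GAW first-derivative algorithm on it---is natural, but it does \emph{not} match the paper's argument, and it has a genuine gap under the stated hypothesis.

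The hypothesis bounds only the \emph{radial} directional derivative: $|\partial_{\r}^{2m+1}f(\x)|\le B$ with $\r=\x/\|\x\|$. When you feed $g:=\widetilde{\partial_j f}=\tfrac1h\sum_k a_k f(\cdot+kh\,\mbf e_j)$ into the GAW gradient algorithm, the error analysis of that algorithm (the step $|g_{(2m')}(\x)-\nabla g(\0)\cdot\x|\le e^{-m'/2}\sup_\tau|\partial_{\r}^{2m'+1}g(\tau\x)|\,\|\x\|^{2m'+1}$ with $\r=\x/\|\x\|$) requires
\[
\partial_{\r}^{2m'+1}g(\tau\x)\;=\;\frac1h\sum_k a_k\,\partial_{\r}^{2m'+1}f(\tau\x+kh\,\mbf e_j)
\]
to be bounded. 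At the shifted point $\z=\tau\x+kh\,\mbf e_j$ the direction $\r=\x/\|\x\|$ is \emph{not} the radial direction $\z/\|\z\|$ (unless $k=0$), so the hypothesis says nothing about this quantity. Your proposed ``change-of-variables'' does not repair this: a radial-only bound at every point simply does not control $(2m+1)$-st directional derivatives in arbitrary directions, and boundedness of $[-R,R]^d$ is irrelevant here. (Equivalently, in your phrasing, $\partial_{\boldsymbol s}^{2m'+1}(\partial_j f)$ is a mixed derivative, not a pure directional one, and there is no way to reduce it to the radial quantity the hypothesis controls.)

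The paper avoids this problem by a different construction. It uses the \emph{second}-order central-difference formula $f_{(2m)}(\z)=\sum_{t=-m}^m a_t^{(2m)} f(t\z)\approx \z^T\H_f(\0)\,\z$ (Definition~\ref{def: finite difference formula of Hessian}), whose error depends only on $\partial_{\z/\|\z\|}^{2m+1}f$ along the ray through $\z$ (Lemma~\ref{app: lemma 3}), exactly what the hypothesis bounds. It then sets $F_{\y}(\x)=\tfrac12\big(f_{(2m)}(\x+\y)-f_{(2m)}(\x)\big)$ with $\y=\Ket{i}$, so that $F_{\y}(\x)\approx \x^T\H_f(\0)\Ket{i}+\tfrac12\Ket{i}^T\H_f(\0)\Ket{i}$ (Lemma~\ref{app: lemma 4}), and runs Lemma~\ref{lem: GAW's gradient estimation thm} once per column (Proposition~\ref{new prop: gradient to Hessian}). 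Because $\|\y\|=1$ forces the grid scale $a\approx 1/(\sqrt d\,m)$, each column costs $\widetilde O(\sqrt d/\varepsilon)$, giving $\widetilde O(d^{1.5}/\varepsilon)$ overall. The point is that every evaluation of $f$ in this scheme lies on a ray through the origin, so only radial derivative bounds are ever needed.
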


%Classically, for comparison with the quantum cases, suppose we also provide access to the function via a binary oracle, which returns the value of $f(\x)$ when the input is $\x$. Using empirical estimation, we need $O(1/\varepsilon^2)$ samples to evaluate the function to precision $\varepsilon$. Hence, we can compute the Hessian to $\varepsilon$-precision under the max norm using $O(d^2/\varepsilon^2)$ samples. From this point, our two quantum algorithms achieve polynomial speedups over classical algorithms. 
The following result shows that this is the best we can achieve generally.

%\yz{use $\H_f(\0)$ or $\H_f(\x_0)$?}
\begin{thm}[Informal forms of Propositions \ref{prop: lower bound} and \ref{prop: lower bound of real-valued Hessian estimation}]
Any quantum algorithm that approximates $\H_f(\0)$ under the matrix max-norm with constant accuracy $\varepsilon\in(0,1/2]$ must make $\widetilde{\Omega}(d)$ queries to phase oracle.
\end{thm}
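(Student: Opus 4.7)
My plan is to reduce a canonical quantum query problem with known $\widetilde{\Omega}(d)$ lower bound — namely, the task of recovering a uniformly random $d$-bit string from a phase oracle — to Hessian estimation under the max norm. For each $s\in\{0,1\}^d$, define the quadratic function
\[
f_s(\x)\;=\;\tfrac{1}{2}\sum_{i=1}^d s_i x_i^2.
\]
This $f_s$ is entire, real-valued on $\mb R^d$, and its Hessian at the origin is $\H_{f_s}(\0)=\mathrm{diag}(s_1,\dots,s_d)$. Hence any algorithm producing $\widetilde{\H}$ with $\|\widetilde{\H}-\H_{f_s}(\0)\|_{\max}\leq 1/4$ recovers $s$ exactly by rounding each diagonal entry to $\{0,1\}$. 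Scaling $f_s$ by $2\varepsilon$ yields the same conclusion for arbitrary constant accuracy $\varepsilon\in(0,1/2]$.

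\textbf{Reduction and the core query lower bound.} A quantum Hessian estimator running in $T$ phase-oracle queries therefore yields a $T$-query algorithm that decodes the full bit string $s$ given access only to $O_{f_s}\colon\Ket{\x}\mapsto e^{if_s(\x)}\Ket{\x}$ (the imaginary-part oracle $O_{f_2}$ acts as the identity here, since $f_s$ is real-valued). The crux is then to prove an $\widetilde{\Omega}(d)$ lower bound on the number of queries to $O_{f_s}$ needed to recover $s$. I would discretize the input to a finite grid (e.g., $\{0,1\}^d$, on which $x_i^2=x_i$ so the phase reduces to the Bernstein--Vazirani-like expression $e^{i(s\cdot\x)/2}$), and then argue that recovering $s$ from this continuous-amplitude oracle is no easier than an analogous $d$-bit distinguishing task. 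The latter requires $\widetilde{\Omega}(d)$ phase queries by the quantum adversary bound, which limits the "progress" any single query can make towards distinguishing neighboring hypotheses $s,s'$ that differ in a single bit.

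\textbf{Real-valued case.} For the second proposition (real-valued $f:\mb R^d\rightarrow\mb R$ with oracle $O_f$), the same hard family $\{f_s\}$ applies verbatim since each $f_s$ is already real-valued and its oracle is precisely $O_f$ in that setting. The reduction and lower-bound step are identical, so a $T$-query Hessian estimator in that model also forces $T=\widetilde{\Omega}(d)$.

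\textbf{Main obstacle.} The technical heart is the phase-oracle lower bound itself. Unlike a bit oracle, where each call reveals one bit of $s$, a single phase query can in principle entangle with all $d$ bits of $s$ via its global phase, so a naive reduction through a standard bit oracle loses a multiplicative factor of $d$ and yields only a trivial bound. The argument must therefore work directly with the adversary norm (or equivalently the polynomial degree) of the family $\{O_{f_s}\}_{s\in\{0,1\}^d}$, in the spirit of the gradient lower bound in \cite{gilyen2019optimizing} but with the quadratic encoding replacing the linear one — so that the analogue of the $\widetilde{\Omega}(\sqrt{d})$ bound there becomes $\widetilde{\Omega}(d)$ here, precisely because the Hessian has to recover $d$ nontrivial coordinates rather than an $\ell_\infty$-approximation of a single gradient vector with shared access to its entries.
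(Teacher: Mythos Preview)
Your hard family is too easy: the functions $f_s(\x)=\tfrac12\sum_i s_i x_i^2$ can be learned with $O(1)$ phase queries, so no $\widetilde{\Omega}(d)$ bound follows. Concretely, query at points $\x\in\{0,\sqrt{2\pi}\}^d$ in uniform superposition. Then $x_i^2\in\{0,2\pi\}$, hence $f_s(\x)=\pi\sum_i s_i\,[x_i\neq 0]$ and the oracle imparts the phase $(-1)^{s\cdot\y}$ where $y_i=[x_i\neq 0]$. One layer of Hadamards recovers $s$ exactly --- this is Bernstein--Vazirani. Even if the oracle is restricted to a bounded grid, a constant number of repeated applications $(O_{f_s})^k$ amplifies the per-coordinate phase to $\pi$ and the same one-shot recovery applies. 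The underlying reason is that a \emph{diagonal} Hessian is $1$-sparse: the $d$ unknown bits sit on independent coordinates, and a single phase query in superposition interrogates all of them in parallel. Your appeal to the adversary method would therefore fail at exactly the step you flagged as the ``main obstacle'': for this family the adversary norm is $O(1)$, not $\widetilde{\Omega}(d)$.

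The paper avoids this by encoding the hard instance in \emph{off-diagonal} Hessian entries. Proposition~\ref{prop: lower bound} takes $f(\x)=\tfrac12\x^T H\x=\sum_{i<j}H_{ij}x_ix_j$ for an unknown Boolean symmetric $H$ with zero diagonal and reduces to the known $\Omega(d^{k-1})$ lower bound for learning degree-$k$ multilinear polynomials over a finite field (here $k=2$). Proposition~\ref{prop: lower bound of real-valued Hessian estimation} instead uses the $d^2$-element family $f_{j,k}(\x)=\varepsilon\,x_jx_k\,e^{-c\|\x\|^2/2}$ together with the distinguishing bound of Lemma~\ref{lem: lower bound in GAW}: the Gaussian damping forces $\sum_{j,k}|f_{j,k}(\x)|^2\le O(\varepsilon^2)$ uniformly in $\x$, so the bound becomes $\sqrt{d^2}/O(\varepsilon)=\Omega(d/\varepsilon)$. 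In both cases the cross-terms $x_jx_k$ with $j\neq k$ are essential --- they couple coordinates and defeat the parallel-readout trick that kills your diagonal family.
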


The above lower bound shows that the quantum spectral method for Hessian estimation is optimal in terms of $d$.
However, it does not rule out the possibility of the existence of exponential quantum speedups for some special cases, e.g., the graph learning problem considered in \cite{montanaro_shao-tqc,lee2021quantum}. Indeed, for sparse Hessians, we can have better quantum algorithms. 
%The following results further show the advantage of the spectral method over the finite difference method.
The following results further demonstrate that the algorithm based on the spectral method has significant advantages, with its query complexity being very low and, in the sparse case, independent of $d$.

\begin{thm}[Informal form of Theorem \ref{thm: Sparse Hessian estimation using spectral method}, spectral method]
\label{thm: intro new2}
    Under the same assumption as in Theorem \ref{thm: intro, Hessian using spectral}, there is a quantum algorithm that estimates $\H_f(\0)$ using 
    \begin{itemize}
    \item $\widetilde{O}(s/\varepsilon)$ queries to $O_{f_1}, O_{f_2}$ if $\H_f(\0)$ is $s$-sparse.
    \item $\widetilde{O}\left(\sqrt{m}/\varepsilon\right)$ queries to $O_{f_1}, O_{f_2}$ if $\H_f(\0)$ has at most $m$ nonzero entries.
    \end{itemize}
\end{thm}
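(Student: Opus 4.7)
The plan is to reduce Hessian estimation to a Hessian--vector-product primitive built from the spectral-method gradient algorithm of Theorem~\ref{thm: intro, gradient estimation, spectral}, and then let sparsity cut the number of calls to the primitive. For any vector $\v\in\mathbb{R}^d$ with $\|\v\|_\infty=O(1)$ one has $\H_f(\0)\v=\tfrac{d}{dt}\nabla f(t\v)\big|_{t=0}$; because $f$ is analytic, the right-hand side is captured, up to entrywise error $\varepsilon$, by a constant-size linear combination of gradient estimates at complex probe points of the form $\pm t\v$ and $\pm it\v$, each of which the spectral algorithm computes in $\widetilde{O}(1/\varepsilon)$ queries to $O_{f_1},O_{f_2}$. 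Thus we have an $\widetilde{O}(1/\varepsilon)$-query subroutine returning an $\varepsilon$-accurate approximation of $\H_f(\0)\v$; the bound $\widetilde{O}(d/\varepsilon)$ of Theorem~\ref{thm: intro, Hessian using spectral} corresponds to $d$ calls on the standard basis $e_1,\ldots,e_d$.

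For the $s$-sparse case I would replace those $d$ basis calls by $k=O(s\log d)$ calls on random subgaussian (e.g.\ $\pm 1$) vectors $\v_1,\ldots,\v_k$, and assemble the outputs into a $d\times k$ measurement matrix $M$ with $M_{i,\ell}\approx (\H_f(\0)\v_\ell)_i$. Since each row of $\H_f(\0)$ is $s$-sparse, noisy compressed-sensing recovery ($\ell_1$-minimization, or orthogonal matching pursuit, applied row by row) reconstructs $\H_f(\0)$ from $M$ with max-norm error $O(\varepsilon\,\mathrm{polylog}\,d)$, using only classical postprocessing. The total quantum cost is $\widetilde{O}(s/\varepsilon)$.

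For the second bullet, where $\H_f(\0)$ only has $m$ nonzero entries with no control on row sparsity, I would pair the same primitive with a Grover-type amplification step to isolate the nonzero positions rather than performing $d$ or even $m$ independent row recoveries. An approximate value of any designated entry $(\H_f(\0))_{i,j}$ can be probed in $\widetilde{O}(1/\varepsilon)$ queries (for instance via a single Hessian--vector product along $e_j$ followed by readout of coordinate $i$), which supplies an approximate marking oracle for the predicate ``$|(\H_f(\0))_{i,j}|>\varepsilon/2$.'' Quantum search/counting in the style of sparse quantum state tomography\textemdash tailored to the $m$-sparse vectorization of $\H_f(\0)$\textemdash then locates and estimates all $m$ nonzero entries in $\widetilde{O}(\sqrt{m})$ calls to the primitive, yielding the claimed $\widetilde{O}(\sqrt{m}/\varepsilon)$ bound.

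The main obstacle in both cases is the noise analysis. The Hessian--vector product primitive is only $\varepsilon$-accurate, so for the first bullet the compressed-sensing reconstruction must be stable under $\ell_\infty$ perturbations of that magnitude; this follows from standard robust-recovery bounds for subgaussian measurement ensembles, but the constants have to be tracked to guarantee that no spurious $\mathrm{poly}(d)$ factor enters the final max-norm error. For the second bullet the approximate marking oracle must remain soundness-preserving, so that amplitude amplification still converges in $\sqrt{m}$-scale iterations despite its $\widetilde{O}(\varepsilon)$-level gap. A secondary but essential detail is to choose the finite-difference step size $t$ in the Hessian--vector reduction\textemdash using the analyticity bounds on higher derivatives of $f$\textemdash so that the truncation error is $o(\varepsilon)$ while $1/t$ does not inflate the query cost of the underlying gradient subroutine.
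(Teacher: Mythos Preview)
Your plan for the $s$-sparse bullet is close in spirit to the paper's but the primitive you describe does not work as stated. The spectral gradient algorithm of Theorem~\ref{thm: intro, gradient estimation, spectral} crucially needs the base point to lie in $\mathbb{R}^d$ so that $\nabla f(\cdot)\in\mathbb{R}^d$ and the QFT readout applies; it cannot be invoked at the imaginary probe points $\pm itv$, where $\nabla f$ is complex. The paper obtains the $\widetilde{O}(1/\varepsilon)$ Hessian--vector primitive differently: it applies the spectral formula directly to the \emph{second} derivative of $h(\tau)=f(\tau\z)$ (the $n=2$ case of \eqref{eq: error of spectral method}), obtaining a real-valued phase-queryable approximation of $\z\mapsto \z^T\H_f(\0)\z$; for each probe vector $\y$ the difference $\frac{1}{2}\big(F(\x+\y)-F(\x)\big)$ is then close to $\x^T\H_f(\0)\y+\text{const}$, and Lemma~\ref{lem: GAW's gradient estimation thm} reads off $\H_f(\0)\y$. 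From there the paper recovers $\H_f(\0)$ from $O(s\log d)$ random $\y\in\{0,1\}^d$ using a Schwartz--Zippel uniqueness argument over the grid $S_q$, rather than compressed sensing; your $\ell_1$/OMP recovery would also work, but you would have to do the $\ell_\infty$-noise stability analysis that the finite-field argument sidesteps.

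For the second bullet there is a genuine gap. Quantum search for $m$ marked items in a universe of size $d^2$ costs $\Theta\big(\sqrt{d^2\cdot m}\big)=\Theta(d\sqrt{m})$ calls to the marking oracle, not $\widetilde{O}(\sqrt{m})$; your scheme therefore yields $\widetilde{O}(d\sqrt{m}/\varepsilon)$, missing the target by a factor of $d$. The paper avoids search entirely and instead uses a two-phase reduction to the first bullet: first run the $s$-sparse algorithm with threshold $s=\sqrt{m/\log(qd)}$, which costs $\widetilde{O}(\sqrt{m}/\varepsilon)$ and correctly learns every row of sparsity at most $s$ while flagging the remaining ``dense'' rows; since the total number of nonzeros is $m$, there are at most $m/s=\widetilde{O}(\sqrt{m})$ dense rows, and each of these is then learned by a single call to the Hessian--vector primitive on the corresponding standard basis vector, for another $\widetilde{O}(\sqrt{m}/\varepsilon)$ queries.
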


\begin{thm}[Informal form of Theorem \ref{thm: Sparse Hessian estimation using finite difference formula}, finite difference method] 
\label{thm: intro new1}
     Under the same assumption as in Theorem \ref{thm: intro, Hessian using finite difference}, there is a quantum algorithm that estimates $\H_f(\0)$ using 
    \begin{itemize}
    \item $\widetilde{O}(sd/\varepsilon)$ queries to $O_f$ if $\H_f(\0)$ is $s$-sparse.
    \item $\widetilde{O}\left(\sqrt{m}d/\varepsilon\right)$ queries to $O_f$ if $\H_f(\0)$ has at most $m$ nonzero entries.
    \end{itemize}
     
\end{thm}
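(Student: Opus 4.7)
The plan is to upgrade the dense-case algorithm of Theorem \ref{thm: intro, Hessian using finite difference} by inserting sparsity-aware quantum search inside the per-row loop. In the dense argument, the $\widetilde{O}(d^{1.5}/\varepsilon)$ cost decomposes roughly as $d$ row estimations, each using a gradient-estimation routine on the partial derivative $\partial_i f$ at cost $\widetilde{O}(\sqrt{d}/\varepsilon)$; for the sparse variants I would replace the per-row routine with one that pays only for the actual nonzeros.

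First I would construct a finite-difference phase oracle for any single entry $\H_{ij}(\0) = \partial^2 f/\partial x_i\,\partial x_j$ by applying a $(2m+1)$-point central-difference scheme to $O_f$ in both coordinates $x_i$ and $x_j$. The smoothness assumption $|\partial_\r^{2m+1} f|\leq B$ together with $m\geq \log(dB/\varepsilon)$ bounds the truncation error by $B\, h^{2m+1}/(2m+1)!\leq \varepsilon/\mathrm{poly}(d)$ for an appropriate step $h$, so each coherent invocation of this oracle evaluates $\H_{ij}$ to additive accuracy $\varepsilon$ using $\widetilde{O}(1/\varepsilon)$ queries to $O_f$, with phase-estimation-style value extraction.

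For the $s$-sparse case I loop over rows. For each row $i$, Grover/BBHT search over the $d$ column candidates, with a mark predicate detecting $\H_{ij}\neq 0$ via coarse amplitude estimation on the above oracle, locates the $\leq s$ nonzero columns at cost $\widetilde{O}(\sqrt{sd})$ queries; each located entry is then measured by full-precision phase estimation at cost $\widetilde{O}(1/\varepsilon)$. Summed over $d$ rows this yields $\widetilde{O}(d\sqrt{sd} + sd/\varepsilon)$, which collapses to $\widetilde{O}(sd/\varepsilon)$ in the relevant regime where $1/\varepsilon$ dominates $\sqrt{d/s}$. For the $m$-nonzero case, letting $m_i$ denote the number of nonzeros in row $i$, the same per-row routine costs $\widetilde{O}(\sqrt{d m_i}/\varepsilon)$, and Cauchy--Schwarz gives
\[
\sum_{i=1}^{d}\sqrt{d\, m_i}\;\le\; d\sqrt{\textstyle\sum_{i=1}^d m_i}\;=\;d\sqrt{m},
\]
producing the claimed $\widetilde{O}(d\sqrt{m}/\varepsilon)$ total.

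The main obstacle will be controlling error propagation through the composition of two approximation layers: the nested $(2m+1)$-point finite differences (a difference of a difference of $O_f$), and the Grover localization whose mark predicate is itself a noisy amplitude-estimation output. Concretely, one must set the per-entry precision tightly enough that a union bound over all candidate entries preserves the max-norm guarantee $\|\widetilde{\H}-\H_f(\0)\|_{\max}\leq\varepsilon$, and verify that the polylog overheads from amplitude estimation and success-probability boosting remain absorbed inside $\widetilde{O}$.
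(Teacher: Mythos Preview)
Your approach is quite different from the paper's, and the $s$-sparse bound does not actually follow from your argument because of a cost you have underaccounted.

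The gap is in the Grover step. Your mark predicate must decide whether $|\H_{ij}(\0)|$ is (say) above $\varepsilon/2$; without resolution at scale $\varepsilon$ you cannot distinguish a genuine nonzero entry of size $\Theta(\varepsilon)$ from zero, and the output would fail the max-norm guarantee. But your entrywise finite-difference oracle returns $\H_{ij}$ to additive accuracy $\delta$ only at cost $\widetilde{O}(1/\delta)$ queries to $O_f$, so each mark-predicate call already costs $\widetilde{O}(1/\varepsilon)$, not $O(1)$ as your ``coarse amplitude estimation'' phrasing suggests. With this correction the per-row search costs $\widetilde{O}(\sqrt{sd}/\varepsilon)$ and the total over $d$ rows is $\widetilde{O}(d^{1.5}\sqrt{s}/\varepsilon)$, which for every $s<d$ is strictly worse than the claimed $\widetilde{O}(sd/\varepsilon)$. (Interestingly, the same correction in your $m$-nonzero argument still gives $\widetilde{O}(d\sqrt{m}/\varepsilon)$ via Cauchy--Schwarz, so that half survives.)

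The paper avoids entrywise probing altogether. For any fixed $\y$ with $\|\y\|=1$ the map $\x\mapsto\tfrac{1}{2}\bigl(f_{(2m)}(\x+\y)-f_{(2m)}(\x)\bigr)$ is approximately affine with gradient $H\y$, so one call to the GAW gradient routine recovers the \emph{entire vector} $H\y$ at cost $\widetilde{O}(\sqrt{d}/(a\varepsilon))=\widetilde{O}(d/\varepsilon)$ (here $a\approx 1/\sqrt{d}$ because the finite-difference error forces $\y\in\{\pm 1/\sqrt{d}\}^d$). The sparsity saving then comes from how many $\y$'s one samples: a Schwartz--Zippel/union-bound argument shows that $k=\widetilde{O}(s)$ independent random $\y$'s suffice to pin down every $s$-sparse row of $H$ from $\{H\y_1,\dots,H\y_k\}$, giving $\widetilde{O}(s)\cdot\widetilde{O}(d/\varepsilon)=\widetilde{O}(sd/\varepsilon)$. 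The $m$-nonzero case runs this with $s\approx\sqrt{m/\log(qd)}$ to learn all sparse rows and identify the at most $\widetilde{O}(\sqrt{m})$ dense ones, then learns each dense row directly, for a total of $\widetilde{O}(\sqrt{m}\,d/\varepsilon)$. The essential idea your plan is missing is this use of \emph{random linear measurements} $H\y$ (each obtained in one shot via the Fourier-based gradient subroutine) together with a compressed-sensing-style recovery, rather than Grover search over individual matrix entries.
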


In the above, there is no need to know the positions of nonzero entries if the Hessian is $s$-sparse or contains at most $m$ terms. Classically, without knowing this information, it may still cost $O(d^2)$ queries to estimate $\H_f(\0)$. 
The above results show that when $s$ and $m$ are small, the cost is linear or even polylogarithmic in the dimension. 
%Theorem \ref{thm: intro new2} shows that the quantum spectral method is exponentially faster than classical algorithms for Hessian estimation. 
Although the quantum speedup using the finite difference method is limited, our lower bound analysis in Proposition \ref{prop: lower bound of real-valued Hessian estimation} indeed shows that the quantum algorithms given in Theorem \ref{thm: intro new1} are optimal in terms of $d$.
Sparse Hessian matrices are quite common in optimization \cite{fletcher1997computing,d2638b8e-8fdc-3d6d-a044-1384db1d1f3a, coleman1984estimation,coleman1984large,gebremedhin2005color}, so the above results might be helpful in accelerating some classical optimization problems.

%\cs{Check: use $\nabla f(\0)$ or $\nabla f(\x_0)$, which is better?}
%\yz{The following passage is a bit confusing to me, and I haven't found an appropriate way to adjust it.}
In the end, we remark that the function assumption in all quantum algorithms based on the spectral method is the same as in the classical case. The only difference is the oracle assumption. Namely, we have to use $O_{f_1}, O_{f_2}$ in the quantum case. However, it is not clear to us if better classical algorithms can be found under this oracle assumption.

\subsection{Summary of the key ideas}

Below, we summarize our key idea for gradient and Hessian estimation. To facilitate understanding, it is helpful to recall the fundamental procedure of previous quantum algorithms for approximating gradients \cite{jordan2005fast, gilyen2019optimizing}. For illustrative purposes, we consider the special case where $f(\x)=\boldsymbol{g}\cdot \x$ is a real-valued linear function for some unknown vector $\boldsymbol{g}=(g_1,\ldots,g_d) \in \mathbb{Z}_N^d$. 
In this case, the gradient of $f(\x)$ is simply $\g$.
%Now $\g$ is the gradient of $f(\x).$ 
Given a phase oracle $O_f:\Ket{\x} \rightarrow e^{2\pi if(\x)/N} \Ket{\x}$, we can generate the quantum state:
$$
\frac{1}{\sqrt{N^d}} \sum_{\x\in \mathbb{Z}_N^d} e^{\frac{2\pi if(\x)}{N}} \Ket{\x}
=
\bigotimes_{j=1}^d \frac{1}{\sqrt{N}} \sum_{x_j\in \mathbb{Z}_N} e^{\frac{2\pi i g_j x_j}{N}} \Ket{x_j}.
$$
We obtain $\Ket{g_1,\ldots,g_d}$ by applying the inverse quantum Fourier transform to the above state. 
This approach is also effective when $f(\x)$ is close to a linear function $\boldsymbol{g}\cdot \x$, as demonstrated in \cite[Theorem 5.1]{gilyen2019optimizing}. Therefore, a core idea of \cite{jordan2005fast,gilyen2019optimizing} is to refine the representations of $\nabla f(\0)$ using higher-order finite difference methods to obtain a function that is close to $\nabla f(\0) \cdot \x$.

Our approach follows the same principle but utilizes the spectral method \cite{lyness1967numerical,trefethen2000spectral}, which is a useful and practical method for gradient estimation in classical computation, to approximate the gradient $\nabla f(\0)$. To elaborate on this, consider a univariate function $f(x)$ that is analytic at point $x_0\in\mb C$ such that the Taylor series of $f$ at $x_0$ converges in an open neighbourhood of $x_0\in \mb C$ containing $\overbar{B}(x_0,r) \subset \mb C$, the closed disk centered at $x_0$ with radius $r \in \mb R_+$. In $\overbar{B}(x_0,r)$, $f(x)$ can be expressed by the Taylor series:
$$
f(x)=\sum_{n=0}^{ \infty} a_n(x-x_0)^n, \quad a_n=\frac{f^{(n)}(x_0)}{n!}.
$$
Then there exists a constant $\kappa$ such that $|a_n| \leq \kappa r^{-n}$ for all $n \in \mathbb{N}$, as detailed in Lemma \ref{lem: bound of derivatives}. Choose some $\delta\in(0,r)$, say $\delta=r/2$, and let $\omega = e^{-2\pi i/N}$. Then, 
$$
f_k :=f (x_0+\delta \omega^k )=\sum_{n=0}^{\infty} a_n (\delta \omega^k )^n
=\sum_{n=0}^{N-1} \omega^{k n} c_n,
$$
where $c_n=\sum_{m=0}^{\infty} a_{n+m N} \delta^{n+m N}$. Applying the inverse of the discrete Fourier transform gives us
$$
c_n=\frac{1}{N} \sum_{k=0}^{N-1} \omega^{-k n} f_k.
$$
From this, we obtain the bound:
$$
\left|a_n-\frac{c_n}{\delta^n}\right| \leq \kappa r^{-n} \sum_{m=1}^{\infty}(\delta / r)^{m N} 
=\kappa r^{-n} \frac{(\delta / r)^N}{1-(\delta / r)^N}.
$$
Particularly, for $n=1$, 
$$
\left|f'(x_0)-\frac{c_1}{\delta}\right| \leq 
\kappa r^{-1} \frac{(\delta / r)^N}{1-(\delta / r)^N}.
$$
When $N=O(\log(\kappa/\varepsilon r))$, we can approximate $f'(x_0)$ up to an additive error $\varepsilon$.

For a multivariable function $f(\x)$, to approximate $\nabla f(\0)$, we consider $h(\tau):= f(\tau \x)$ and view it as a univariate analytic function of $\tau$. It follows that $h'(0)=\nabla f(\0)\cdot \x$. Using the above analysis, which will be detailed further in Theorem \ref{thm: spectral formula for gradient}, the spectral method enables us to derive a real-valued function $F(\x)$ with a form similar to $c_1$, which can be used to approximate $\nabla f(\0) \cdot \x$.
%, and then apply the previous quantum algorithm. 
Additionally, we can construct the phase oracle $\mathrm{O}:|\boldsymbol{x}\rangle \rightarrow e^{2 \pi i 2^{n_{\varepsilon}} F(\boldsymbol{x})}|\boldsymbol{x}\rangle$ using 
%$O(N)$ queries to $U_f$ with precision $\eta=O(\varepsilon \delta)$ as described in Lemma \ref{lem: binary oracle conversion in spectral method}, or 
$\widetilde{O}\left( 1/\varepsilon\delta + N \right)$ queries to $O_{f_1}, O_{f_2}$, as described in Lemma \ref{lem: phase oracle conversion in spectral method}. Combining this with the previous idea described at the beginning of this subsection, we obtain a quantum algorithm that estimates $\nabla f(\0)$.

Regarding Hessian estimation (i.e., the quantum algorithms mentioned in Theorems \ref{thm: intro, Hessian using spectral} and \ref{thm: intro, Hessian using finite difference}), our basic idea is to reduce the task to gradient estimation. For illustrative simplicity, assume $f(\x)=\Bra{\x} H \Ket{\x}$ for some unknown symmetric Boolean Hermitian matrix $H$, which is the Hessian of $f$. Given a phase oracle to query $f$ on a quantum computer, a basic idea for learning $H$ is to use Bernstein-Vazirani's algorithm \cite{bernstein1993quantum}, by considering $g(\x):=\frac{1}{2}(f(\x+\y)-f(\x)) = \Bra{\x} H \Ket{\y} + \frac{1}{2}\Bra{\y} H \Ket{\y}$ for a fixed $\y$. This defines a linear function of $\x$. Via Bernstein-Vazirani's algorithm, we can learn $H \Ket{\y}$ with $1$ query to $g$, equating to $2$ queries to $f$. By selecting $\Ket{\y} = \Ket{i}$ for $i\in [d]$, we can then learn all the columns of $H$. This algorithm for learning $H$ is indeed optimal \cite{montanaro2012quantum}. With the quantum algorithms for gradient estimation, we can use a similar idea to estimate the Hessians. 

The main difficulty is the error analysis caused by the numerical methods. In practice, we only have a function $\tilde{f}(\x)$ that is close to $\x^T H \x$. To use the above idea, we hope that $\frac{1}{2}(\tilde{f}(\x+\y)-\tilde{f}(\x))$ is close to $ \Bra{\x} H \Ket{\y} + \frac{1}{2}\Bra{\y} H \Ket{\y}$ for any fixed $\y$, say $\y=\Ket{i}$. For the spectral method, this is easy to satisfy. However, the finite difference method is based on Taylor expansion, which provides a good approximation only when the norms of $\x,\y$ are small. Choosing a $\y$ with a small norm, such as $\y=\Ket{i}/\sqrt{d}$, will increase the query complexity because we will obtain an approximation of $H\y$ (which is $H\Ket{i}/\sqrt{d}$ in this example, whereas what we really want is an approximation of $H\Ket{i}$). It is hard to avoid this due to the nature of the finite difference method.
This is the main reason why the query complexity of the quantum algorithm based on the finite difference method is higher than that based on the spectral method, as seen in Theorems \ref{thm: intro, Hessian using spectral} and \ref{thm: intro, Hessian using finite difference}. 
%\yz{add something below.}
However, it is important to note that this does not imply that the finite difference method has reached its optimal performance in Hessian estimation. While our approach shows that the spectral method outperforms the finite difference method, it remains possible that more efficient quantum algorithms could be developed using other techniques based on the finite difference method.
%However, it is possible that more efficient quantum algorithms can be proposed using other techniques based on the finite difference method.

When the Hessian is promised to be $s$-sparse, then randomly generating $k:=\widetilde{O}(s)$ samples $\y_1,\ldots,\y_k$ is sufficient to determine $H$. Now we can determine $H$ by solving a linear system $HY=B$, where $Y=[\y_1,\ldots,\y_k]$ and $B$ is the output matrix of gradient estimation with input $Y$. With this idea, we can propose two new quantum algorithms presented in Theorems \ref{thm: intro new2} and \ref{thm: intro new1}. Due to the reason mentioned above, even in the sparse case, there is only a saving of $\sqrt{d}$ for the finite difference method. 
%\yz{add something below.}
Similarly, in the sparse case, more efficient approaches based on the finite difference method may also exist.
Moreover, it is worth noting that our algorithm based on the spectral method achieves a result, with a query complexity of $\widetilde{O}(s/\varepsilon)$, which is independent of $d$.
%However, the quantum spectral method can be exponentially faster than classical algorithms.

\subsection{Comparison with Gily\'{e}n-Arunachalam-Wiebe's algorithm}
\label{subsec: intro, comparison}
In this subsection, we discuss and highlight the similarities and differences between the Gily\'{e}n-Arunachalam-Wiebe (hereafter abbreviated as GAW) algorithm and our algorithm for gradient estimation. Below, $G_n^d$ represents the grid defined in Definition \ref{def: grid Gn}.

\subsubsection{Similarity}

Both GAW's algorithm and our algorithm are based on the following Lemma \ref{lem: intro, GAW's gradient estimation thm}, which arises from Jordan's quantum algorithm for gradient estimation \cite{jordan2005fast}. For more details, refer to \cite[Theorem 5.1]{gilyen2019optimizing}. 

\begin{lem}\label{lem: intro, GAW's gradient estimation thm}
    Let $c \in \mb R, a, \rho, \varepsilon<M \in \mathbb{R}_{+}$, and $\x_0, \boldsymbol{g} \in$ $\mathbb{R}^d$ such that $\|\g\|_{\infty} \leq M$. Let $n_{\varepsilon}=\left\lceil\log _2(4 /a \varepsilon)\right\rceil$, $n_M=\left\lceil\log _2(3 a M)\right\rceil$, and $n=n_{\varepsilon}+n_M$. Suppose $\tilde{f}:\left(\boldsymbol{x}_0+a G_n^d\right) \rightarrow \mathbb{R}$ satisfies:
    \begin{equation}\label{eq: intro, error condition}
        \left|\tilde{f}(\boldsymbol{x}_0+a \boldsymbol{x})-\boldsymbol{g} \cdot a \boldsymbol{x} - c\right| \leq \frac{\varepsilon a}{8 \cdot 42 \pi}
    \end{equation}
    for all but a $1/1000$ fraction of the points $\boldsymbol{x} \in G_n^d$. 
    Given access to a phase oracle $\mathrm{O}:|\boldsymbol{x}\rangle \rightarrow e^{2 \pi i 2^{n_{\varepsilon}} \tilde{f}(\boldsymbol{x}_0+a \boldsymbol{x})}|\boldsymbol{x}\rangle$ acting on $\mathcal{H}=\operatorname{Span}\left\{|\boldsymbol{x}\rangle: \boldsymbol{x} \in G_n^d\right\}$, there is a quantum algorithm that can calculate a vector $\tilde{\boldsymbol{g}} \in \mathbb{R}^d$ such that
    $$
    \operatorname{Pr}\Big[\|\tilde{\boldsymbol{g}}-\boldsymbol{g}\|_{\infty}>\varepsilon\Big] \leq \rho,
    $$
    with $O\left(\log(d/\rho)\right)$ queries to $\mathrm{O}$.
\end{lem}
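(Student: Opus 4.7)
My plan is to follow the standard Jordan--Bernstein--Vazirani strategy used in \cite{gilyen2019optimizing}. First I would prepare the uniform superposition
\[
\Ket{\psi_0} = N^{-d/2} \sum_{\x \in G_n^d} \Ket{\x},
\]
where $N = 2^n$, then apply a single query to $\mathrm{O}$ to obtain
\[
\Ket{\psi_1} = N^{-d/2} \sum_{\x \in G_n^d} e^{2\pi i 2^{n_\varepsilon} \tilde f(\x_0 + a\x)} \Ket{\x},
\]
and finally apply the inverse quantum Fourier transform on each of the $d$ coordinate registers, followed by measurement in the computational basis.

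Next I would analyze the ``ideal'' state $\Ket{\phi}$ obtained by replacing the phase $\tilde f(\x_0+a\x)$ with $\g\cdot a\x + c$: the constant $c$ contributes only a global phase, so $\Ket{\phi}$ factors as a tensor product of univariate Fourier basis states, and $(\mathrm{QFT}^{-1})^{\otimes d}$ decodes each of them to a fixed-point binary encoding of $2^{n_\varepsilon} a g_j$. The choices $n_\varepsilon = \lceil\log_2(4/(a\varepsilon))\rceil$ and $n_M = \lceil\log_2(3aM)\rceil$ together with $\|\g\|_\infty\leq M$ ensure that these integers lie well inside a range of size $N$, so no wrapping modulo $N$ occurs, and the rounded outcomes give $\tilde g_j$ with $|\tilde g_j - g_j|\leq 1/(2^{n_\varepsilon} a)\leq \varepsilon/4$.

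The heart of the argument is bounding $\|\Ket{\psi_1}-\Ket{\phi}\|_2$. I would split $G_n^d$ into the ``good'' set, on which (\ref{eq: intro, error condition}) holds, and its ``bad'' complement of fractional size at most $1/1000$. On good points the elementary inequality $|e^{i\alpha}-e^{i\beta}|\leq |\alpha-\beta|$ gives a per-point phase error at most $2\pi\cdot 2^{n_\varepsilon}\cdot\tfrac{\varepsilon a}{8\cdot 42\pi}$, which by the definition of $n_\varepsilon$ is an explicit small constant; on bad points the trivial bound $|e^{i\alpha}-e^{i\beta}|\leq 2$ contributes at most $\sqrt{4/1000}$ to the $\ell_2$ distance. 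Adding the two contributions keeps $\|\Ket{\psi_1}-\Ket{\phi}\|_2$ strictly below a fixed constant less than $1$. Unitarity of the inverse QFT preserves this distance, so a measurement recovers the joint fixed-point encoding of $\g$ with probability at least some $p_0 > 1/2$, and rescaling yields $\tilde\g$ with $\|\tilde\g-\g\|_\infty\leq\varepsilon$.

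Finally, to drive the failure probability down to $\rho$, I would repeat the whole procedure $O(\log(d/\rho))$ times and take the coordinate-wise median of the outputs, combining a Chernoff bound on each coordinate with a union bound over the $d$ coordinates. The main obstacle is the state-distance estimate above: one must track the constants in the hypothesis (\ref{eq: intro, error condition}) and in the choice of $n_\varepsilon$ carefully enough that the combined ``good'' and ``bad'' contributions fall strictly below $1$, because only this strict bound secures a constant margin above $1/2$ for the per-coordinate measurement and hence justifies median amplification at the claimed query cost.
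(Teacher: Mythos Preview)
Your approach is exactly the one sketched in the paper (which in turn follows \cite{gilyen2019optimizing}): prepare the uniform superposition, apply $\mathrm{O}$, apply the inverse QFT coordinate-wise, and amplify by taking a coordinate-wise median over $O(\log(d/\rho))$ repetitions. The state-distance estimate you outline (splitting into good and bad grid points) is also the right computation, and with the given constants it indeed yields $\|\Ket{\psi_1}-\Ket{\phi}\|\lesssim 1/21+\sqrt{4/1000}$.

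There is, however, one genuine gap. You claim that this small $\ell_2$ distance implies that ``a measurement recovers the \emph{joint} fixed-point encoding of $\g$ with probability at least some $p_0>1/2$''. That is false in general. Even for the ideal product state $\Ket{\phi}$, the values $2^{n_\varepsilon}a g_j$ are not grid points, so after the inverse QFT each coordinate is only recovered with some constant probability strictly less than $1$; the probability of getting all $d$ coordinates correct simultaneously can therefore be exponentially small in $d$, and an $O(1)$ perturbation in $\ell_2$ cannot rescue this. What the argument actually needs (and what \cite{gilyen2019optimizing} proves) is the \emph{per-coordinate} statement: for each fixed $j$, the marginal probability that the $j$-th measured outcome lies within $\varepsilon$ of $g_j$ is at least some $p_0>1/2$. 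This follows because (i) on the ideal state the $j$-th register is a single-variable Fourier phase state whose inverse QFT is concentrated within a few grid cells of $g_j$, and the choice of $n_\varepsilon$ ensures those cells all sit inside the $\varepsilon$-window; and (ii) your $\ell_2$ bound shifts the expectation of the projector ``coordinate $j$ correct'' by at most $2\|\Ket{\psi_1}-\Ket{\phi}\|$. Your final step (coordinate-wise median, Chernoff on each coordinate, union bound over the $d$ coordinates) is exactly the right device, but it rests on this per-coordinate guarantee, not on the joint one you stated; once you replace the joint claim by the per-coordinate one, the proof goes through as written.
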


With this result, it becomes clear that for any function $f$, as long as we can construct a function $\tilde{f}$ based on $f$ such that condition (\ref{eq: intro, error condition}) holds for $\boldsymbol{g}=\nabla f(\x_0)$, then we can apply this lemma to approximate the gradient $\nabla f(\x_0)$.

\subsubsection{Difference 1: Approximation methods}
The first and most significant difference is the choice of functions used to approximate $\nabla f(\0)\cdot \x$ such that condition (\ref{eq: intro, error condition}) is satisfied. Without loss of generality, we set $\x_0=\0$ for convenience. The GAW algorithm deals with real-valued functions $f:\mb R^d\rightarrow \mb R$ that satisfy some specific smoothness conditions. They used the following degree-$2m$ central difference approximation:
\be
\label{intro: formula 1}
f_{(2m)}(\x)=\sum_{\substack{\ell=-m \\ \ell \neq 0}}^m \frac{(-1)^{\ell-1}}{\ell} \frac{\binom{m}{|\ell|}}{\binom{m+|\ell|}{|\ell|}} f(\ell \boldsymbol{x}) \approx \nabla f(\mathbf{0}) \cdot \boldsymbol{x} .
\ee
It was shown that
$$
|f_{(2m)}(\x) - \nabla f(\0) \cdot \x| \leq e^{-m/2} B \|\x\|^{2m+1},
$$
where $\left|\partial_{\boldsymbol{r}}^{2 m+1} f(\tau \boldsymbol{x})\right| \leq B$ for $\boldsymbol{r}=\boldsymbol{x} /\|\boldsymbol{x}\|$ and all $\tau \in[-m, m]$. 
The directional derivative $\partial_{\boldsymbol{r}}$ is defined in Definition \ref{defn: Directional derivative}.
Particularly, if $\left|\partial^\alpha f\right| \leq c^k k^{k / 2}$ for all $k\in\mb N$ and $\alpha\in \mb N_0^d$ with $|\alpha|=k$, then the error above is bounded by
$$
\sum_{k=2 m+1}^{\infty}\left(8 a c m \sqrt{d}\right)^k
$$
for all but a 1/1000 fraction of points $\boldsymbol{x} \in a G_n^d$. To satisfy condition (\ref{eq: intro, error condition}), 
it suffices to choose $a$ such that $a^{-1}=O(c m \sqrt{d}(c m \sqrt{d} / \varepsilon)^{1 / 2 m})$. This makes the query complexity of their algorithm $\widetilde{O}(\sqrt{d}/\varepsilon)$.

In comparison, our algorithm differs in that it requires sampling function values within the complex domain, i.e., evaluating some $f(\z)$ for $\z\in\mb C^d$. To be more specific, our method deals with analytic functions that can evaluate points in the complex plane near $\0$. Moreover, these functions map reals to reals.
Indeed, all we need is $\nabla f(\mb R^d)\subseteq \mb R$, which ensures the implementation of the quantum Fourier transform as an indispensable part of the algorithm. 
The key finding is the following formula for approximating gradients, derived using the spectral method:
\begin{equation}\label{eq: intro, spectral formula}
    F(\boldsymbol{x}) = \frac{1}{N \delta}\sum\limits_{k=0}^{N-1} \omega^{-k} f(\delta \omega^k \boldsymbol{x}), \quad \text{where } \omega=e^{-2\pi i /N}.
\end{equation}
For further details, see Theorem \ref{thm: spectral formula for gradient}, where we prove that the error bound is given by
$$
    |F(\boldsymbol{x})- \nabla f(\boldsymbol{0}) \cdot \boldsymbol{x}| \leq \frac{\kappa}{2r} \frac{(\delta /2r)^N}{1-(\delta /2r)^N}    
$$
for some constant $\kappa\in \mb R_+$. As analyzed in Theorem \ref{thm: gradient estimation using spectral method}, by selecting $\delta<2r$ and $N=O\left(\log (\kappa/\varepsilon)\right)$, we can ensure that the above error is bounded by the right-hand side of condition (\ref{eq: intro, error condition}). We also analyze the number of queries to $O_{f_1},O_{f_2}$ required to implement the phase oracle $\mathrm{O}:|\boldsymbol{x}\rangle \rightarrow e^{2 \pi i 2^{n_{\varepsilon}} F(\boldsymbol{x})}|\boldsymbol{x}\rangle$ in Lemma \ref{lem: phase oracle conversion in spectral method}. Finally, the overall query complexity of our algorithm is $\widetilde{O}(1/\varepsilon)$.

\begin{figure}[h]
    \centering

\begin{tikzpicture}
\draw[thick] (0,0) circle (2cm);
\draw[thick,->] (-4,0) -- (4,0) 
node[anchor=north west] {};
\draw[thick,->] (0,-3) -- (0,3) 
node[anchor=south] {};

\draw[thick] (1.414,1.414)
node[anchor=south west] {$\delta \omega^2 x$};
\draw[thick] (1.847,0.765)
node[anchor=south west] {$\delta \omega x$};
\draw[thick] (1.847,-0.765)
node[anchor=north west] {$\delta \omega^{N-1} x$};

\draw[thick] (0,0) node[anchor=south west] {0};

\draw[thick] (3.7,2.5) node[anchor=west] {Spectral method (our):};
\draw[thick] (3.7,1.5) node[anchor=west] {$\displaystyle \frac{1}{N \delta}\sum\limits_{k=0}^{N-1} \omega^{-k} f({\color{red}\delta \omega^k x}) \approx f'(0) x$};

\draw[thick,-] (3.6,0.7) -- (9,0.7) node[anchor=south] {};
\draw[thick,-] (3.6,0.7) -- (3.6,2.9) node[anchor=south] {};
\draw[thick,-] (3.6,2.9) -- (9,2.9) node[anchor=south] {};
\draw[thick,-] (9,0.7) -- (9,2.9) node[anchor=south] {};

\draw[thick] (3.7,-.8) node[anchor=west] {Finite difference method \cite{gilyen2019optimizing}:};
\draw[thick] (3.7,-2) 
node[anchor=west] {$\displaystyle \sum_{\substack{\ell=-m \\ \ell \neq 0}}^m \frac{(-1)^{\ell-1}}{\ell} \frac{\binom{m}{|\ell|}}{\binom{m+|\ell|}{|\ell|}} f({\color{blue}\ell  x}) \approx f'(0) x $};

\draw[thick,-] (3.6,-0.4) -- (10,-0.4) node[anchor=south] {};
\draw[thick,-] (3.6,-0.4) -- (3.6,-2.9) node[anchor=south] {};
\draw[thick,-] (3.6,-2.9) -- (10,-2.9) node[anchor=south] {};
\draw[thick,-] (10,-0.4) -- (10,-2.9) node[anchor=south] {};

\filldraw [black] (0,0) circle (3pt);

\filldraw [red] (2,0) circle (2pt);
\filldraw [red] (-2,0) circle (2pt);
\filldraw [red] (0,2) circle (2pt);
\filldraw [red] (0,-2) circle (2pt);

\filldraw [red] (1.414,1.414) circle (2pt);
\filldraw [red] (1.414,-1.414) circle (2pt);
\filldraw [red] (-1.414,1.414) circle (2pt);
\filldraw [red] (-1.414,-1.414) circle (2pt);

\filldraw [red] (1.847,0.765) circle (2pt);
\filldraw [red] (-1.847,-0.765) circle (2pt);
\filldraw [red] (0.765,1.847) circle (2pt);
\filldraw [red] (-0.765,-1.847) circle (2pt);

\filldraw [red] (1.847,-0.765) circle (2pt);
\filldraw [red] (-1.847,0.765) circle (2pt);
\filldraw [red] (-0.765,1.847) circle (2pt);
\filldraw [red] (0.765,-1.847) circle (2pt);

\filldraw [blue] (0.8,0) circle (2pt);
\filldraw [blue] (-0.8,0) circle (2pt);
\filldraw [blue] (1.6,0) circle (2pt);
\filldraw [blue] (-1.6,0) circle (2pt);
\filldraw [blue] (2.4,0) circle (2pt);
\filldraw [blue] (-2.4,0) circle (2pt);
\filldraw [blue] (3.2,0) circle (2pt);
\filldraw [blue] (-3.2,0) circle (2pt);

\draw[thick] (1.6,0) 
node[anchor=north] {$2x$};
\draw[thick] (0.8,-0.06) 
node[anchor=north] {$x$};

\draw[thick] (-1.6,0) 
node[anchor=north] {$-2x$};
\draw[thick] (-0.8,0) 
node[anchor=north] {$-x$};

\end{tikzpicture}

\caption{Comparison between GAW's formula and ours.
The figure illustrates the points that are used to approximate the derivative for functions of dimension $d=1$. The blue points on the $x$ axis are used in the higher-order finite difference method and the red points on the circle are used in the spectral method.}
\label{fig1}
\end{figure}
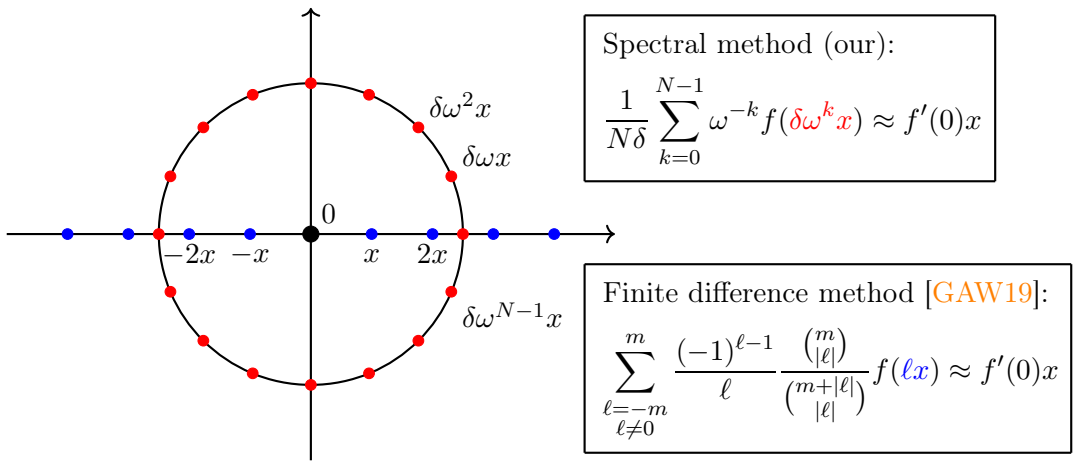

In contrast to the commonly used finite difference methods, the spectral method offers better error performance in approximating derivatives. In our opinion, one of the key reasons for this improvement is the way of selecting sampling points. As illustrated in Figure \ref{fig1}, for $1$-dimensional functions, the finite difference method samples points along a straight line, while the spectral method utilizes points on a circle around the target point $0$, intuitively demonstrating the advantage, as the derivative is a local property.
Besides, as we can see from formula (\ref{intro: formula 1}), the coefficients associated with sampling points further from the target point are smaller in the finite difference method. This implies that these points contribute less to the approximation of the derivative at the target point. In contrast, the spectral method exhibits coefficients with the same magnitude for all sampling points, indicating a more efficient utilization of information from all points. This also demonstrates the superiority of the spectral method.

\begin{rmk}
% The advantage of spectral methods has been observed in other applications, such as solving differential equations on a quantum computer, where spectral methods \cite{childs2020quantum} have demonstrably improved complexity results than finite difference method \cite{berry2014high}.

We believe that the exponential speedup achieved here can be significantly attributed to the spectral method. This method drastically reduces the error in approximating the derivative of a function at some points using the function values at sampled points, compared to the finite difference formulae. 
To our knowledge, the application of the spectral method in quantum algorithms for gradient estimation is novel. 
The advantage of the spectral method has also been observed in other applications, such as solving differential equations on a quantum computer, where the spectral method \cite{childs2020quantum} has demonstrably improved complexity results compared to the finite difference method \cite{berry2014high}.
% The higher-order difference method and the spectral method were used to design quantum algorithms for solving differential equations \cite{berry2014high, childs2020quantum}. 
% The quantum spectral method has complexity $O({\rm polylog}(1/\varepsilon))$ \cite{childs2020quantum}, while the quantum higher-order difference method has complexity $O(1/\varepsilon^2)$ \cite{berry2014high}. As we can see, for solving differential equations the quantum spectral method is more efficient than the quantum higher-order difference method. For the gradient estimation problem, we also showed that the spectral method can be more efficient than the higher-order difference method.
\end{rmk}

% Therefore, the query complexity result for our quantum gradient estimation algorithm is
% \begin{itemize}
%      \item $O\left(\log \big(\frac{d}{\rho}\big) \log \big(\frac{\kappa}{\varepsilon}\big) \right)$ queries to \( U_f \) with precision $\eta \in O(\varepsilon)$, or
%     \item $O\left(\log\big(\frac{d}{\rho}\big)\left(\frac{1}{\varepsilon} +\log\big(\frac{\kappa}{\varepsilon}\big)\log\big(\frac{1}{\eta} \log\big(\frac{\kappa}{\varepsilon}\big) \big)\right) \right)$ queries to $O_{f_1}$ and $O_{f_2}$.
% \end{itemize}

\subsubsection{Difference 2: Oracle settings}
The second difference is the setting of oracles. 
The GAW algorithm considers real-valued functions and naturally assumes quantum access to the phase oracle $O_f:\Ket{\x} \rightarrow e^{i f(\x)} \Ket{\x} $.
%, which can be converted to the binary oracle $U_f:\Ket{\x} \Ket{\0} \rightarrow \Ket{\x} \Ket{f(\x)}$ and vice versa.
In contrast, our approach deals with complex-valued functions, requiring evaluation at points in the complex domain. Thus, the standard phase oracle $O_f$ is no longer directly applicable, since it is not a unitary operator. Instead, we assume access to phase oracles for the real and imaginary parts of $f(\x)=f_1(\x)+if_2(\x)$, denoted as $O_{f_1}, O_{f_2}$, respectively, as outlined in (\ref{intro: phase oracle}). 
Similar to classical computing, the real and imaginary parts of a complex value are stored separately as their finite-precision binary encoding. Fortunately, our formula given in (\ref{eq: intro, spectral formula}) satisfies $F(\x)\in \mb R$, and we can construct the corresponding phase oracle $O_F$ using the given oracles $O_{f_1}, O_{f_2}$.
%This allows us to perform the quantum Fourier transform effectively. 
Consequently, Lemma \ref{lem: intro, GAW's gradient estimation thm} still works. It is worth noting that while our oracle assumptions differ from those in the GAW algorithm, they remain standard and natural, akin to how data is stored in classical computing.

%\yz{make some adjustments and turn it into a Remark}

Our oracle assumption is stronger, as in many practical applications, $f(\x)$ is real-valued and analytic in the real field.
Therefore, the standard oracle assumption is $O_f$ to query real $\x$.
When $\x\in \mathbb{C}^d$, it is usually not straightforward to construct $O_{f_1}, O_{f_2}$ from $O_f$.
However, in some cases, this is possible. A trivial example is linear functions. Moreover, for the class of functions and oracle settings we consider, we found nontrivial examples with practical relevance, as described in the following remark.

\subsection{Relevant previous results}

%\cs{In this part, introduce some previous results on gradient and hessian estimation.}

Apart from the two references \cite{jordan2005fast, gilyen2019optimizing} we mentioned previously, there are also many related works focusing on optimizing the quantum computation of gradients and Hessians in specific application contexts.
For instance, reference \cite{teo2023optimized} provided optimized methods for estimating gradients and Hessians in variational quantum algorithms.
Reference \cite{gao2021quantum} presented a quantum gradient algorithm for general polynomials, showcasing the potential of quantum approaches in more specialized mathematical landscapes.
A recent work \cite{apers2023quantum} provided a quantum algorithm for linear programming, which is based on interior point methods and requires efficient approximation of the Hessian and gradient of the barrier function. They provided efficient methods for computing these specific forms of the Hessian and gradient. 
Reference \cite{rebentrost2019quantum} developed quantum versions of gradient descent and Newton's method, and applied them to polynomial optimization tasks.
Reference \cite{kerenidis2020quantum} provided a quantum method for performing gradient descent and used it to speed up the solving of weighted least-squares problems.
Besides, there are related works that investigate the problem from a hardware perspective \cite{schuld2019evaluating, mari2021estimating}.

\subsection{Outline of the paper}

The rest of the paper is organized as follows:
In Section \ref{section: Preliminaries}, we present some preliminary results that will be used throughout the paper.
In Section \ref{section: Quantum spectral method for gradient estimation}, we describe our main quantum algorithm for gradient estimation.
In Section \ref{section: Quantum algorithms for Hessian estimation using spectral method}, we present two quantum algorithms for Hessian estimation using spectral method.
In Section \ref{section: Quantum algorithms for Hessian estimation using finite difference formula}, we present two quantum algorithms for Hessian estimation using finite difference method.
In Section \ref{section: Lower bounds analysis}, we prove some lower bounds of Hessian estimation.
Appendix \ref{appendix: Error analysis of finite difference formula} contains some detailed calculations required in the Hessian estimation.

% \begin{figure}[H]
%     \centering
%     \includegraphics[width=0.8\textwidth]{pic_spectral.pdf}
%     \caption{Sample points in spectral method.}
%     \label{fig: spectral method}
% \end{figure}
% \begin{figure}[H]
%     \centering
%     \includegraphics[width=1\textwidth]{pic_diff.pdf}
%     \caption{Sample points in finite difference method.}
%     \label{fig: finite difference}
% \end{figure}
\section{Preliminaries}
\label{section: Preliminaries}

\textbf{Notation.} For $n\in \mb N$, $[n]$ denotes the set $\{1, 2, \ldots, n\}$. $\mb N_0 = \mb N \cup \{0\}$. 
By $\mathbb{R}_+$, we mean the set of all positive real numbers.
Let $\mb K$ denote the field $\mb R$ or $\mb C$,\footnote{When we use $\mb K$, it implies that the statements apply to both fields of real and complex numbers.} and let $d\in \mb N$ denote the dimension of the space on which function $f: \mb K^d\rightarrow \mb K$ is defined. We use bold letters for vectors, in particular, $\0$ denotes the all-0 vector. The set $\mb K^d$ is equipped with the usual vector space structure. For vectors $\x=(x_1,\ldots,x_d) \in \mb K^d$, we write $\|\x\|=\sqrt{|x_1|^2+\cdots+|x_d|^2}$, and $\|\x\|_{\infty}=\max_{i\in [d]} |x_i|$, where $|x|$ represents the absolute value for $x\in\mb K$. The max norm of matrix $A=\left(a_{ij}\right)$ is $\|A\|_{\max}=\max_{i,j}|a_{ij}|$. With $A^\dag$, we mean the complex conjugate transpose of $A$.

For $n\in \mb N$, an $n$-dimensional multi-index is written as $\nu=(\nu_1, \nu_2, \ldots, \nu_n) \in \mb N_0^n$. The factorial of $\nu$ is $\nu!=\nu_{1}!\nu_{2}!\cdots \nu_{n}!$, and the absolute value is $|\nu|=\nu_1+\nu_2+\cdots +\nu_n$. Given a multi-index $\alpha=(\alpha_1, \ldots, \alpha_d)\in \mb N_0^d$ and $\x=(x_1,\ldots,x_d)\in \mb K^d$, we write $\x^\alpha$ for $x_1^{\alpha_1} x_2^{\alpha_2} \cdots x_d^{\alpha_d}$. For variables $\z=(z_1,\ldots,z_d)\in\mb K^d$, the higher-order partial derivative is denoted as $\partial^{\alpha}=\frac{\partial^{|\alpha|}}{\partial z_1^{\alpha_1}\cdots \partial z_d^{\alpha_d}}$. In other words, $\partial^\alpha=\partial_1^{\alpha_1} \partial_2^{\alpha_2} \ldots \partial_d^{\alpha_d}$, where $\partial_i^{\alpha_i}:=\partial^{\alpha_i} / \partial z_i^{\alpha_i}$.

\subsection{Description of oracles}

\begin{defi}[Binary oracle for complex-valued functions] \label{def: complex binary oracle}
    For $\eta \in \mb R_+$, let $f: \mb C^d \rightarrow \mb C$ be a function with an $\eta$-accurate binary oracle access acting as
    $$
    U_f^{\eta}: |\boldsymbol{x}_1\ket |\boldsymbol{x}_2\ket |\boldsymbol{0}\ket \rightarrow |\boldsymbol{x}_1\ket |\boldsymbol{x}_2\ket |\tilde{f_1}(\boldsymbol{x})\ket |\tilde{f_2}(\boldsymbol{x})\ket,
    $$
    where input $\boldsymbol{x}=\boldsymbol{x}_1+i\boldsymbol{x}_2\in \mb C^d$ and function value $f(\boldsymbol{x})=f_1(\boldsymbol{x})+if_2(\boldsymbol{x})\in \mb C$, the real numbers $f_1(\boldsymbol{x}), f_2(\boldsymbol{x})$ are stored as  their finite-precision binary encodings with precision $\eta$, which means
    $$
    |f_1(\boldsymbol{x})-\tilde{f_1}(\boldsymbol{x})|\leq \eta, \quad |f_2(\boldsymbol{x})-\tilde{f_2}(\boldsymbol{x})|\leq \eta.
    $$
    Sometimes we omit the $\eta$ and write $U_f$. We denote the cost of one query to $U_f^\eta$ as $C(\eta)$.\footnote{The cost function $C(\eta)$ depends on the specific method to obtain $U_f^\eta$. For instance, if $f$ can be calculated using classical circuits, $C(\eta)$ is $\on{polylog} (1/\eta)$, however, the cost is typically $1/\eta$ if $U_f^\eta$ is obtained using quantum amplitude estimation.}
\end{defi}

\begin{rmk}
    In this paper, we store the real and imaginary parts of all complex numbers $z\in \mb C$ separately. However, for clarity and conciseness, we denote them as $|z\ket$, representing actually $|z_1\ket |z_2\ket$ for $z=z_1+iz_2$. The same representation applies to complex vectors $\boldsymbol{z}=\boldsymbol{z}_1+i\boldsymbol{z}_2 \in \mb C^d$.
\end{rmk}

\begin{lem}[Some arithmetic operations of complex values] \label{lem: complex arithmetic}
For $a, b, c \in \mb C$, we can implement unitaries satisfying
\be
|a\ket |b\ket \rightarrow |a\ket |b+ac\ket.
\label{comp_arith_eq}
\ee
\end{lem}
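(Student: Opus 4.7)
The plan is to reduce the complex affine operation to real arithmetic on the underlying binary registers and then appeal to the standard fact that fixed-precision reversible arithmetic circuits lift to unitaries. Writing $a=a_1+ia_2$, $b=b_1+ib_2$, $c=c_1+ic_2$, and recalling that complex kets are stored as $|a\rangle=|a_1\rangle|a_2\rangle$ and $|b\rangle=|b_1\rangle|b_2\rangle$, the target map is
\begin{equation*}
|a_1\rangle|a_2\rangle|b_1\rangle|b_2\rangle \;\longmapsto\; |a_1\rangle|a_2\rangle\,|b_1+a_1c_1-a_2c_2\rangle\,|b_2+a_1c_2+a_2c_1\rangle.
\end{equation*}
So it suffices to construct, for each real constant $c'$, a unitary shearing operation $S_{c'}:|x\rangle|y\rangle\mapsto|x\rangle|y+c'x\rangle$ on two finite-precision real registers, and then compose four such shears, two into the $b_1$-register (using $c_1$ and $-c_2$) and two into the $b_2$-register (using $c_2$ and $c_1$), with $x$ drawn alternately from the $a_1$- or $a_2$-register.

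For $S_{c'}$ itself I would invoke the standard construction: fixed-precision multiplication by a classical constant and addition into a target register are reversible classical operations on binary strings (the map is a bijection on the computational basis once enough bits of working precision are allocated), hence by Bennett's compilation they lift to unitaries on the corresponding quantum registers. Intermediate products $c'x$ are computed into an ancilla, added into the $y$-register, and then uncomputed by running the multiplier in reverse, so no garbage is left and the overall map is unitary. If one wishes to treat $c$ itself as a quantum input stored in a third register, the same template applies after replacing "multiplication by a classical constant" with the standard reversible multiplier acting on two quantum registers.

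The main obstacle is bookkeeping rather than conceptual: one must choose the working precision of the ancilla and target registers large enough that the accumulated rounding from the four real multiply-adds remains within the nominal precision of the $b$-register, and one must schedule the compute–add–uncompute steps so that all scratch ancillas are returned to $|0\rangle$. Both issues are handled by the routine compute–copy–uncompute pattern, which is standard for reversible arithmetic, so once the decomposition above is fixed the construction is mechanical.
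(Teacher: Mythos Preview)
Your proposal is correct and follows essentially the same approach as the paper: both decompose $b+ac$ into its real and imaginary parts and implement the map as a sequence of four real ``multiply-by-constant-and-add'' shears into the $b_1$- and $b_2$-registers. The paper simply cites a reference for reversible real arithmetic where you spell out the Bennett compute--add--uncompute justification, but the construction is the same.
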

\begin{proof}
    By \cite{ruiz2017quantum}, we can do the common arithmetic operations for binary encoding of real numbers. Using these operations as a foundation, we can make some adjustments to handle the case of complex numbers. Assume $a=a_1+ia_2, b=b_1+ib_2$ and $c=c_1+ic_2$, then the following is a process of implementing (\ref{comp_arith_eq}) because $b+ac=(b_1+a_1c_1-a_2c_2)+i(b_2+a_1c_2+a_2c_1)$:
    $$
    \begin{aligned}
        |a_1\ket |a_2\ket |b_1\ket |b_2\ket &\rightarrow |a_1\ket |a_2\ket |b_1+a_1 c_1\ket |b_2\ket \\
        &\rightarrow |a_1\ket |a_2\ket |b_1+a_1 c_1-a_2 c_2\ket |b_2\ket \\
        &\rightarrow |a_1\ket |a_2\ket |b_1+a_1 c_1-a_2 c_2\ket |b_2+a_1 c_2\ket \\
        &\rightarrow |a_1\ket |a_2\ket |b_1+a_1 c_1-a_2 c_2\ket |b_2+a_1 c_2+a_2 c_1\ket.
    \end{aligned}
    $$
\end{proof}

\begin{defi}[Phase oracle]
\label{def: phase oracle}
    Let $\eta\in\mathbb{R}_+$ and $f: X\rightarrow [-\pi,\pi]$ be a real-valued function defined on a set $X$. The phase oracle for $f$ acts as
    $$
    O_f^\eta: |x\ket \rightarrow e^{i\tilde{f}(x)}|x\ket,
    $$
    where $|\tilde{f}(x)-f(x)|\leq \eta$ for all $x\in X$. Sometimes, we will ignore $\eta$ and just write $O_f$. If $f$ is complex-valued with real and imaginary parts $f_1, f_2$, we can similarly define $O_{f_1}$ and $O_{f_2}$ to query $f$.
\end{defi}

The query model is commonly used in the field of quantum query complexity. It is widely known that the binary oracle and phase oracle for real-valued functions can be converted to each other, using the phase-kickback technique and quantum phase estimation. In the case of complex-valued functions, we have a similar result.

\begin{prop}[Binary oracle and phase oracle]
\label{prop: Binary oracle and phase oracle}
    Let $f: \mb C^d \rightarrow \mb C$ be a function.

\begin{itemize}
    \item Given $U_f^\eta$, we can obtain the phase oracles $O_{f_1}^\eta, O_{f_2}^\eta$ for its real and imaginary parts with $2$ applications of $U_f^\eta$. 
    \item Conversely, suppose $f_1(\x), f_2(\x) \in[-\pi,\pi]$ for all $\x\in \mb C^d$, then given $O_{f_1}^\eta, O_{f_2}^\eta$, we can obtain $U_f^{\eta+\varepsilon}$ with $O(1/\varepsilon)$ applications of $O_{f_1}^\eta, O_{f_2}^\eta$.
\end{itemize}

\end{prop}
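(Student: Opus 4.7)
The proposition has two independent directions; both are standard conversions for real-valued functions, and the complex-valued case reduces cleanly because the real and imaginary parts are stored in separate registers, so each half can be treated independently. I would prove the two directions separately.

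\textbf{Binary $\Rightarrow$ phase.} For the first direction I would use phase kickback. Starting from $|\x\rangle|0\rangle|0\rangle$, apply $U_f^\eta$ once to reach $|\x\rangle|\tilde{f}_1(\x)\rangle|\tilde{f}_2(\x)\rangle$. To realize $O_{f_1}^\eta$, apply a diagonal unitary on the bits of the register holding $\tilde{f}_1(\x)$ whose action on a basis state $|y\rangle$ is multiplication by $e^{iy}$; this decomposes into a tensor product of single-qubit $Z$-rotations, one per bit, with rotation angles calibrated to the chosen binary encoding. This kicks back the global phase $e^{i\tilde{f}_1(\x)}$ onto $|\x\rangle$. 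A second query $(U_f^\eta)^{\dagger}$ then uncomputes the ancillae, so two queries to $U_f^\eta$ suffice. The same construction, but with the phase-gate cascade applied to the register holding $\tilde{f}_2(\x)$, yields $O_{f_2}^\eta$.

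\textbf{Phase $\Rightarrow$ binary.} For the converse I would run quantum phase estimation on $O_{f_1}^\eta$, viewed as a unitary whose eigenvalue on the eigenstate $|\x\rangle$ is $e^{i\tilde{f}_1(\x)}$. The hypothesis $f_1(\x)\in[-\pi,\pi]$ (hence $\tilde{f}_1(\x)/(2\pi)\in[-1/2,1/2]$ up to the $\eta$ slack) is what guarantees the phase can be recovered unambiguously. Textbook phase estimation with $\lceil\log_2(1/\varepsilon)\rceil$ ancilla qubits uses $O(1/\varepsilon)$ controlled applications of $O_{f_1}^\eta$ and writes an estimate $\bar f_1(\x)$ into a fresh register satisfying $|\bar f_1(\x)-\tilde{f}_1(\x)|\leq\varepsilon$ with constant probability, hence $|\bar f_1(\x)-f_1(\x)|\leq\eta+\varepsilon$. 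Doing the same with $O_{f_2}^\eta$ and concatenating the two estimates realizes $U_f^{\eta+\varepsilon}$ in the sense of Definition \ref{def: complex binary oracle}.

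\textbf{Main obstacle.} The forward direction is essentially a drop-in of the standard real-valued phase kickback, so presents no difficulty. The only subtlety is in the converse: vanilla phase estimation returns an approximation accompanied by residual garbage that is in principle entangled with $|\x\rangle$, whereas Definition \ref{def: complex binary oracle} demands a clean deterministic-looking output. I would handle this by running phase estimation with standard median-of-repetitions success amplification so that the garbage branch carries exponentially small weight, and then coherently copy the estimate out and uncompute the intermediate register. This can be arranged within the $O(1/\varepsilon)$ query budget up to a polylogarithmic overhead, which is absorbed into the stated bound.
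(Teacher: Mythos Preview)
Your proposal is correct and follows essentially the same approach as the paper: phase kickback via bitwise phase gates plus uncomputation for the forward direction, and quantum phase estimation on each of $O_{f_1}^\eta,O_{f_2}^\eta$ for the converse. Your discussion of uncomputing the phase-estimation garbage via median amplification is a nice clarification that the paper's brief proof leaves implicit.
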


\begin{proof}
    Note that for any $y\in\mb R$ represented as binary expansion, it is easy to implement $|y\ket \rightarrow e^{iy}|y\ket$ using controlled phase gates. Therefore, given
    $
    U_f^\eta: |\boldsymbol{x}\ket |\0\ket \rightarrow |\boldsymbol{x}\ket |\tilde{f_1}(\boldsymbol{x})\ket |\tilde{f_2}(\boldsymbol{x})\ket
    $
    for $\boldsymbol{x} \in \mb C^d$, we can obtain $O_{\tilde{f_1}}, O_{\tilde{f_2}}$ in this way with the same accuracy. Each is obtained with 2 applications of $U_f^\eta$.
    
    Conversely, suppose that we are given phase oracles. Following the process of quantum phase estimation, the phase oracle conjugated with the quantum Fourier transform (QFT) acting on the output register can give us a binary query. Note that the QFT can be performed properly since $f_1(\boldsymbol{x}), f_2(\boldsymbol{x}) \in \mb R$. The complexity is the same as that of quantum phase estimation.
\end{proof}

\begin{defi}[Fractional query oracle] \label{def: fractional oracle}
Let $r\in[-1,1], \eta \in \mathbb{R}_+$ and $f: X\rightarrow [-\pi,\pi]$ be a real-valued function defined on a set $X$.  The fractional query oracle $O_{rf}$ is defined as
\[
O_{rf}^\eta: \Ket{x} \rightarrow e^{i r \tilde{f}(x)} \Ket{x},
\]
where
$|\tilde{f}(x) - f(x)|\leq \eta$ for all $x\in X$. As usual, we sometimes ignore $\eta$ when it makes no difference.
\end{defi}

The following result follows from Circuit 4.2.18 in \cite{cornelissen2019quantum} (also see \cite[Corollary 34]{gilyen2019quantum}).

\begin{prop}[Phase oracle to fractional oracle]
\label{lemma: phase to fractional}
Assume that $\|f\|_{\infty} \leq 1/2$, $r\in [0,1/48]$ and $\eta>0$, then there is a quantum circuit that implements $O_{rf}$ up to error $\eta$ using $O(\log(1/\eta))$ applications of $O_f$ and its controlled form.
\end{prop}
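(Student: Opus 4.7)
The plan is to realize $O_{rf}$ by block-encoding $O_f$ and then applying a carefully chosen polynomial transformation. First, using two queries (one to $O_f$ and one to $O_f^\dagger$), I would construct a $(1,1,0)$-block-encoding of the diagonal operator $A = \sum_x \sin(f(x))\,|x\rangle\langle x|$ via a Hadamard test: prepare an ancilla in $|+\rangle$, apply controlled-$O_f^\dagger$ conditioned on $|0\rangle$ and controlled-$O_f$ conditioned on $|1\rangle$, then Hadamard the ancilla back. A short calculation shows the circuit sends $|0\rangle|x\rangle$ to $\cos(f(x))|0\rangle|x\rangle - i\sin(f(x))|1\rangle|x\rangle$, which is the desired encoding of $A$.

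Next, I would design a complex polynomial $P = P_c + iP_s$ (with $P_c$ even and $P_s$ odd) that approximates the function $g(t) = e^{ir\arcsin(t)}$ uniformly on $t\in[-1,1]$. Truncating the Chebyshev or Jacobi--Anger expansion of $g$ produces such polynomials of degree $d = O(\log(1/\eta))$ with uniform error at most $\eta/2$; the hypothesis $r\leq 1/48$ is exactly the numerical slack that keeps $\|P_c\|_\infty,\|P_s\|_\infty \leq 1/2$ on $[-1,1]$, the boundedness bound required by QSVT. The assumption $\|f\|_\infty \leq 1/2$ enters only to guarantee $\sin(f(x)) \in [-\sin(1/2),\sin(1/2)] \subset (-1,1)$, so that $e^{ir\arcsin(\sin(f(x)))} = e^{irf(x)}$, recovering the target phase exactly.

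Applying QSVT with $P_c$ and $P_s$ produces block encodings of $P_c(A)$ and $P_s(A)$ at cost $O(d)$ queries each to the block encoding of $A$; a one-ancilla LCU step (Hadamard, controlled application, Hadamard) combines them into a block encoding of $P_c(A)+iP_s(A)$, which approximates $\mathrm{diag}(e^{irf(x)})$ entrywise to error $\eta$. Because the target is unitary, no amplitude amplification is required: the block-encoded action on the zero-flag subspace is already the fractional oracle $O_{rf}$ up to error $\eta$, and the total query count to $O_f$ and its controlled form is $O(d) = O(\log(1/\eta))$, as claimed.

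The main obstacle is the bookkeeping required to verify the parity, infinity-norm, and QSP phase-factor existence conditions on $P_c$ and $P_s$, together with the careful propagation of the $\eta$-errors through the QSVT and LCU stages. This is exactly what is carried out in Circuit~4.2.18 of Cornelissen's thesis and in Corollary~34 of Gily\'{e}n--Su--Low--Wiebe, and the specific numerical constant $1/48$ emerges from these checks; once the block encoding above is in place, the proof reduces to a direct invocation of those results with the explicit target function $g$.
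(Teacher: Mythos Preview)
The paper does not supply its own proof of this proposition; it simply attributes the result to Circuit~4.2.18 of Cornelissen's thesis and Corollary~34 of Gily\'en--Su--Low--Wiebe. Your sketch is precisely the argument underlying those references, so there is nothing substantive to compare.

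One small labeling slip worth fixing: the Hadamard-test circuit you describe places $\cos(f(x))$ in the $|0\rangle$-flagged block and $-i\sin(f(x))$ in the $|1\rangle$-flagged block, so under the usual convention $(\langle 0|\otimes I)U(|0\rangle\otimes I)$ it is a block-encoding of $\mathrm{diag}(\cos f(x))$, not of $A=\mathrm{diag}(\sin f(x))$. Either swap the ancilla basis with a final $X$ (and absorb the global phase $-i$), or equivalently take the target function to be $g(t)=e^{ir\arccos(t)}$; after that cosmetic change the rest of your argument goes through unchanged. Note also that the polynomial only needs to approximate $g$ on $[-\sin(1/2),\sin(1/2)]$ (the spectrum of $A$, thanks to $\|f\|_\infty\le 1/2$) while being bounded on all of $[-1,1]$, which is what makes the degree $O(\log(1/\eta))$ achievable despite the branch points of $\arcsin$ at $\pm 1$.
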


If $1/48 < r < 1$ and $r$ is an integer multiple of $r'\in[0,1/48]$, then we can apply the above result to construct $O_{rf}$ by considering it as $r/r'$-th power of $U_{r'f}$. So the query complexity will be $O((r/r')\log(1/\eta))$.

\begin{rmk}
Although $U_f, O_f$ are commonly used oracles in the design of quantum algorithms, another more practical oracle is the probability oracle. Assume $f:\mathbb{R}^d \rightarrow[0,1]$, the probability oracle acts as 
\be
P_f: \Ket{x} \Ket{0} \rightarrow \Ket{x} \left( \sqrt{f(x)}\Ket{0} \Ket{\psi_0} + \sqrt{1-f(x)}\Ket{1} \Ket{\psi_1} \right),
\ee
where $\Ket{\psi_0}, \Ket{\psi_1}$ are arbitrary (normalized) quantum states. The right-hand side of the above state appears widely in many quantum algorithms. 
As shown in \cite[Corollary 4.1]{gilyen2019optimizing}, there is an effective procedure to convert a probability oracle to a phase oracle. This result will be used below, so state it here.
\end{rmk}

\begin{lem}[Probability oracle to phase oracle]
\label{lem: Probability oracle to phase oracle}
Suppose that we are given a probability oracle $P_g$ for $g(x):X \rightarrow [0,1]$. Let $t \in \mathbb{R}$ and $f(x)=t g(x)$ for all $x \in X$. We can implement an $\varepsilon$-approximate phase oracle $O_f$ with query complexity $O(|t| + \log(1/\varepsilon))$, i.e., this many uses of $P_g$ and its inverse.
\end{lem}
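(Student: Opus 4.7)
The plan is to reduce the problem to Hamiltonian simulation of a block-encoded diagonal operator, a strategy that achieves the near-additive complexity $O(|t|+\log(1/\varepsilon))$ characteristic of modern simulation methods. Write the probability oracle as $P_g:|x\rangle|0\rangle \to |x\rangle\bigl(\sqrt{g(x)}|0\rangle|\psi_0(x)\rangle+\sqrt{1-g(x)}|1\rangle|\psi_1(x)\rangle\bigr)$ and let $H := \sum_{x\in X} g(x)\,|x\rangle\langle x|$. The key observation is that $P_g$ together with the projector $\Pi := |0\rangle\langle 0|$ on the flag qubit furnishes a block-encoding of $H$: direct computation gives
\[
(\langle x|\otimes\langle 0|\otimes\langle 0|)\, P_g^{\dagger}\,(\Pi\otimes I)\, P_g\,(|x\rangle\otimes|0\rangle\otimes|0\rangle)=g(x),
\]
so that $P_g^{\dagger}(\Pi\otimes I)P_g$ encodes $H$ in the subspace where the flag/ancilla registers are $|0\rangle$, at the cost of $O(1)$ calls to $P_g$ and $P_g^{\dagger}$ per use.

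First I would promote this projector encoding to a standard $(1,a+1,0)$-block-encoding of $H$ via the reflection $R_0:=2|0\rangle\langle 0|-I$ on the ancilla, in the manner of Szegedy's walk / qubitization. Then I would invoke optimal block-Hamiltonian simulation (for instance via quantum signal processing or QSVT applied to the cosine/sine pair $\cos(tH),\sin(tH)$) to realize a unitary $\widetilde{U}$ that is $\varepsilon$-close to $e^{-itH}$ on the encoded subspace, using $O(|t|+\log(1/\varepsilon))$ queries to the block-encoding, hence the same order of queries to $P_g$ and $P_g^{\dagger}$. Since $H$ is diagonal in the computational basis with eigenvalues $g(x)\in[0,1]$, the action of $e^{-itH}$ on $|x\rangle|0\rangle^{\otimes(a+1)}$ is $e^{-itg(x)}|x\rangle|0\rangle^{\otimes(a+1)}=e^{-if(x)}|x\rangle|0\rangle^{\otimes(a+1)}$. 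Redefining $f\mapsto -f$ if needed, this is exactly the desired $\varepsilon$-approximate phase oracle $O_f$ acting on the system register with the ancillas returned uncomputed.

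The main obstacle I expect is a clean ancilla: one must verify that after the block-encoded simulation the ancilla register is genuinely disentangled from the system up to error $\varepsilon$, so that the result really is a phase oracle on $|x\rangle$ rather than a state mixed with junk. This is where the structure of $H$ as a diagonal operator is crucial, because the computational basis states $|x\rangle$ are exact eigenvectors, and the QSVT-based simulation preserves the block structure with error controlled uniformly in $x$. Combined with standard telescoping of the simulation error versus the approximation error of the polynomial implementing $e^{-it(\cdot)}$ on $[-1,1]$, one reads off the claimed query complexity $O(|t|+\log(1/\varepsilon))$. The remaining bookkeeping (handling the sign of $t$, absorbing constants from the block-encoding normalization, and tracking the accuracy of $P_g$ when it is itself only implemented approximately) is routine.
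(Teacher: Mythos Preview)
Your approach is correct and coincides with the paper's treatment: the paper does not prove this lemma directly but cites \cite[Corollary 4.1]{gilyen2019optimizing}, and the argument there is precisely the block-encoding plus Hamiltonian simulation route you outline. Indeed, the paper itself reuses exactly this mechanism in the proofs of Lemma~\ref{lem: product of phase oracles} and Proposition~\ref{prop: polar oracles to standard oracles}, where it notes that $P_{g}^{\dagger}(I\otimes Z)P_{g}$ is a $1$-block-encoding of $\mathrm{diag}\{2g(x)-1\}$ and then invokes Lemma~\ref{lem: Hamiltonian simulation} to obtain the phase. The only cosmetic difference is that the paper (and the cited source) use the reflection $Z$ directly rather than first writing down the projector $\Pi$ and then promoting it to $R_{0}=2\Pi-I$, which yields a block-encoding of $2g(x)-1$ rather than $g(x)$; the resulting global phase $e^{-it}$ is immaterial, so the two formulations are equivalent.
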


Conversely, as shown in \cite[Lemma 16 (the arXiv version)]{gilyen2019optimizing}, we have the following conversion from phase to probability oracle.

\begin{lem}[Phase oracle to probability oracle]
\label{lem: Phase oracle to probability oracle}
Let $\varepsilon, \delta\in(0,1/2)$ and $f:X \rightarrow [\delta,1-\delta]$. Given access to the phase oracle $O_f$, then we can implement a probability oracle $P_f$ up to accuracy $\varepsilon$ using $O(\delta^{-1} \log(1/\varepsilon))$ invocations of (controlled) $O_f$ and $O_f^\dag$.
\end{lem}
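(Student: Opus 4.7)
The plan is to follow the Gily\'en-Arunachalam-Wiebe recipe, combining a Hadamard-test gadget that turns the phase oracle into a block encoding, with quantum singular value transformation (QSVT) to reshape the encoded amplitude into $\sqrt{f(x)}$.

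First, I would produce a block encoding of a trigonometric function of $f$ via a Hadamard test with one ancilla. Starting from $|0\rangle|x\rangle$, apply $H$ on the ancilla, then controlled-$O_f$ (controlled on $|1\rangle$), then $H$ again. A direct computation gives
\begin{equation*}
    e^{if(x)/2}\bigl(\cos(f(x)/2)|0\rangle - i\sin(f(x)/2)|1\rangle\bigr)|x\rangle,
\end{equation*}
so that, up to a global phase, the amplitude of $|1\rangle$ equals $\sin(f(x)/2)$. This realizes a block encoding of the diagonal operator $\sum_x\sin(f(x)/2)|x\rangle\langle x|$ using $O(1)$ queries to $O_f$ and $O_f^\dagger$. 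Since $f(x)\in[\delta,1-\delta]$, the encoded value lies in a subinterval of $(0,1)$ bounded away from both endpoints by $\Theta(\delta)$.

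Next, I would apply QSVT to reshape this encoded amplitude into $\sqrt{f(x)}$. This is achieved in two stages: a polynomial $P_1(y)$ approximating $(2/\pi)\arcsin(y)$ extracts a block encoding of $f(x)/\pi$, and a second polynomial $P_2(z)$ approximating $\sqrt{z}$ then produces an encoding of $\sqrt{f(x)/\pi}$, which is the desired probability oracle up to a known normalization. On a domain bounded away from $\pm 1$ by $\Theta(\delta)$, the $\arcsin$ factor requires only degree $O(\log(1/\varepsilon))$. The dominant cost comes from approximating $\sqrt{\cdot}$ on $[\Theta(\delta),1]$, for which standard Chebyshev/Remez theory gives a polynomial of degree $O(\delta^{-1}\log(1/\varepsilon))$. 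Since each degree of QSVT consumes $O(1)$ queries to the underlying block encoding, the total cost is $O(\delta^{-1}\log(1/\varepsilon))$ queries to $O_f$ and $O_f^\dagger$.

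The main obstacle is certifying the polynomial approximation degree for $\sqrt{\cdot}$ on $[\Theta(\delta),1]$—this is precisely where the $\delta^{-1}$ scaling enters, owing to the derivative singularity of $\sqrt{\cdot}$ at the origin. Everything else is a routine invocation of the QSVT framework of Gily\'en-Su-Low-Wiebe: approximation errors compose additively across the Hadamard test and the two QSVT passes, so setting internal accuracies of order $\Theta(\varepsilon)$ yields the claimed $\varepsilon$-approximate probability oracle without asymptotic blow-up. This essentially reproduces the proof of Lemma 16 (arXiv version) of Gily\'en-Arunachalam-Wiebe.
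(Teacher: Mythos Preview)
The paper does not give its own proof of this lemma; it simply cites \cite[Lemma 16 (arXiv version)]{gilyen2019optimizing}. Your sketch is a faithful outline of that argument---Hadamard-test block encoding of $\sin(f(x)/2)$, followed by QSVT with a polynomial approximating $y\mapsto\sqrt{2\arcsin(y)}$ on a domain bounded away from $0$ by $\Theta(\delta)$---and your degree estimates are correct, so there is nothing to compare beyond noting that you have supplied the details the paper omits by citation.
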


For a complex-valued function  $f(\x)=f_1(\x) + i f_2(\x)$, an alternative representation is using polar coordinates, i.e., $f(\x) = r(\x) e^{i\theta(\x)}$. So it is natural to consider two other phase oracles:
\be
O_r: \Ket{\x} \rightarrow e^{ir(\x)} \Ket{\x}, \quad 
O_\theta: \Ket{\x} \rightarrow e^{i\theta(\x)} \Ket{\x}.
\ee
We below show that given $O_r, O_\theta$, we can efficiently construct $O_{f_1}, O_{f_2}$. This implies that the oracle assumption on $O_r, O_\theta$ is stronger.

\begin{prop}
\label{prop: polar oracles to standard oracles}
Let $\varepsilon, \delta \in (0,1/2)$.
Assume $r(\x), \cos(\theta(\x)), \sin(\theta(\x)) \in [-1+2\delta,1-2\delta]$ for all $\x$.
Then given $O_r, O_\theta$, we can construct $O_{f_1}, O_{f_2}$ up to error $\varepsilon$ with $O(\delta^{-1}\log^4(1/\varepsilon))$ applications of $O_r, O_\theta$, including $O_r^\dag, O_\theta^\dag$ and their controlled forms.
\end{prop}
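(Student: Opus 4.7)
The plan is to construct $O_{f_1}$ for $f_1(\x) = r(\x)\cos(\theta(\x))$; the construction of $O_{f_2}$ from $f_2(\x) = r(\x)\sin(\theta(\x))$ is identical up to a $\pi/2$ phase shift of $\theta$. My strategy is to pipe the two given phase oracles through probability oracles, use algebraic identities to multiply values, and then convert back with Lemma~\ref{lem: Probability oracle to phase oracle}.

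\textbf{Step 1 (probability oracles for shifted $\cos\theta$).} A standard Hadamard test applied to (controlled-)$O_\theta$ directly yields a probability oracle $P_c^+$ for $p_c^+(\x):=(1+\cos\theta(\x))/2$, since measuring the Hadamard-basis outcome $0$ occurs with probability $|1+e^{i\theta(\x)}|^2/4$. A variant using $O_\theta^\dagger$ gives $P_c^-$ for $p_c^-(\x):=(1-\cos\theta(\x))/2$. By hypothesis, both values lie in $[\delta,1-\delta]$. Each oracle costs $O(1)$ queries to $O_\theta$.

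\textbf{Step 2 (probability oracles for shifted $r$).} For $r$ the Hadamard test is useless because it would return $(1+\cos r(\x))/2$, so I instead go through a phase oracle for $(1+r)/2$. Proposition~\ref{lemma: phase to fractional} produces $O_{r/48}$ from $O_r$ with $O(\log(1/\eta))$ queries; iterating $24$ times yields $O_{r/2}$, and tacking on a global phase $e^{i/2}$ gives $O_{(1+r)/2}$. Since $(1+r)/2\in[\delta,1-\delta]$ by hypothesis, Lemma~\ref{lem: Phase oracle to probability oracle} converts $O_{(1+r)/2}$ into a probability oracle $P_R^+$ for $p_r^+(\x):=(1+r(\x))/2$ using $O(\delta^{-1}\log(1/\eta))$ queries to $O_{(1+r)/2}$, hence $O(\delta^{-1}\log^2(1/\eta))$ queries to $O_r$. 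Replacing $O_r$ by $O_r^{\dagger}$ throughout yields $P_R^-$ for $p_r^-(\x):=(1-r(\x))/2$ at the same cost.

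\textbf{Step 3 (compose into $P_q$ for $(1+r\cos\theta)/4$).} The identity
\begin{equation*}
\frac{1+r(\x)\cos\theta(\x)}{2} \;=\; p_r^+(\x)\,p_c^+(\x) + p_r^-(\x)\,p_c^-(\x)
\end{equation*}
is the engine of the product. Prepare an auxiliary qubit in $(\Ket{0}+\Ket{1})/\sqrt{2}$; controlled on $\Ket{0}$ apply $P_R^+$ and $P_c^+$ on independent flag ancillas, and controlled on $\Ket{1}$ apply $P_R^-$ and $P_c^-$. Flagging on all ancillas being $0$ collapses to the event of probability $q(\x):=(1+r(\x)\cos\theta(\x))/4\in[0,1/2]$, giving a probability oracle $P_q$ for $q$ using $O(1)$ queries to each of the four oracles from Steps 1--2.

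\textbf{Step 4 (probability to phase, with query accounting).} Apply Lemma~\ref{lem: Probability oracle to phase oracle} with $t=4$ to $P_q$, obtaining a phase oracle implementing $\Ket{\x}\mapsto e^{i(1+r(\x)\cos\theta(\x))}\Ket{\x}$, which is $O_{f_1}$ up to the irrelevant global phase $e^{i}$. This costs $O(|t|+\log(1/\eta))=O(\log(1/\eta))$ invocations of $P_q$. Chaining the four steps and setting each intermediate accuracy $\eta=\Theta(\varepsilon/\operatorname{polylog}(1/\varepsilon))$ so that the cumulative additive phase error is at most $\varepsilon$, the total query count to $O_r,O_\theta$ is $O(\delta^{-1}\log^4(1/\varepsilon))$, as claimed. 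The main technical obstacle is not any single conversion but the error bookkeeping across the whole pipeline: the range hypothesis $r,\cos\theta,\sin\theta\in[-1+2\delta,1-2\delta]$ must be used precisely to keep every shifted probability in $[\delta,1-\delta]$ so that Lemma~\ref{lem: Phase oracle to probability oracle} is applicable, and the additive errors from Proposition~\ref{lemma: phase to fractional}, Lemma~\ref{lem: Phase oracle to probability oracle}, the product composition, and Lemma~\ref{lem: Probability oracle to phase oracle} must be balanced so as to produce only the stated polylogarithmic overhead.
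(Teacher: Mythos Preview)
Your approach is correct and takes a genuinely different route from the paper's own proof.

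The paper first builds, from $O_\theta$, block-encodings of the diagonal operators $\operatorname{diag}(\cos\theta(\x))$ and $\operatorname{diag}(\sin\theta(\x))$ via an LCU-type symmetrization of a rotation oracle, then applies Hamiltonian simulation (Lemma~\ref{lem: Hamiltonian simulation}) to obtain phase oracles $O_1:\Ket{\x}\mapsto e^{i\cos\theta(\x)}\Ket{\x}$ and $O_2:\Ket{\x}\mapsto e^{i\sin\theta(\x)}\Ket{\x}$. It then invokes the general ``product of two phase oracles'' Lemma~\ref{lem: product of phase oracles} to multiply these with $O_r$. By contrast, you avoid block-encodings and Hamiltonian simulation entirely: you reduce everything to probability oracles and exploit the algebraic identity $\tfrac{1+r\cos\theta}{2}=p_r^+p_c^+ + p_r^-p_c^-$ to realize the product directly, converting back with Lemma~\ref{lem: Probability oracle to phase oracle}. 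Your argument is more elementary in its toolbox and, if one tracks the logs carefully, appears to land at $O(\delta^{-1}\log^3(1/\varepsilon))$, one log better than the stated bound; the paper's extra $\log$ comes from the nested call to Lemma~\ref{lem: product of phase oracles} on top of the $O(\log(1/\varepsilon))$ cost of building $O_1,O_2$. The paper's approach, on the other hand, is more modular: once Lemma~\ref{lem: product of phase oracles} is in hand, any two phase oracles can be multiplied, not just this particular $r\cdot\cos\theta$ pair.

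One minor slip: in Step~1, replacing $O_\theta$ by $O_\theta^\dagger$ does \emph{not} flip the sign of $\cos\theta$, since $\cos$ is even. You get $P_c^-$ simply by reading outcome~$1$ of the same Hadamard test (or by inserting a $Z$ on the control before the final Hadamard). This is a one-line fix and does not affect the rest of the argument.
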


To prove this claim, we need to invoke some results on block-encoding \cite{gilyen2019quantum}.

\begin{defi}[Block-encoding]
Let $A$ be an operator and $\alpha\geq \|A\|$, then a unitary $U$ is called a $\alpha$-block-encoding of $A$ if it has the form
\[
U = \begin{pmatrix}
A/\alpha & * \\
*  & *
\end{pmatrix}.
\]
\end{defi}

Intuitively, $U$ is a block-encoding of $A$ if for any state $\Ket{\psi}$, we have $U \Ket{0}\Ket{\psi} = \Ket{0} \otimes (A/\alpha) \Ket{\psi} + \Ket{0}^\bot$ for some orthogonal part $\Ket{0}^\bot$. Sometimes, more than 1 ancillary qubits are required.

Given block-encodings of operators, the linear combination of unitaries (LCU) is a useful technique to construct block-encoding of their linear combinations. For example, for $i\in\{0,1\}$, let $U_i$ be a block-encoding of $A_i$ with parameter $\alpha_i=1$, then by LCU, it is easy to check that the following process constructs a block-encoding of $A_0+A_1$ with $\alpha=2$:
\begin{enumerate}
    \item Prepare $\Ket{+} \Ket{0,\psi} = \frac{1}{\sqrt{2}} \Ket{0} \Ket{0,\psi} + \frac{1}{\sqrt{2}} \Ket{1} \Ket{0,\psi}$.
    \item Apply $\Ket{0} \Bra{0} \otimes U_0 + \Ket{1} \Bra{1} \otimes U_1$ to obtain
     $\frac{1}{\sqrt{2}} \Ket{0} U_0\Ket{0,\psi} + \frac{1}{\sqrt{2}} \Ket{1} U_1\Ket{0,\psi}$.
    \item Apply Hadamard gate to the first register to obtain 
    $\frac{1}{2} \Ket{0} (U_0+U_1) \Ket{0,\psi} + \frac{1}{2} \Ket{1} (U_0-U_1)\Ket{0,\psi}$.
\end{enumerate}

Given a block-encoding of a Hermitian matrix, Low and Chuang showed how to implement an optimal Hamiltonian simulation.

\begin{lem}[Hamiltonian simulation, see \cite{low2019hamiltonian}]
\label{lem: Hamiltonian simulation}
Suppose that $U$ is a block-encoding of the Hamiltonian $H$ with parameter $\alpha$. Then one can implement an $\varepsilon$-precise approximation of the Hamiltonian simulation unitary $e^{i t H}$ with $O(\alpha |t| + \log(1/\varepsilon))$ uses of controlled-$U$ or its inverse.
\end{lem}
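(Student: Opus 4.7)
The plan is to follow the Low--Chuang qubitization proof, which reduces the problem to approximating the scalar function $x \mapsto e^{i\tau x}$ on $[-1,1]$ by a polynomial of degree $K = O(\tau + \log(1/\varepsilon))$, with $\tau := \alpha |t|$, and then implementing that polynomial transformation of the block-encoded operator $H/\alpha$ via quantum signal processing (QSP). Because $U$ is a block-encoding with parameter $\alpha$, the eigenvalues of $H/\alpha$ lie in $[-1,1]$, so a uniform $\varepsilon$-approximation of $e^{i\tau x}$ on $[-1,1]$ directly controls the spectral-norm error of the resulting simulation of $e^{itH}$.

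First I would establish the polynomial approximation via the Jacobi--Anger expansion
\[
e^{i\tau x} \;=\; J_0(\tau) \;+\; 2\sum_{k=1}^{\infty} i^k J_k(\tau)\, T_k(x),
\]
where $T_k$ is the $k$-th Chebyshev polynomial of the first kind and $J_k$ is the Bessel function of the first kind. Truncating to terms of degree at most $K$ gives a polynomial $P(x)$ with $\sup_{x\in[-1,1]} |P(x) - e^{i\tau x}| \leq 2\sum_{k>K} |J_k(\tau)|$, and the superexponential tail $|J_k(\tau)| = O\big((e\tau/(2k))^k\big)$ for $k > e\tau$ guarantees that $K = O(\tau + \log(1/\varepsilon))$ suffices to get uniform error at most $\varepsilon$. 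Splitting into real and imaginary parts yields real-coefficient polynomial approximations of $\cos(\tau x)$ and $\sin(\tau x)$ of degree $K$, one even and one odd, each bounded in absolute value by $1+\varepsilon$ on $[-1,1]$.

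Next I would implement $P(H/\alpha)$ coherently via the qubitization walk. Starting from $U$, one constructs a quantum walk operator $W$ as the product of (controlled) $U$ with a reflection about the ancilla all-zeros subspace; a short calculation shows that $W$ preserves two-dimensional subspaces associated to each eigenvector of $H$, on which it acts as a rotation by angle $\arccos(\lambda/\alpha)$. Interleaving $K$ applications of $W$ with single-qubit phase rotations on a signal qubit (the QSP sequence with appropriately chosen phase angles) therefore implements a block-encoding of $P(H/\alpha)$ using $K$ queries to controlled-$U$ or its inverse. Applying this twice, once for the cosine polynomial and once for the sine polynomial, and then combining the two block-encodings with a single-qubit linear-combination-of-unitaries circuit, yields an $\varepsilon$-accurate block-encoding of $\cos(tH) + i\sin(tH) = e^{itH}$ with only constant query overhead.

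The main obstacle is securing the additive, rather than multiplicative, scaling $O(\alpha|t| + \log(1/\varepsilon))$. A naive strategy that concatenates $N$ short-time simulations, each with error $\varepsilon/N$, would incur an unavoidable multiplicative $\log(N/\varepsilon)$ factor. The additive bound requires two nontrivial ingredients that I would need to verify: the sharp Bessel tail estimate $\sum_{k>K}|J_k(\tau)| \leq \varepsilon$ whenever $K = \tau + \Theta(\log(1/\varepsilon))$, and the existence (and efficient computability) of QSP phase angles realizing any real polynomial of degree $K$ that is bounded by $1$ on $[-1,1]$ and has the prescribed parity. Both are classical analytic and algorithmic facts, but combining them with the qubitization walk so that no amplitude amplification is needed on the signal qubit is the technical crux; once this is in place, the total query count to $U$, $U^{\dagger}$, and their controlled forms is exactly $O(\alpha |t| + \log(1/\varepsilon))$, as claimed.
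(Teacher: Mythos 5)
The paper does not prove this lemma at all---it is imported verbatim from Low and Chuang's qubitization paper \cite{low2019hamiltonian}---and your sketch is a faithful reconstruction of exactly that cited argument (Jacobi--Anger truncation at degree $O(\alpha|t|+\log(1/\varepsilon))$, QSP on the qubitized walk, LCU of the cosine and sine pieces). The only imprecision is your remark that no amplitude amplification is needed: the equal-weight LCU of $\cos(tH)$ and $i\sin(tH)$ block-encodes $e^{itH}/2$, and the standard proof removes the factor of $1/2$ with $O(1)$ rounds of robust oblivious amplitude amplification, which does not affect the stated query count.
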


\begin{lem}
\label{lem: product of phase oracles}
Let $\varepsilon,\delta\in(0,1/2)$ and $f,g:X\rightarrow [-1+2\delta, 1-2\delta]$. Given phase oracles $O_f, O_g$, we can construct $O_{fg}$ up to accuracy $\varepsilon$ with $O(\delta^{-1} \log^3(1/\varepsilon))$ calls to $O_f, O_g$.
\end{lem}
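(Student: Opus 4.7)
The plan is to route through block-encodings: convert each phase oracle to a block-encoding of the associated diagonal operator, compose the block-encodings to encode the diagonal operator of the product function, and then apply Hamiltonian simulation to obtain the desired phase oracle. Throughout, let $D_h = \sum_x h(x)\Ket{x}\Bra{x}$ for a function $h$ on $X$, so $O_h = e^{iD_h}$ and, since $D_f$ and $D_g$ commute in the computational basis, $D_f D_g = D_{fg}$ and hence $e^{iD_{fg}} = O_{fg}$.

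First I would construct an $O(1)$-block-encoding of $D_f$ from $O_f$ (and similarly $D_g$ from $O_g$). Since $f$ takes values in $[-1+2\delta, 1-2\delta]$, the shifted function $\tilde f(x) = (f(x)+1)/2$ lies in $[\delta, 1-\delta]$, and $O_{\tilde f}$ is obtainable from $O_f$ using the fractional query oracle of Proposition \ref{lemma: phase to fractional} together with an overall phase, at constant overhead per query. Lemma \ref{lem: Phase oracle to probability oracle} then produces a probability oracle $P_{\tilde f}$ at cost $O(\delta^{-1}\log(1/\eta))$ per call for internal accuracy $\eta$. Viewing $P_{\tilde f}$ as a block-encoding of $\sqrt{D_{\tilde f}}$ (the amplitude on the flag-$\Ket{0}$ subspace) and composing it with itself via the standard block-encoding product yields a block-encoding of $D_{\tilde f}$ with parameter $1$; an LCU step against the identity block-encoding then gives a block-encoding of $D_f = 2D_{\tilde f} - I$ with parameter $O(1)$. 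One further block-encoding product combining the constructions for $D_f$ and $D_g$ yields a block-encoding of $D_f D_g = D_{fg}$, still with parameter $\alpha = O(1)$.

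Finally I would apply Lemma \ref{lem: Hamiltonian simulation} to this block-encoding with $t=1$ to obtain an $\varepsilon$-approximation of $e^{iD_{fg}} = O_{fg}$ using $O(\alpha + \log(1/\varepsilon)) = O(\log(1/\varepsilon))$ calls to the composed block-encoding. Propagating the required accuracy backwards, each call to the $D_{fg}$-block-encoding must itself be built to accuracy $\eta = \Theta(\varepsilon/\log(1/\varepsilon))$, so the phase-to-probability step of Lemma \ref{lem: Phase oracle to probability oracle} runs at per-call cost $O(\delta^{-1}\log(1/\varepsilon))$; cleaning up the $\sqrt{\cdot}$-to-identity step via polynomial approximation (or a short QSVT) contributes one additional $\log(1/\varepsilon)$ factor. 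Multiplying these contributions produces the claimed $O(\delta^{-1}\log^3(1/\varepsilon))$ total. The main obstacle is the error bookkeeping across the three conversions and, more subtly, verifying that the block-encoding parameter $\alpha$ of $D_{fg}$ remains $O(1)$ throughout the pipeline; this is exactly what prevents the Hamiltonian-simulation cost from blowing up polynomially in $1/\varepsilon$ and is the point on which the overall polylogarithmic dependence rests.
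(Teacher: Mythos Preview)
Your pipeline is the same as the paper's: phase oracle $\to$ probability oracle (Lemma~\ref{lem: Phase oracle to probability oracle}) $\to$ block-encoding of the diagonal operator $\to$ product of block-encodings $\to$ Hamiltonian simulation (Lemma~\ref{lem: Hamiltonian simulation}). The one place you diverge is in extracting a block-encoding of $D_f$ from the probability oracle: you square the block-encoding of $\sqrt{D_{\tilde f}}$ and then LCU against the identity to recover $D_f = 2D_{\tilde f}-I$, whereas the paper observes directly that $P_{f'}^\dag(I\otimes Z)P_{f'}$ is already a $1$-block-encoding of $\operatorname{diag}(1-2f'(\x))=\operatorname{diag}(f(\x))$ (with $f'=(1-f)/2$), which is shorter and avoids both the extra ancillas and the QSVT hedge you mention.

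Two accounting corrections: the fractional-query conversion of Proposition~\ref{lemma: phase to fractional} costs $O(\log(1/\eta))$ calls, not constant, and \emph{this} is the actual source of the third $\log(1/\varepsilon)$ factor; conversely, your squaring step is just a block-encoding product (two calls, no polynomial approximation needed), so no QSVT logarithm arises there. With these fixes your route is correct and yields the same $O(\delta^{-1}\log^3(1/\varepsilon))$ bound.
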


\begin{proof}
Let $f'=(1-f)/2, g'=(1-g)/2$. From $O_f, O_g$, it costs $O(\log(1/\varepsilon))$ to construct $O_{f'}, O_{g'}$ up to accuracy $\varepsilon$ by Lemma \ref{lemma: phase to fractional}. Moreover, $f',g':X\rightarrow [\delta, 1-\delta]$.
By Lemma \ref{lem: Phase oracle to probability oracle}, we can construct $P_{f'}$ and $P_{g'}$
\[
P_{f'} : \Ket{\x} \Ket{0} \rightarrow \Ket{\x} \left(\sqrt{f'(\x)} \Ket{0} + \sqrt{1-f'(\x)} \Ket{1} \right),
\]
\[
P_{g'} : \Ket{\x} \Ket{0} \rightarrow \Ket{\x} \left(\sqrt{g'(\x)} \Ket{0} + \sqrt{1-g'(\x)} \Ket{1} \right),
\]
up to accuracy $\varepsilon$ using $O(\delta^{-1} \log(1/\varepsilon))$ queries to $O_{f'}, O_{g'}$. Now, it is easy to check that $P_{f'}^\dag(I\otimes Z)P_{f'}$ is a 1-block-encoding of the ${\rm diag}\{1-2f'(\x)\} = {\rm diag}\{f(\x)\}$. Similarly, $P_{g'}^\dag(I\otimes Z)P_{g'}$ is a 1-block-encoding of ${\rm diag}\{g(\x)\}$. Given these two block-encodings, we can construct a 1-block-encoding of the product $H:={\rm diag}\{f(\x) g(\x)\}$. Finally, we use Lemma \ref{lem: Hamiltonian simulation} to implement $e^{i H}$ up to accuracy $\varepsilon$ with $O(\log(1/\varepsilon))$ applications of the block-encoding. This gives the phase oracle $O_{fg}$.
\end{proof}

We are now ready to prove Proposition \ref{prop: polar oracles to standard oracles}.

\begin{proof}[Proof of Proposition \ref{prop: polar oracles to standard oracles}]
Note that $f_1(\x)=r(\x) \cos(\theta(\x)), f_2(\x)=r(\x) \sin(\theta(\x))$. 
We consider the following process:

\begin{enumerate}
    \item Prepare $\Ket{\x} \Ket{+} = \frac{1}{\sqrt{2}} \Ket{\x} \Ket{0} + \frac{1}{\sqrt{2}} \Ket{\x} \Ket{1}$.
    \item Apply control-$O_\theta$, i.e., $\Ket{0} \Bra{0} \otimes O_\theta + \Ket{1} \Bra{1} \otimes O_\theta^{-1}$, we then obtain
     $\frac{1}{\sqrt{2}} e^{i\theta(\x)} \Ket{\x} \Ket{0} + \frac{1}{\sqrt{2}} e^{-i\theta(\x)} \Ket{\x} \Ket{1}$.
    \item Apply Hadamard gate to the second register to obtain $\Ket{\x} (\cos(\theta(\x)) \Ket{0} + \sin(\theta(\x)) \Ket{1})$.
\end{enumerate}
The above defines a process to implement a probability oracle for the function $\cos(\theta(\x))$. As it acts like a rotation, we denote it as $R$.
By considering $R+R^T$ using LCU, we will obtain a 1-block-encoding of ${\rm diag}(\cos(\theta(\x)))$. Similarly, by considering $X(R-R^T)$ we will have a 1-block-encoding of ${\rm diag}(\sin(\theta(\x)))$. Here $X=\Ket{0}\Bra{1} + \Ket{1}\Bra{0}$ is the Pauli-$X$. From these block-encodings, by Lemma \ref{lem: Hamiltonian simulation}, we can construct phase oracles
\[
O_{1}: \Ket{\x} \rightarrow e^{i\cos(\theta(\x))} \Ket{\x}, \quad 
O_{2}: \Ket{\x} \rightarrow e^{i\sin(\theta(\x))} \Ket{\x}
\]
up to accuracy $\varepsilon$ using $O(\log(1/\varepsilon))$ applications of $O_\theta$.
Finally by Lemma \ref{lem: product of phase oracles}, we can construct the phase oracles of $r(\x) \cos(\theta(\x))$ and $r(\x) \sin(\theta(\x))$ from $O_r, O_1, O_2$.
\end{proof}

\subsection{Multivariable calculus}

This paper extensively utilizes concepts from multivariable analysis. Here, we provide a brief overview of notation, definitions, and properties, to facilitate understanding. For further information, we recommend referring to the following classic textbooks: \cite{krantz2002primer} delves into real analytic functions, and \cite{gunning2022analytic, range1998holomorphic, krantz2001function,laurent2010holomorphic} encompass holomorphic functions in several complex variables.

\begin{defi}[Directional derivative]
\label{defn: Directional derivative}
    Given a function $f:\mb R^d \rightarrow \mb R$ that is $n$-times differentiable at $\x\in \mb R^d$, then the $n$-th order directional derivative along the direction $\boldsymbol{r}\in \mb R^d$ is
    $$
    \partial_{\boldsymbol{r}}^n f(\boldsymbol{x})=\frac{d^n}{d \tau^n} f(\boldsymbol{x}+\tau \boldsymbol{r})=\sum_{\alpha\in \mb N_0^d, |\alpha|=n} \frac{n!}{\alpha!} \boldsymbol{r}^\alpha \cdot \partial^\alpha f(\x).
    $$
\end{defi}

Finite difference formulae, derived from the Taylor expansion, can be used to approximate derivatives of functions. In \cite{li2005general}, Li provided general finite difference formulae and error analysis for higher-order derivatives of functions. Therefore, we can use them to obtain an approximate formula for the Hessian of multivariable functions, as defined below.

\begin{defi}[Finite difference formula for Hessian]
\label{def: finite difference formula of Hessian}
        Let $m\in\mb N$. The degree-$2m$ central difference approximation of the second derivative of a function $f:\mb R^d \rightarrow \mb R$ is
        $$
        f_{(2 m)}(\boldsymbol{x}):=\sum_{\substack{t=-m}}^m  a_{t}^{(2m)} f(t \boldsymbol{x}) \approx \boldsymbol{x}^{\mathrm{T}} \mathbf{H}_{f}(\mathbf{0}) \boldsymbol{x},
        $$
        where $\mbf{H}_f$ is the Hessian matrix of $f$. The coefficients for $t\in \{-m, \dots,m\}\backslash \{0\}$ are
        $$
        a_t^{(2m)}:=\frac{(-1)^{t-1}\cdot 2}{t^2} \frac{m! m!}{(m+t)! (m-t)!}
        $$
        and $a_0^{(2m)}=- \sum\limits_{\substack{t=-m, t \neq 0}}^{m} a_t^{(2m)} $.
\end{defi}

\subsubsection{Holomorphic functions in several complex variables}

In this part, we outline the definitions and related concepts of multivariable complex functions, including their derivatives, analyticity, and other pertinent properties. These topics are foundational in analysis and are provided here for reference, in case some readers are less familiar with the differential properties of multivariable complex functions.

For $d\in \mb N$, we define an isomorphism of $\mb R$-vector spaces between $\mb C^d$ and $\mb R^{2d}$ by setting $z_j=x_j+iy_j$ for all $j\in[d]$. The holomorphic and anti-holomorphic differential operations are given by
$$
\left\{\begin{array}{l} \vspace{.2cm}
\displaystyle \frac{\partial}{\partial z_j}=\frac{1}{2}\left(\frac{\partial}{\partial x_j}-i \frac{\partial}{\partial y_j}\right), \\
\displaystyle \frac{\partial}{\partial \bar{z}_j}=\frac{1}{2}\left(\frac{\partial}{\partial x_j}+i \frac{\partial}{\partial y_j}\right),
\end{array}\right.
$$
for $j\in[d]$.

\begin{defi}[Page 5 of \cite{range1998holomorphic}, Definition 1.1 of \cite{laurent2010holomorphic}]
    Let $\Omega \subset \mb C^d$ be an open set and function $f: \Omega\rightarrow \mb C$. The function $f$ is said to be holomorphic on $\Omega$ if $f$ is differentiable and satisfies the system of partial differential equations
    $$
    \frac{\partial f}{\partial \bar{z}_j}(\boldsymbol{c})=0 \quad \text { for every } \boldsymbol{c} \in \Omega \text { and } j\in [d],
    $$
    which is called the homogeneous Cauchy-Riemann system. As for real-valued functions, we denote $\nabla f(\boldsymbol{c})=\left(\frac{\partial f}{\partial z_1}(\boldsymbol{c}), \ldots, \frac{\partial f}{\partial z_d}(\boldsymbol{c})\right)$.
\end{defi}

\begin{lem}[Hartogs's Theorem, Theorem 2.9 of \cite{laurent2010holomorphic}] 
    Let $f: \Omega\rightarrow \mb C$ be a function defined on $\Omega \subset \mb C^d$. If $f$ is holomorphic, then $f$ is holomorphic in each variable separately. More precisely, the functions $f_j$ defined by $z \mapsto f\left(z_1, \ldots, z_{j-1}, z, z_{j+1}, \ldots, z_d\right)$ are holomorphic as functions of one complex variable, which means the following limit exists\footnote{Note that $z_j\in \mb C$ tends to $c_j$ in all directions in the complex plane, and the limit exists meaning that the values obtained in different directions are all the same.}
    $$
    \frac{\partial f}{\partial z_j}(\boldsymbol{c}):=\lim_{z_j\rightarrow c_j}\frac{f(c_1,\ldots,c_{j-1},z_j,c_{j+1},\ldots,c_d)-f(\boldsymbol{c})}{z_j-c_j}
    $$
    for each $j\in[d]$ and any $\boldsymbol{c}=(c_1,\ldots,c_d)\in \Omega$. Conversely, if $f$ is holomorphic in each variable separately, then Hartogs's theorem\footnote{Under the additional hypothesis that $f$ is continuous, the statement is easier to prove, known as Osgood's lemma.} ensures that the function $f$ itself is holomorphic.
\end{lem}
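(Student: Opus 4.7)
The statement has two directions. The forward implication (joint holomorphicity implies separate holomorphicity) is an essentially formal consequence of the definition, while the converse\textemdash Hartogs's theorem proper\textemdash is the substantive assertion and would occupy the bulk of any proof.

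For the forward direction, I would fix a point $\boldsymbol{c}\in\Omega$ and an index $j\in[d]$, and consider the restriction $f_j(z):=f(c_1,\ldots,c_{j-1},z,c_{j+1},\ldots,c_d)$ in a neighborhood of $c_j\in\mb C$. Real differentiability of $f$ on $\Omega$ immediately restricts to real differentiability of $f_j$ at $c_j$, and the one-variable Cauchy\textendash Riemann equation $\partial f_j/\partial\bar z=0$ is just the $j$-th component of the system $\partial f/\partial\bar z_k=0$ restricted to the complex line $\{(c_1,\ldots,z,\ldots,c_d)\}$. A differentiable function of one complex variable satisfying the Cauchy\textendash Riemann equation at a point has a complex derivative there, so the displayed limit exists and coincides with $\partial f/\partial z_j(\boldsymbol{c})$.

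For the converse, my plan is the classical three-step route to Hartogs's theorem. Step one is to upgrade separate holomorphicity to local boundedness on compact subsets $K\subset\Omega$: since every slice $z\mapsto f(c_1,\ldots,z,\ldots,c_d)$ is one-variable holomorphic and hence continuous, the sets $E_n=\{\boldsymbol{z}\in K:|f(\boldsymbol{z})|\le n\}$ can be shown via the Baire category theorem to have non-empty interior for some $n$, and then a slice-wise maximum modulus argument spreads the bound across $K$. Step two is to pick a polydisk $P=D_1\times\cdots\times D_d$ compactly contained in $\Omega$ on which $|f|\le M$, and apply the one-variable Cauchy integral formula in each coordinate in turn to obtain
\[
f(\boldsymbol{z})=\frac{1}{(2\pi i)^d}\int_{\partial D_1}\!\cdots\!\int_{\partial D_d}\frac{f(\boldsymbol{\zeta})}{(\zeta_1-z_1)\cdots(\zeta_d-z_d)}\,d\zeta_d\cdots d\zeta_1
\]
for $\boldsymbol{z}$ in the interior of $P$; local boundedness is exactly what licenses Fubini, so the iterated integral coincides with a genuine multiple integral. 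Step three is to differentiate under the integral sign: the kernel depends holomorphically on each $z_j$ with uniform bounds, so $f$ is smooth on the interior of $P$ and satisfies every Cauchy\textendash Riemann equation $\partial f/\partial\bar z_j=0$, which is precisely the earlier definition of holomorphicity.

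The main obstacle is unambiguously step one\textemdash local boundedness without any a priori continuity hypothesis on $f$ as a function of all variables jointly. The Baire-category argument needs a careful setup, invoking that $\log|f|$ is subharmonic along each holomorphic disc and inducting on the dimension. This is where the proof ceases to be formal and where the full strength of one-variable complex analysis (maximum modulus principle, Cauchy estimates) must be brought to bear; the remaining two steps, by contrast, amount to bookkeeping with the iterated Cauchy integral formula.
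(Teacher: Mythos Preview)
The paper does not prove this lemma at all: it is stated as background material and attributed to a textbook (Theorem 2.9 of \cite{laurent2010holomorphic}), with no argument supplied. So there is nothing to compare your proposal against on the paper's side.

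That said, your sketch is the standard classical route and is correct in outline. The forward direction is indeed immediate from the definition via the Cauchy\textendash Riemann system. For the converse, the three-step plan (Baire category for local boundedness, iterated Cauchy integral formula, differentiation under the integral) is exactly the textbook approach, and you correctly identify step one as the only genuinely hard part. One minor caution: the Baire argument as you wrote it gives boundedness only on some open subset of $K$, not on all of $K$; the full argument requires an induction on dimension together with a slice-wise extension (sometimes called Hartogs's lemma on sequences of subharmonic functions or the Schwarz lemma trick) to propagate the bound. You hint at this in your final paragraph, so the gap is one of detail rather than of conception.
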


\subsubsection{Analytic functions}

\begin{defi}[Power series]
    Let $a_{\nu}\in \mb K=\mathbb{R}$ or $\mathbb{C}$ for $\nu=(\nu_1, \ldots, \nu_d)\in \mb N_0^d$. A power series in $d$ variables $\z=(z_1, \ldots, z_d)\in\mb K^d$ centered at the point $\boldsymbol{c}=(c_1, \ldots, c_d)\in \mb K^d$ is a series of the form
    \begin{equation}\label{eq: power series}
        \sum_{\nu \in \mathbb{N}_0^d} a_{\nu} (\z-\boldsymbol{c})^{\nu},
    \end{equation}
    where $(\z-\boldsymbol{c})^{\nu}$ is a multi-index notation representing $(z_1-c_1)^{\nu_1} \cdots (z_d-c_d)^{\nu_d}$.
\end{defi}
The domain of convergence of multivariable power series is not as straightforward as in the single-variable case. Below, we will give some basic results about the convergence.

\begin{defi}[Polydisc]\label{def: polydisc}
    The (open) polydisc $D(\boldsymbol{c},\boldsymbol{r})$ of multi-radius $\boldsymbol{r}=(r_1,\ldots,r_d)\in\mb R_+^d$ and center $\boldsymbol{c}\in \mb K^d$ is
    $$
    D(\boldsymbol{c},\boldsymbol{r})=\{\z=(z_1,\ldots,z_d)\in \mb K^d: |z_j-c_j| < r_j \text{ for all }j=1,\ldots,d \}.
    $$
    The closed polydisc is $\overbar{D}(\boldsymbol{c},\r)=\{\z\in\mb K^d:|z_j-c_j|\leq r_j \text{ for all }j\in[d]\}$.
\end{defi}

\begin{lem}[Abel's Lemma, Lemma 1.15 of \cite{range1998holomorphic}] 
    Given $a_\nu \in \mb K$ for all $\nu\in \mb N_0^d, \boldsymbol{c}\in\mb K^d$, and the power series \eqref{eq: power series} above, there exists $\boldsymbol{r}\in \mb R_+^d$ such that
    $$
    \sum_{\nu \in \mathbb{N}_0^d}\left|a_\nu\right| \boldsymbol{r}^\nu<\infty.
    $$
    Hence, the power series $\sum a_{\nu} (\z-\boldsymbol{c})^{\nu}$ converges on the polydisc $D(\boldsymbol{c},\boldsymbol{r})$.
\end{lem}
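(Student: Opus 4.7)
The plan is to prove the standard Abel's Lemma in several variables, under the natural convergence hypothesis that should be implicit in the statement: namely, that the power series $\sum_{\nu} a_\nu (\z-\boldsymbol{c})^\nu$ actually converges (or at least is bounded termwise) at some point $\boldsymbol{w} \in \mb K^d$ with $w_j \neq c_j$ for every $j \in [d]$. Without such a hypothesis no $\r \in \mb R_+^d$ need exist, so I would read the lemma as asserting the existence of $\r$ once one such $\boldsymbol{w}$ is provided.

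First I would observe that convergence of $\sum_\nu a_\nu (\boldsymbol{w}-\boldsymbol{c})^\nu$ forces the individual terms to tend to $0$, so in particular there exists a constant $M \in \mb R_+$ with
\[
|a_\nu| \, |\boldsymbol{w}-\boldsymbol{c}|^\nu \;\leq\; M \quad \text{for all } \nu \in \mb N_0^d,
\]
where $|\boldsymbol{w}-\boldsymbol{c}|^\nu := \prod_{j=1}^d |w_j-c_j|^{\nu_j}$. Next I would choose $\r = (r_1,\ldots,r_d) \in \mb R_+^d$ with $0 < r_j < |w_j-c_j|$, and set $\theta_j := r_j/|w_j-c_j| \in (0,1)$. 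Substituting gives the termwise bound
\[
|a_\nu| \, \r^\nu \;=\; |a_\nu| \, |\boldsymbol{w}-\boldsymbol{c}|^\nu \cdot \prod_{j=1}^d \theta_j^{\nu_j} \;\leq\; M \prod_{j=1}^d \theta_j^{\nu_j}.
\]

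Then I would use the Fubini/Tonelli-style factorization of the multi-index sum into a product of one-variable geometric series:
\[
\sum_{\nu \in \mb N_0^d} \prod_{j=1}^d \theta_j^{\nu_j} \;=\; \prod_{j=1}^d \sum_{k=0}^\infty \theta_j^k \;=\; \prod_{j=1}^d \frac{1}{1-\theta_j} \;<\; \infty,
\]
which yields $\sum_\nu |a_\nu|\, \r^\nu \leq M \prod_j (1-\theta_j)^{-1} < \infty$, establishing the absolute convergence. For any $\z \in D(\boldsymbol{c},\r)$ one has $|(\z-\boldsymbol{c})^\nu| \leq \r^\nu$, so the stated convergence on the polydisc follows immediately by comparison, and in fact the convergence is uniform on each closed polydisc $\overbar{D}(\boldsymbol{c},\r')$ with $r'_j < r_j$ by the Weierstrass $M$-test.

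I do not expect a serious obstacle: the entire argument rests on the single trick of converting the multi-index sum to a product of geometric series once the terms have been bounded at one favourable point. The only subtlety worth care is handling the distinction between open and closed polydiscs and being explicit about where the strict inequalities $r_j < |w_j - c_j|$ and $\theta_j < 1$ are used, since these are what make the product of geometric series finite.
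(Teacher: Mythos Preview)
The paper does not supply its own proof of this lemma; it is stated as a preliminary result with a citation to Range's textbook. Your proof is the standard argument for Abel's Lemma in several variables and is correct. You are also right to flag that the statement, exactly as printed in the paper, is missing a hypothesis: without assuming that the series converges (or has bounded terms) at some $\boldsymbol{w}$ with $w_j \neq c_j$ for all $j$, no such $\r \in \mb R_+^d$ need exist (e.g., $\sum_n n^n z^n$ in one variable). Reading in that implicit hypothesis, your reduction to a product of geometric series via the bound $|a_\nu|\,\r^\nu \leq M \prod_j \theta_j^{\nu_j}$ is exactly the textbook route, and your remark about uniform convergence on closed sub-polydiscs via the Weierstrass $M$-test is a welcome bonus.
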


\begin{defi}[Analytic functions]
    Let $f: \Omega \rightarrow \mb K$ be a function defined on an open set $\Omega\subset \mb K^d$. We say that $f$ is analytic on $\Omega$ if for any point $\boldsymbol{p}\in \Omega$, it can be written as a power series
    $$
    f(\z)=\sum_{\nu\in\mb N_0^d} a_{\nu} (\z-\boldsymbol{p})^{\nu}
    $$
    in which the coefficients $a_{\nu}\in\mb K$, and the series converges to $f(\z)$ for $\z \in D(\boldsymbol{p},\boldsymbol{r})$ for some $\boldsymbol{r}\in \mb R_+^d$. Indeed, the coefficients $a_{\nu}=\frac{\partial^{\nu}f(\boldsymbol{p})}{\nu!}$. In other words, $f$ is said to be analytic at point $\boldsymbol{p}$ if its Taylor series at $\boldsymbol{p}$ converges to $f$ in some polydisc $D(\boldsymbol{p},\boldsymbol{r})$.
\end{defi}

\begin{lem}[Osgood's Lemma, Theorem 2 of Chapter 1 of \cite{gunning2022analytic}] 
    Let $\Omega \subset \mb K^d$ be an open set, and function $f:\Omega\rightarrow \mb K$. If $f$ is continuous and analytic in each variable separately, then $f$ itself is analytic.
\end{lem}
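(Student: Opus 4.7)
The plan is to prove the statement first in the complex case $\mb K = \mb C$, which is the classical Osgood lemma, and then bootstrap to the real case $\mb K = \mb R$ by complexifying in each variable. Fix any $\boldsymbol{p} \in \Omega$ and choose $\r \in \mb R_+^d$ with $\overbar{D}(\boldsymbol{p}, \r) \subset \Omega$. Continuity of $f$ on this compact closed polydisc gives a uniform bound $|f(\boldsymbol{\zeta})| \leq M$ for some $M \in \mb R_+$.

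The first key step is to derive the iterated Cauchy integral representation
\begin{equation*}
f(\z) = \frac{1}{(2\pi i)^d} \oint_{|\zeta_1 - p_1| = r_1} \cdots \oint_{|\zeta_d - p_d| = r_d} \frac{f(\boldsymbol{\zeta})}{\prod_{j=1}^d (\zeta_j - z_j)} \, d\zeta_1 \cdots d\zeta_d
\end{equation*}
valid for $\z \in D(\boldsymbol{p}, \r)$. I would obtain this by applying the one-variable Cauchy formula successively in $z_1, z_2, \ldots, z_d$, with each step relying on separate holomorphicity. Collapsing these nested single integrals into a single iterated integral is exactly where joint continuity enters: it ensures joint measurability, and together with $M$ it provides absolute integrability on the product torus, so Fubini's theorem legitimizes the exchange of the orders of integration.

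Next, I would expand each kernel factor as the uniformly convergent geometric series
\begin{equation*}
\frac{1}{\zeta_j - z_j} = \sum_{k_j=0}^{\infty} \frac{(z_j - p_j)^{k_j}}{(\zeta_j - p_j)^{k_j+1}}, \qquad |\zeta_j - p_j| = r_j, \quad |z_j - p_j| \leq \rho_j < r_j,
\end{equation*}
take the product over $j$, and swap the resulting multi-sum with the iterated integral, which is legitimate by uniform convergence together with the bound $M$. This produces
\begin{equation*}
f(\z) = \sum_{\nu \in \mb N_0^d} a_\nu (\z - \boldsymbol{p})^\nu, \qquad a_\nu = \frac{1}{(2\pi i)^d} \oint \cdots \oint \frac{f(\boldsymbol{\zeta}) \, d\zeta_1 \cdots d\zeta_d}{\prod_{j=1}^d (\zeta_j - p_j)^{\nu_j + 1}},
\end{equation*}
convergent on the polydisc $D(\boldsymbol{p}, \boldsymbol{\rho})$. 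Since $\boldsymbol{p}$ was arbitrary, $f$ is analytic on $\Omega$ in the complex case.

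For $\mb K = \mb R$, the plan is to complexify. At each $\boldsymbol{p} \in \Omega$, real-analyticity of $f$ in each variable separately provides, for every $j$, a holomorphic extension of the slice $f(p_1, \ldots, p_{j-1}, \,\cdot\,, p_{j+1}, \ldots, p_d)$ to a complex disc around $p_j$. The main obstacle I expect is promoting these one-dimensional extensions to a common complex polydisc $D(\boldsymbol{p}, \r) \subset \mb C^d$ on which the extended function is continuous and holomorphic in each variable separately: one must bound the radii of convergence of the separate one-variable Taylor series uniformly over a compact real neighborhood of $\boldsymbol{p}$, and then use continuity of the real $f$ together with the uniform convergence of these power series to transfer continuity to the complexification. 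Once that is accomplished, the complex case proved above applies and produces a convergent joint power series at $\boldsymbol{p}$, whose restriction to $\mb R^d$ is the desired real-analytic expansion of $f$.
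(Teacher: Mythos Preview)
The paper does not supply a proof of this lemma; it simply cites it from Gunning's text, which covers only the complex case $\mb K = \mb C$. Your argument for $\mb K = \mb C$ is correct and is essentially the standard proof found there (iterated Cauchy integral formula followed by geometric-series expansion of the kernel), so on that half you match the cited source.

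Your plan for $\mb K = \mb R$, however, cannot succeed, because the statement is \emph{false} over the reals. Take
\[
f(x,y) \;=\; \begin{cases} \dfrac{x^2 y^2}{x^2 + y^2}, & (x,y) \neq (0,0),\\[1ex] 0, & (x,y) = (0,0). \end{cases}
\]
This $f$ is continuous (since $x^2 y^2 \le (x^2+y^2)^2/4$) and is real analytic in each variable separately: for fixed $y_0 \neq 0$ the slice $x \mapsto f(x,y_0)$ is a rational function with no real poles, and for $y_0 = 0$ it is identically zero. Yet the mixed partial $\partial_x \partial_y f = 8x^3 y^3/(x^2+y^2)^3$ equals $1$ along the diagonal $x=y$ and $0$ along either axis, so $f$ is not even $C^2$ at the origin, let alone jointly real analytic. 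The obstacle you flagged---getting a uniform lower bound on the radii of the one-variable complex extensions---is precisely what fails: the slice $x \mapsto f(x, y_0)$ has complex poles at $x = \pm i y_0$, so its radius of convergence at $0$ is $|y_0|$, which collapses to $0$ as $y_0 \to 0$. Thus the lemma as stated is only valid for $\mb K = \mb C$, and no argument can rescue the real case.
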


\subsection{Spectral method} \label{section: spectral method}
%\href{https://direns.mines-paristech.fr/Sites/Complex-analysis/Complex-Step%20Differentiation/#spectral-method}{spectral}

The spectral method is powerful in computing the first, second, and higher-order derivatives of analytic functions, whether real or complex-valued \cite{fornberg1981numerical,trefethen2000spectral,lyness1971algorithm,lyness1968differentiation,lyness1967numerical}. It offers better error performance compared to the finite difference method.

\begin{lem}[Corollary 1.1.10 of \cite{krantz2002primer}]
\label{lem: constant in power series}
The power series $\sum_n a_n (x-\alpha)^n$ has radius of convergence $\rho$ if and only if for each $0<r<\rho$, there exists a constant $0<\kappa=\kappa(r)$ such that $|a_n|\leq \kappa r^{-n}$ for all $n$. 
\end{lem}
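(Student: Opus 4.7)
The plan is to prove the two directions of the equivalence by reducing everything to the comparison with a convergent geometric series, which is the standard route for Cauchy--Hadamard-type characterizations.

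For the forward direction, I would assume the series $\sum_n a_n(x-\alpha)^n$ has radius of convergence $\rho$ and fix an arbitrary $r \in (0,\rho)$. The strategy is to pick an auxiliary radius $r'$ with $r < r' < \rho$. Since $r' < \rho$, the series converges absolutely at any point $x$ with $|x-\alpha| = r'$, so in particular the sequence of terms $a_n(x-\alpha)^n$ is bounded at such a point; that is, there exists $M > 0$ with $|a_n|(r')^n \leq M$ for every $n$. Multiplying by $(r/r')^n \leq 1$ then yields
\[
|a_n| r^n = |a_n|(r')^n \left(\frac{r}{r'}\right)^n \leq M,
\]
so $\kappa := M$ (which depends on $r$ through the choice of $r'$) satisfies $|a_n| \leq \kappa r^{-n}$, as required.

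For the reverse direction, I would take any $x$ with $|x-\alpha| < \rho$ and choose $r$ with $|x-\alpha| < r < \rho$. The hypothesis then yields $|a_n(x-\alpha)^n| \leq \kappa(r)(|x-\alpha|/r)^n$, and since $|x-\alpha|/r < 1$, comparison with a convergent geometric series gives absolute convergence of $\sum_n a_n(x-\alpha)^n$. This shows that the radius of convergence is at least $\rho$; combining this with the forward direction (which would fail for $r \geq \rho$ by the same geometric-series argument applied in reverse) yields the full equivalence.

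The only genuinely delicate point, and what I would flag as the main obstacle, is the precise handling of the boundary in the reverse direction: the hypothesis gives bounds only strictly inside radius $\rho$, so one must argue via the auxiliary $r'$ trick (or equivalently via $\limsup_n |a_n|^{1/n} \leq 1/r$ for all $r<\rho$) rather than directly plugging in $r = \rho$. Everything else is a routine application of the Weierstrass comparison with a geometric series, and no analytic tool beyond the definition of absolute convergence is needed.
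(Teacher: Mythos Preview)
Your argument is correct and takes a genuinely different route from the paper's. The paper works through the Cauchy--Hadamard formula: it writes $\rho^{-1}=\varlimsup_n |a_n|^{1/n}$, unpacks the definition of $\limsup$ with $\varepsilon=r^{-1}-\rho^{-1}$ to find an $N$ beyond which $|a_n|^{1/n}\le r^{-1}$, and then sets $\kappa=\max\{1,|a_1|r,\ldots,|a_N|r^N\}$ to absorb the finitely many initial terms. Your route is the Abel-type argument: boundedness of the terms of a convergent series at an auxiliary radius $r'\in(r,\rho)$, followed by the geometric factor $(r/r')^n\le 1$. Your version is more elementary (no root test needed) and works verbatim when $\rho=\infty$; the paper's version has the advantage of producing an explicit formula for $\kappa$ and making the connection to $\limsup|a_n|^{1/n}$ transparent, which is what the paper actually uses downstream. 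Note also that the paper, like you, proves only the forward implication in detail; your observation that the reverse direction cannot pin the radius down to exactly $\rho$ from the stated hypothesis alone is accurate, and the lemma should really be read as a characterization of the open disc of convergence rather than as a literal ``if and only if.'' Incidentally, your auxiliary $r'$ is not strictly necessary---since $r<\rho$ already gives convergence at radius $r$, hence bounded terms there---but it does no harm.
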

\begin{proof}
    Suppose that $0<\varlimsup\limits_{n \rightarrow \infty} \left|a_n\right|^{1 / n}=A<\infty$, then $\rho=A^{-1}$. By definition, 
    $$
    \lim _{n \rightarrow \infty}(\sup _{m \geq n} |a_m|^{1/m})=\rho^{-1}.
    $$
    For any $0<r<\rho$, let $\varepsilon=r^{-1}-\rho^{-1} >0$, there exists $N=N(r)$ such that whenever $n>N$,
    $$
    \sup _{m \geq n} |a_m|^{1/m}\leq\rho^{-1}+\varepsilon=r^{-1},
    $$
    which means $|a_m|^{1/m}\leq r^{-1}$ for all $m>N$. Let $\kappa=\max\{1, |a_1| r, \ldots, |a_N| r^N \}$, then for all $n\in \mb N$, we have $|a_n|\leq \kappa r^{-n}.$   
\end{proof}

We can make the constant $\kappa$ more precisely for analytic functions by the Cauchy inequality.

\begin{lem}[Cauchy's inequality, Chapter 2.5 of \cite{titchmarsh1939theory}]
\label{lem: bound of derivatives}
Assume that $f$ is analytic at $x_0$ with the radius of convergence $r$. 
Let $\delta<r$ and $M_\delta$ be the upper bound of $|f(x)|$ on the circle $C_\delta=\{x\in \mathbb{C}:|x-x_0|=\delta\}$, then
$$
|f^{(n)}(x_0)| \leq \frac{n! M_\delta}{\delta^n}
$$
for all $n\in \mb N$.
    % Let $\overbar{B}(x_0,r)=\left\{x:\left|x-x_0\right| \leq r \right\}$ be a closed disk contained in open subset $U \subset \mb C$. Suppose $f: U \rightarrow \mb C$ is holomorphic, and let $C_r$ be the circle, oriented counterclockwise, forming the boundary of $\overbar{B}(x_0,r)$. If $M_r$ denotes the maximum value of $|f(x)|$ on $C_r$, then
    % $$
    % |f^{(n)}(x_0)| \leq \frac{n! M_r}{r^n}
    % $$
    % for any $n\in \mb N$.
\end{lem}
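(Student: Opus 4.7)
The plan is to apply Cauchy's integral formula for derivatives, which is the standard route to any bound of this form. Since $f$ is analytic at $x_0$ with radius of convergence $r$ and $\delta<r$, the function $f$ is holomorphic in an open neighborhood of the closed disk $\overbar{B}(x_0,\delta)$, so the closed contour $C_\delta$ lies entirely in the domain of holomorphy. This is what legitimizes the use of the integral representation below.

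First I would write down Cauchy's integral formula for the $n$-th derivative,
\[
f^{(n)}(x_0) \;=\; \frac{n!}{2\pi i} \oint_{C_\delta} \frac{f(x)}{(x-x_0)^{n+1}}\, dx,
\]
and then parametrize $C_\delta$ by $x = x_0 + \delta e^{i\theta}$ with $\theta\in[0,2\pi]$, so that $dx = i\delta e^{i\theta}\,d\theta$ and $|x-x_0|=\delta$ on the contour.

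Next I would pass to absolute values and use the standard $ML$-type estimate: the integrand is bounded in modulus by $M_\delta/\delta^{n+1}$ on $C_\delta$ (by definition of $M_\delta$), the length of the contour is $2\pi\delta$, and the prefactor is $n!/(2\pi)$. Multiplying these together yields
\[
|f^{(n)}(x_0)| \;\leq\; \frac{n!}{2\pi} \cdot \frac{M_\delta}{\delta^{n+1}} \cdot 2\pi\delta \;=\; \frac{n!\, M_\delta}{\delta^n},
\]
which is the claimed bound.

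There is essentially no obstacle here: the only thing to be slightly careful about is the justification for invoking Cauchy's formula on $C_\delta$, which requires $f$ to be holomorphic on an open set containing the closed disk of radius $\delta$. This is guaranteed because the radius of convergence of the Taylor series of $f$ at $x_0$ is $r$, and $\delta<r$, so the Taylor series converges on an open disk strictly larger than $\overbar{B}(x_0,\delta)$ and defines a holomorphic extension of $f$ there. Once this is noted, the rest is the routine $ML$-estimate sketched above.
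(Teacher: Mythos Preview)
Your proposal is correct and follows essentially the same approach as the paper: both invoke Cauchy's differentiation formula on the circle $C_\delta$ and then apply the standard $ML$-estimate to obtain $|f^{(n)}(x_0)| \leq \frac{n!}{2\pi}\cdot \frac{M_\delta}{\delta^{n+1}}\cdot 2\pi\delta = \frac{n! M_\delta}{\delta^n}$. Your version is slightly more explicit about the parametrization and the justification for holomorphy on the closed disk, but the argument is identical.
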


\begin{proof}
    By Cauchy's differentiation formula, we have
    \be
    \label{Cauchy integral}
    f^{(n)}(x_0)=\frac{n !}{2 \pi i} \oint_{C_\delta} \frac{f(x)}{(x-x_0)^{n+1}} d x.
    \ee
    Hence, 
    $$
    |f^{(n)}(x_0)| \leq \frac{n !}{2 \pi} \cdot \frac{M_\delta}{\delta^{n+1}} \cdot 2\pi \delta =\frac{n! M_\delta}{\delta^n}.
    $$
\end{proof}

In the above proof, if we approximate the integral (\ref{Cauchy integral}) by an $N$-point trapezoidal rule, we can then approximate $\delta^n f^{(n)}(x_0)/n!$ with $N^{-1} \sum_{k=0}^{N-1} e^{2\pi i k n/N} f(x_0 + \delta e^{-2\pi i k /N})$. 
%Here $r$ should be less than the radius of convergence. \cs{Include this condition in the above lemma. Should introduce radius of convergence of $f$ in the paper. so far not consistent with the following.} 
Another way to see this is from the Taylor series of $f(x)$ at $x_0$, where the error analysis is relatively more clear.

Now let $f$ be a function that is analytic at point $x_0\in\mb C$, and the Taylor series of $f$ at $x_0$ converges in an open neighbourhood of $x_0\in \mb C$ that contains $\overbar{B}(x_0,r) \subset \mb C$, which is the closed disk with center $x_0$ and radius $r \in \mb R_+$. In $\overbar{B}(x_0,r)$, $f$ can be computed by the Taylor series:
$$
f(x)=\sum_{n=0}^{ \infty} a_n(x-x_0)^n, \quad a_n=\frac{f^{(n)}(x_0)}{n !}.
$$
According to Lemma \ref{lem: bound of derivatives}, there is a constant $\kappa=M_r>0$ such that 
$$
|a_n| \leq \kappa r^{-n} , \quad  \forall n \in \mathbb{N}.
$$
Here $M_r$ denotes the maximum value of $|f(x)|$ on the circle centered at $x_0$ with radius $r$. 
% Note that the radius of convergence of the series is bigger than $r$, then by Lemma \ref{lem: constant in power series}, there is a constant $\kappa>0$ such that  
% $$
% \left|a_n\right| \leq \kappa r^{-n} , \quad  \forall n \in \mathbb{N}.
% $$

Choose some $\delta \in (0,r), N\in \mb N$, consider the following $N$ values of $f$
$$
f_k:=f(x_0+\delta \omega^k), \quad k=0, \ldots, N-1,
$$
where $\omega=e^{-\frac{2 \pi i}{N}}$. Then we have
$$
\begin{aligned}
f_k =\sum_{n=0}^{\infty} a_n(\delta \omega^k)^n 
 =\sum_{n=0}^{N-1} \omega^{k n}\left(\sum_{m=0}^{\infty} a_{n+m N} \delta^{n+m N}\right).
\end{aligned}
$$
Denote $c_n=\sum_{m=0}^{\infty} a_{n+m N} \delta^{n+m N}$, then 
$$
f_k=\sum_{n=0}^{N-1} \omega^{k n} c_n,
$$
which is an application of the discrete Fourier transform (DFT). The inverse DFT gives
$$
c_n=\frac{1}{N} \sum_{k=0}^{N-1} \omega^{-k n} f_k = \frac{1}{N} \sum_{k=0}^{N-1} \omega^{-k n} f(x_0+\delta e^{-2\pi i k/N}).
$$
This formula coincides with the formula obtained by the trapezoidal rule mentioned above.
Thus, $c_n$ can be obtained through $N$ function evaluations of $f$. Consequently, an approximation of $a_n=\frac{f^{(n)}(x_0)}{n !}$ can be provided by $c_n/\delta^n$, and the error is
\be
\left|a_n-\frac{c_n}{\delta^n}\right| \leq \kappa r^{-n} \sum_{m=1}^{\infty}(\delta / r)^{m N} 
=\kappa r^{-n} \frac{(\delta / r)^N}{1-(\delta / r)^N}.
\label{eq: error of spectral method}
\ee
Particularly, when $n=1$, 
\be
\left|f'(x_0)-\frac{c_1}{\delta}\right| \leq 
\kappa r^{-1} \frac{(\delta / r)^N}{1-(\delta / r)^N}.
\label{eq: error of spectral method for gradient}
\ee

There are two ways to look at the approximation (\ref{eq: error of spectral method}). If we fix $N$ and consider the behavior of the error when the radius $\delta$ approaches 0, then we derive
$a_n = c_n/\delta^n + O(\kappa \delta^N)$. This is an approximation of $a_n$ of order $N$ with respect to $\delta$. We can also fix $\delta$ and let $N\rightarrow \infty$, we now obtain $a_n = c_n/\delta^n + O(\kappa e^{-\alpha N})$, where $\alpha = \log(r/\delta)> 0$. This approximation is exponential with respect to $N$. In this work, we will use the second way to determine $N$ to achieve the desired accuracy $\varepsilon$ by setting $\alpha$ as a constant. As a result, $N = O(\log(\kappa/\varepsilon))$.

\subsection{Grid of sampling points}
Every algorithm that estimates some information based on function evaluations must describe a set of sampling points. For clarity and ease of comparison between previous methods and ours, we describe a common grid of points here.

\begin{defi}[Grid of sampling points]
\label{def: grid Gn}
    Let $d, n \in \mb N$, we define
    \be
    \label{eq for gird}
    G_n^d:= \bigtimes_{i=1}^d \left\{\frac{1}{2^n} \cdot k_i+\frac{1}{2^{n+1}}: \; k_i\in \{-2^{n-1},\dots, 2^{n-1}-1\} \right\} \subset \mb R^d.
    \ee
    Note that $G_n^d$ is the grid around the original point with side length $1$, and for $a \in \mb R_+$, we refer to $aG_n^d$ as the grid with side length $a$. It is clear that the elements $\boldsymbol{x}=\left(x_1,\dots,x_d\right) \in G_n^d$ are just vectors whose components $x_i$ are uniformly distributed in $(-1/2,1/2)$ with intervals of $1/2^n$.
\end{defi}

From the analysis presented in Lemma \ref{lem: GAW's gradient estimation thm}, $d$ denotes the dimension (or the number of variables) of $f$, and $n$ relates to the accuracy of estimating the gradient of $f$.

Note that every binary string $b=(b_0, \ldots, b_{n-1})\in \{0,1\}^n$ can be viewed as an integer in $[0, 2^{n})\cap \mb Z$, i.e., $b_0+2b_1+\cdots+2^{n-1} b_{n-1}$. There is a natural bijection between $[0, 2^{n})\cap \mb Z$ and $[-2^{n-1}, 2^{n-1})\cap \mb Z$, which leads to 
    a natural bijection between $[0, 2^{n})\cap \mb Z$ and the set $G_n$ defined above for $d=1$. Hence, an $n$-qubit basis state $|b_0\ket\cdots |b_{n-1}\ket$ can represent a corresponding element in $G_n$. In this work, we will label $n$-qubit basis states by elements in $G_n$ for convenience.

We define the quantum Fourier transform acting on state $|x\ket$ for $x \in G_n$ as 
\be
\label{QFT for Gn}
\mathrm{QFT}_{G_n}:|x\rangle \rightarrow \frac{1}{\sqrt{2^n}} \sum_{k \in G_n} e^{2 \pi i 2^n x k}|k\rangle.
\ee
This unitary is equivalent to the usual quantum Fourier transform $\mathrm{QFT}_{2^n}$ up to conjugation with a tensor product of $n$ single-qubit phase gates, so we can interchangeably use these two equivalent transforms, see \cite[Claim 5.1]{gilyen2019optimizing}.

\section{Quantum spectral method for gradient estimation}
\label{section: Quantum spectral method for gradient estimation}

Most previous work, e.g., \cite{jordan2005fast, gilyen2019optimizing}, about computing gradients through quantum algorithms focused on real-valued functions. A key technique is the quantum Fourier transform.
However, complex numbers play a crucial role in quantum mechanics \cite{karam2020complex}, where observables are Hermitian operators, the quantum Fourier transform is frequently applied, and the time evolution of states is described by unitaries, among others. Complex-valued functions also have plenty of applications in mathematics and engineering.

For a complex-valued function $f:\mb C \rightarrow \mb C$, to utilize the spectral method, we explore functions satisfying $f'(\mb R) \subseteq \mb R$. 
Note that if $f:\mb C \rightarrow \mb C$ is analytic and maps $\mb R$ into $\mb R$, then for all $n \in \mb N$, we have $f^{(n)}(\mb R) \subseteq \mb R.$
This can be easily deduced from the uniqueness property of the limit, and multivariable functions exhibit similar behavior by definition. This constraint on functions may seem somewhat contrived at first glance, but it has numerous applications and examples, such as polynomials, exponentials, and other basic functions. In fact, any analytic real-valued function can be locally extended to the complex plane, which is quite natural through Taylor expansion. Moreover, the derivatives we consider are also local properties that align well with this perspective.

% To compute the gradient of a function $f(\x)$ at a point $\x_0\in \mathbb{R}^d$, the spectral method works for any real/complex analytic functions that can take complex values around $\x_0$.
% For ease of statement, in this section, we shall directly assume that $f:\mathbb{C} \rightarrow \mathbb{C}$ that maps $\mathbb{R}^d$ to $\mathbb{R}$. The latter condition implies that the gradient of $f$ at $\x_0$ is real by properties of analytic functions.

%\subsection{Gradient estimation using spectral method}

% \begin{thm} \label{thm: spectral formula for gradient}
%     Let $N\in \mb N$ and $\omega=e^{-\frac{2\pi i}{N}}$. Given an analytic function $f:\mb C^d \rightarrow \mb C$ that maps $\mb R^d$ into $\mb R$, and for $\boldsymbol{x} \in \mb R^d$, we define an auxiliary function 
%     $$
%     \begin{aligned}
%         h: \mb C & \rightarrow \mb C \\
%            \tau & \rightarrow f(\tau \boldsymbol{x})
%     \end{aligned}
%     $$
%     which is also analytic and maps $\mb R$ to $\mb R$. Let $r_f, r \in \mb R_+$ denote the radii of convergence of the Taylor series of functions $f, h$ at zero, respectively, and let $\delta \in (0,r)$. Then,
%     $$
%     F(\boldsymbol{x}):=\frac{1}{N \delta}\sum\limits_{k=0}^{N-1} \omega^{-k} f(\delta \omega^k \boldsymbol{x})
%     $$
%     gives an approximation for $\nabla f(\boldsymbol{0}) \cdot \boldsymbol{x}$ with error
%     $$
%     |F(\boldsymbol{x})- \nabla f(\boldsymbol{0}) \cdot \boldsymbol{x}| \leq \kappa r^{-1} \frac{(\delta/r)^N}{1-(\delta/r)^N}
%     $$
%     for some constant $\kappa>0$ and $F(\boldsymbol{x}) \in \mb R$.
% \end{thm}

Given oracle access to a function that can evaluate its values, if we aim to approximate its derivatives at a point, it is necessary to provide a formula expressing the derivatives in terms of function values. In  \cite{gilyen2019optimizing}, they used a finite difference formula for approximating the first-order derivative, derived from Taylor expansion, for this purpose. In this work, we employ the spectral method for analytic functions, as introduced above, which arises from the Cauchy differentiation formula.

\begin{thm} \label{thm: spectral formula for gradient}
    Let $N\in \mb N$ and $\omega=e^{-\frac{2\pi i}{N}}$. 
    Let $f:\mb C^d \rightarrow \mb C$ be an analytic function that maps $\mb R^d$ into $\mb R$. Assume the Taylor series of $f$ converges to itself in a closed polydisc $\overbar{D}(\0,\r)$ with $\r\in \mb R_+^d$. Denote $r=\min_{j\in [d]} r_j$, then for all $\boldsymbol{x}\in G_n^d$ and $\delta \in (0,2r)$,
    $$
    F(\boldsymbol{x}):=\frac{1}{N \delta}\sum\limits_{k=0}^{N-1} \omega^{-k} f(\delta \omega^k \boldsymbol{x})
    $$
    gives an approximation for $\nabla f(\boldsymbol{0}) \cdot \boldsymbol{x}$ with error
    \bes
    \left|F(\boldsymbol{x})- \nabla f(\boldsymbol{0}) \cdot \boldsymbol{x}\right| \leq \frac{\kappa}{2r} \frac{(\delta /2r)^N}{1-(\delta /2r)^N}
    \ees
    for some constant $\kappa=\max_{\tau} |f(\tau\x)|>0$ with $\tau$ ranging over the circle centered at $0$ with radius $2r$. 
    %\cs{Can we replace $\kappa=\max_{\x\in \overbar{D}(\0,\r)} |f(\x)|>0$?}
    Moreover, $F(\boldsymbol{x})\in \mb R$.
\end{thm}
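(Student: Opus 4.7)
The plan is to reduce the multivariate statement to the univariate spectral formula of Subsection \ref{section: spectral method}, applied to the auxiliary function $h(\tau) := f(\tau \x)$ of a single complex variable $\tau$. Expanding $f$ at $\0$ yields
\[
h(\tau) = \sum_{n=0}^{\infty} b_n \tau^n, \qquad b_n = \sum_{|\alpha|=n} \frac{\partial^\alpha f(\0)}{\alpha!} \x^\alpha,
\]
so in particular $h'(0) = b_1 = \nabla f(\0) \cdot \x$. Moreover, a direct substitution identifies the target $F(\x) = \frac{1}{N\delta}\sum_{k=0}^{N-1} \omega^{-k} f(\delta \omega^k \x)$ with the quantity $c_1/\delta$ associated to $h$ in Subsection \ref{section: spectral method}. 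Thus the theorem reduces to bounding $|h'(0) - c_1/\delta|$ by the univariate spectral error estimate.

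Next I would verify the domain-of-convergence condition needed to invoke that formula for $h$. Since $\x \in G_n^d$ satisfies $\|\x\|_\infty \le 1/2$, for every $\tau$ with $|\tau| \le 2r$ and every $j \in [d]$ we have $|\tau x_j| \le 2r \cdot \tfrac{1}{2} = r \le r_j$, so $\tau \x \in \overbar{D}(\0,\r)$ and the Taylor series of $f$ converges there. Consequently $h$ is analytic on a disk of radius at least $2r$ around $0$, and Lemma \ref{lem: bound of derivatives} applied to $h$ with radius $2r$ yields $|b_n| \le \kappa\,(2r)^{-n}$, where $\kappa = \max_{|\tau|=2r} |f(\tau\x)|$. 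Plugging this into the aliasing identity $c_1 - b_1 \delta = \sum_{m\ge 1} b_{1+mN}\delta^{1+mN}$ from Subsection \ref{section: spectral method} and summing the resulting geometric series gives
\[
|F(\x) - \nabla f(\0) \cdot \x| = |c_1/\delta - b_1| \;\le\; \frac{\kappa}{2r} \cdot \frac{(\delta/2r)^N}{1-(\delta/2r)^N},
\]
which is precisely the claimed bound; the hypothesis $\delta < 2r$ is what ensures the geometric series converges.

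Finally, for the real-valuedness of $F(\x)$, I would use the assumption $f(\mb R^d) \subset \mb R$: combined with analyticity, this forces $\partial^\alpha f(\0) \in \mb R$ for every multi-index $\alpha$, so for $\x \in \mb R^d$ all the $b_n$ are real and hence $c_1 = \sum_{m \ge 0} b_{1+mN} \delta^{1+mN}$ is real, whence $F(\x) = c_1/\delta \in \mb R$. Equivalently, since $f$ has real Taylor coefficients one has $\overline{f(\delta \omega^k \x)} = f(\delta \omega^{-k} \x)$, and reindexing $k \mapsto -k \pmod N$ in the defining sum of $F(\x)$ shows $\overline{F(\x)} = F(\x)$.

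I expect no serious obstacle: the argument is a routine reduction to Subsection \ref{section: spectral method}. The only point requiring care is the radius-of-convergence bookkeeping, specifically tracking the factor of $2$ that arises from $\|\x\|_\infty \le 1/2$ and that produces $2r$ (rather than $r$) in the denominators of the error bound.
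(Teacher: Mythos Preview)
Your proposal is correct and follows essentially the same approach as the paper: reduce to the univariate function $h(\tau)=f(\tau\x)$, note $h'(0)=\nabla f(\0)\cdot\x$ and $F(\x)=c_1/\delta$, use $\|\x\|_\infty<1/2$ to get radius $\tilde r=2r$, and invoke the spectral error estimate \eqref{eq: error of spectral method for gradient}. Your real-valuedness argument via $\partial^\alpha f(\0)\in\mb R$ (or the conjugate-symmetry reindexing) is equivalent to the paper's observation that $h(\mb R)\subset\mb R$ forces $h^{(n)}(0)\in\mb R$ and hence $c_1\in\mb R$.
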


\begin{proof}
    Given $\x \in G_n^d \subset \mb R^d$, define $h(\tau):=f(\tau \boldsymbol{x})$ as a function over the field $\mathbb{C}$ which is also analytic and maps $\mb R$ to $\mb R$, then
    $$
    h^{(k)}(\tau)=\frac{d^k f(\tau \x)}{d \tau^k}=\partial_{\x}^k f(\tau \x).
    $$
    In particular, $h^{\prime}(0)=\nabla f(\0) \cdot \x$.
    %$h^{(k)}(0)=\partial_{\x}^k f(0)=\sum\limits_{\alpha \in[d]^k} \x^\alpha \cdot \partial^\alpha f(0)$, which means $h^{\prime}(0)=\nabla f(0) \cdot x$. 
    %Note that for different points $\x \in G_n^d$, we have different auxiliary functions $h_x$. 
    Since the Taylor series of $f$ converges to itself in $\overbar{D}(\0,\r)$, the Taylor series of $h$ converges as well for all $\tau$ satisfying $|\tau x_j|\leq r_j, j\in [d]$. 
    More precisely, 
    take $\tilde{r}:=2r$, then for all $\x\in G_n^d$, the Taylor series of $h$ converges to itself in an open set containing the closed ball $\overbar{B}(0,\tilde{r})\subset \mb C$. By (\ref{eq: error of spectral method for gradient}), 
    \[
    \left|h'(0) - \frac{c_1}{\delta} \right| \leq \kappa \tilde{r}^{-1} \frac{(\delta/\tilde{r})^N}{1-(\delta/\tilde{r})^N},
    \]
    where
    \[
    c_1 = \frac{1}{N} \sum_{k=0}^{N-1} \omega^{-k} h(\delta \omega^k)
    = \frac{1}{N} \sum_{k=0}^{N-1} \omega^{-k} f(\delta \omega^k \x)
    \]
    and $\kappa$ is the maximum value of $|h(\tau)|$ on the circle centered at $0$ with radius $\tilde{r}$.
    Finally, we obtain the claimed error bound in this theorem by noting that $F(\x)=c_1/\delta$. Moreover, $h^{(n)}(\mb R)\subseteq \mb R$ for all $n\in \mb N$ since $h(\mb R)\subseteq \mb R$. Consequently, $F(\boldsymbol{x})\in \mb R$ since $c_1 \in \mb R$ by its definition.
    % Note that in $\Bar{B}(\boldsymbol{0},r_f)$, $f$ can be computed by the Taylor series. Therefore, $h(\tau)$ can be computed by Taylor series for $\tau \in \mb C$ such that 
    % $$
    % \|\tau \boldsymbol{x}\|=|\tau|\cdot \|\boldsymbol{x}\| \leq r_f.
    % $$
    % Take $r=r_f/\|\boldsymbol{x}\|$ and $\delta\in (0,r)$, then we can use spectral method for function $h$ in $\Bar{B}(0,r)$. 
    % Following \ref{section: spectral method}, we have 
    % $$
    % h_k := h(\delta \omega^k)=\sum_{n=0}^{N-1} c_n \omega^{kn},
    % $$
    % where $c_n=\sum\limits_{m=0}^{\infty} a_{n+m N} \delta^{n+m N}$. Note that $a_n=\frac{h^{(n)}(0)}{n !} \in \mb R$, which means $c_n \in \mb R$ although $h_k \in \mb C$. The inverse DFT gives us
    % $$
    % c_n=\frac{1}{N} \sum_{k=0}^{N-1} h_k \omega^{-k n}.
    % $$\textbf{}
    % Then 
    % $$
    % \nabla f(0) \cdot \boldsymbol{x}=h'(0)=a_1 \approx \frac{c_1}{\delta}=\frac{1}{N \delta} \sum_{k=0}^{N-1} h_k \omega^{-k},
    % $$
    % which is exactly $F(\boldsymbol{x})$.
\end{proof}

\begin{lem}[Theorem 5.1 of \cite{gilyen2019optimizing}] 
\label{lem: GAW's gradient estimation thm}
    Let $c \in \mb R, a, \rho, \varepsilon<M \in \mathbb{R}_{+}$, and $\y, \boldsymbol{g} \in$ $\mathbb{R}^d$ such that $\|\boldsymbol{g} \|_{\infty} \leq M$. Let $n_{\varepsilon}=\left\lceil\log _2(4 /a \varepsilon)\right\rceil$, $n_M=\left\lceil\log _2(3 a M)\right\rceil$ and $n=n_{\varepsilon}+n_M$. Suppose $f:\left(\boldsymbol{y}+a G_n^d\right) \rightarrow \mathbb{R}$ is such that
    $$
        |f(\boldsymbol{y}+a \boldsymbol{x})-\boldsymbol{g} \cdot a \boldsymbol{x} - c| \leq \frac{\varepsilon a}{8 \cdot 42 \pi}
    $$
    for all but a $1/1000$ fraction of the points $\boldsymbol{x} \in G_n^d$. If we have access to a phase oracle $$\mathrm{O}:|\boldsymbol{x}\rangle \rightarrow e^{2 \pi i 2^{n_{\varepsilon}} f(\boldsymbol{y}+a \boldsymbol{x})}|\boldsymbol{x}\rangle$$ acting on $\mathcal{H}=\operatorname{Span}\left\{|\boldsymbol{x}\rangle: \boldsymbol{x} \in G_n^d\right\}$, then we can calculate a vector $\tilde{\boldsymbol{g}} \in \mathbb{R}^d$ such that
    $$
    \operatorname{Pr}\left[\|\tilde{\boldsymbol{g}}-\boldsymbol{g}\|_{\infty}>\varepsilon\right] \leq \rho,
    $$
    with $O\left(\log \big(\frac{d}{\rho}\big)\right)$ queries to $\mathrm{O}$.
\end{lem}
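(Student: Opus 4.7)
The plan is to prove this via a dimension-wise Jordan/Bernstein-Vazirani-type phase-kickback strategy. First I would prepare the uniform superposition $\frac{1}{\sqrt{2^{nd}}}\sum_{\boldsymbol{x}\in G_n^d}|\boldsymbol{x}\rangle$ by applying $\mathrm{QFT}_{G_n}$ to $|0\rangle^{\otimes n}$ in each of the $d$ coordinate registers. Then I apply the phase oracle $\mathrm{O}$ once, which produces
\[
|\psi\rangle := \frac{1}{\sqrt{2^{nd}}}\sum_{\boldsymbol{x}\in G_n^d} e^{2\pi i 2^{n_\varepsilon}f(\boldsymbol{y}+a\boldsymbol{x})}|\boldsymbol{x}\rangle.
\]
The ideal target state, if $f$ were exactly linear with slope $\boldsymbol{g}$ and offset $c$, would be (up to the global phase $e^{2\pi i 2^{n_\varepsilon}c}$)
\[
|\psi^\star\rangle := \frac{1}{\sqrt{2^{nd}}}\sum_{\boldsymbol{x}\in G_n^d} e^{2\pi i 2^{n_\varepsilon} a\boldsymbol{g}\cdot\boldsymbol{x}}|\boldsymbol{x}\rangle \;=\; \bigotimes_{j=1}^d\frac{1}{\sqrt{2^n}}\sum_{x_j\in G_n}e^{2\pi i 2^{n_\varepsilon}ag_j x_j}|x_j\rangle,
\]
which factorizes across coordinates. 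The final step is to apply $\mathrm{QFT}_{G_n}^{-1}$ on each coordinate register and measure; by the definition (\ref{QFT for Gn}), on $|\psi^\star\rangle$ this produces a product of single-coordinate states sharply peaked at the grid point $k_j^\star\in G_n$ closest to $a g_j/2^{n_M}$, and then $\tilde g_j := 2^{n_M}k_j/a$ is read off.

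Next I would quantify the closeness $\||\psi\rangle-|\psi^\star\rangle\|$. For each of the at most $2^{nd}/1000$ ``bad'' points, the corresponding amplitudes can differ by at most $2$ in modulus, contributing $\sqrt{2^{nd}/1000}\cdot 2/\sqrt{2^{nd}} = 2/\sqrt{1000}$ to the $\ell_2$ difference. For each of the remaining ``good'' points, the phase discrepancy is at most
\[
2\pi\cdot 2^{n_\varepsilon}\cdot\frac{\varepsilon a}{8\cdot 42\pi}\;\le\;\frac{1}{4\cdot 42}\cdot\frac{2^{n_\varepsilon}a\varepsilon}{1}\;\le\;\frac{1}{42},
\]
since $2^{n_\varepsilon}\le 8/(a\varepsilon)$, so the per-point amplitude discrepancy is bounded by $|e^{i/42}-1|\le 1/42$, contributing at most $1/42$ in total. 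Hence $\||\psi\rangle-|\psi^\star\rangle\|$ is bounded by a small constant, and since $\mathrm{QFT}_{G_n}^{-1}$ is unitary the same bound survives for the pre-measurement states. By the operational interpretation of trace distance, the measurement distribution on $|\psi\rangle$ is within this constant of the distribution on $|\psi^\star\rangle$ in total variation.

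For $|\psi^\star\rangle$, standard Kitaev-style phase-estimation analysis for each coordinate independently shows that with probability at least some constant $p_0>1/2$ the outcome $k_j$ satisfies $|2^{n_M}k_j/a - g_j|\le\varepsilon$; combined with the trace-distance bound above, the actual single-coordinate success probability is still a constant $p_1>1/2$. To amplify to overall error probability $\rho$ and control all $d$ coordinates simultaneously, I would run the whole procedure $O(\log(d/\rho))$ times and take the componentwise median of the $\tilde g_j$'s. A Chernoff bound shows each coordinate succeeds with probability $1-\rho/d$, and a union bound over $d$ coordinates yields overall failure probability at most $\rho$, for a total of $O(\log(d/\rho))$ queries to $\mathrm{O}$.

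The main obstacle I expect is the careful calibration of the three error sources so that they compose into a working bound: the point-wise linear-approximation error $\varepsilon a/(8\cdot 42\pi)$ must be matched against the magnification factor $2\pi\cdot 2^{n_\varepsilon}$ coming from the oracle to keep per-point phase errors in a regime where $\ell_2$ closeness implies constant-probability phase-estimation success; and the ``$1/1000$ fraction bad'' assumption must be shown to degrade this only by a further small constant in trace distance. Everything else (state preparation, oracle call, inverse QFT, median-of-$O(\log(d/\rho))$ boosting, union bound over $d$ coordinates) is standard once these two pieces are in place.
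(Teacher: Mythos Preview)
Your proposal is correct and follows essentially the same approach as the paper. The paper does not give a full proof of this lemma (it is cited as Theorem~5.1 of \cite{gilyen2019optimizing}), but the brief sketch it provides---prepare the uniform superposition, apply $\mathrm{O}$ to get a state close to the factorized ideal $\bigotimes_j \frac{1}{\sqrt{2^n}}\sum_{x_j}e^{2\pi i 2^{n_\varepsilon} a g_j x_j}|x_j\rangle$, apply $\mathrm{QFT}_{G_n}^{-1}$ coordinate-wise, measure, and boost via componentwise median over $O(\log(d/\rho))$ repetitions---is exactly your plan, and your error accounting (bad points contribute $2/\sqrt{1000}$ in $\ell_2$, good points contribute a small constant via the phase bound) is the standard way to make this rigorous. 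One minor arithmetic slip: with $2^{n_\varepsilon}\le 8/(a\varepsilon)$ the per-point phase error bound comes out to $2/42=1/21$ rather than $1/42$, but this does not affect the argument.
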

% \noindent\textit{Sketch of proof.} Consider the quantum states
% $$
% |\phi\rangle:=\frac{1}{\sqrt{2^{nd}}} \sum_{\boldsymbol{x} \in G_n^d} e^{2\pi i 2^n f(\boldsymbol{x})}|\boldsymbol{x}\rangle \quad \text { and } \quad|\psi\rangle:=\frac{1}{\sqrt{2^{nd}}} \sum_{\boldsymbol{x} \in G_n^d} e^{2\pi i 2^n(\boldsymbol{g} \cdot \boldsymbol{x}+c)}|\boldsymbol{x}\rangle.
% $$
% Note that $|\phi\ket$ is the state we can obtain using phase oracle $O_f^{\pi 2^{n+1}}$, and $|\psi\ket$ is its ideal version with no error. $|\psi\ket$ is actually a product state:
% $$
% |\psi\rangle=\left(\frac{1}{\sqrt{2^n}} \sum_{x_1 \in G_n} e^{2\pi i 2^n g_1 \cdot x_1}\left|x_1\right\rangle\right) \otimes \cdots \otimes\left(\frac{1}{\sqrt{2^n}} \sum_{x_d \in G_n} e^{2\pi i 2^n g_d \cdot x_d}\left|x_d\right\rangle\right) .
% $$
% Applying the inverse quantum Fourier transform $\mathrm{QFT}_{G_n}^{-1}$ to each register separately gives us an approximation of each $g_i$. The results of error and success probability are easy to obtain using some standard techniques, and we recommend referring to \cite{gilyen2019optimizing} for further details.
% \qed

In the above lemma, the gate complexity is $O\left(d \log\big(\frac{d}{\rho}\big) \log\big(\frac{M}{\varepsilon}\big)\log \log\big(\frac{d}{\rho}\big) \log\log\big(\frac{M}{\varepsilon}\big)\right)$.
Here we briefly summarize this quantum algorithm, which is also a key subroutine of our quantum algorithm below, for computing $\tilde{\g}$. 
We set $a=1, \y=0$ for convenience.
Denote $N=2^n$.
With $\mathrm{O}$, we can generate the following state
\[
|\phi\rangle:=\frac{1}{\sqrt{N^d}} \sum_{\boldsymbol{x} \in G_n^d} e^{2\pi i  f(\boldsymbol{x})N}|\boldsymbol{x}\rangle.
\]
By the condition that $f(\x)$ is close to $\g \cdot \x + c$, the above state is close to
\beas
|\psi\rangle &:=& \frac{1}{\sqrt{N^d}} \sum_{\boldsymbol{x} \in G_n^d} e^{2\pi i (\boldsymbol{g} \cdot \boldsymbol{x}+c)N}|\boldsymbol{x}\rangle \\
&=& e^{2\pi i cN} \left(\frac{1}{\sqrt{N}} \sum_{x_1 \in G_n} e^{2\pi ig_1 \cdot x_1N}\left|x_1\right\rangle\right) \otimes \cdots \otimes\left(\frac{1}{\sqrt{N}} \sum_{x_d \in G_n} e^{2\pi i g_d \cdot x_dN}\left|x_d\right\rangle\right).
\eeas
Finally, we apply the inverse of the quantum Fourier transform on $G_n^d$ to $\Ket{\phi}$, which will be a state close to $\Ket{\tilde{\g}}$ with high probability. In the above, because of the representation of grid points in $G_n$, it is $e^{2\pi ig_1 \cdot x_1N}$ rather than $e^{2\pi ig_1 \cdot x_1/N}$ in the standard quantum Fourier transform.

In Theorem \ref{thm: spectral formula for gradient}, we already proved that $F(\x)$ is close to $\nabla f(\0)\cdot \x$. To use the above result to estimate $\nabla f(\0)$, we need to construct the phase oracle O to query $F(\x)$. We below consider this construction in two cases: using the binary oracle $U_f$ and using phase oracles $O_{f_1}, O_{f_2}$.

Below we will set $a=1$. Different from the finite difference method used in \cite{gilyen2019optimizing}, $a$ is not important for us in the spectral method. This will make a big difference in gradient and Hessian estimation.

% \begin{lem}\label{lem: binary oracle conversion in spectral method}
%     Given an analytic function $f:\mb C^d \rightarrow \mb C$ that maps $\mb R^d$ into $\mb R$, and the access to an oracle $U_f^\eta$ defined in Definition \ref{def: complex binary oracle}. Then we can get the phase oracle of the function $F$ defined in Theorem \ref{thm: spectral formula for gradient}
%     $$
%     O_F: |\boldsymbol{x} \ket \rightarrow e^{iF(\boldsymbol{x})} |\boldsymbol{x}\ket,
%     $$
%     which calls $U_f$ oracle $4N$ times.
% \end{lem}

\begin{lem}[Oracle construction using binary oracle]
\label{lem: binary oracle conversion in spectral method}
    Given an analytic function $f:\mb C^d \rightarrow \mb C$ that maps $\mb R^d$ into $\mb R$, and the access to the oracle $U_f^\eta$ defined in Definition \ref{def: complex binary oracle}, then for $F(\x)$ defined in Theorem \ref{thm: spectral formula for gradient} and $n_\varepsilon=\left\lceil\log _2(4 /\varepsilon)\right\rceil$, we can get an $\eta'$-precision oracle 
    $$
    \mathrm{O}:|\boldsymbol{x}\rangle \rightarrow e^{2 \pi i 2^{n_{\varepsilon}} F(\boldsymbol{x})}|\boldsymbol{x}\rangle
    $$
    required in Lemma \ref{lem: GAW's gradient estimation thm}, with $O(N)$ queries to $U_f^\eta$ and $\eta/\eta' \in O(\varepsilon\delta)$.
\end{lem}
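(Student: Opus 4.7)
The plan is to coherently evaluate $F(\boldsymbol{x})$ into an ancillary register using the binary oracle, apply a controlled phase gate, and then uncompute. Concretely, for each fixed $k\in\{0,\ldots,N-1\}$, we precompute the classical constants $\omega^{-k}$ and $\delta\omega^k$ to sufficient precision. Given the input register $|\boldsymbol{x}\rangle$, we first prepare $|\delta\omega^k \boldsymbol{x}\rangle$ (separated into its real and imaginary parts as in (\ref{eq: intro, store complex numbers})) in a work register by the arithmetic operations of Lemma \ref{lem: complex arithmetic}. Then we apply $U_f^\eta$ to obtain $|\tilde f_1(\delta\omega^k\boldsymbol{x})\rangle|\tilde f_2(\delta\omega^k\boldsymbol{x})\rangle$. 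Multiplying this complex value by $\omega^{-k}/(N\delta)$ and accumulating the real part into an output register (again by Lemma \ref{lem: complex arithmetic}) produces, after $N$ such queries and the corresponding uncomputations of the intermediate $|\delta\omega^k\boldsymbol{x}\rangle$ registers, the state $|\boldsymbol{x}\rangle |\tilde F(\boldsymbol{x})\rangle$, where $\tilde F(\boldsymbol{x})$ is the real part of $\tfrac{1}{N\delta}\sum_{k=0}^{N-1} \omega^{-k}\tilde f(\delta\omega^k\boldsymbol{x})$. Since Theorem \ref{thm: spectral formula for gradient} guarantees $F(\boldsymbol{x})\in \mathbb R$, taking the real part only drops errors of size at most the imaginary part of the numerical error.

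Next, on the register holding the binary encoding of $\tilde F(\boldsymbol{x})$, we implement the phase $|\tilde F(\boldsymbol{x})\rangle \mapsto e^{2\pi i 2^{n_\varepsilon}\tilde F(\boldsymbol{x})}|\tilde F(\boldsymbol{x})\rangle$ by a tensor product of controlled phase gates (one per bit of $\tilde F(\boldsymbol{x})$, with angles scaled by the appropriate powers of two times $2^{n_\varepsilon}$). Finally we uncompute the $|\tilde F(\boldsymbol{x})\rangle$ register by reversing the arithmetic and the $N$ queries, which uses another $N$ applications of $(U_f^\eta)^\dagger$; the total query count is $2N=O(N)$ as claimed.

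For the error analysis, each invocation of $U_f^\eta$ introduces error at most $\eta$ in both $\tilde f_1$ and $\tilde f_2$, so $|\tilde f(\delta\omega^k\boldsymbol{x})-f(\delta\omega^k\boldsymbol{x})|\leq \sqrt2\,\eta$. Since $|\omega^{-k}|=1$, the triangle inequality gives
\[
\bigl|\tilde F(\boldsymbol{x})-F(\boldsymbol{x})\bigr| \;\leq\; \frac{1}{N\delta}\sum_{k=0}^{N-1}\bigl|\omega^{-k}\bigr|\cdot \sqrt 2\,\eta \;=\; \frac{\sqrt 2\,\eta}{\delta},
\]
up to negligible roundoff from the finite-precision arithmetic (which can be absorbed by working to $O(\log(1/\eta))$ bits more). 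The error in the implemented phase is therefore
\[
\bigl|e^{2\pi i 2^{n_\varepsilon}\tilde F(\boldsymbol{x})}-e^{2\pi i 2^{n_\varepsilon}F(\boldsymbol{x})}\bigr|\;\leq\; 2\pi\cdot 2^{n_\varepsilon}\cdot \frac{\sqrt 2\,\eta}{\delta}\;=\;O\!\bigl(\eta/(\varepsilon\delta)\bigr),
\]
using $2^{n_\varepsilon}=O(1/\varepsilon)$. Requiring this to be at most $\eta'$ yields the claimed ratio $\eta/\eta'\in O(\varepsilon\delta)$.

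The step requiring the most care is the coherent arithmetic: one must implement the linear combination with complex coefficients $\omega^{-k}$ in a way that keeps the ancilla workspace clean so that it can be uncomputed, while simultaneously tracking that the real-part extraction (justified by $F(\boldsymbol{x})\in\mathbb R$) does not inflate the error beyond the $\sqrt 2\,\eta/\delta$ bound. Everything else (the controlled phase, the outer $N$-fold summation, and the query counting) is routine once that pipeline is in place.
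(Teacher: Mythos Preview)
Your proposal is correct and follows essentially the same approach as the paper: compute $F(\boldsymbol{x})$ into an ancilla register via $N$ calls to $U_f^\eta$ and complex arithmetic (Lemma~\ref{lem: complex arithmetic}), kick back the phase, and uncompute. The paper uncomputes the intermediate function-value register after each $k$ (reusing two scratch registers, yielding $4N$ queries total), whereas you keep all intermediate values around and uncompute at the end (yielding $2N$ queries but $O(N)$ ancilla registers); both are $O(N)$. One small imprecision: after your forward pass the state is $|\boldsymbol{x}\rangle|\text{garbage}\rangle|\tilde F(\boldsymbol{x})\rangle$, not $|\boldsymbol{x}\rangle|\tilde F(\boldsymbol{x})\rangle$, since the $|\tilde f(\delta\omega^k\boldsymbol{x})\rangle$ registers are still present---but this is harmless, as the phase acts only on the $\tilde F$ register and the final uncomputation clears the garbage.
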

\begin{proof}
    By Theorem \ref{thm: spectral formula for gradient}, we have $F(\boldsymbol{x})=\frac{1}{N \delta}\sum_{k=0}^{N-1} \omega^{-k} f(\delta \omega^k \boldsymbol{x}) \in \mb R$ for any $\boldsymbol{x}\in \mb R^d$, particularly for $\x \in G_n^d$. Using Lemma \ref{lem: complex arithmetic}, we can implement unitaries for any $k \in [N]$ satisfying
    $$
    V_{k}: |\boldsymbol{x}\ket |\boldsymbol{0}\ket \rightarrow |\boldsymbol{x}\ket |\delta \omega^k \boldsymbol{x}\ket,
    $$
    and
    $$
    W_k: |f(\delta \omega^k \boldsymbol{x})\ket |\boldsymbol{0}\ket \rightarrow |f(\delta \omega^k \boldsymbol{x})\ket |\omega^{-k}f(\delta \omega^k \boldsymbol{x})\ket.
    $$
    
    Using the above unitaries and oracle $U_f$ (and its inverse $U_f^{-1}$), we can construct an oracle for any $k_1, k_2\in [N]$ (we omit the tensor product with the identity mapping for brevity):
    $$
    \begin{aligned}
        |\boldsymbol{x}\ket |\boldsymbol{0}\ket|\boldsymbol{0}\ket|\boldsymbol{0}\ket &\xrightarrow{V_{k_1}} &&|\boldsymbol{x}\ket |\delta \omega^{k_1}\boldsymbol{x} \ket |\boldsymbol{0}\ket|\boldsymbol{0}\ket\\ 
        &\xrightarrow{U_f} &&|\boldsymbol{x}\ket |\delta \omega^{k_1}\boldsymbol{x} \ket |f(\delta \omega^{k_1}\boldsymbol{x})\ket |\boldsymbol{0}\ket \\
        &\xrightarrow{W_{k_1}} &&|\boldsymbol{x}\ket |\delta \omega^{k_1}\boldsymbol{x} \ket |f(\delta \omega^{k_1}\boldsymbol{x})\ket |\omega^{-k_1}f(\delta \omega^{k_1}\boldsymbol{x})\ket.
    \end{aligned}
    $$
    Note that the two intermediate registers serve as auxiliary components, with the first one representing the input $\boldsymbol{x}$ and the last one storing the desired outcome. 
    
    As before, we denote $f_k :=f(\delta\omega^k\boldsymbol{x})$. Then the inverse $V_{k_1}^{-1} \cdot U_f^{-1}$ allows us to reset the auxiliary registers to the original states, enabling the proceeding operations:\footnote{The first arrow is similar to the above using $V_{k_2}$ and $U_f$, and the second one follows the complex-value operations describing in Lemma \ref{lem: complex arithmetic}.}
    $$   
    \begin{aligned}
        |\boldsymbol{x}\ket |\boldsymbol{0}\ket |\boldsymbol{0}\ket |\omega^{-k_1}f_{k_1}\ket &\longrightarrow &&|\boldsymbol{x}\ket |\delta \omega^{k_2}\boldsymbol{x}\ket |f_{k_2}\ket |\omega^{-k_1}f_{k_1}\ket \\
        &\longrightarrow &&|\boldsymbol{x}\ket |\delta \omega^{k_2}\boldsymbol{x}\ket |f_{k_2}\ket |\omega^{-k_1}f_{k_1}+\omega^{-k_2}f_{k_2} \ket\\
        &\longrightarrow &&|\boldsymbol{x}\ket |\boldsymbol{0}\ket |\boldsymbol{0}\ket |\omega^{-k_1}f_{k_1}+\omega^{-k_2}f_{k_2}\ket.
    \end{aligned}
    $$
     
    Running the oracle described above from $k=0$ to $N-1$, we obtain
    $$
    \begin{aligned}
        |\boldsymbol{x}\ket |\boldsymbol{0}\ket &\longrightarrow |\boldsymbol{x}\ket |\sum\limits_{k=0}^{N-1} \omega^{-k} f(\delta \omega^k \boldsymbol{x})\ket \\
        &\longrightarrow |\boldsymbol{x}\ket |\frac{1}{N\delta}\sum\limits_{k=0}^{N-1} \omega^{-k} f(\delta \omega^k \boldsymbol{x})\ket \\
        &\longrightarrow |\boldsymbol{x}\ket |F(\boldsymbol{x})\ket \\
        &\longrightarrow |\boldsymbol{x}\ket |2\pi \cdot 2^{n_\varepsilon}F(\boldsymbol{x})\ket.
    \end{aligned}
    $$
    Since $2\pi \cdot 2^{n_\varepsilon} F(\boldsymbol{x}) \in \mb R$, standard oracle conversion gives us the oracle
    $$
    \mathrm{O}: |\boldsymbol{x}\ket \rightarrow e^{2\pi i 2^{n_\varepsilon}F(\boldsymbol{x})}|\boldsymbol{x}\ket.
    $$
    The overall process shows that the oracle $U_f$ and its inverse are called a total number of $4N$ times.
\end{proof}

\begin{rmk}
    Note that the binary oracle $U_f^\eta$ gives us an $\eta$-precision approximation $\tilde{f}(\x)$ instead of the exact $f(\x)$. Therefore, the oracle $\mathrm{O}$, which we constructed using $U_f^\eta$ in the above lemma, also exhibits a precision $\eta'$. It is easy to see that $\eta' \approx \eta/\varepsilon\delta$, implying that the cost of each application of O is $C(\varepsilon\delta \eta')$.
\end{rmk}

\begin{lem}[Oracle construction using phase oracle]
\label{lem: phase oracle conversion in spectral method}
    Let $f:\mb C^d \rightarrow \mb C$ be an analytic function that maps $\mb R^d$ into $\mb R$, with real and imaginary parts $f_1, f_2$. Suppose we have access to phase oracles $O_{f_1}^\eta, O_{f_2}^\eta$. Then for $F(\x)$ defined in Theorem \ref{thm: spectral formula for gradient} and $n_\varepsilon=\left\lceil\log _2(4 /\varepsilon)\right\rceil$, we can get the oracle up to accuracy $\eta$
    $$
    \mathrm{O}:|\boldsymbol{x}\rangle \rightarrow e^{2 \pi i 2^{n_{\varepsilon}} F(\boldsymbol{x})}|\boldsymbol{x}\rangle
    $$
    required in Lemma \ref{lem: GAW's gradient estimation thm}, with $O\left( \frac{1}{\varepsilon\delta} + N\log\big(\frac{N}{\eta}\big)\right)$ queries to each $O_{f_1}^\eta, O_{f_2}^\eta$.
\end{lem}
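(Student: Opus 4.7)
The plan is to expand the real-valued quantity $2\pi\cdot 2^{n_\varepsilon}F(\x)$ as an explicit real-linear combination of the available quantities $f_1(\delta\omega^k\x),f_2(\delta\omega^k\x)$ for $k=0,\ldots,N-1$, and then accumulate the resulting $2N$ phases through the compute-uncompute sandwich plus the fractional-query construction of Proposition \ref{lemma: phase to fractional}. Writing $\omega^{-k}=\cos(2\pi k/N)+i\sin(2\pi k/N)$ and $f=f_1+if_2$, and using that $F(\x)\in\mathbb R$ by Theorem \ref{thm: spectral formula for gradient} so the imaginary parts cancel, I obtain
\[
2\pi\cdot 2^{n_\varepsilon}F(\x)=\sum_{k=0}^{N-1}\bigl[\alpha_k\,f_1(\delta\omega^k\x)+\beta_k\,f_2(\delta\omega^k\x)\bigr],
\]
with real coefficients $|\alpha_k|,|\beta_k|\le R:=\frac{2\pi\cdot 2^{n_\varepsilon}}{N\delta}=O\!\bigl(\tfrac{1}{N\varepsilon\delta}\bigr)$. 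This is the analogue of the representation used in Lemma \ref{lem: binary oracle conversion in spectral method}, but tailored so that only the real-valued pieces $f_1,f_2$ appear, because those are the only objects we can access through phase oracles.

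For each $k$, the implementation proceeds as follows. First, use Lemma \ref{lem: complex arithmetic} together with the classically precomputed constants $\cos(2\pi k/N),\sin(2\pi k/N)$ to write the complex point $\delta\omega^k\x$ (as its real and imaginary parts) into a fresh ancilla register, controlled on $|\x\rangle$. Second, apply $O_{f_1}^\eta$ and $O_{f_2}^\eta$ to this ancilla register, weighted appropriately, to deposit the phases $e^{i\alpha_k f_1(\delta\omega^k\x)}$ and $e^{i\beta_k f_2(\delta\omega^k\x)}$ onto the overall state. Third, invoke the inverse arithmetic to uncompute the ancilla back to $|\0\rangle$. Because a single call to $O_{f_j}^\eta$ produces unit coefficient but $|\alpha_k|$ may exceed $1/48$, I handle the coefficient by the integer-plus-fractional split described after Proposition \ref{lemma: phase to fractional}: realize $\lfloor 48|\alpha_k|\rfloor$ unit-at-scale-$1/48$ pieces by repeated application of an $O_{f_j/48}$ subroutine (itself obtained via Proposition \ref{lemma: phase to fractional} at cost $O(\log(1/\eta'))$ queries per piece, although one copy of this subroutine can be reused), and the residual fractional piece by one further application of Proposition \ref{lemma: phase to fractional}. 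The sign of $\alpha_k$ is handled by using the inverse oracle where needed. Composing the $2N$ such blocks yields the target phase $e^{2\pi i\cdot 2^{n_\varepsilon}F(\x)}$ on $|\x\rangle$.

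For the query count, the integer-part contribution is $O\!\bigl(\sum_k(|\alpha_k|+|\beta_k|)\bigr)=O\!\bigl(NR\bigr)=O\!\bigl(\tfrac{1}{\varepsilon\delta}\bigr)$ per oracle, using the crucial telescoping bound $\sum_k|\alpha_k|\le NR$ (so we do \emph{not} pay $N$ times the worst coefficient); the $2N$ fractional pieces cost $O(N\log(1/\eta'))$ per oracle; summing and choosing $\eta'=\eta/N$ so that the $N$ accumulated per-term fractional-query errors total at most $\eta$ gives $O\!\bigl(\tfrac{1}{\varepsilon\delta}+N\log(N/\eta)\bigr)$ queries to each of $O_{f_1}^\eta,O_{f_2}^\eta$, matching the claim.

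The main obstacle is the error bookkeeping across the many oracle calls, because two distinct sources of phase error interact with the large coefficients. The intrinsic error $\eta$ of each $O_{f_j}^\eta$ call compounds additively with coefficient weight when the same oracle is applied repeatedly, contributing a term of order $\eta\sum_k(|\alpha_k|+|\beta_k|)=O(\eta/(\varepsilon\delta))$ to the overall phase, while the fractional-oracle approximation contributes $O(N\eta')$. One must verify that both contributions stay within the target tolerance after choosing $\eta'=\eta/N$ and calibrating the input-oracle precision appropriately; the key enabling fact, again, is that the coefficient sum telescopes to $O(1/(\varepsilon\delta))$ rather than blowing up as $O(N/(\varepsilon\delta))$, which is exactly what makes the stated query bound achievable.
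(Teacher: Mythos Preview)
Your proposal is correct and follows essentially the same approach as the paper: expand the real-valued $F(\x)$ as a linear combination of $f_1(\delta\omega^k\x)$ and $f_2(\delta\omega^k\x)$ with real coefficients bounded by $S=\frac{2\pi\cdot 2^{n_\varepsilon}}{N\delta}$, then realize each phase contribution via the integer-plus-fractional split from Proposition~\ref{lemma: phase to fractional}, bounding the integer-part cost by $\sum_k(|\alpha_k|+|\beta_k|)\le 2NS=O(1/\varepsilon\delta)$ and the fractional-part cost by $O(N\log(N/\eta))$. The paper writes the coefficients explicitly as $Sc_k$ and $-Ss_k$ with $c_k=\cos(-2\pi k/N)$, $s_k=\sin(-2\pi k/N)$, but otherwise the argument is the same; your compute--uncompute description of the ancilla holding $\delta\omega^k\x$ and your more careful error discussion (separating the intrinsic $\eta$ error amplified by the coefficient sum from the fractional-oracle error) are details the paper leaves implicit.
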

\begin{proof}
    By Theorem \ref{thm: spectral formula for gradient}, we have $F(\boldsymbol{x}):=\frac{1}{N \delta}\sum_{k=0}^{N-1} \omega^{-k} f(\delta \omega^k \boldsymbol{x}) \in \mb R$ for any $\boldsymbol{x}\in \mb R^d$. We write $c_k:=\cos (-2\pi k/N)$ and $s_k:=\sin (-2\pi k/N)$ for all $k\in [N]$, then $\omega^{-k}=c_k+i s_k$. As $F(\x)$ is real, we obtain that
    $$
    F(\x)=\frac{1}{N\delta} \sum_{k=0}^{N-1} \left(c_k f_1(\x_k)-s_k f_2(\x_k) \right)
    $$
    where $\x_k=\delta\omega^k \x$. Observe that
    $$
    e^{2\pi i 2^{n_\varepsilon}F(\x)}|\x\ket=e^{i \frac{2\pi 2^{n_\varepsilon}}{N\delta} \sum_{k=0}^{N-1} (c_k f_1(\x_k) - s_k f_2(\x_k) )} |\x\ket.
    $$
    Denote the scaling factor $S:=\frac{2\pi 2^{n_\varepsilon}}{N\delta}$, then implementing $\mathrm{O}$ requires
    $$
    \sum_{k=0}^{N-1} \left\lceil |c_k| S\right\rceil \leq NS= \frac{\pi}{2 \varepsilon\delta},
    $$
    queries to $O_{f_1}$ and at most $N$ queries to fractional oracle $O_{t f_1}$ defined in Definition \ref{def: fractional oracle}. 
    
    Given $O_{f_1}$, by Lemma \ref{lemma: phase to fractional} we can implement $O_{t f_1}$ up to error $\eta'$ using $O(\log(1/\eta'))$ applications of $O_{f_1}$. 
    Thus, the overall error in the above procedure is $O(N\eta')$ as there are $N$ terms in the summation.
    Setting $\eta'=\eta/N$ leads to the claimed result. The same argument holds for $O_{f_2}$ as well.
\end{proof}

With the above preliminaries, we are now prepared to present our main theorem for gradient estimation.

\begin{thm}[Quantum spectral method for gradient estimation]
\label{thm: gradient estimation using spectral method}
    Let $\rho, \varepsilon \in \mb R_+$, and let $f:\mb C^d \rightarrow \mb C$ be an analytic function that maps $\mb R^d$ into $\mb R$. 
    Assume that the Taylor series of $f$ converges to itself in a closed polydisc $\overbar{D}(\0,\r)$ for some $\r\in \mb R_+^d$. Denote $r=\min_{j\in [d]} r_j$ and $\kappa = \max_{|\x|\in \overbar{D}(\0,\r)}|f(\x)|$.
    Given access to the oracle $U_f^\eta$ as in Definition \ref{def: complex binary oracle}, or phase oracles $O_{f_1}^\eta, O_{f_2}^\eta$ as in Definition \ref{def: phase oracle}, there exists a quantum algorithm that computes an approximate gradient $\boldsymbol{g}$ such that $\|\boldsymbol{g}-\nabla f(\boldsymbol{0})\|_\infty \leq \varepsilon$ with probability at least $1-\rho$. The algorithm requires:
    \begin{itemize}
        \item $O\left(\log \big(\frac{d}{\rho}\big) \log \big(\frac{\kappa}{\varepsilon r}\big) \right)$ queries to \( U_f \) with precision $\eta \in O(\varepsilon r)$, or
        \item $O\left(\frac{1}{\varepsilon r}\log\big(\frac{d}{\rho}\big) +\log\big(\frac{d}{\rho}\big)\log\big(\frac{\kappa}{\varepsilon r}\big)\log\big(\frac{1}{\eta} \log\big(\frac{\kappa}{\varepsilon r}\big) \big) \right)$ queries to $O_{f_1}$ and $O_{f_2}$.
    \end{itemize}
\end{thm}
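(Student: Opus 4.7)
The plan is to combine three ingredients already in place: the spectral approximation identity from Theorem \ref{thm: spectral formula for gradient}, the oracle constructions in Lemmas \ref{lem: binary oracle conversion in spectral method} and \ref{lem: phase oracle conversion in spectral method}, and the Jordan-style phase-kickback + inverse QFT procedure of Lemma \ref{lem: GAW's gradient estimation thm}. The outer algorithm will be a direct invocation of Lemma \ref{lem: GAW's gradient estimation thm} with $a=1$, $\y=\0$, $c=0$, and $\g=\nabla f(\0)$, using $F(\x)$ in place of the function $f$ appearing there; the entire task reduces to (i) choosing $\delta$ and $N$ so that the spectral error falls below the tolerance $\varepsilon/(8\cdot 42\pi)$ required by the hypothesis, and (ii) implementing the composite phase oracle $\on{O}:\Ket{\x}\mapsto e^{2\pi i 2^{n_\varepsilon} F(\x)}\Ket{\x}$ within budget.

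The first step is parameter selection. I will take $\delta = r$, so the spectral error bound from Theorem \ref{thm: spectral formula for gradient} simplifies to $(\kappa/2r)\cdot 2^{-N}/(1-2^{-N})$; setting $N=\lceil \log_2(C\kappa/(\varepsilon r))\rceil = O(\log(\kappa/(\varepsilon r)))$ for a suitable absolute constant $C$ forces this error below $\varepsilon/(8\cdot 42\pi)$ for every $\x\in G_n^d$, thereby meeting the hypothesis of Lemma \ref{lem: GAW's gradient estimation thm}. I must also check that all sample points $\delta \omega^k \x$ sit in the domain of convergence $\overbar{D}(\0,\r)$: since $|x_j|<1/2$ and $|\delta\omega^k|=\delta=r\le r_j$ for every $j$, one has $|\delta\omega^k x_j|<r/2<r_j$. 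The parameter $M$ in Lemma \ref{lem: GAW's gradient estimation thm} can be fixed via Cauchy's inequality (Lemma \ref{lem: bound of derivatives}) applied componentwise, giving $\|\nabla f(\0)\|_\infty\le \kappa/r$; this only enters the grid size logarithmically through $n_M$.

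The second step is the oracle construction and its cost accounting. Lemma \ref{lem: GAW's gradient estimation thm} makes $O(\log(d/\rho))$ queries to $\on{O}$. In the binary-oracle regime I invoke Lemma \ref{lem: binary oracle conversion in spectral method}, so each application of $\on{O}$ costs $O(N)=O(\log(\kappa/(\varepsilon r)))$ queries to $U_f^\eta$ with internal precision scaling as $\eta\in O(\varepsilon r)$, yielding the first claimed complexity $O(\log(d/\rho)\log(\kappa/(\varepsilon r)))$. In the phase-oracle regime I invoke Lemma \ref{lem: phase oracle conversion in spectral method} with $\delta = r$, giving $O(1/(\varepsilon r) + N\log(N/\eta))$ queries to $O_{f_1},O_{f_2}$ per call to $\on{O}$; multiplying by the $O(\log(d/\rho))$ outer queries yields the second claimed complexity.

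The main technical obstacle is precision bookkeeping rather than any new mathematics: one must verify that the accumulated error in the composite oracle (errors from the fractional-query simulations inside Lemma \ref{lem: phase oracle conversion in spectral method}, from the binary-to-phase conversion inside Lemma \ref{lem: binary oracle conversion in spectral method}, and from the spectral truncation itself) stays comfortably below the $\varepsilon/(8\cdot 42\pi)$ slack afforded by Lemma \ref{lem: GAW's gradient estimation thm}, and that it suffices to let at most a $1/1000$ fraction of grid points violate the bound. Since Lemma \ref{lem: phase oracle conversion in spectral method} accumulates $N$ fractional-query errors, I will set the internal tolerance in that conversion to $\eta/N$, which produces only a $\log(\log(\kappa/(\varepsilon r))/\eta)$ overhead and is precisely what appears in the stated query complexity. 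Once this bookkeeping is done, the union of the two cases gives the theorem.
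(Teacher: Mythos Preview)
Your proposal is correct and follows essentially the same route as the paper: set $\delta=r$, choose $N=O(\log(\kappa/\varepsilon r))$ so the spectral error from Theorem \ref{thm: spectral formula for gradient} drops below the tolerance of Lemma \ref{lem: GAW's gradient estimation thm} with $a=1$, then build the phase oracle $\on{O}$ via Lemma \ref{lem: binary oracle conversion in spectral method} or Lemma \ref{lem: phase oracle conversion in spectral method} and multiply by the $O(\log(d/\rho))$ outer calls. Your extra checks (that $\delta\omega^k\x\in\overbar{D}(\0,\r)$ and that $M\le\kappa/r$ via Cauchy's inequality) are not spelled out in the paper's proof but are harmless clarifications.
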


\begin{proof}
    By Theorem \ref{thm: spectral formula for gradient}, we have
    $$
    |F(\boldsymbol{x}) - \nabla f(\boldsymbol{0})\cdot \boldsymbol{x} | \leq \frac{\kappa }{2r} \frac{(\delta /2r)^N}{1-(\delta /2r)^N} =E 
    $$
    for all $\x \in G_n^d$.
    %, where $\kappa=\max_{\tau} |f(\tau \x)|$, with $\tau$ ranging over the circle centered at $0$ with radius $2r$.
    We now set $\delta=r$. By choosing
    $$
    N \geq \frac{\log(1+4\cdot42 \pi \kappa/\varepsilon r)}{\log(2r/\delta)} = O\left(\log(\kappa/\varepsilon r)\right),
    $$
    we can ensure that
    \[
    E \leq \frac{\varepsilon}{8 \cdot 42 \pi},
    \]
    which is required in Lemma \ref{lem: GAW's gradient estimation thm} with $a=1$. 
    %we  choose appropriate $\delta <2r$ and $N$ such that
    % \[
    % E \leq \frac{\varepsilon}{8 \cdot 42 \pi}.
    % \]
    % Namely, we can choose
    % $$
    % N \geq \frac{\log(1+4\cdot42 \pi \kappa/\varepsilon r)}{\log(2r/\delta)}.
    % $$
    By Lemma \ref{lem: binary oracle conversion in spectral method}, it costs $O(N)$ to construct the oracle O to query $F(\x)$ from $U_f^\eta$, and by Lemma \ref{lem: phase oracle conversion in spectral method}, the cost is $O\left( \frac{1}{\varepsilon\delta} + N\log\big(\frac{N}{\eta}\big) \right)$ from $O_{f_1}^\eta, O_{f_2}^\eta$. 
    The oracle $\mathrm{O}$ obtained using $U_f^\eta$ has precision $\eta'=8 \pi \eta/\varepsilon\delta$. To ensure the algorithm operates correctly, $\eta'$ should be approximately $1/42$, which implies $\eta = O(\varepsilon\delta)$.
    The claims in this theorem are now from Lemma \ref{lem: GAW's gradient estimation thm}.
\end{proof}

In the above theorem, the first claim seems better than the second one. Indeed, in Jordan's algorithm \cite{jordan2005fast} and GAW's algorithm \cite{gilyen2019optimizing}, if we also use the binary oracle, then these algorithms also achieve exponential speedups in terms of $d$. However, as already analyzed in \cite{gilyen2019optimizing} and can also be seen from the first claim, it requires a high level of accuracy to implement $U_f^\eta$ with $\eta=O(\varepsilon r)$. The cost of each application of $U_f^\eta$ is $C(\eta)=C(\varepsilon r)$. If this oracle is obtained from phase estimation, then the cost $C(\varepsilon r) = 1/\varepsilon r$, which coincides with our second claim. 
In some other cases, as explained in \cite{gilyen2019optimizing}, the dependence on the precision can be as bad as $1/d$.
From the viewpoint of applications, the phase oracle is more commonly used and easier to obtain, especially from the probability oracle.

%\cs{Double check the $\eta$ in oracles used in this paper.}

% Compared to previous gradient estimation results, the method discussed here differs in its constraints on the function. While Jordan's and the subsequent methods deal with real-valued functions, ours applies to complex-valued functions that map real arguments into the field of real numbers. Due to the distinct constraints, direct numerical comparison of complexity results may be inappropriate. Nevertheless, it is noteworthy that our conditions encompass a significant portion of real-valued functions, which can be extended to the complex plane, including polynomial functions. Moreover, our complexity analysis reveals that for such functions, the spectral method exhibits an exponential speedup in terms of dimension $d$ compared to previous methods.

\section{Quantum algorithms for Hessian estimation using spectral method}
\label{section: Quantum algorithms for Hessian estimation using spectral method}

In this section, we extend our idea to estimate Hessians of functions. Our basic idea is to estimate each column of the Hessian by reducing the task to a gradient estimation problem. This technique is commonly used in quantum algorithms \cite{montanaro_shao-tqc,lee2021quantum}. In the simplest case, given an oracle to query $f(\x)=\Bra{\x} H \Ket{\x}$ for some unknown matrix $H$. For any fixed $\y$, by considering $f(\x+\y)-f(\x)$, we will obtain an oracle to compute $g(\x) := \Bra{\x} H \Ket{\y}$.
This function is linear, hence the vector $H \Ket{\y}$ can be learned using the quantum algorithm for gradient estimation. Below, we extend this idea to general functions.

\subsection{Estimating Hessians in the general case}

The following result generalizes Lemma \ref{lem: GAW's gradient estimation thm} for gradient estimation to Hessian estimation.

\begin{prop}[From gradient to Hessian] 
\label{thm: vector to matrix}
    Let $c \in \mb R, \rho, \varepsilon< M \in \mb R_+$ as in Lemma \ref{lem: GAW's gradient estimation thm}. Given a matrix $H \in \mb R^{d \times d}$ such that $\|H\|_{\rm{max}} \leq M$, if we have $h: \mb R^d \rightarrow \mb R$ such that
    $$
    |h(\boldsymbol{z})-\bra \boldsymbol{z}|H|\boldsymbol{z}\ket - c|\leq \frac{ \varepsilon}{8 \cdot 42 \pi}
    $$
    for almost all inputs $\z=\boldsymbol{x}+\y$, where $\x\in G_n^d$ and $\y\in \{0,1\}^d$, and access to its phase oracle $O_h: |\boldsymbol{z}\ket \rightarrow e^{ih(\boldsymbol{z})} |\boldsymbol{z}\ket$. Then we can calculate a matrix $\widetilde{H}$ with probability at least $1-\rho$ such that
    $$
    \|\widetilde{H}-H\|_{\rm{max}} \leq \varepsilon 
    $$
    using $O\left(\frac{d}{\varepsilon} \log\big(\frac{d}{\rho}\big)\right)$ queries to the phase oracle $O_h$.
\end{prop}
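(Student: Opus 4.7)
The plan is to reduce Hessian estimation to $d$ gradient-estimation tasks, one per column of $H$, and invoke Lemma \ref{lem: GAW's gradient estimation thm} each time. Since $h(\boldsymbol{z}) \approx \langle\boldsymbol{z}|H|\boldsymbol{z}\rangle + c$ only determines the symmetric part of $H$, I assume $H = H^{\mathrm{T}}$ (this matches the Hessian setting, where symmetry is automatic). For each $i \in [d]$, define
\begin{equation*}
g_i(\boldsymbol{x}) := \tfrac{1}{2}\bigl(h(\boldsymbol{x}+\boldsymbol{e}_i) - h(\boldsymbol{x})\bigr),
\end{equation*}
where $\boldsymbol{e}_i \in \{0,1\}^d$ is the $i$-th standard basis vector. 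Using $\langle \boldsymbol{x}+\boldsymbol{e}_i|H|\boldsymbol{x}+\boldsymbol{e}_i\rangle - \langle \boldsymbol{x}|H|\boldsymbol{x}\rangle = 2\langle \boldsymbol{x}|H|\boldsymbol{e}_i\rangle + \langle \boldsymbol{e}_i|H|\boldsymbol{e}_i\rangle$, one sees that $g_i$ is (nearly) the linear function $g_i(\boldsymbol{x}) \approx (H\boldsymbol{e}_i)\cdot\boldsymbol{x} + c_i'$ with $c_i' = \tfrac{1}{2}\langle\boldsymbol{e}_i|H|\boldsymbol{e}_i\rangle$; its ``gradient'' is precisely the $i$-th column of $H$.

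To verify the approximation hypothesis of Lemma \ref{lem: GAW's gradient estimation thm}, I apply the assumed bound at $\boldsymbol{z}=\boldsymbol{x}$ (with $\boldsymbol{y}=\boldsymbol{0}\in\{0,1\}^d$) and at $\boldsymbol{z}=\boldsymbol{x}+\boldsymbol{e}_i$, then take a triangle-inequality average to obtain
\begin{equation*}
\bigl|g_i(\boldsymbol{x}) - (H\boldsymbol{e}_i)\cdot\boldsymbol{x} - c_i'\bigr| \leq \frac{\varepsilon}{8\cdot 42\pi}
\end{equation*}
for every $\boldsymbol{x}\in G_n^d$ outside the union of the two shifted ``bad'' sets. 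By sharpening the constant in the hypothesis by a factor of $2$ if necessary, this union stays within the $1/1000$-fraction threshold Lemma \ref{lem: GAW's gradient estimation thm} demands. The bound $\|H\boldsymbol{e}_i\|_{\infty}\leq \|H\|_{\max}\leq M$ is immediate, so all hypotheses of Lemma \ref{lem: GAW's gradient estimation thm} (with $a=1$, base point $\boldsymbol{y}=\boldsymbol{0}$, target vector $\boldsymbol{g}=H\boldsymbol{e}_i$, constant $c_i'$) are met.

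Next, I construct the scaled phase oracle $\mathrm{O}_i:|\boldsymbol{x}\rangle\mapsto e^{2\pi i\cdot 2^{n_\varepsilon} g_i(\boldsymbol{x})}|\boldsymbol{x}\rangle$ from $O_h$. Letting $V_i:|\boldsymbol{x}\rangle\mapsto|\boldsymbol{x}+\boldsymbol{e}_i\rangle$ be the reversible shift (a standard classical arithmetic circuit made reversible), the composite $V_i^{\dagger} O_h V_i$ implements $|\boldsymbol{x}\rangle\mapsto e^{ih(\boldsymbol{x}+\boldsymbol{e}_i)}|\boldsymbol{x}\rangle$, and one application of $(V_i^{\dagger} O_h V_i)\cdot O_h^{\dagger}$ produces phase $h(\boldsymbol{x}+\boldsymbol{e}_i)-h(\boldsymbol{x}) = 2 g_i(\boldsymbol{x})$. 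Amplifying by the required factor $\pi\cdot 2^{n_\varepsilon}=O(1/\varepsilon)$ costs $O(1/\varepsilon)$ uses of $O_h$ per call to $\mathrm{O}_i$, with the non-integer part of the scaling handled by Proposition \ref{lemma: phase to fractional}. Lemma \ref{lem: GAW's gradient estimation thm} then returns the $i$-th column of $H$ up to $\ell_\infty$-error $\varepsilon$ using $O(\log(d/\rho))$ calls to $\mathrm{O}_i$, with failure probability $\rho/d$.

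Running this for $i=1,\dots,d$ and assembling the columns yields $\widetilde{H}$ with $\|\widetilde{H}-H\|_{\max}\leq\varepsilon$, and a union bound pushes the total failure probability below $\rho$. The total cost is $d\cdot O(\log(d/\rho))\cdot O(1/\varepsilon) = O((d/\varepsilon)\log(d/\rho))$ queries to $O_h$, as claimed. The main technical obstacle I anticipate is bookkeeping the ``almost all inputs'' condition under the shift $\boldsymbol{x}\mapsto \boldsymbol{x}+\boldsymbol{e}_i$ (ensuring the exceptional fraction remains within Lemma \ref{lem: GAW's gradient estimation thm}'s tolerance) and, more subtly, implementing the factor-$O(1/\varepsilon)$ phase amplification without inflating the query count beyond the stated bound.
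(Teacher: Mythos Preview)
Your proposal is essentially identical to the paper's proof: both define $F_{\y}(\x)=\tfrac12(h(\x+\y)-h(\x))$ for $\y=\Ket{i}$, verify the near-linearity condition of Lemma~\ref{lem: GAW's gradient estimation thm} with $a=1$ via the triangle inequality, build the required phase oracle with two calls to $O_h$ plus $O(1/\varepsilon)$ amplification, and union-bound over the $d$ columns. The paper is terser about the ``almost all'' bookkeeping and the symmetry of $H$, both of which you flag explicitly, but the argument is the same.
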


\begin{proof}
    For any fixed $\boldsymbol{y}\in \{0,1\}^d$, let $F_{\y}(\boldsymbol{x}):= \frac{1}{2} \left(h(\boldsymbol{x}+\boldsymbol{y})-h(\boldsymbol{x}) \right)$, then using two queries to oracle $O_h$, we can obtain the following oracle
    $$
    O_{F_{\y}}: |\boldsymbol{x}\ket \rightarrow e^{i F_{\y}(\boldsymbol{x})} |\boldsymbol{x}\ket.
    $$
    It is easy to check that 
    \[
    \left| F_{\y}(\boldsymbol{x}) - \Bra{\x}H\Ket{\y} - \frac{1}{2}\Bra{\y}H\Ket{\y} \right|\leq \frac{\varepsilon}{8\cdot 42 \pi}.
    \]
    This satisfies the condition described in Lemma \ref{lem: GAW's gradient estimation thm} with $a=1$. From $O_{F_{\y}}$, we can implement the oracle described in Lemma \ref{lem: GAW's gradient estimation thm} with $O(1/\varepsilon)$ applications of $O_{F_{\y}}$. Now, by Lemma \ref{lem: GAW's gradient estimation thm}, there is a quantum algorithm that approximates $H\Ket{\y}$. In particular, if we choose $\y = \Ket{i}$, i.e., the $i$-th column of the identity matrix, then there is a quantum algorithm that returns $\g_i$ satisfying
    \[
    \operatorname{Pr}\Big[\|\g_i - H\Ket{i}\|_{\infty} \leq \varepsilon \Big] \geq 1 - \frac{\rho'}{d},
    \]
    with query complexity $O(\log(d^2/\rho'))$. Here $H\Ket{i}$ is exactly the $i$-th column of $H$. Let $\widetilde{H}$ be the matrix generated by all $\g_i$, then\footnote{Here we used the facts that $\ln(x)\geq 1-x^{-1}$ and $e^{-x}\geq 1-x$.}
    $$
    \on{Pr}\left[\|\widetilde{H}-H\|_{\rm{max}} \leq \varepsilon  \right] \geq \left(1-\frac{\rho'}{d}\right)^d \geq 1-\frac{d\rho'}{d-\rho'}.
    $$
    Setting $\rho'=d\rho/(d+\rho)$, then the success probability becomes $1-\rho$.
    Putting it all together, we obtain the claimed query complexity.
\end{proof}

Similar to Lemma \ref{lem: GAW's gradient estimation thm}, $M$ affects the gate complexity. The algorithm described above may seem natural; however, as we will demonstrate in Proposition \ref{prop: lower bound} below, it is indeed close to optimal for constant $\varepsilon$. Prior to that, we use the above result to estimate the Hessian of a general function using the spectral method.
    
\begin{thm}[Hessian estimation using spectral method]
\label{thm: hessian using spectral}
    Assuming the same conditions as in Theorem \ref{thm: gradient estimation using spectral method}, 
 %   Let $\rho, \varepsilon \in \mb R_+$ be constants. Given an analytic function $f:\mb C^d \rightarrow \mb C$ that maps $\mb R^d$ into $\mb R$. The constant $\kappa$ and the properties of $f$ are as described in Theorem \ref{thm: spectral formula for gradient}. Suppose we have access to the oracle $U_f^\eta$ as in Lemma \ref{lem: binary oracle conversion in spectral method}, or phase oracles $O_{f_1}, O_{f_2}$ as in Lemma \ref{lem: phase oracle conversion in spectral method}, 
    then there is a quantum algorithm that computes an approximate matrix $\widetilde{\H}$ such that $\|\widetilde{\H}-\H_f(\boldsymbol{0})\|_{\rm{max}} \leq \varepsilon$ with probability at least $1-\rho$, where $\H_f(\boldsymbol{0})$ is the Hessian of $f$ at $\0$. The algorithm requires:
    \begin{itemize}
        \item $O\left(d \log \big(\frac{d}{\rho}\big) \log \big(\frac{\kappa}{\varepsilon r^2}\big) \right)$ queries to \( U_f \) with precision $\eta \in O(\varepsilon r)$, or
        \item $O \left(d \log\big(\frac{d}{\rho}\big) \left(\frac{1}{\varepsilon r}+\log\big(\frac{\kappa}{\varepsilon r^2}\big)\log\big(\frac{1}{\eta} \log\big(\frac{\kappa}{\varepsilon r^2}\big) \big)\right) \right)$ queries to $O_{f_1}$ and $O_{f_2}$.
    \end{itemize}
\end{thm}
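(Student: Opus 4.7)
The plan is to reduce the Hessian problem to $d$ independent gradient-estimation problems via Proposition \ref{thm: vector to matrix}: construct a function $G$ that approximates the quadratic form $\z^{T}\H_{f}(\0)\z$ well on inputs of the form $\z = \x + \y$, with $\x \in G_n^d$ and $\y \in \{0,1\}^d$, so that $F_{\y}(\x) := \tfrac12(G(\x+\y) - G(\x))$ is approximately linear in $\x$ with gradient $\H_{f}(\0)\y$. Applying the gradient routine of Lemma \ref{lem: GAW's gradient estimation thm} once per canonical basis vector $\y = \Ket{i}$ then recovers the Hessian column by column.

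To build $G$, I would reuse the $n=2$ case of the spectral analysis of Subsection \ref{section: spectral method}. For the one-variable slice $h(\tau) := f(\tau\z)$ one has $h''(0) = \z^{T}\H_{f}(\0)\z$, and the identity $c_{2} = \tfrac12 h''(0)\,\delta^{2} + O(\delta^{N+2})$ suggests the definition
\[
G(\z) \;:=\; \frac{2}{N\delta^{2}} \sum_{k=0}^{N-1} \omega^{-2k}\, f(\delta\omega^{k}\z).
\]
An argument identical in shape to Theorem \ref{thm: spectral formula for gradient} shows that $G(\z) \in \mathbb{R}$ (since $f$ maps reals to reals and $c_{2} \in \mathbb{R}$) and gives a bound of the form
\[
\bigl|G(\z) - \z^{T}\H_{f}(\0)\z\bigr| \;\leq\; \frac{C\kappa}{r^{2}} \cdot \frac{(\delta/r')^{N}}{1 - (\delta/r')^{N}},
\]
with an effective radius $r' = r/\|\z\|_{\infty}$. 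Since $\z = \x + \y$ satisfies $\|\z\|_{\infty} \leq 3/2$, taking $\delta$ to be a suitable constant multiple of $r$ (e.g.\ $\delta = r/2$) and then choosing $N = O(\log(\kappa/\varepsilon r^{2}))$ forces this error below $\varepsilon/(8 \cdot 42\pi)$, the tolerance required by Proposition \ref{thm: vector to matrix}.

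The next ingredient is the phase-oracle construction. The binary-oracle version is a direct transplant of Lemma \ref{lem: binary oracle conversion in spectral method}: loop over $k = 0, \ldots, N-1$, multiply each $f(\delta\omega^{k}\z)$ by $\omega^{-2k}$ in an ancillary register, accumulate, rescale by $2/(N\delta^{2})$, and convert to a phase via controlled-phase gates, using $O(N)$ queries to $U_{f}$. For the phase-oracle version I mirror Lemma \ref{lem: phase oracle conversion in spectral method}: expand the real part of the target phase as a weighted combination of $f_{1}$ and $f_{2}$ and invoke Lemma \ref{lemma: phase to fractional}; the total cost per oracle call is $\widetilde{O}(1/(\varepsilon\delta^{2}) + N\log(N/\eta))$, the extra $1/\delta$ compared with gradient estimation being inherent to the second-order spectral formula. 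Two further queries convert the $G$-oracle into the $F_{\y}$-oracle required by Proposition \ref{thm: vector to matrix}.

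The endgame combines the pieces. For each $\y = \Ket{i}$, $i \in [d]$, Proposition \ref{thm: vector to matrix} yields the $i$-th column of $\H_{f}(\0)$ within $\varepsilon$ using $O(\log(d/\rho))$ calls to the $F_{\y}$-oracle, so a union bound over $i \in [d]$ with per-column failure probability $\rho/d$ produces $\widetilde{\H}$ with $\|\widetilde{\H} - \H_{f}(\0)\|_{\max} \leq \varepsilon$. Multiplying the $O(d\log(d/\rho))$ outer queries by the per-oracle costs above yields the two advertised complexities. The main subtlety I expect is the careful propagation of the $\eta$-precision through the $1/\delta^{2}$ rescaling inherent to second-order differentiation, which tightens the accuracy demanded of the underlying oracles and is responsible for the $1/\delta$ enlargement of the fractional-oracle cost compared with Theorem \ref{thm: gradient estimation using spectral method}.
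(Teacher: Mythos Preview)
Your proposal is correct and follows essentially the same route as the paper: define the second-order spectral approximant $G(\z)=\frac{2}{N\delta^{2}}\sum_{k}\omega^{-2k}f(\delta\omega^{k}\z)$ (this is exactly the paper's formula~(\ref{0627})), bound its error via the $n=2$ case of~(\ref{eq: error of spectral method}) with the adjusted radius $\tilde r=2r/3$ to accommodate $\|\z\|_\infty\le 3/2$, construct the phase oracle as in Lemmas~\ref{lem: binary oracle conversion in spectral method} and~\ref{lem: phase oracle conversion in spectral method}, and then invoke Proposition~\ref{thm: vector to matrix} column by column. Your observation that the phase-oracle scaling acquires an extra factor of $1/\delta$ (yielding $1/(\varepsilon\delta^{2})$ rather than $1/(\varepsilon\delta)$) is the correct accounting for the second-order formula and is arguably sharper than the $1/(\varepsilon r)$ term printed in the statement.
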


\begin{proof}
    Following the procedure in the proof of Theorem \ref{thm: spectral formula for gradient} by considering $h''(0)$, we derive the following approximate formula by (\ref{eq: error of spectral method})
    \be
    \label{0627}
    \bra \boldsymbol{z}| \H_f(\boldsymbol{0}) |\boldsymbol{z}\ket =h''(0)
    \approx \frac{2}{N\delta^2}\sum_{k=0}^{N-1}\omega^{-2k}f(\delta\omega^k\boldsymbol{z})
    \ee
    for $\z=\x+\y$, where $\x\in G_n^d$ and $\y\in\{0,1\}^d$. To satisfy the corresponding convergence condition, we choose $\tilde{r}=2r/3$ here, and $\delta=\tilde{r}/2$ as before. The error is bounded by $2\kappa \tilde{r}^{-2}  \frac{(\delta/\tilde{r})^N}{1-(\delta/\tilde{r})^N}$. 
    The oracle for the right-hand side of (\ref{0627}) can be constructed similarly to Lemmas \ref{lem: binary oracle conversion in spectral method} and 
    \ref{lem: phase oracle conversion in spectral method}.
    We can choose an appropriate $N= O\left(\log \big(\frac{\kappa}{\varepsilon r^2}\big)\right)$ to satisfy the condition described in Proposition \ref{thm: vector to matrix}. Now following the procedure in Proposition \ref{thm: vector to matrix}, invoking the gradient estimation algorithm $d$ times can approximate the Hessian matrix $\H_f(\0)$. 

    Finally, we can obtain the claimed overall query complexity of approximating $\H_f(\0)$ by combining the results of Proposition \ref{thm: vector to matrix}, along with Lemmas \ref{lem: binary oracle conversion in spectral method} and \ref{lem: phase oracle conversion in spectral method}.
\end{proof}

\subsection{Estimating sparse Hessians}

Estimating sparse Hessian is useful in optimization \cite{fletcher1997computing,d2638b8e-8fdc-3d6d-a044-1384db1d1f3a,coleman1984estimation,coleman1984large,gebremedhin2005color}.
In this section, we show that sparse Hessians can be estimated more efficiently on a quantum computer.
Below, when we say $H$ is $s$-sparse, we mean it contains at most $s$ nonzero entries in each row and column.

\begin{lem}[Schwartz-Zippel Lemma]
Let $P\in R[x_{1},x_{2},\ldots ,x_{n}]$ be a non-zero polynomial of total degree $d \geq 0$ over an integral domain $R$. Let $S$ be a finite subset of $R$ and let $r_1, r_2, \ldots, r_n$ be selected at random independently and uniformly from $S$. Then
\[
\Pr[P(r_{1},r_{2},\ldots ,r_{n})=0]\leq {\frac {d}{|S|}}.
\]
\end{lem}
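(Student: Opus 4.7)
The plan is to prove the Schwartz-Zippel lemma by induction on the number of variables $n$, which is the standard and essentially the only natural route. The underlying algebraic fact that drives everything is that a non-zero univariate polynomial of degree $d$ over an integral domain has at most $d$ roots; this is what makes the base case work and also what lets me peel off one variable at a time in the inductive step.

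For the base case $n=1$, the polynomial $P(x_1)$ is a non-zero univariate polynomial of degree at most $d$, so it has at most $d$ roots in $R$. Hence at most $d$ of the $|S|$ possible values for $r_1$ produce a zero, giving the bound $\Pr[P(r_1)=0] \le d/|S|$ immediately. The $d=0$ case is trivial since a non-zero constant is never zero.

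For the inductive step, assume the bound for polynomials in fewer than $n$ variables. I would single out the variable $x_n$ and rewrite $P$ as a polynomial in $x_n$ with coefficients in $R[x_1,\ldots,x_{n-1}]$:
\[
P(x_1,\ldots,x_n) \;=\; \sum_{i=0}^{k} P_i(x_1,\ldots,x_{n-1})\, x_n^{i},
\]
where $k$ is the largest exponent of $x_n$ that actually appears, so the leading coefficient $P_k$ is a non-zero polynomial of total degree at most $d-k$ in $n-1$ variables. I would then split on the event $A = \{P_k(r_1,\ldots,r_{n-1}) = 0\}$: by the inductive hypothesis, $\Pr[A] \le (d-k)/|S|$; conditional on $\neg A$, the univariate polynomial $Q(x_n) := P(r_1,\ldots,r_{n-1},x_n)$ has degree exactly $k$ and is non-zero, so by the base case $\Pr[Q(r_n)=0 \mid \neg A] \le k/|S|$. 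A union-type bound then yields
\[
\Pr[P(r_1,\ldots,r_n)=0] \;\le\; \Pr[A] + \Pr[\neg A]\cdot\Pr[Q(r_n)=0 \mid \neg A] \;\le\; \frac{d-k}{|S|} + \frac{k}{|S|} \;=\; \frac{d}{|S|}.
\]

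The only subtle point, and the one I would take care to state cleanly, is why $Q(x_n)$ is actually a polynomial of degree exactly $k$ when $P_k(r_1,\ldots,r_{n-1}) \neq 0$: this uses that $R$ is an integral domain so that its leading coefficient in $R$ does not collapse, and in turn the degree-$k$ univariate bound applies. This is not really an obstacle so much as a place where the integral-domain hypothesis must be invoked explicitly; the rest of the argument is clean conditional-probability bookkeeping and a straightforward induction.
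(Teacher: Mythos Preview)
Your proof is correct and is the standard induction argument for the Schwartz--Zippel lemma. The paper itself does not prove this statement: it is stated without proof as a well-known result and then applied. So there is nothing to compare against; your argument fills in exactly the classical proof one would expect.
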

To apply the above result, we modify the grid of sampling points as follows: Let $q>2$ be a prime number, define
\be
\label{new sampling points}
S_q:=\left\{ \frac{k}{q} : k=-\frac{q-1}{2}, \cdots, \frac{q-1}{2} \right\} \subset ( -\frac{1}{2}, \frac{1}{2} ).
\ee
To clarify and avoid any confusion with $G_n$, we change the notation to $S_q$. Below, with $\mathbb{Z}_q$, we mean $\{-\frac{q-1}{2}, \cdots, \frac{q-1}{2}\}$, which is an integral domain.

The following result is a generalization of \cite[Theorems 15 and 17]{montanaro_shao-tqc} for $\mathbb{Z}_q$.

\begin{lem}
\label{lem: key lemma for sparse Hessian}
Assume that $d\in\mathbb{N}, q$ is a prime number.
Let $H=(h_{ij})_{d\times d} \in S_q^{d\times d}$ be a symmetric matrix. Let $f(\x)=\x^T H \x$ and the oracle be $O_f: \Ket{\x} \rightarrow e^{i q f(\x)} \Ket{\x}$ for any $\x\in S_q^d$.
\begin{itemize}
    \item If $H$ is $s$-sparse, then there is a quantum algorithm that returns $H$ with probability at least $0.99$ using $O\left(s\log (qd)\right)$ queries.
    \item If $H$ has at most $m$ nonzero entries, then there is a quantum algorithm that returns $H$ with probability at least $0.99$ using $O\left(\sqrt{m \log(qd)}\right)$ queries.
\end{itemize}
\end{lem}

\begin{proof}
For any fixed $\y\in \mathbb{Z}_q^d$, let $g(\x)=\frac{1}{2} (f(\x+\y)-f(\x))=\x \cdot (H\y)+\frac{1}{2} f(\y)$. With $O(1)$ applications of $O_f$, we have the following state
\be
\label{0819-eq1}
\frac{1}{\sqrt{q^d}}\sum_{\x\in S_q^d} e^{2\pi i q g(\x) } \Ket{\x}
=
\frac{e^{\pi i q f(\y)}}{\sqrt{q^d}} \sum_{\x\in S_q^d} e^{2\pi i q \x \cdot H\y} \Ket{\x}.
\ee 
To apply the quantum Fourier transform, we consider the first register of the above state: here we assume that $x_1={k_1}/{q}$ and $h_{1j}={\ell_{1j}}/{q}$ for some $k_1, \ell_{1j}\in \mathbb{Z}_q$,
\[
\sum_{x_1 \in S_q} e^{2\pi i q x_1 \sum_{j=1}^d h_{1j} y_j} \Ket{x_1}
=
\sum_{k_1 \in \mathbb{Z}_q} e^{2\pi i q \frac{k_1}{q}  \frac{\sum_{j=1}^d \ell_{1j} y_j}{q} } \Ket{k_1}.
\]
Applying the inverse of the quantum Fourier transform, we obtain $\sum_{j=1}^d \ell_{1j} y_j \mod \mathbb{Z}_q$. Here, to apply the quantum Fourier transform, it is important to make sure that $\y\in\mathbb{Z}_q^d$ rather than $S_q^d$.

For convenience, denote $H=L/q$, where $L\in \mathbb{Z}_q^{d\times d}$. As a result of the above analysis, we obtain $L\y \mod \mathbb{Z}_q$ for any given $\y\in\{0,1\}^d \subset   \mathbb{Z}_q^d$. We now randomly choose $k$ samples $\y_1,\ldots,\y_k\in\{0,1\}^d$, then we obtain $L\y_1,\ldots,L\y_k \mod \mathbb{Z}_q$ using $O(k)$ quantum queries.

Note that for any $\x\neq \x'$, by the Schwartz-Zippel lemma, the probability that $\x\cdot \y_i = \x' \cdot \y_i$ for all $i$ is at most $2^{-k}$. By the union bound, for any $\x\in \mathbb{Z}_q^d$, the probability that there exists an $\x'\in \mathbb{Z}_q^d$ such that $\x'\neq \x, |\x'|_0\leq s$ and $\x\cdot \y_i = \x' \cdot \y_i$ for all $i$ is bounded by\footnote{Here $|\x'|_0$ refers to the Hamming weight of $\x'$, i.e., the number of nonzero entries of $\x'$.}
\[
\sum_{l=0}^s \binom{d}{l} (q-1)^l 2^{-k}
=O\left(s \binom{d}{s} q^{s}  2^{-k}\right)
=O\left(2^{\log(s) + s \log(de/s)+ s \log q -k  } \right).
\]

We apply this bound to all rows of $L$ via a union bound, then the probability that for any row $\x$ of $L$, there is an $\x'\in \mathbb{Z}_q$ such that $\x'\neq \x, |\x'|_0\leq s$ and $\x\cdot \y_i = \x' \cdot \y_i$ for all $i$ is bounded by $O\left(d \cdot 2^{\log(s) + s \log(de/s)+ s \log q -k  } \right).$ It suffices to choose $k=O(s\log(qd/s))$ to make this probability arbitrary small. 
From $\y_1,\ldots,\y_k$ and $L\y_1,\ldots,L\y_k$, we then can determine $L$ with high probability.
This proves the first claim.

For the second claim, we first apply the above algorithm by setting $s=\sqrt{m/\log (qd)}$. This learns all rows of $L$ with at most $\sqrt{m/\log (qd)}$ nonzero entries and identifies all dense rows with more than $\sqrt{m/\log (qd)}$ nonzero entries. And there are at most $\sqrt{m\log (qd)}$ dense rows. For those rows, we use (\ref{0819-eq1}) to learn them by setting $\y$ as the corresponding standard basis vector. The overall cost is $O\left(\sqrt{m \log(qd)} \right)$.
\end{proof}

\begin{prop}
\label{prop: real case}
Assume that $d\in\mathbb{N}, \eta \in \mb R_+$ and $q$ is a prime number.
Let $H\in \mathbb{R}^{d\times d}$ and $\|H\|_{\max}\leq 1/2$. 
Assume $H = \widetilde{H} + E$, where $\widetilde{H}\in S_q^{d\times d}$ and $\|E\|_{\max} \leq  \eta$. 
Let $f(\x)=\x^T H \x$ and the oracle be $O_f: \Ket{\x} \rightarrow e^{i qf(\x)} \Ket{\x}$ for any $\x\in S_q^d$.
\begin{itemize}
    \item If $H$ is $s$-sparse, then there is a quantum algorithm that returns $\widetilde{H} $ with probability at least $0.99$ using $O(s\log^2 (qd))$ queries if $\eta=o(1/s q)$.
    \item If $H$ has at most $m$ nonzero entries, then there is a quantum algorithm that returns $\widetilde{H} $ with probability at least $0.99$ using $O\left(\sqrt{m}\log^{1.5} (qd)\right)$ queries if $\eta=o\left(1/\sqrt{m} q\right)$.
\end{itemize}
\end{prop}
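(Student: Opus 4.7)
The plan is to follow the argument of Lemma \ref{lem: key lemma for sparse Hessian}, carefully tracking how the perturbation $E$ propagates through the quantum state after the oracle is applied. A preliminary observation is that, since $H$ is $s$-sparse, I may choose the quantised matrix $\widetilde{H}\in S_q^{d\times d}$ to have the same support as $H$, so that $E=H-\widetilde{H}$ is also $s$-sparse and $\|E\|_{\max}\le\eta$. As in the noise-free proof, for a fixed $\y\in\{0,1\}^d$ two applications of $O_f$ synthesise a phase oracle for $g(\x)=\x\cdot(H\y)+\tfrac12 f(\y)$. Writing $H=\widetilde{H}+E$, the post-oracle uniform superposition over $\x\in S_q^d$ factorises across coordinates as
\[
\bigotimes_{i=1}^d\frac{1}{\sqrt q}\sum_{k_i\in\mathbb{Z}_q} e^{2\pi i k_i\bigl(w_i/q+(E\y)_i\bigr)}\Ket{k_i}
\]
(up to a global phase), where $w_i=q(\widetilde{H}\y)_i\bmod q\in\mathbb{Z}_q$ and $|k_i|\le(q-1)/2$.

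The entire noise contribution is $e^{2\pi i k_i(E\y)_i}$, and since row $i$ of $E$ has at most $s$ nonzero entries one has $|(E\y)_i|\le s\eta$, so the phase deviation from the ideal state is at most $\pi qs\eta=o(1)$ under the hypothesis $\eta=o(1/(sq))$. Hence each single-coordinate register is $o(1)$-close to the ideal $\frac{1}{\sqrt q}\sum_{k_i}e^{2\pi i k_iw_i/q}\Ket{k_i}$, and the inverse quantum Fourier transform returns the correct integer $w_i$ with probability $1-o(1)$. To convert this into high-confidence recovery, I would amplify each per-$\y$ measurement by repeating $O(\log(qd))$ times and taking the majority, driving the per-coordinate error below $1/(sd\log(qd))$. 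A union bound over the $d$ coordinates and the $k=O(s\log(qd))$ random samples $\y_1,\ldots,\y_k\in\{0,1\}^d$ then guarantees that every vector $\widetilde{H}\y_j\bmod q$ is recovered exactly with probability at least $0.99$, and the Schwartz--Zippel argument from Lemma \ref{lem: key lemma for sparse Hessian} reconstructs $\widetilde{H}$, bringing the total cost to $O(s\log^2(qd))$ queries to $O_f$.

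For the second bullet, the approach mirrors the two-phase scheme of Lemma \ref{lem: key lemma for sparse Hessian}: first run the sparse algorithm with parameter $s'=\sqrt{m/\log(qd)}$ (the requirement $\eta=o(1/(s'q))$ is implied by the stronger hypothesis $\eta=o(1/(\sqrt m\,q))$), then learn the at most $\sqrt{m\log(qd)}$ dense rows one-by-one using $\y=\boldsymbol{e}_i$, for which the noise drops to $O(q\eta)=o(1)$. Summing the two phases yields the announced $O(\sqrt m\log^{1.5}(qd))$ complexity. The main obstacle is the noise analysis itself: verifying that the $o(1)$ per-shot phase deviation really does translate, after QFT$^{-1}$ and measurement, into a per-coordinate success probability that is strictly bounded away from $1/2$ so that majority-vote amplification works, and checking that no extra factor of $q$ or $d$ sneaks in once the union bound is taken across all $sd\log(qd)$ coordinate-sample pairs. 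Once those points are settled, the combinatorial recovery is identical to that of Lemma \ref{lem: key lemma for sparse Hessian}.
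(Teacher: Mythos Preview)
Your proposal is correct and follows essentially the same route as the paper: factorise the post-oracle state across coordinates, bound the per-coordinate perturbation $|(E\y)_i|\le s\eta$ so that the phase error is $O(qs\eta)=o(1)$, amplify via $O(\log d)$ repetitions, and then invoke the Schwartz--Zippel recovery of Lemma~\ref{lem: key lemma for sparse Hessian}. The paper carries out the step you flag as the ``main obstacle'' by computing the $\ell_2$ distance $\bigl\|\frac{1}{\sqrt q}\sum_{x_j}(e^{2\pi i q x_j (E\y)_j}-1)\ket{x_j}\bigr\|^2<\pi^2 q^2 s^2\eta^2/4$ directly, which is exactly the phase-deviation bound you describe, so nothing new is needed there.
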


In the above, the assumption $\|H\|_{\max}\leq 1/2$ ensures the existence of the decomposition $H=\widetilde{H}+E$. 

\begin{proof}
Our goal is to choose an appropriate $\eta$ such that $\widetilde{H}$ can be recovered with high probability. We use the same idea as the algorithm presented in Lemma \ref{lem: key lemma for sparse Hessian}, which is mainly based on (\ref{0819-eq1}). From the proof of Lemma \ref{lem: key lemma for sparse Hessian}, it is sufficient to prove the first claim. So without loss of generality, we assume that $H$ is $s$-sparse. For the state (\ref{0819-eq1}), up to a global phase, it equals
\beas
\Ket{\psi}=\frac{1}{\sqrt{q^d}}\sum_{\x\in S_q^d} e^{ 2\pi i q \x \cdot H\y} \Ket{\x} 
&=& 
\frac{1}{\sqrt{q^d}}\sum_{\x\in S_q^d} e^{2\pi i q \x \cdot  (\widetilde{H}\y+E\y)} \Ket{\x} \\
&=& 
\bigotimes_{j=1}^d \frac{1}{\sqrt{q}} \sum_{x_j\in S_q} 
e^{2\pi i q x_j(\widetilde{H}\y)_j}
e^{2\pi i q x_j(E\y)_j} 
\Ket{x_j} .
\eeas
For each $j$, we have that
\beas
\left\|\frac{1}{\sqrt{q}} \sum_{x_j\in S_q} 
(e^{2\pi i q x_j(E\y)_j}  - 1)
\Ket{x_j} \right\|^2
&=&
\frac{4}{q} \sum_{x_j\in S_q} \sin^2(\pi q x_j (E\y)_j) \\
&\leq&
\frac{4}{q} \sum_{x_j\in S_q} \pi^2 q^2 x_j^2 (E\y)_j^2 \\
&<& 
%4 \pi^2 q^2 \frac{1}{4} s^2 \frac{1}{4p^2} \frac{q^2}{4}=
\frac{\pi^2 q^2 s^2 \eta^2}{4}.
\eeas
In the last step, we use the facts that $|x_j| < 1/2, |y_j|\leq 1$ and $E$ is $s$-sparse with $\|E\|_{\max}\leq \eta$. So we choose $\eta$ such that $s q \eta = o(1)$ is a small constant. With this choice, for each $j$,
\[
\left\|\frac{1}{\sqrt{q}} \sum_{x_j\in S_q} 
e^{2\pi i q x_j(\widetilde{H}\y)_j}
e^{2\pi i q x_j(E\y)_j} 
\Ket{x_j} 
-
\frac{1}{\sqrt{q}} \sum_{x_j\in S_q} 
e^{2\pi i q x_j(\widetilde{H}\y)_j}
\Ket{x_j} 
\right\| = o(1).
\]
From the state $\frac{1}{\sqrt{q}} \sum_{x_j\in S_q} 
e^{2\pi i q x_j(\widetilde{H}\y)_j}
\Ket{x_j}$, we can recover $(\widetilde{H}\y)_j$ with probability 1 using the quantum Fourier transform.  
Since $\Ket{\psi}$ is a product state and $j\in [d]$ are independent of each other, we can recover all $(\widetilde{H}\y)_j$ with high probability using $O(\log d)$ measurements. Namely, for each $\y$, we can obtain $\widetilde{H}\y$.
With this result and the analysis for Lemma \ref{lem: key lemma for sparse Hessian}, we obtain the claimed results.
\end{proof}

%\yz{the following remark needs to be modified.}
\begin{rmk}
In the above proposition, the condition $\|E\|_{\max} \leq o(1/sq)$ indicates that entries of $H$ have the form ${k}/{q} + o(1/sq)$ for $k \in \mathbb{Z}_q$. This condition might be strong, as the error is usually bounded by $1/2q$ rather than $o(1/sq)$. However, from the choice of $\eta$ in the above proof, we can relax this to $\|E\|_{\max} \leq 1/2q$ by considering $\y\in\{0,1\}^d/(\alpha s)$ for some large constant $\alpha$. In this case, the oracle we need becomes $\Ket{\x} \rightarrow e^{i s q f(\x)}\Ket{\x}$. Consequently, the query complexity in our final theorem below (i.e., Theorem \ref{thm: Sparse Hessian estimation using spectral method}) will be multiplied by a factor of $O(s)$. If $s$ is not excessively large, then this should not be an issue.
In Proposition \ref{prop: real case}, the choice of $\eta$ is indeed not ideal. This remark clarifies that $\eta$ can be adjusted to $1/2q$, though this would introduce an additional factor of $s$ in the complexity. The latter choice is standard, while the former is somewhat restrictive.
\end{rmk}

As shown below, the above result also holds when $f(\x)$ is close to a quadratic form $\x^TH\x$. We consider this under two cases: $\|H\|_{\max}\leq 1/2$ and $\|H\|_{\max}\leq M/2$ for some general $M>1$.

\begin{prop}
\label{prop: a little more general 1}
Assume that $d,\ell \in\mathbb{N}, \varepsilon,\eta \in \mathbb{R}_+, q=O(1/\varepsilon)$ is prime.
Let $H\in \mathbb{R}^{d\times d}$ and $\|H\|_{\max}\leq 1/2$. 
Assume that $H = \widetilde{H} + E$, where $\widetilde{H}\in S_q^{d\times d}$ and $\|E\|_{\max} \leq \eta$. 
Assume that $|f(\z)-\z^T H \z|\leq O(\varepsilon)$, where $\z=\x+\y$, holds for all but a 1/1000 fraction of points $\x\in S_q^d$ and all $\y\in\{0,1\}^d$. Let the oracle be $O_f: \Ket{\z} \rightarrow e^{i q f(\z)} \Ket{\z}$ for any $\z\in  S_q^d+\{0,1\}^d$.
\begin{itemize}
    \item If $H$ is $s$-sparse, then there is a quantum algorithm that returns $\widetilde{H}$ with probability at least $0.99$ using $O(s \log^2 (qd))$ queries if $\eta=o(1/sq)$.
    \item If $H$ contains $m$ nonzero entries, then there is a quantum algorithm that returns $\widetilde{H}$ with probability at least $0.99$ using $O\left(\sqrt{m}\log^{1.5} (qd)\right)$ queries if $\eta=o\left(1/\sqrt{m}q\right)$.
\end{itemize}
\end{prop}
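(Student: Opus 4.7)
The plan is to reduce to Proposition \ref{prop: real case} by showing that the state produced from the approximate phase oracle $O_f$ is close in Euclidean norm to the state one would obtain from the exact quadratic form $\x^T H \x$, and then invoke the Fourier-based recovery already developed there.

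First I would fix $\y \in \{0,1\}^d$, define $g_\y(\x) := \tfrac{1}{2}(f(\x+\y)-f(\x))$, and build a phase oracle $O_{g_\y}: \Ket{\x}\mapsto e^{iqg_\y(\x)}\Ket{\x}$ using a constant number of invocations of $O_f$ and its inverse (possibly with the fractional-query construction of Proposition \ref{lemma: phase to fractional} to absorb the factor $\tfrac{1}{2}$). The hypothesis on $f$, together with a union bound over the evaluations at $\x$ and at $\x+\y$, yields $|g_\y(\x) - \tilde g_\y(\x)| \leq O(\varepsilon)$ for all but a $2/1000$ fraction of $\x \in S_q^d$, where $\tilde g_\y(\x) := \x \cdot H\y + \tfrac{1}{2}\y^T H\y$ is the exact linear function that Proposition \ref{prop: real case} processes.

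Applying $O_{g_\y}$ to the uniform superposition yields $\Ket{\phi_\y} = q^{-d/2}\sum_\x e^{iqg_\y(\x)}\Ket{\x}$. I would then compare this against the ideal state $\Ket{\psi_\y} = q^{-d/2}\sum_\x e^{iq\tilde g_\y(\x)}\Ket{\x}$ appearing in Proposition \ref{prop: real case}. On good $\x$ the per-term amplitude deviation $|e^{iq(g_\y(\x)-\tilde g_\y(\x))}-1|$ is controlled by reading the $O(\varepsilon)$ hypothesis with a sufficiently small hidden constant (in the spirit of the $\varepsilon/(8\cdot 42\pi)$ bound of Lemma \ref{lem: GAW's gradient estimation thm}), so that $q\cdot O(\varepsilon)$ becomes a preassigned small constant; on bad $\x$ the deviation is bounded by $2$ but weighted by only $2/1000$. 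A routine $\ell^2$ calculation then gives $\|\Ket{\phi_\y}-\Ket{\psi_\y}\|$ below any prescribed small constant.

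Finally, Proposition \ref{prop: real case} shows that the inverse-QFT measurement on $\Ket{\psi_\y}$ returns $\widetilde H \y$ with high probability once $\eta = o(1/sq)$ (resp.\ $o(1/\sqrt m\, q)$); by the state closeness just established, the same measurement applied to $\Ket{\phi_\y}$ still succeeds with constant probability, which I would boost to $1-1/\mathrm{poly}(d)$ by $O(\log d)$ repetitions. With $\widetilde H \y$ available for any $\y\in\{0,1\}^d$, the Schwartz--Zippel / random-sampling argument of Lemma \ref{lem: key lemma for sparse Hessian} recovers $\widetilde H$ from $O(s\log(qd))$ random $\y$'s in the $s$-sparse case and from the threshold-plus-column-sweep strategy in the $m$-nonzero case, giving a total of $O(s\log^2(qd))$ or $O(\sqrt m \log^{1.5}(qd))$ queries to $O_f$. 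The hard part will be keeping the scaling honest: because $q = O(1/\varepsilon)$, the product $q\cdot O(\varepsilon)$ is formally $O(1)$, so one must exhibit the small hidden constants carefully and combine the approximation error with the $2/1000$ bad-point fraction so that the total trace distance stays below the success-probability threshold before amplification.
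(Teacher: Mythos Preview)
Your proposal is correct and follows essentially the same approach as the paper: fix $\y$, form the difference $g_\y(\x)=\tfrac{1}{2}(f(\x+\y)-f(\x))$, bound the $\ell^2$ distance between the state produced by $O_f$ and the ideal state of Proposition~\ref{prop: real case} by splitting into the good set (where $q\cdot O(\varepsilon)$ controls the phase error) and the bad set of density at most a few thousandths (where the trivial bound $|e^{ia}-e^{ib}|\le 2$ suffices), and then invoke Proposition~\ref{prop: real case} together with the sampling argument of Lemma~\ref{lem: key lemma for sparse Hessian}. The paper's proof is slightly terser but carries out exactly this computation; your explicit union bound over the two evaluation points and your remark about tracking the hidden constants in $q\varepsilon$ are both appropriate.
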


\begin{proof}
The key idea of the proof of Proposition \ref{prop: real case} is based on the state (\ref{0819-eq1}) generated by using the oracle to query $h(\x):=\x^T H \x$. 
By assumption, for any $\y\in\{0,1\}^d$, when the condition $|f(\z)-h(\z)|\leq O(\varepsilon)$ is satisfied for all $\z=\x+\y$ with $\x\in W \subset S_q^d$, and the size $\#(W)\geq 999 q^d/1000$, it is straightforward to verify that
\beas
&& \left\|\frac{1}{\sqrt{q^d}}\sum_{\x\in W} e^{ 2\pi i q \frac{1}{2} (h(\x+\y)-h(\x))} \Ket{\x}
-
\frac{1}{\sqrt{q^d}}\sum_{\x\in W} e^{ 2\pi i q \frac{1}{2} (f(\x+\y)-f(\x))} \Ket{\x}\right\|^2 \\
&\leq& 
\frac{1}{q^d}
\sum_{\x\in W}
q^2 \pi^2 \Big|(h(\x+\y)-h(\x)) - (f(\x+\y)-f(\x) )\Big|^2 \\
&=& O(q^2 \varepsilon^2).
\eeas
For $\x\notin W$, we have
\beas
&& \left\|\frac{1}{\sqrt{q^d}}\sum_{\x\notin W} e^{ 2\pi i q \frac{1}{2} (h(\x+\y)-h(\x))} \Ket{\x}
-
\frac{1}{\sqrt{q^d}}\sum_{\x\notin W} e^{ 2\pi i q \frac{1}{2} (f(\x+\y)-f(\x))} \Ket{\x}\right\|^2 \\
&\leq& 
\frac{1}{q^d}
\sum_{\x\notin W}
4 q^2 \leq \frac{4}{1000}.
\eeas
To ensure these estimates are bounded by a small constant, it suffices to choose $q=O(1/\varepsilon)$, which is exactly what we claim in the statement of this proposition. Thus, using $O_f$, we also obtain a state close to $\Ket{\psi}$ required in the proof of Proposition \ref{prop: real case}. 
\end{proof}

\begin{cor}
\label{cor: a little more general 2}
Assume that $d\in\mathbb{N}, \varepsilon,\eta \in \mathbb{R}_+, M>1$, and $q=O( M/\varepsilon)$ is prime.
Let $H\in  \mathbb{R}^{d\times d}$ and $\|H\|_{\max}\leq M/2$. 
If $H = M \cdot \widetilde{H} + M \cdot  E$, where $\widetilde{H}\in  S_q^{d\times d}$ and $\|E\|_{\max} \leq \eta$, and assume that $|f(\z)-\z^T H \z|\leq O(\varepsilon)$, where $\z=\x+\y$, holds for all but a 1/1000 fraction of points $\x\in S_q^d$ and all $\y\in\{0,1\}^d$. Let the oracle be $O_f: \Ket{\z} \rightarrow e^{i  f(\z)} \Ket{\z}$ for any $\z\in S_q^d+\{0,1\}^d$.
\begin{itemize}
    \item If $H$ is $s$-sparse, then there is a quantum algorithm that returns $\widetilde{H}$ with probability at least $0.99$ using $O\left(\frac{s}{\varepsilon}\log^2 (qd)\right)$ queries if $\eta=o(1/sq)$.
    \item If $H$ contains $m$ nonzero entries, then there is a quantum algorithm that returns $\widetilde{H}$ with probability at least $0.99$ using $O\left(\frac{\sqrt{m}}{\varepsilon}\log^2 (qd)\right)$ queries if $\eta=o\left(1/\sqrt{m}q\right)$.
\end{itemize}
\end{cor}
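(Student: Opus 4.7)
The plan is to reduce the corollary directly to Proposition \ref{prop: a little more general 1} via a simple rescaling of the function. Define
\[
f'(\z) := \frac{f(\z)}{M}, \qquad H' := \frac{H}{M} = \widetilde{H} + E.
\]
Then $\|H'\|_{\max} \leq 1/2$, the sparsity structure of $H'$ is identical to that of $H$, and
\[
|f'(\z) - \z^{T} H' \z| \;=\; \frac{1}{M}\,|f(\z) - \z^{T} H \z| \;\leq\; O(\varepsilon/M) \;\leq\; O(\varepsilon)
\]
for all but a $1/1000$ fraction of $\z = \x + \y$, $\x \in S_q^d$, $\y \in \{0,1\}^d$. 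Hence the pair $(f',H')$ satisfies precisely the hypotheses of Proposition \ref{prop: a little more general 1} with the same prime $q = O(M/\varepsilon)$ (which is compatible with the ``$q = O(1/\varepsilon')$'' requirement there, where $\varepsilon' := \varepsilon/M$). The smallness assumption on $\|E\|_{\max}$ is also inherited, since it involves $q$, not $M$.

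The only remaining ingredient is to realise the oracle required by Proposition \ref{prop: a little more general 1}, namely
\[
O_{f'} : |\z\rangle \;\mapsto\; e^{\,i q f'(\z)}|\z\rangle \;=\; e^{\,i (q/M) f(\z)}|\z\rangle,
\]
using the given oracle $O_f : |\z\rangle \mapsto e^{\,i f(\z)} |\z\rangle$. This is exactly the fractional power $O_f^{q/M}$. Writing $q/M = \lfloor q/M \rfloor + \{q/M\}$, the integer part is implemented by direct composition and costs $\lfloor q/M\rfloor = O(1/\varepsilon)$ queries; the fractional part is handled by Proposition \ref{lemma: phase to fractional} (together with the remark following it for the regime $\{q/M\}\in(1/48,1)$), contributing $O(\log(1/\eta'))$ additional queries to reach precision $\eta'$. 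Thus one call to $O_{f'}$ costs $O(1/\varepsilon + \log(1/\eta'))$ queries to $O_f$.

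Putting everything together, the total query cost to $O_f$ is the query cost of Proposition \ref{prop: a little more general 1} multiplied by $O(1/\varepsilon + \log(1/\eta'))$. Choosing $\eta'$ polynomially small in $q,d,s$ (respectively $\sqrt{m}$) suffices to preserve both the $0.99$ success probability and the bounds on $\widetilde{H}$, and contributes at most one extra $\log(qd)$ factor. This yields the claimed complexities $O\bigl(\tfrac{s}{\varepsilon}\log^2(qd)\bigr)$ and $O\bigl(\tfrac{\sqrt{m}}{\varepsilon}\log^2(qd)\bigr)$. The only nontrivial step is tracking how the precision $\eta'$ of the fractional-power simulation propagates through the analysis of Proposition \ref{prop: a little more general 1}, and in particular verifying that a polylogarithmic choice of $\eta'$ does not disturb the Schwartz--Zippel--based identifiability argument underlying the sparse recovery; this is where I expect the main bookkeeping to lie.
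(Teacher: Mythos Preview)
Your proposal is correct and follows essentially the same approach as the paper's proof: rescale by $1/M$ to obtain $f' = f/M$ and $H' = H/M = \widetilde{H}+E$, observe that the hypotheses of Proposition~\ref{prop: a little more general 1} are then met with accuracy parameter $\varepsilon' = \varepsilon/M$ (so that $q = O(1/\varepsilon') = O(M/\varepsilon)$), and implement the required oracle $e^{iqf'(\z)} = e^{i(q/M)f(\z)}$ using $O(q/M) = O(1/\varepsilon)$ calls to $O_f$. The paper's version is terser (it simply writes $O_f^{1/\varepsilon}$ without discussing the fractional part or the precision $\eta'$), whereas your treatment of the fractional power via Proposition~\ref{lemma: phase to fractional} and your remark about tracking $\eta'$ through the sparse-recovery analysis are more careful; but these are bookkeeping refinements, not a different argument.
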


\begin{proof}
It suffices to consider $\tilde{f}(\x) = f(\x)/M$ and $h(\x)=\x^T H \x/M$. Then for any $\x\in S_q^d$ and $\y\in\{0,1\}^d$, we have $|\tilde{f}(\x+\y) - h(\x+\y)|\leq O(\varepsilon/M)$. To use 
Proposition \ref{prop: a little more general 1}, the oracle we need is 
\[
O_{\tilde{f}}: \Ket{\x} \rightarrow e^{i q \tilde{f}(\x)} \Ket{\x}
= e^{i {f(\x)}/{\varepsilon} } \Ket{\x} = O_f^{1/\varepsilon} \Ket{\x}
\]
for any $\x \in S_q^d$. The results now follow from Proposition \ref{prop: a little more general 1}.
\end{proof}

Finally, combining Proposition \ref{cor: a little more general 2} and the idea of the proof of Theorem \ref{thm: hessian using spectral}, we obtain the following result. It coincides with Theorem \ref{thm: hessian using spectral} in the worst cases.

\begin{thm}[Sparse Hessian estimation using spectral method]
\label{thm: Sparse Hessian estimation using spectral method}
    Under the same assumption as in Theorem \ref{thm: hessian using spectral}.
    Assume that the Hessian $H=\H_f(\boldsymbol{0})$ satisfies $\|H\|_{\max}\leq M/2$ and $H = M \cdot \widetilde{H} + M \cdot  E$, where $\widetilde{H}\in  S_q^{d\times d}$ and $\|E\|_{\max} \leq \eta$, where $q=O(M/\varepsilon)$ is prime.
     Then there is a quantum algorithm that returns $\widetilde{H}$ exactly using 
    \begin{itemize}
    \item $\widetilde{O}(s/\varepsilon)$ queries to $O_{f_1}, O_{f_2}$ if $H$ is $s$-sparse and $\eta=o(1/sq)$.
    \item $\widetilde{O}\left(\sqrt{m}/\varepsilon\right)$ queries to $O_{f_1}, O_{f_2}$ if $H$ has at most $m$ nonzero entries and $\eta=o\left(1/\sqrt{m}q\right)$.
    \end{itemize}
\end{thm}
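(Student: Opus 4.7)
The plan is to reduce the problem to Corollary \ref{cor: a little more general 2} by building a phase oracle $O_F:\Ket{\z}\rightarrow e^{iF(\z)}\Ket{\z}$ for a function $F(\z)$ that closely approximates $\z^T H \z$, where $H=\H_f(\0)$. Unlike the proof of Theorem \ref{thm: hessian using spectral}, which invokes the generic Proposition \ref{thm: vector to matrix} at cost $\widetilde{O}(d)$ in the column dimension, here I exploit sparsity via Corollary \ref{cor: a little more general 2}, which costs only $\widetilde{O}(s/\varepsilon)$ or $\widetilde{O}(\sqrt{m}/\varepsilon)$ queries to $O_F$.

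For the spectral approximation, I mirror (\ref{0627}) in the proof of Theorem \ref{thm: hessian using spectral} and set
\[
F(\z):=\frac{2}{N\delta^2}\sum_{k=0}^{N-1}\omega^{-2k}f(\delta\omega^k\z),
\]
viewing $h(\tau):=f(\tau\z)$ as a one-variable analytic function and applying the error bound (\ref{eq: error of spectral method}) to $h''(0)=\z^T H\z$. Since $\z=\x+\y$ with $\x\in S_q^d$ and $\y\in\{0,1\}^d$ satisfies $|z_j|\leq 3/2$, I take $\tilde{r}=2r/3$ and $\delta=\tilde{r}/2$, so the Taylor series of $h$ converges on the closed disc of radius $\tilde{r}$; choosing $N=O(\log(\kappa/\varepsilon r^2))$ then makes $|F(\z)-\z^T H\z|\leq O(\varepsilon)$ uniformly, and $F(\z)\in\mb R$ exactly as in Theorem \ref{thm: spectral formula for gradient}. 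The oracle $O_F$ is then built from $O_{f_1},O_{f_2}$ by the same technique used in Lemma \ref{lem: phase oracle conversion in spectral method}: split $\omega^{-2k}=c_k+is_k$ and assemble the real linear combination $F(\z)=\frac{2}{N\delta^2}\sum_k (c_k f_1(\delta\omega^k\z)-s_k f_2(\delta\omega^k\z))$ via fractional queries (Lemma \ref{lemma: phase to fractional}). Because $\delta$ is a constant and $N$ is polylogarithmic in $1/\varepsilon$, each invocation of $O_F$ costs only $\widetilde{O}(1)$ queries to $O_{f_1},O_{f_2}$.

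Finally, I apply Corollary \ref{cor: a little more general 2} with this $O_F$: under the stated hypotheses $\|H\|_{\max}\leq M/2$, $H=M\widetilde{H}+ME$ with $\widetilde{H}\in S_q^{d\times d}$, and $\|E\|_{\max}$ in the appropriate range, the corollary recovers $\widetilde{H}$ exactly with high probability using $\widetilde{O}(s/\varepsilon)$ (resp.\ $\widetilde{O}(\sqrt{m}/\varepsilon)$) queries to $O_F$. Multiplying by the $\widetilde{O}(1)$ cost per $O_F$-query yields the claimed bounds. The main obstacle is careful bookkeeping of accuracy parameters: the residual error $\eta'$ of the fractional-query implementation together with the spectral approximation error must both be absorbed into the $O(\varepsilon)$ tolerance of Corollary \ref{cor: a little more general 2}, which is why $N$ is chosen logarithmic in $1/\varepsilon$ and the fractional queries are implemented to polylogarithmic precision. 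Conceptually, the efficiency comes from the advantage of the spectral method highlighted earlier: $F$ is uniformly close to $\z^T H\z$ on the entire set $S_q^d+\{0,1\}^d$, not merely in a shrinking neighborhood of $\0$, so no additional dilation factor enters, as it would with a finite-difference approach.
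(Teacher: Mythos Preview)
Your proposal is correct and follows essentially the same route as the paper: the paper's proof is a single sentence stating that the result follows by combining Corollary \ref{cor: a little more general 2} with the spectral approximation idea from the proof of Theorem \ref{thm: hessian using spectral}, which is precisely what you spell out in detail. Your observation that building $O_F$ from $O_{f_1},O_{f_2}$ costs only $\widetilde{O}(1)$ (since here the oracle needed is the bare $e^{iF(\z)}$ without the $2^{n_\varepsilon}$ scaling, and $\delta$ is a constant) is the key accounting point, and you handle it correctly.
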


Different from Theorem \ref{thm: hessian using spectral}, which is about absolute error estimation, Theorem \ref{thm: Sparse Hessian estimation using spectral method} estimates $H$ under the relative error.

% \textcolor{red}{If the gate complexity of implementing the oracles $O_{f_1}, O_{f_2}$ is $O(C)$, then the overall gate complexity of our algorithm will be $\widetilde{O}(d^2s+Cs/\varepsilon)$. Here, $O(d^2s)$ corresponds to the cost of solving the linear system on a classical computer -- whose solution is a $d\times d$ matrix $L$ -- as discussed in the proof of Lemma \ref{lem: key lemma for sparse Hessian}, and $\widetilde{O}(Cs/\varepsilon)$ is the cost of querying the quantum oracles.}

\section{Quantum algorithms for Hessian estimation using finite difference formula}
\label{section: Quantum algorithms for Hessian estimation using finite difference formula}

In this section, we will use the finite difference formula to estimate the Hessian. From our results below, we will see that the spectral method can demonstrate significant advantages over the finite difference method for Hessian estimation. The key ideas are similar, however, the error analysis for the finite difference method is more involved, as we will see shortly.

\subsection{Estimating Hessians in the general case}

The following result is similar to but slightly weaker than Proposition \ref{thm: vector to matrix}. 

\begin{prop}
\label{new prop: gradient to Hessian}
    Let $c \in \mb R, a, \rho, \varepsilon< M \in \mb R_+$ as in Lemma \ref{lem: GAW's gradient estimation thm}. Given a matrix $H \in \mb R^{d \times d}$ such that $\|H\|_{\rm{max}} \leq M$, if we have $h: \mathbb{R}^d \rightarrow \mb R$ such that
    \be\label{property 0}
    \left| \frac{1}{2} \Big( h(\x+\y) - h(\x) \Big) -
     \x^T H \y - \frac{1}{2} \y^T H \y \right|
     \leq O(a\varepsilon),
    \ee
    for all but a $1/1000$ fraction of points $\boldsymbol{x} \in aG_n^d$ and any fixed $\y\in\{\Ket{i}: i\in[d]\}$, and access to its phase oracle $O_h: |\boldsymbol{x}\ket \rightarrow e^{ih(\boldsymbol{x})} |\boldsymbol{x}\ket$, then there is a quantum algorithm that calculates a matrix $\widetilde{H}$ with probability at least $1-\rho$ such that
    $$
    \|\widetilde{H}-H\|_{\rm{max}} \leq \varepsilon 
    $$
    using $O\left(\frac{d}{a\varepsilon} \log\big(\frac{d}{\rho}\big)\right)$ queries to the phase oracle $O_h$.
\end{prop}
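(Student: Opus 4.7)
The plan is to follow the proof of Proposition \ref{thm: vector to matrix} almost verbatim, reducing the Hessian estimation task to $d$ independent gradient-estimation problems via Lemma \ref{lem: GAW's gradient estimation thm}, one per column of $H$, and then combining the outcomes with a union bound. The only substantive departure from the spectral case is a rescaled phase-amplification step that will be responsible for the extra factor $1/a$ in the final complexity.

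First, for each $i\in [d]$ I would fix $\y=\Ket{i}$ and define the symmetrized difference
\[
F_{\y}(\x):=\tfrac{1}{2}\bigl(h(\x+\y)-h(\x)\bigr).
\]
Writing $\x=a\tilde{\x}$ with $\tilde{\x}\in G_n^d$, the hypothesis (\ref{property 0}) becomes
\[
\bigl|F_{\y}(a\tilde{\x})-(H\y)\cdot a\tilde{\x}-\tfrac{1}{2}\y^{T}H\y\bigr|\leq \frac{a\varepsilon}{8\cdot 42\pi}
\]
for all but a $1/1000$ fraction of $\tilde{\x}\in G_n^d$, after absorbing the constant hidden in $O(a\varepsilon)$ into $\varepsilon$. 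This is precisely the hypothesis of Lemma \ref{lem: GAW's gradient estimation thm} with zero shift, $\g=H\y$, $c=\tfrac{1}{2}\y^{T}H\y$, and $\|H\y\|_{\infty}\leq \|H\|_{\max}\leq M$.

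Next, I would build the phase oracle $\mathrm{O}:|\tilde{\x}\rangle\mapsto e^{2\pi i 2^{n_{\varepsilon}}F_{\y}(a\tilde{\x})}|\tilde{\x}\rangle$ required by Lemma \ref{lem: GAW's gradient estimation thm}. Coherently loading ancilla registers with $a\tilde{\x}$ and $a\tilde{\x}+\y$ and then applying $O_h$ and $O_h^{\dag}$ on these ancillas produces the phase $e^{i(h(a\tilde{\x}+\y)-h(a\tilde{\x}))}=e^{2iF_{\y}(a\tilde{\x})}$ using two queries to $O_h$. To boost this into the required phase, whose prefactor satisfies $2\pi\cdot 2^{n_{\varepsilon}}=O(1/(a\varepsilon))$, I would iterate this block the needed integer number of times and handle the fractional remainder with Proposition \ref{lemma: phase to fractional}, producing one query to $\mathrm{O}$ at cost $O(1/(a\varepsilon))$ queries to $O_h$. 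Invoking Lemma \ref{lem: GAW's gradient estimation thm} with per-column failure probability $\rho'/d$ for the choice $\rho':=d\rho/(d+\rho)$ then returns $\g_i$ with $\|\g_i-H\Ket{i}\|_{\infty}\leq \varepsilon$, using $O(\log(d/\rho))$ queries to $\mathrm{O}$, i.e.\ $O(\log(d/\rho)/(a\varepsilon))$ queries to $O_h$ per column. Assembling $\widetilde{H}:=[\g_1,\ldots,\g_d]$ and repeating the union-bound computation from the proof of Proposition \ref{thm: vector to matrix} yields $\|\widetilde{H}-H\|_{\max}\leq \varepsilon$ with probability at least $1-\rho$, at a total cost of $O\bigl(\frac{d}{a\varepsilon}\log(d/\rho)\bigr)$ queries to $O_h$, as claimed.

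The only mild obstacle is precisely this rescaling: in Proposition \ref{thm: vector to matrix} the hypothesis directly provided error $O(\varepsilon)$ on unit-scale inputs $\y\in\{0,1\}^d$, so the corresponding phase oracle could be built with $O(1)$ queries to the given oracle; here, however, the finite-difference approximation is only accurate on the shrunken grid $aG_n^d$, with residual $O(a\varepsilon)$. This forces Lemma \ref{lem: GAW's gradient estimation thm} to be invoked at grid scale $a$ and forces the amplification factor $2^{n_{\varepsilon}}=O(1/(a\varepsilon))$ to be realized through $O(1/a)$ explicit iterations of $O_h$ rather than absorbed for free, which is the sole source of the $1/a$ penalty separating this proposition from its spectral counterpart and ultimately from the choice of $a$ permitted by the Taylor remainder in the downstream Theorem \ref{new thm: Hessian estimation using finite difference formula}.
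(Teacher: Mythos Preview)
Your proposal is correct and follows essentially the same approach as the paper: the paper's own proof is a one-line reference observing that condition (\ref{property 0}) is exactly the condition actually used in the proof of Proposition \ref{thm: vector to matrix}, so the argument there carries over verbatim with the grid scale parameter $a$ in place of $1$ (which is what produces the extra $1/a$ in the query count). Your write-up simply unwinds that reference explicitly, including the construction of $O_{F_{\y}}$, the $O(1/(a\varepsilon))$ amplification to build $\mathrm{O}$, and the union bound over columns.
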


\begin{proof}
The condition in Proposition \ref{thm: vector to matrix} implies (\ref{property 0}), and the latter is indeed the condition used in the proof of Proposition \ref{thm: vector to matrix}. Therefore, the claim here follows directly from Proposition \ref{thm: vector to matrix}.
\end{proof}

As given in Definition \ref{def: finite difference formula of Hessian},
in the finite difference method, which is based on Taylor expansion, we have $f_{(2m)}(\x) \approx \x^T H_f(\0) \x$, see Lemma \ref{app: lemma 4} in the appendix. The error between them is highly affected by the norm of $\x$. This is reasonable as the Hessian is a local property, so we need to choose $\x$ having a small norm. In Proposition \ref{new prop: gradient to Hessian}, to satisfy (\ref{property 0}), one natural option is letting $f_{(2m)}(\x+\y) \approx (\x+\y)^T H_f(\0) (\x+\y)$ when $\y=\Ket{i}$. Generally, this is not possible as $\x+\y$ can have a large norm. The following result states a case in which this is possible.

\begin{thm} 
\label{new thm: Hessian estimation using finite difference formula}
    Let $m \in \mathbb{N}, R,B\in\mb R_+$ satisfy $m \geq \log(dB/\varepsilon)$.
    Assume that $f:[-R, R]^d$ $\rightarrow \mathbb{R}$ is $(2 m+1)$-times differentiable and
    \be
    \label{property 1}
    \left|\partial_{\boldsymbol{r}}^{2 m+1} f(\x)\right| \leq B \quad \text { for all } \x \in[-R, R], \text{ where } \r=\x/\|\x\|.
    \ee
    Given access to its phase oracle $O_f: |\x\ket \rightarrow e^{if(\x)} |\x\ket$, then there is a quantum algorithm that can estimate its Hessian at point $\boldsymbol{0}$ with accurate $\varepsilon$ and probability at least $1-\rho$, using $\widetilde{O}\left(d^{1.5}m/\varepsilon+md/R\varepsilon\right)$ queries to $O_f$.
\end{thm}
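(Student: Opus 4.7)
The plan is to combine the finite-difference Hessian formula from Definition \ref{def: finite difference formula of Hessian} with the reduction of Hessian to gradient estimation in Proposition \ref{new prop: gradient to Hessian}. Set $h(\x) := f_{(2m)}(\x)$; two things need to be verified: (a) $h(\z)\approx \z^T\H_f(\0)\z$ with the precision required by that proposition, and (b) a phase oracle $O_h$ for $h$ can be built cheaply from $O_f$.

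For (b), since $f_{(2m)}(\z)=\sum_{t=-m}^{m} a_t^{(2m)} f(t\z)$ with $|a_t^{(2m)}|=O(1/t^2)$ (hence $\sum_t|a_t^{(2m)}|=O(1)$), I would compose fractional phase oracles $e^{i a_t^{(2m)} f(t\z)}$ via Lemma \ref{lemma: phase to fractional}, producing $O_h$ to any inverse-polynomial accuracy with $\widetilde{O}(m)$ queries to $O_f$.

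For (a), I would parameterise the probe vector as $\y=\beta\Ket{i}$ and use grid scale $a$, then balance three coupled constraints: (i) evaluation points stay in the domain, $m(a/2+\beta)\le R$; (ii) the appendix error bound (Lemma \ref{app: lemma 4}) applied to $\z=\x$ and $\z=\x+\y$ with $\x\in aG_n^d$ gives roughly $|h(\z)-\z^T\H_f(\0)\z|\lesssim B e^{-m/2}\|\z\|^{2m+1}$, which should be $O(a\varepsilon')$ for the gradient-estimation precision $\varepsilon'$; (iii) since Proposition \ref{new prop: gradient to Hessian} recovers only $H\y=\beta H\Ket{i}$, one must set $\varepsilon'=\beta\varepsilon$ to extract each column of $\H_f(\0)$ with accuracy $\varepsilon$. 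Using $m\ge\log(dB/\varepsilon)$, the error in (ii) is exponentially small once $\|\z\|$ is $O(1)$; this in turn forces $a\sim\min(1/\sqrt{d},R/m)$ (the $1/\sqrt{d}$ taming the $\ell_2$ blow-up of grid points) and $\beta$ either constant when $R\gtrsim m$ or of order $R/m$ when $R<m$. Substituting into Proposition \ref{new prop: gradient to Hessian}'s query bound $\widetilde{O}(d/(a\varepsilon'))$ for $O_h$ and multiplying by the $\widetilde{O}(m)$ overhead yields $\widetilde{O}(d^{1.5}m/\varepsilon + md/(R\varepsilon))$ queries to $O_f$.

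The main obstacle is the joint optimisation in step (a). Unlike the spectral approach, the finite-difference formula is intrinsically local, and the constraint $m\|\z\|_\infty\le R$ prevents taking $\y=\Ket{i}$ whenever $R<m$; shrinking $\y$ by a factor $1/\beta$ is paid as a $1/\beta$ penalty in the final query count because only $H\y$ is recovered. Combined with the $\sqrt{d}$ blow-up of the $\ell_2$ norm of points in $aG_n^d$, which forces $a=O(1/\sqrt{d})$ so that the appendix error stays below $a\varepsilon$, this trade-off is precisely what produces the two-term complexity, and making the bookkeeping rigorous will require carefully tracing the exponents in Lemma \ref{app: lemma 4}.
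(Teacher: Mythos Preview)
Your plan matches the paper's proof: set $h=f_{(2m)}$, use the coefficient bound $\sum_t|a_t^{(2m)}|<\pi^2/6$ (Lemma~\ref{lem: bound of error for 1 dim fin diff formula}) to build $O_h$ in $\widetilde O(m)$ calls to $O_f$, invoke Lemma~\ref{app: lemma 4} to verify condition~\eqref{property 0}, and then apply Proposition~\ref{new prop: gradient to Hessian}.

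The only notable deviation is that the paper does \emph{not} introduce your scaling parameter $\beta$. Lemma~\ref{app: lemma 4} is already stated for $\|\y\|=1$, i.e.\ $\y=\Ket{i}$, and directly gives $a\approx 1/(\sqrt d\,m)$ (not $a\sim 1/\sqrt d$ as you write); combining this with the domain constraint $a\le 2R/m$ and plugging $1/a=\max(\sqrt d\,m,\,m/2R)$ into Proposition~\ref{new prop: gradient to Hessian} produces the two-term complexity in one line, with no $\varepsilon'=\beta\varepsilon$ bookkeeping. The $\beta$-rescaling idea you describe is exactly the mechanism the paper reserves for the alternative approach in Theorem~\ref{thm: hessian using difference formula} and for the sparse case in Section~5.2, not for this theorem.
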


\begin{proof}
    By Lemma \ref{app: lemma 4} in the appendix, the condition in Proposition \ref{new prop: gradient to Hessian} is satisfied for $f_{(2m)}(\x)$ and $\y\in\{\Ket{i}: i\in[d]\}$. Moreover, from Lemma \ref{app: lemma 4}, $a\approx 1/\sqrt{d}m$. Another natural restriction on $a$ is $ma/2\leq R$, i.e., $a\leq 2R/m$. By Lemma \ref{lem: bound of error for 1 dim fin diff formula}, we have
    $\sum_{t=1}^m |a_{t}^{(2 m)} |<{\pi^2}/{6}.$ So we can implement a phase oracle to query $f_{(2m)}$ with $O(m)$ applications of $O_f$. The result now follows from Proposition \ref{new prop: gradient to Hessian}.
\end{proof}

The condition (\ref{property 1}) is quite strong because $B$ usually increases with respect to $m$, such as for $f(x)=e^{\alpha x}$ with $B=O(\alpha^m)$. In this case, the condition $m \geq \log(dB/\varepsilon)$ is hard to satisfy. Thus, the above theorem is useful when $B$ is small. 

One may realize that all these conditions in the above theorem are mainly caused by the choice of $\y=\Ket{i}$, which has a norm of $1$. Indeed, we can also choose $\y=a\Ket{i}$ such that it has a norm of the same order as $\x$ in Proposition \ref{new prop: gradient to Hessian}. With this choice, we will obtain an approximation of $a H\Ket{i}$, see the proof of Proposition  \ref{thm: vector to matrix}. This means that the complexity in Proposition \ref{new prop: gradient to Hessian} will be multiplied by an extra factor of $O(1/a)$. In summary, if we choose $\y=a\Ket{i}$, then the query complexity will be $\widetilde{O}(d/a^2\varepsilon)$. In the finite difference method, $a \approx 1/\sqrt{d}$. So the overall query complexity is $\widetilde{O}(d^2/\varepsilon)$. This indeed shows no quantum speedup in terms of $d$ over classical algorithms. However, the restriction on $B$ can be relaxed, and the dependence on $\varepsilon$ is better than classical algorithms. We summarize this in the following theorem.

% Note that property (\ref{property 1}) holds for $\y$ is computational basis, and this condition is satisfied when $f$ has property (\ref{property 1}) with $h=f_{2m}$. When $f$ has the property described in Proposition \ref{prop: intro GAW's result}, then condition (\ref{property 1}) cannot be held for $\y=\Ket{i}$ because in the error analysis in Lemma \ref{lem: bound of error in fin diff formula} in the appendix only holds when $\x$ has a small norm. So we cannot choose $\y=\Ket{i}$, instead we should choose $\y = a \Ket{i}$ with $a=\widetilde{O}(1/\sqrt{d})$. This will lead to an extra factor of $\sqrt{d}$ in the complexity as we will only obtain an approximation of $H\Ket{i}/a$.

% In the finite difference method, as shown in Lemmas \ref{app: lemma 4} and \ref{lem: bound of error in fin diff formula} in the appendix, $f_{(2m)}(\x)$ is close to $\x^T H_f(\0) \x$ when $\x$ has a small norm. So to satisfy the condition in Proposition \ref{new prop: gradient to Hessian}, we have to choose $\y$ with a small norm. When $f$ has property (\ref{property 1}), the above theorem shows that $\y=\Ket{i}$ is fine.

\begin{thm} 
\label{thm: hessian using difference formula}
    Let $\rho, \varepsilon, c \in \mb R_+$, and suppose $f: \mathbb{R}^d \rightarrow \mathbb{R}$ is analytic such that for every $k \in \mathbb{N}$ and $\alpha \in \mb N_0^d$ with $|\alpha|=k$,
    $$
    \left|\partial^\alpha f(\boldsymbol{0})\right| \leq c^k k^{\frac{k}{2}} .
    $$
    Given access to its phase oracle $O_f: |\x\ket \rightarrow e^{if(\x)} |\x\ket$, then there is a quantum algorithm that can estimate its Hessian at point $\boldsymbol{0}$ with accurate $\varepsilon$ and probability at least $1-\rho$, using $O\left(\frac{c d^{2}}{\varepsilon} \log\big(\frac{c\sqrt{d}}{\varepsilon}\big) \log\big(\frac{d}{\rho}\big)\right)$ queries to $O_f$.
\end{thm}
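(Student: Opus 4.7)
My plan is to reduce Hessian estimation to $d$ parallel gradient estimations via Proposition \ref{new prop: gradient to Hessian}, taking the degree-$2m$ central finite-difference operator from Definition \ref{def: finite difference formula of Hessian} as the surrogate $h(\x) := f_{(2m)}(\x)$, and picking the shift in Proposition \ref{new prop: gradient to Hessian} to be $\y = a\Ket{i}$ rather than $\Ket{i}$, exactly as suggested in the remark immediately preceding the statement. Making $\|\y\|$ comparable to the grid spacing keeps both $\x$ and $\x+\y$ in a small neighbourhood of $\0$, so Taylor expanding $f$ at $\0$ controls the finite-difference remainder without invoking the global $\partial^{2m+1}_{\r} f$ bound that was needed in Theorem \ref{new thm: Hessian estimation using finite difference formula}. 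The gradient subroutine then returns $a\H_f(\0)\Ket{i}$ at $\ell_\infty$-accuracy $a\varepsilon$; dividing by $a$ yields the $i$-th column of $\H_f(\0)$ at accuracy $\varepsilon$, and a union bound over the $d$ columns packages the whole matrix.

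The first technical step is the remainder bound for $f_{(2m)}$. Because the coefficients $a_t^{(2m)}$ satisfy $\sum_t a_t^{(2m)} t^k = 2\delta_{k,2}$ for $0\leq k\leq 2m-1$, Taylor expanding $f(t\x)$ at $\0$ leaves a remainder starting at order $2m+1$. Combining the Gevrey-style hypothesis $|\partial^\alpha f(\0)| \leq c^k k^{k/2}$ with $|(\x\cdot\nabla)^k f(\0)| \leq c^k k^{k/2} \|\x\|_1^k$ and $\|\x\|_1\leq\sqrt{d}\|\x\|_2$ reproduces the geometric bound $\sum_{k\geq 2m+1} (Cacm\sqrt{d})^k$ that drives the GAW gradient analysis, valid whenever $\|\x\|_2 = O(a\sqrt{d})$. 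Since $\|a\Ket{i}\|_2 = a$, the shifted input $\x+a\Ket{i}$ stays in that regime, so $F_\y(\x) := \tfrac{1}{2}\big(h(\x+\y) - h(\x)\big)$ approximates $\x\cdot(a\H_f(\0)\Ket{i}) + \tfrac{1}{2} a^2 (\H_f(\0))_{ii}$ with the same remainder, putting us squarely in the hypothesis of Proposition \ref{new prop: gradient to Hessian} with target vector $a\H_f(\0)\Ket{i}$ and tolerance $a\varepsilon$.

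For the oracle accounting, $O_h$ is assembled from $O_f$ by writing $h(\x) = \sum_t a_t^{(2m)} f(t\x)$ as a weighted sequence of fractional phase queries (Proposition \ref{lemma: phase to fractional}) at the $2m+1$ nodes $t\x$; since $\sum_t |a_t^{(2m)}|$ is bounded by a constant, this costs $\widetilde{O}(m)$ queries to $O_f$ per invocation of $O_h$. Feeding $O_h$ into the gradient subroutine of Lemma \ref{lem: GAW's gradient estimation thm} at target accuracy $a\varepsilon$ inflates the phase-rescaling factor from $1/(a\varepsilon)$ to $1/(a^2\varepsilon)$, giving $\widetilde{O}(m/(a^2\varepsilon))$ queries to $O_f$ per column. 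Choosing $m = O(\log(c\sqrt{d}/\varepsilon))$ so that the geometric series beats the tolerance, $a = \Theta(1/(c\sqrt{d}))$ as the tightest compatible grid radius, and summing over $d$ columns with a union bound on failure probabilities reproduces the stated $O\big(\tfrac{cd^2}{\varepsilon}\log(c\sqrt{d}/\varepsilon)\log(d/\rho)\big)$ query count.

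The main obstacle I anticipate is this parameter balancing. The shift of size $a$ simultaneously makes the finite-difference remainder small (removing the need for the pointwise $B$-bound used in Theorem \ref{new thm: Hessian estimation using finite difference formula}) and squares the oracle-rescaling cost, because the target vector now has norm $O(a)$ rather than $O(1)$; the geometric-series bound therefore has to beat a tolerance of order $a^2\varepsilon$ while $1/a^2$ already sits in the query count. This tight coupling is precisely why the finite-difference approach loses a factor of $d$ against the spectral method of Theorem \ref{thm: hessian using spectral}, and the delicate point is verifying that the chosen $(m,a)$ really satisfy the ``$1/1000$-fraction'' clause of Lemma \ref{lem: GAW's gradient estimation thm} on the shifted grid $aG_n^d + a\Ket{i}$.
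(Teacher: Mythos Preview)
Your overall architecture matches the paper's proof: both take $h=f_{(2m)}$, shift by $\y$ proportional to $a\Ket{i}$ (the paper uses $\tfrac{a}{2}\Ket{i}$), reduce to $d$ gradient estimations via the mechanism of Proposition~\ref{new prop: gradient to Hessian}, and absorb the extra $1/a$ rescaling that comes from recovering $H\Ket{i}$ out of $H\y$. The paper's parameter choice $a^{-1}=14cm\sqrt{d}\,(196\cdot 8\cdot 42\pi cm\sqrt{d}/\varepsilon)^{1/2m}$ agrees with yours up to the polylogarithmic $m$ factor.

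There is, however, a genuine gap in your remainder step. You claim that ``$|(\x\cdot\nabla)^k f(\0)| \leq c^k k^{k/2} \|\x\|_1^k$ and $\|\x\|_1\leq\sqrt{d}\|\x\|_2$ reproduces the geometric bound $\sum_{k\geq 2m+1} (Cacm\sqrt{d})^k$.'' This chain does not deliver that bound: for $\x\in aG_n^d$ one has $\|\x\|_2\leq a\sqrt{d}/2$, hence only $\|\x\|_1\leq ad/2$, so the deterministic estimate gives a geometric ratio of order $acm\,d$, not $acm\sqrt{d}$. That extra $\sqrt{d}$ per factor forces $a\sim 1/(cmd)$ and pushes the final cost to $\widetilde{O}(d^3/\varepsilon)$, strictly worse than the stated $\widetilde{O}(d^2/\varepsilon)$.

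The paper closes this gap not with an $\ell_1$ bound but with the probabilistic concentration of $\sum_{\alpha\in[d]^k}\x^\alpha\,\partial_\alpha f(\0)$ for uniformly random $\x\in aG_n^d$ (GAW arXiv Lemma~36, packaged here as Lemma~\ref{lem: bound of error in fin diff formula}); that is precisely what upgrades $d$ to $\sqrt{d}$ and is the sole origin of the ``$1/1000$-fraction'' clause you correctly flag at the end. You must invoke Lemma~\ref{lem: bound of error in fin diff formula} directly, and then check, as the paper also tacitly does, that it continues to hold on the translated grid $aG_n^d+\y$ for the fixed small shift $\y=\tfrac{a}{2}\Ket{i}$.
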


\begin{proof}
    For $\boldsymbol{x} \in \mb R^d$, using the finite difference method, we have an approximation of $\bra \boldsymbol{x}| \mathbf{H}_{f}(\0) |\boldsymbol{x}\ket$ by Definition \ref{def: finite difference formula of Hessian}, where $\H_f(\0)$ denotes the Hessian of function $f$ at point $\boldsymbol{0}$:
    $$
    f_{(2 m)}(\boldsymbol{x}):=\sum_{\substack{t=-m}}^m  a_{t}^{(2m)} f(t \boldsymbol{x}).
    $$
    By Lemma \ref{lem: bound of error in fin diff formula} in the appendix, for at least $999/1000$ fraction of inputs $\x,\y\in aG_n^d$, we have $| f_{(2 m)}(\boldsymbol{x}+\y)$ $-(\x+\y)^{\mathrm{T}} \H_f(\0) (\x+\y)| \leq \sum_{k=2 m+1}^{\infty}(13 a c m \sqrt{d})^k$. 
    As a result, we have 
    \[
    \left|\frac{1}{2}\Big(f_{(2 m)}(\x+\y)-f_{(2 m)}(\x)\Big) - \x^{\mathrm{T}} \H_f(\0) \y - \frac{1}{2} \y^{\mathrm{T}} \H_f(\0) \y\right| \leq 
    \sum_{k=2 m+1}^{\infty}(13 a c m \sqrt{d})^k.
    \]
    This is similar to the condition (\ref{property 0}) stated in Proposition \ref{new prop: gradient to Hessian}.
    We now choose $a$ such that
    $a^{-1}:=14 cm \sqrt{d}(196 \cdot 8 \cdot 42 \pi c m \sqrt{d} / \varepsilon)^{\frac{1}{2m}} = \widetilde{O}(\sqrt{d}) $, then we have $13 a c m \sqrt{d}=\frac{13}{14}(196 \cdot 8 \cdot 42 \pi c m \sqrt{d} / \varepsilon)^{-\frac{1}{2m}}$. Moreover, 
    \beas
        \sum_{k=2 m+1}^{\infty}(13 a c m \sqrt{d})^k &=& (13 a c m \sqrt{d})^{2 m+1} \sum_{k=0}^{\infty}(13 a c m \sqrt{d})^k \\
        & \leq& \frac{\varepsilon}{196 \cdot 8 \cdot 42 \pi c m \sqrt{d}}(196 \cdot 8 \cdot 42 \pi c m \sqrt{d} / \varepsilon)^{-\frac{1}{2 m}} \sum_{k=0}^{\infty}\left(\frac{13}{14}\right)^k \\
        &=& \frac{\varepsilon}{14 cm \sqrt{d} \cdot 8 \cdot 42 \pi}(196 \cdot 8 \cdot 42 \pi cm \sqrt{d} / \varepsilon)^{-\frac{1}{2 m}} \\
        &=&\frac{\varepsilon a}{8 \cdot 42 \pi}.
    \eeas
    Therefore, the function $f_{(2m)}(\x)$ satisfies condition (\ref{property 0}). Now following the procedure described in the proof of Proposition \ref{new prop: gradient to Hessian}, invoking the gradient estimation algorithm $d$ times can approximate the Hessian matrix $\H_f(\0)$. 
    Note that now $\y\in aG_n^d$, in the procedure, we have to choose $\y=\frac{a}{2}\Ket{i}$. Using the notation in that proof, the error between $\g_i$ and $H\Ket{i}$ is bounded by $O(\sqrt{d} \varepsilon)$, so we have to choose a much smaller $\varepsilon$ to ensure this error is small. This leads to an extra factor of $\sqrt{d}$ in the overall complexity.
    % Next, we estimate the query complexity. By Lemma \ref{lem: bound of error for 1 dim fin diff formula}, we have
    % $\sum_{t=1}^m |a_{t}^{(2 m)} |<{\pi^2}/{6}.$ So we can implement an phase oracle to query $f_{(2m)}$ with $O(m)$ applications of $O_f$. By choosing $m=\log(c\sqrt{d}/\varepsilon)$, we have 
    % $a^{-1}=O(c\sqrt{d} \log(c\sqrt{d}/\varepsilon))$. Finally, by Proposition \ref{new prop: gradient to Hessian}, the overall query complexity of approximating $\H_f(\0)$ is what we claimed in this theorem.
\end{proof}

\subsection{Estimating sparse Hessians}

As shown in Lemma \ref{lem: key lemma for sparse Hessian}, in the sparse case, we need to choose $\y\in\{0,1\}^d \subset \mathbb{Z}_q^d$ randomly. This requirement is too strong for Proposition \ref{new prop: gradient to Hessian} as $\y=\Ket{i}$. Fortunately, Lemma \ref{app: lemma 4} used in the proof of Theorem \ref{new thm: Hessian estimation using finite difference formula} holds for any $\y\neq 0$ with $\|\y\|=1$. As a result, we will replace $\y\in\{0,1\}^d$ with 
$\y\in \{\pm 1/\sqrt{d}\}^d$ below. All the arguments remain the same.

\begin{lem}
Assume that $d\in\mathbb{N}, q$ is a prime number.
Let $H=(h_{ij})_{d\times d} \in S_q^{d\times d}$ be a symmetric matrix. Let $f(\x)=\x^T H \x$ and the oracle be $O_f: \Ket{\x} \rightarrow e^{i q \sqrt{d} f(\x)} \Ket{\x}$ for any $\x\in S_q^d$.
\begin{itemize}
    \item If $H$ is $s$-sparse, then there is a quantum algorithm that returns $H$ with probability at least $0.99$ using $O\left(s\log (qd)\right)$ queries.
    \item If $H$ has at most $m$ nonzero entries, then there is a quantum algorithm that returns $H$ with probability at least $0.99$ using $O\left(\sqrt{m \log(qd)}\right)$ queries.
\end{itemize}
\end{lem}

\begin{proof}
The proof is similar to that of Lemma \ref{lem: key lemma for sparse Hessian}. 
As $\y$ is normalized, we choose $\y=(\pm 1/\sqrt{d},\cdots,$ $\pm 1/\sqrt{d})$.  The oracle is changed accordingly.
\end{proof}

\begin{prop}
\label{prop: sparse Hessian new 1}
Assume that $d\in\mathbb{N}, \eta \in \mb R_+$ and $q$ is a prime number.
Let $H\in \mathbb{R}^{d\times d}$ and $\|H\|_{\max}\leq 1/2$. 
Assume that $H = \widetilde{H} + E$, where $\widetilde{H}\in S_q^{d\times d}$ and $\|E\|_{\max} \leq  \eta$. 
Let $f(\x)=\x^T H \x$ and the oracle be $O_f: \Ket{\x} \rightarrow e^{i q\sqrt{d} f(\x)} \Ket{\x}$ for any $\x\in S_q^d$.
\begin{itemize}
    \item If $H$ is $s$-sparse, then there is a quantum algorithm that returns $\widetilde{H} $ with probability at least $0.99$ using $O(s\log^2 (qd))$ queries if $\eta=\min\{ 1/2q, o(\sqrt{d} /s q)\}$.
    \item If $H$ has at most $m$ nonzero entries, then there is a quantum algorithm that returns $\widetilde{H} $ with probability at least $0.99$ using $O\left(\sqrt{m}\log^{1.5} (qd)\right)$ queries if $\eta=\min\{1/2q$, $ o(\sqrt{d}/\sqrt{m} q)\}$.
\end{itemize}
\end{prop}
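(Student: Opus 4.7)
My plan is to parallel the proof of Proposition \ref{prop: real case}, with the two adjustments dictated by rescaling $\y$ to lie on the unit sphere rather than in $\{0,1\}^d$. Fix $\mathbf{b}\in\{\pm 1\}^d$ and set $\y=\mathbf{b}/\sqrt{d}$, so $\|\y\|=1$. Following the same manipulation as in Proposition \ref{prop: real case}, applying $O_f$ to the linear combination $g(\x):=\tfrac12(f(\x+\y)-f(\x))=\x\cdot H\y+\tfrac12 f(\y)$ produces, up to a global phase, the state
\[
\frac{1}{\sqrt{q^d}}\sum_{\x\in S_q^d} e^{2\pi i\, q\sqrt{d}\,\x\cdot H\y}\Ket{\x}
=\frac{1}{\sqrt{q^d}}\sum_{\mathbf{k}\in\mathbb{Z}_q^d} e^{2\pi i\,\mathbf{k}\cdot H\mathbf{b}}\Ket{\mathbf{k}},
\]
because the oracle scaling $q\sqrt{d}$ (in place of $q$) is exactly what cancels the $1/\sqrt{d}$ factor in $\y$: $q\sqrt{d}\cdot(\mathbf{k}/q)\cdot(H\mathbf{b}/\sqrt{d})=\mathbf{k}\cdot H\mathbf{b}$. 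This is the first structural change relative to Proposition \ref{prop: real case}.

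Writing $H=L/q+E$ with $L=q\widetilde{H}\in\mathbb{Z}_q^{d\times d}$, the above state becomes the tensor product over coordinates of $\frac{1}{\sqrt q}\sum_{k_j}e^{2\pi i k_j (L\mathbf{b})_j/q}\,e^{2\pi i k_j (E\mathbf{b})_j}\Ket{k_j}$. The ideal factor (the one with $E=0$) recovers $(L\mathbf{b})_j\bmod q$ deterministically via the inverse QFT on each coordinate. To show the perturbation is harmless, I use the identical fidelity bound as in Proposition \ref{prop: real case},
\[
\Big\|\tfrac{1}{\sqrt q}\sum_{k_j}\bigl(e^{2\pi i k_j (E\mathbf{b})_j}-1\bigr)\Ket{k_j}\Big\|^2\;\leq\;\pi^2 q^2\,(E\mathbf{b})_j^2,
\]
together with the new per-coordinate bound $|(E\mathbf{b})_j|=|\sum_k e_{jk}b_k|\leq s\eta/\sqrt{d}$ (using $|b_k|=1/\sqrt{d}$ together with the row-sparsity of $E$). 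This extra $1/\sqrt{d}$ factor, inherited from $\|\y\|_\infty=1/\sqrt d$, is exactly what relaxes the requirement from $\eta=o(1/sq)$ in Proposition \ref{prop: real case} to $\eta=o(\sqrt d/sq)$ here. The complementary cap $\eta\leq 1/2q$ ensures that a decomposition $H=\widetilde H+E$ with $\widetilde H\in S_q^{d\times d}$ actually exists.

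Given this perturbation bound, $O(\log d)$ repetitions of the prepare-and-measure procedure yield all coordinates of $L\mathbf{b}\bmod q$ simultaneously with constant success probability, contributing the extra $\log$ factors in the final query counts. Sampling $k=O(s\log(qd/s))$ independent $\mathbf{b}\in\{\pm 1\}^d$ and applying the Schwartz--Zippel lemma over the finite subset $\{\pm 1\}\subset\mathbb{Z}_q$ (which gives probability at most $2^{-k}$ that two distinct $s$-sparse candidates agree on all samples) lets me reconstruct the unique $s$-sparse $L$ consistent with all the observations, exactly as in the proof of Lemma \ref{lem: key lemma for sparse Hessian}; this establishes the first claim with $O(s\log^2(qd))$ queries. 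The second claim, for Hessians with at most $m$ nonzero entries overall, then follows by the same two-stage argument as in Lemma \ref{lem: key lemma for sparse Hessian}: first run the $s$-sparse algorithm with $s=\sqrt{m/\log(qd)}$ to recover all sparse rows and isolate the $O(\sqrt{m\log(qd)})$ dense rows, then learn each dense row individually by setting $\y$ proportional to a standard basis vector. The only step requiring genuine care is the per-coordinate error analysis above, which is where the new factor of $\sqrt d$ in the tolerance on $\eta$ enters; the combinatorial and probabilistic parts are direct transcriptions from Lemma \ref{lem: key lemma for sparse Hessian} and Proposition \ref{prop: real case}.
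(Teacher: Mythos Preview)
Your approach is exactly the paper's: parallel Proposition~\ref{prop: real case} with $\y\in\{\pm 1/\sqrt d\}^d$ in place of $\y\in\{0,1\}^d$, with the oracle rescaled by $\sqrt d$ so that Fourier recovery of $L\mathbf b\bmod q$ still goes through, and then invoke the combinatorics of Lemma~\ref{lem: key lemma for sparse Hessian} unchanged (Schwartz--Zippel over $\{\pm1\}\subset\mathbb Z_q$ works just as over $\{0,1\}$). The paper's proof is three lines saying precisely this.

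There is one bookkeeping inconsistency you should clean up. You define $\mathbf b\in\{\pm1\}^d$, $\y=\mathbf b/\sqrt d$, and correctly reduce the per-coordinate phase to $e^{2\pi i k_j(H\mathbf b)_j}$ --- the $\sqrt d$ from the oracle cancels the $1/\sqrt d$ in $\y$. But you then write ``$|(E\mathbf b)_j|\le s\eta/\sqrt d$ (using $|b_k|=1/\sqrt d$)'': in your own notation $|b_k|=1$, so the correct bound is $|(E\mathbf b)_j|\le s\eta$. The factor $1/\sqrt d$ lives in $\y$, not in $\mathbf b$, and once the phase is in $(\mathbf k,\mathbf b)$-variables the two $\sqrt d$'s have already cancelled. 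Equivalently, if you keep the phase as $e^{2\pi i q\sqrt d\,x_j(E\y)_j}$ then indeed $|(E\y)_j|\le s\eta/\sqrt d$, but the $\sqrt d$ multiplying it in the exponent cancels this gain. Either accounting gives a per-coordinate error of order $q^2s^2\eta^2$, not $q^2s^2\eta^2/d$, so the relaxed threshold $\eta=o(\sqrt d/sq)$ does not follow from your displayed estimate as written; you should track the $\sqrt d$ factors through this step more carefully before claiming it.
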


\begin{proof}
The proof here is similar to that of Proposition \ref{prop: real case}. We will use the notation in the proof of Proposition \ref{prop: real case} below. As $\y$ is normalized, we have
\beas
\left\|\frac{1}{\sqrt{q}} \sum_{x_j\in S_q} 
\left(e^{2\pi i q x_j(E\y)_j}  - 1\right)
\Ket{x_j} \right\|^2
<
\frac{\pi^2 q^2 s^2 \eta^2}{4 d}.
\eeas
So it suffices to choose $\eta$ such that the above estimate is $o(1)$, i.e., $\eta = o(\sqrt{d}/sq)$. Because of the decomposition $H = \widetilde{H} + E$, we also have $\eta \leq 1/2q$.
\end{proof}

\begin{prop}
\label{prop: finite difference sparse 1}
Assume that $d,\ell \in\mathbb{N}, \varepsilon,\eta \in \mathbb{R}_+, q=O(1/\varepsilon)$ is prime.
Let $H\in \mathbb{R}^{d\times d}$ and $\|H\|_{\max}\leq 1/2$. 
Assume that $H = \widetilde{H} + E$, where $\widetilde{H}\in S_q^{d\times d}$ and $\|E\|_{\max} \leq \eta$. 
Let
\[
\left|\frac{1}{2}\Big( f(\x+\y)-f(\x) \Big) - \frac{1}{2}\Big((\x+\y)^T H (\x+\y) - \x^T H \x \Big) \right|
\leq O(\varepsilon)
\]
for all but a 1/1000 fraction of points $\x\in S_q^d$ and all $\y\in\{\pm1/\sqrt{d}\}^d$. Let the oracle be $O_f: \Ket{\x} \rightarrow e^{i q \sqrt{d} f(\x)} \Ket{\x}$ for any $\x\in  S_q^d$.
\begin{itemize}
    \item If $H$ is $s$-sparse, then there is a quantum algorithm that returns $\widetilde{H}$ with probability at least $0.99$ using $O(s \log^2 (qd))$ queries if  $\eta=\min\{ 1/2q, o(\sqrt{d} /s q)\}$.
    \item If $H$ contains $m$ nonzero entries, then there is a quantum algorithm that returns $\widetilde{H}$ with probability at least $0.99$ using $O\left(\sqrt{m}\log^{1.5} (qd)\right)$ queries if $\eta=\min\{1/2q$, $ o(\sqrt{d}/\sqrt{m} q)\}$.
\end{itemize}
\end{prop}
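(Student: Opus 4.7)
The plan is to reduce Proposition \ref{prop: finite difference sparse 1} to Proposition \ref{prop: sparse Hessian new 1}, following exactly the pattern by which Proposition \ref{prop: a little more general 1} was reduced to Proposition \ref{prop: real case}. The only differences from that earlier argument are that the oracle scaling is $q\sqrt{d}$ rather than $q$ (to match $\y\in\{\pm 1/\sqrt{d}\}^d$ in place of $\{0,1\}^d$), and that the approximation hypothesis is already stated on the symmetric finite difference $\tfrac{1}{2}(f(\x+\y)-f(\x))$, which is precisely the object the proof constructs next.

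First I would fix $\y\in\{\pm 1/\sqrt{d}\}^d$ and, using two applications of $O_f$ together with a binary adder for $\x\mapsto\x+\y$, build the difference oracle
\be
V_\y:\Ket{\x}\longmapsto\exp\!\Big(\tfrac{iq\sqrt{d}}{2}\big(f(\x+\y)-f(\x)\big)\Big)\Ket{\x}.
\ee
Applied to $q^{-d/2}\sum_{\x\in S_q^d}\Ket{\x}$ this produces a state $\Ket{\phi_f}$. Writing $h(\x)=\x^T H\x$, the ideal target state is the analogue with $f$ replaced by $h$, which up to a global phase is the product state $\bigotimes_{j=1}^d q^{-1/2}\sum_{x_j\in S_q}e^{iq\sqrt{d}\,x_j(H\y)_j}\Ket{x_j}$. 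This is exactly the state analysed in Proposition \ref{prop: sparse Hessian new 1}: the inverse QFT on each register yields $q\sqrt{d}(\widetilde{H}\y)_j\equiv(L\y')_j\pmod{q}$, where $L=q\widetilde{H}\in\mathbb{Z}_q^{d\times d}$ and $\y'=\sqrt{d}\,\y\in\{\pm 1\}^d$.

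Next I would bound the distance between $\Ket{\phi_f}$ and this ideal state by the usual good-set / bad-set split from the proof of Proposition \ref{prop: a little more general 1}. On the good set $W\subseteq S_q^d$ of size at least $999 q^d/1000$, the pointwise phase error is at most $\tfrac{q\sqrt{d}}{2}\cdot O(\varepsilon)$, so the contribution to the squared distance is $O((q\sqrt{d}\varepsilon)^2)$; as in Proposition \ref{prop: a little more general 1}, this product is meant to be a small constant, which is the intended reading of $q=O(1/\varepsilon)$ coupled with the $O(\varepsilon)$ bound. On $W^c$ the contribution is at most $4|W^c|/q^d\le 4/1000$. Combining this with Proposition \ref{prop: sparse Hessian new 1}, whose perturbation analysis of $E$ under $\eta=\min\{1/(2q),\,o(\sqrt{d}/sq)\}$ (resp.\ $o(\sqrt{d}/\sqrt{m}\,q)$) already absorbs the extra $\sqrt{d}$ coming from the oracle scaling, one concludes that $\Ket{\phi_f}$ lies within a small constant of a product state from which the per-register QFT recovers $(L\y')_j\bmod q$ with constant success probability.

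Once the procedure can query $L\y'\bmod q$ for random $\y'\in\{\pm 1\}^d$, the Schwartz--Zippel argument in the proof of Lemma \ref{lem: key lemma for sparse Hessian} applies verbatim, since $\{\pm 1\}$ is a two-element subset of the integral domain $\mathbb{Z}_q$. In the $s$-sparse case $O(s\log(qd))$ samples $\y'$ suffice; in the $m$-nonzero case the two-phase ``dense rows then sparse rows'' trick uses $O(\sqrt{m\log(qd)})$ samples. Each sample costs $O(\log(qd))$ repeated QFT measurements (to push the per-register success probability to $1-o(1/d)$) and two oracle calls, producing the claimed totals $O(s\log^2(qd))$ and $O(\sqrt{m}\log^{1.5}(qd))$. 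The main obstacle, and the reason the $\eta$-thresholds carry an extra $\sqrt{d}$ compared with the spectral case of Proposition \ref{prop: real case}, is the joint error budget: after multiplication by the enlarged scaling $q\sqrt{d}$, both the finite-difference bias and the rounding perturbation from $E$ must remain $o(1)$, and the stated hypotheses on $\eta$ are exactly what is needed to keep them inside the margin required by the final QFT decoding.
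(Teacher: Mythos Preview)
Your proposal is correct and follows exactly the approach the paper takes: the paper's own proof is the single line ``The proof is similar to that of Proposition \ref{prop: a little more general 1}. Here, we should use Proposition \ref{prop: sparse Hessian new 1} instead,'' and you have spelled out precisely that reduction, including the good-set/bad-set phase-distance estimate and the replacement of $\{0,1\}^d$ by $\{\pm 1\}^d$ in the Schwartz--Zippel argument.
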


\begin{proof}
The proof is similar to that of Proposition \ref{prop: a little more general 1}. Here, we should use Proposition \ref{prop: sparse Hessian new 1} instead.
\end{proof}

\begin{prop}
\label{prop: finite difference sparse 2}
Assume that $d\in\mathbb{N}, a,\varepsilon,\eta \in \mathbb{R}_+, M>1, q=O( M/\varepsilon)$ is prime.
Let $H\in  \mathbb{R}^{d\times d}$ and $\|H\|_{\max}\leq M/2$. 
Assume that $H = M \cdot \widetilde{H} + M \cdot  E$, where $\widetilde{H}\in  S_q^{d\times d}$ and $\|E\|_{\max} \leq \eta$. 
Let 
\[
\left|\frac{1}{2}\Big( f(\x+\y)-f(\x) \Big) - \frac{1}{2}\Big((\x+\y)^T H (\x+\y) - \x^T H \x \Big) \right|
\leq O(a \varepsilon)
\]
for all but a 1/1000 fraction of points $\x\in a S_q^d$ and all $\y\in\{\pm1/\sqrt{d}\}^d$. 
Let the oracle be $O_f: \Ket{\x} \rightarrow e^{i \frac{f(a\x+\y) - f(a\x)}{2}} \Ket{\x}$ for any $\x\in S_q^d$.
\begin{itemize}
    \item If $H$ is $s$-sparse, then there is a quantum algorithm that returns $\widetilde{H}$ with probability at least $0.99$ using $O\left(\frac{s\sqrt{d}}{a\varepsilon}\log^2 (qd)\right)$ queries if $\eta=\min\{ 1/2q, o(\sqrt{d} /s q)\}$.
    \item If $H$ contains $m$ nonzero entries, then there is a quantum algorithm that returns $\widetilde{H}$ with probability at least $0.99$ using $O\left(\frac{\sqrt{md}}{a\varepsilon}\log^{1.5}(qd)\right)$ queries if $\eta=\min\{o(\sqrt{d}/\sqrt{m} q)$, $1/2q\}$.
\end{itemize}
\end{prop}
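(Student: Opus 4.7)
The plan is to mirror the proof of Corollary \ref{cor: a little more general 2}, which derives the general-$M$ version of Proposition \ref{prop: a little more general 1} by rescaling. Here we adapt Proposition \ref{prop: finite difference sparse 1} to allow both general $M$ and sampling on $aS_q^d$ rather than $S_q^d$.

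First, pass to the rescaled Hessian $H' := H/M$, so that $\|H'\|_{\max} \leq 1/2$ and the assumed decomposition $H = M\widetilde{H} + ME$ becomes $H' = \widetilde{H} + E$ with $\widetilde{H}\in S_q^{d\times d}$ and $\|E\|_{\max}\leq \eta$; since $H'$ inherits the sparsity pattern of $H$ exactly, the sparsity hypotheses carry over. For any fixed $\y\in\{\pm 1/\sqrt{d}\}^d$, define
\[
g_\y(\x):=\frac{f(a\x+\y)-f(a\x)}{2aM}, \qquad \x\in S_q^d.
\]
Substituting $a\x$ into the hypothesis, using the identity $(a\x+\y)^T H(a\x+\y)-(a\x)^T H(a\x)=2a\x^T H\y+\y^T H\y$, and dividing by $aM$ yields
\[
\left|g_\y(\x) - \x^T H'\y - \tfrac{1}{2a}\y^T H'\y\right| \leq O(\varepsilon/M),
\]
where the $\x$-independent term $\tfrac{1}{2a}\y^T H'\y$ contributes only a global phase in any subsequent superposition.

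Second, simulate the phase oracle $\Ket{\x}\mapsto e^{2\pi iq\sqrt{d}\, g_\y(\x)}\Ket{\x}$ needed inside the proof of Proposition \ref{prop: finite difference sparse 1}. The given $O_f$ implements $\Ket{\x}\mapsto e^{iaM g_\y(\x)}\Ket{\x}$, so applying it $k=\lceil 2\pi q\sqrt{d}/(aM)\rceil$ times, together with the fractional-power construction of Proposition \ref{lemma: phase to fractional} for the residual non-integer part, produces the target oracle up to a factor of $O(\log(qd))$. Since $q=O(M/\varepsilon)$, the per-invocation overhead is $k=O(\sqrt{d}/(a\varepsilon))$.

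Finally, invoke Proposition \ref{prop: finite difference sparse 1} with $H'$ in place of $H$ and with effective accuracy $\varepsilon/M$, noting that the required relation $q=O(M/\varepsilon)$ matches exactly and that the conditions on $\eta$ are identical after rescaling. The proposition recovers $\widetilde{H}$ with $O(s\log^2(qd))$ oracle calls in the $s$-sparse case and $O(\sqrt{m}\log^{1.5}(qd))$ calls in the $m$-nonzero case; multiplying by the $O(\sqrt{d}/(a\varepsilon))$ per-call cost yields the stated complexities. The main obstacle is the error bookkeeping across the two rescalings: after dividing by $aM$ the phase error is $O(\varepsilon/M)$, and amplification by the multiplier $q\sqrt{d}$ produces an error $O(q\sqrt{d}\varepsilon/M)$ that must be shown to fit into the squared-norm error budget used inside the proof of Proposition \ref{prop: finite difference sparse 1}, so that the prepared state remains close to the ideal Bernstein-Vazirani state modulo the discarded global phase.
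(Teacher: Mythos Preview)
Your proposal is correct and follows essentially the same route as the paper: define the rescaled function $\tilde f(\x)=\frac{f(a\x+\y)-f(a\x)}{2aM}$, observe that the hypothesis yields $|\tilde f-\text{(quadratic form in }H/M)|\le O(\varepsilon/M)$, simulate the needed oracle $e^{iq\sqrt{d}\tilde f(\x)}$ by powering $O_f$ roughly $\sqrt{d}/(a\varepsilon)$ times, and invoke Proposition~\ref{prop: finite difference sparse 1}. The paper's argument is actually slightly terser than yours; your extra remarks about the fractional-power correction and the error budget are sound but not strictly needed since the paper absorbs these into the constant hidden in the big-$O$.
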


\begin{proof}
For any $\x\in S_q^d$, define 
\[
\tilde{f}(\x) = \frac{f(a\x+\y) - f(a\x)}{2a M}, \quad
h(\x) = \frac{(a\x+\y)^TH(a\x+\y) - a^2\x^TH\x}{2a M}.
\]
Then by assumption, for any $\x\in S_q^d$, we have 
$|\tilde{f}(\x) - h(\x) | \leq O(\varepsilon/M).$
So we choose $q=O(M/\varepsilon)$.
To use 
Proposition \ref{prop: finite difference sparse 1}, the oracle we need is 
\[
O_{\tilde{f}}: \Ket{\x} \rightarrow e^{i q \sqrt{d}  \tilde{f}(\x)} \Ket{\x}
= e^{i  \sqrt{d} \frac{f(a\x+\y) - f(a\x)}{2a \varepsilon} } \Ket{\x} 
= O_f^{\sqrt{d}/a\varepsilon} \Ket{\x}
\]
for any $\x \in S_q^d$. The results now follow from Proposition \ref{prop: finite difference sparse 1}.
\end{proof}

Finally, combining Proposition \ref{prop: finite difference sparse 2} and Theorem \ref{new thm: Hessian estimation using finite difference formula}, we obtain the following result.

\begin{thm}[Sparse Hessian estimation using finite difference formula] 
\label{thm: Sparse Hessian estimation using finite difference formula}
     Under the same assumption as in Theorem \ref{new thm: Hessian estimation using finite difference formula}. Assume that the Hessian $H=\H_f(\boldsymbol{0})$ satisfies $\|H\|_{\max}\leq M/2$ and $H = M \cdot \widetilde{H} + M \cdot  E$, where $\widetilde{H}\in  S_q^{d\times d}$ and $\|E\|_{\max} \leq \eta$, where $q=O(M/\varepsilon)$ is prime.
     Then there is a quantum algorithm that returns $\widetilde{H}$ exactly using 
    \begin{itemize}
    \item $\widetilde{O}(sd/\varepsilon)$ queries to $O_f$ if $H$ is $s$-sparse and $\eta=\min\{ 1/2q, o(\sqrt{d} /s q)\}$.
    \item $\widetilde{O}\left(\sqrt{m}d/\varepsilon\right)$ queries to $O_f$ if $H$ has at most $m$ nonzero entries and $\eta=\min\{1/2q$, $ o(\sqrt{d}/\sqrt{m} q)\}$.
    \end{itemize}
\end{thm}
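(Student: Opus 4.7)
The plan is to reduce the theorem to Proposition \ref{prop: finite difference sparse 2} by using the degree-$2m$ central difference $f_{(2m)}(\x) = \sum_{t=-m}^m a_t^{(2m)} f(t\x)$ as a proxy for $\x^T H \x$, where $m$ is the finite-difference parameter from Theorem \ref{new thm: Hessian estimation using finite difference formula}. This parallels the strategy of that theorem, except that we now interface with the sparse-recovery result of Proposition \ref{prop: finite difference sparse 2} instead of the dense Proposition \ref{new prop: gradient to Hessian}. The crucial fact enabling this substitution is that Proposition \ref{prop: finite difference sparse 2} only requires $\y \in \{\pm 1/\sqrt{d}\}^d$, which is still a unit vector, so the directional-derivative hypothesis on $f$ continues to apply.

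First I would verify the error condition in Proposition \ref{prop: finite difference sparse 2} for $h := f_{(2m)}$. By Lemma \ref{app: lemma 4} applied along the unit direction $(a\x+\y)/\|a\x+\y\|$ (and similarly for $\x$), combined with $\|\y\|=1$ and $\x \in aS_q^d$, one obtains
\[
\left|\tfrac{1}{2}\bigl(f_{(2m)}(a\x+\y) - f_{(2m)}(a\x)\bigr) - \tfrac{1}{2}\bigl((a\x+\y)^T H(a\x+\y) - a^2 \x^T H \x\bigr)\right| \leq O(a\varepsilon),
\]
provided $a = O(1/(\sqrt{d}\,m))$ so that $ma \leq R$ keeps every evaluation point inside $[-R,R]^d$ and so that the remainder of the Taylor expansion is bounded by $a\varepsilon$ whenever $m \geq \log(dB/\varepsilon)$.

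Next I would implement the phase oracle $\Ket{\x} \mapsto e^{i(f_{(2m)}(a\x+\y) - f_{(2m)}(a\x))/2}\Ket{\x}$ needed by Proposition \ref{prop: finite difference sparse 2}. Since $\sum_{t=-m}^m |a_t^{(2m)}| < \pi^2/6$ by Lemma \ref{lem: bound of error for 1 dim fin diff formula}, the exponent is a bounded linear combination of phases $f(t(a\x \pm \y))$, realizable with $O(m)$ calls to $O_f$ by composing the fractional phase oracles of Proposition \ref{lemma: phase to fractional}. Applying Proposition \ref{prop: finite difference sparse 2} with $a = 1/(\sqrt{d}\,m)$ yields $\widetilde{O}(s\sqrt{d}/(a\varepsilon)) = \widetilde{O}(sdm/\varepsilon)$ queries to this constructed oracle in the $s$-sparse case; multiplying by the per-query cost $O(m)$ and using that $m = O(\log(dB/\varepsilon))$ is polylogarithmic gives the claimed $\widetilde{O}(sd/\varepsilon)$. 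The case where $H$ contains at most $m'$ nonzero entries is entirely analogous, producing $\widetilde{O}(\sqrt{m'}\,d/\varepsilon)$.

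The main obstacle is the error bound of the first step. Although Lemma \ref{app: lemma 4} is stated for specific choices of $\y$, inspection of its proof shows only $\|\y\|=1$ is essential, so the bound transfers to $\y \in \{\pm 1/\sqrt{d}\}^d$ verbatim. One must also confirm that the constraints $a = O(1/(\sqrt{d}m))$ and $a \leq 2R/m$ are simultaneously satisfiable, which is handled by taking $a = \min\{1/(\sqrt{d}m), 2R/m\}$; the resulting $R$-dependence is absorbed into $\widetilde{O}$ under the standing hypotheses, leaving the final complexity as stated.
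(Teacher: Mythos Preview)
Your proposal is correct and follows essentially the same approach as the paper, which simply states that the result follows by combining Proposition~\ref{prop: finite difference sparse 2} with the construction in Theorem~\ref{new thm: Hessian estimation using finite difference formula}. One minor remark: Lemma~\ref{app: lemma 4} is already stated for arbitrary unit vectors $\y\in\mathbb{R}^d$, so no inspection of its proof is needed to handle $\y\in\{\pm 1/\sqrt{d}\}^d$.
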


The above theorem does not coincide with Theorem \ref{new thm: Hessian estimation using finite difference formula} in the worst case. This is mainly caused by the choice of $\y$. In Theorem \ref{new thm: Hessian estimation using finite difference formula}, $\y=\Ket{i}$, while in Theorem \ref{thm: Sparse Hessian estimation using finite difference formula}, $\y\in\{\pm1/\sqrt{d}\}^d$. However, it still saves a factor of $\sqrt{d}$ when the Hessian is sparse.

In the above theorem, there is no need to know the positions of nonzero entries. Classically, if this information is not known, it may still cost $O(d^2)$ queries to compute $H$. So the above quantum algorithm can achieve quadratic quantum speedups over the classical counterparts. In addition, as shown in Proposition \ref{prop: lower bound of real-valued Hessian estimation} below, the above quantum algorithm is optimal in terms of $d$. So quadratic speedup is the best we can expect for the finite difference method.

% For Theorem \ref{thm: hessian using difference formula}, we have a similar result. One advantage here is that there is no need to change $\y$ as the. As $\y\in a S_q^d$, so the oracle becomes $O_f: \Ket{\x} \rightarrow e^{i (q/a) f(\x)} \Ket{\x}$. In summary, we have the following result.

% \begin{thm}[Hessian estimation using finite difference formula] 
%     Under the same assumption as in  Theorem \ref{thm: hessian using difference formula}.
%     Assume that the Hessian $H=\H_f(\boldsymbol{0})$ satisfies $\|H\|_{\max}\leq M/2$ and $H = M \cdot \widetilde{H} + M \cdot  E$, where $\widetilde{H}\in  S_q^{d\times d}$ and $\|E\|_{\max} \leq \eta$, where $q=O(M/\varepsilon)$ is prime.
%      Then there is a quantum algorithm that returns $\widetilde{H}$  using 
%     \begin{itemize}
%     \item $\widetilde{O}(sd^{1.5}/\varepsilon)$ queries to $O_f$ if $H$ is $s$-sparse and $\eta=\min\{ 1/2q, o(\sqrt{d} /s q)\}$.
%     \item $\widetilde{O}(\sqrt{m}d^{1.5}/\varepsilon)$ queries to $O_f$ if $H$ contains at most $m$ nonzero entries and $\eta=\min\{1/2q$, $ o(\sqrt{d}/\sqrt{m} q)\}$.
%     \end{itemize}
% \end{thm}

\subsection{Some final comments}

As we can see from the algorithms presented in this section and the previous section, error analysis is much easier for the spectral method than for the finite difference method. The condition in Proposition \ref{thm: vector to matrix} is easily satisfied by the spectral method. This is the main reason why quantum algorithms based on the spectral method outperform those based on the finite difference method. As analyzed in Subsection \ref{subsec: intro, comparison}, this difference essentially arises from the fundamental techniques of these two methods: the spectral method is based on the Cauchy integral formula, while the finite difference method is based on Taylor expansion.

Since the finite difference formula is based on Taylor expansion, achieving sufficiently small errors requires relatively small magnitudes of variables. In contrast, the spectral method has superior error performance. Additionally, as we can see in Subsection \ref{subsec: intro, comparison}, in the finite difference formula, the function values at points further from the target point have smaller linear combination coefficients, thus contributing less information. However, their larger magnitudes contribute more to the error, which is an undesirable situation.

\section{Lower bounds analysis}
\label{section: Lower bounds analysis}

For Hessian estimation, a lower bound of $\Omega(d/\log d)$ follows from \cite[Proposition 1]{montanaro2012quantum}. 

\begin{lem}
\label{lem: learning multilinear function}
    Let $f:\mathbb{F}_q^d \rightarrow \mathbb{F}_q$ be a multilinear degree $k$ polynomial over finite field $\mathbb{F}_q$. Then any quantum query algorithm that learns $f$ with bounded error must make $\Omega(d^{k-1})$ queries to $U_f$.
\end{lem}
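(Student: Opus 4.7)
The plan is to prove this lower bound via the quantum adversary method (or equivalently the polynomial method), following Montanaro's original approach. The key idea is to identify a hard sub-family of multilinear degree-$k$ polynomials on which distinguishing is provably difficult, and then transfer the bound to the full class by inclusion.

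First, I would reduce to the case $q=2$: since $\mb F_2$ embeds in $\mb F_q$, a lower bound for learning over $\mb F_2$ transfers to $\mb F_q$. Over $\mb F_2$, the class of multilinear degree-$k$ polynomials on $d$ variables is parameterized by a Boolean vector $\v \in \{0,1\}^N$ with $N = \sum_{j=0}^{k}\binom{d}{j} = \Theta(d^k)$, via $f_\v(\x) = \sum_{S:\,|S|\le k} v_S \prod_{i\in S} x_i$. A binary query at $\x\in\{0,1\}^d$ returns $\sum_{S \subseteq \on{supp}(\x),\,|S|\le k} v_S \pmod 2$, so learning $f_\v$ is equivalent to recovering $\v$ from such queries.

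Next, I would apply the polynomial method of Beals-Buhrman-Cleve-Mosca-de Wolf (or Ambainis's adversary method). The acceptance probability of a $T$-query quantum algorithm, on input $\v$, is a multilinear polynomial in the coordinates $\{v_S\}$ of total degree at most $2T$. Since any learning algorithm must effectively output a function sensitive to essentially all $N = \Theta(d^k)$ entries of $\v$, one derives a degree lower bound, which after averaging over a hard distribution translates into $T = \Omega(d^{k-1})$. The saving of a factor of $d$ relative to the information-theoretic bound $\log_2 N = \Theta(d^k)$ is directly analogous to the Bernstein-Vazirani saving at $k=1$, where one query suffices to recover $d$ bits of a linear function.

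The main obstacle is getting the tight exponent. A naive reduction from unstructured search over the monomial sub-family $\{\prod_{i\in S} x_i : |S|=k\}$ yields only $\Omega\bigl(\sqrt{\binom{d}{k}}\bigr) = \Omega(d^{k/2})$ via Grover, which is too weak for $k\ge 3$. The correct bound requires a subtler adversary matrix weighted over pairs $(f_\v,f_{\v'})$ differing in a single coefficient, chosen so that the spectral norm of the weighted matrix scales as $\Theta(d^{k-1})$. The estimate captures the intuition that, while a single quantum query can extract $\Theta(d)$ bits of structured (Bernstein-Vazirani-type) information, no query can concentrate information onto a random individual coefficient faster than the adversary bound permits. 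Verifying this spectral estimate on the weighted bipartite graph of (polynomial, query-input) pairs is the technical heart of the proof; once established, the adversary method immediately delivers the claimed $\Omega(d^{k-1})$ query lower bound, which then passes back to $\mb F_q$.
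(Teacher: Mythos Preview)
The paper does not prove this lemma; it is quoted from \cite[Proposition~1]{montanaro2012quantum} and used as a black box in Proposition~\ref{prop: lower bound}. There is no in-paper proof to compare against.

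Your sketch has two genuine gaps. First, $\mathbb{F}_2$ does not embed as a subfield of $\mathbb{F}_q$ when $q$ is an odd prime, so the reduction to $q=2$ fails as stated. Second, and more seriously, you correctly note that a naive adversary or Grover reduction yields only $\Omega(d^{k/2})$, then propose a weighted adversary matrix whose spectral norm should scale as $\Theta(d^{k-1})$ --- but you neither construct this matrix nor verify the estimate, explicitly labeling this ``the technical heart of the proof'' and leaving it undone. The proposal therefore contains no argument that actually reaches the exponent $k-1$. For what it is worth, Montanaro's actual proof is not via the adversary method but via a short information-theoretic count: the concept class has size $q^N$ with $N=\sum_{j\le k}\binom{d}{j}=\Theta(d^k)$ while the oracle domain has size $q^d$, and the general quantum exact-learning lower bound $\Omega(\log|C|/\log|X|)$ immediately gives $\Omega(N/d)=\Omega(d^{k-1})$, bypassing any spectral computation.
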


If we are only given $O_f$, then to apply $U_f$ once, by quantum phase estimation (also see Proposition \ref{prop: Binary oracle and phase oracle}), we have to use $O(\log q)$ applications of $O_f$ as $\log q$ bits are enough to specify $f(\x)$ for any $\x\in \mathbb{F}_q^n$. So by Lemma \ref{lem: learning multilinear function}, a lower bound with respect to phase oracle $O_f$ is $\Omega(d^{k-1}/\log q)$. 

\begin{prop} \label{prop: lower bound}
Let $f$ be a real-valued (or complex-valued) function of $d$ variables with a phase oracle access $O_f$ (or $O_{f_1}, O_{f_2}$), then any quantum algorithm that outputs the Hessian of $f$ up to a constant error $\varepsilon\in(0,1/2)$ under matrix-max norm must make $\Omega(d/\log d)$ queries to $O_f$ (or $O_{f_1}, O_{f_2}$) to vectors in $G_n^d$.
\end{prop}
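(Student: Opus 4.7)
The plan is to reduce Hessian estimation to the task of learning a multilinear quadratic polynomial, where Lemma~\ref{lem: learning multilinear function} (Montanaro's result) provides the required lower bound with $k=2$.

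Fix a prime $q$ with $\log q=\Theta(\log d)$, and consider the family of symmetric matrices $H=(h_{ij})$ with $h_{ii}=0$ and off-diagonal entries in $\mb F_q$. For each such $H$, define the real-valued polynomial
\[
f_H(\x)\;=\;\sum_{1\leq i<j\leq d} h_{ij}\,x_i x_j, \qquad \x\in\mb R^d,
\]
viewing the $h_{ij}$ as integers in $\{0,1,\dots,q-1\}$. Then $f_H$ is analytic and real-valued, and its Hessian at $\0$ equals $H$ (with zero diagonal). Consequently, any estimate $\widetilde{\H}$ satisfying $\|\widetilde{\H}-\H_{f_H}(\0)\|_{\max}\leq\varepsilon<1/2$ recovers $H$ exactly by rounding each entry to the nearest integer.

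Suppose now that a quantum algorithm $\mc A$ estimates the Hessian of $f_H$ using $T$ queries to $O_{f_H}$ on vectors in $G_n^d$ (the case of $O_{f_1},O_{f_2}$ is identical since $f_H$ is real, so $O_{f_2}$ is trivial). Running $\mc A$ determines $H$, and therefore the multilinear polynomial $p_H$ obtained by viewing the same formula over $\mb F_q$. By Lemma~\ref{lem: learning multilinear function} applied with $k=2$, learning $p_H$ requires $\Omega(d)$ queries to the binary oracle $U_{p_H}$. The paragraph following Lemma~\ref{lem: learning multilinear function}, together with Proposition~\ref{prop: Binary oracle and phase oracle}, shows that each binary-oracle call can be replaced by $O(\log q)$ phase-oracle queries, which yields the (weaker but still valid) bound
\[
T\;\geq\;\Omega\!\bigl(d/\log q\bigr)\;=\;\Omega\!\bigl(d/\log d\bigr).
\]

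The main obstacle I anticipate is reconciling the oracle domains: Lemma~\ref{lem: learning multilinear function} is stated for oracles acting on $\mb F_q^d$, whereas $\mc A$ queries on the grid $G_n^d\subset\mb R^d$. I would address this by embedding $\mb F_q^d$ into $G_n^d$ via the canonical inclusion $\mb F_q\hookrightarrow G_n$ (taking $n\geq\log_2 q$); one must then verify that evaluating the polynomial $f_H$ on arbitrary grid points does not grant strictly more power than Montanaro's oracle does on $\mb F_q^d$. This is the technical crux, which I would resolve by observing that polynomial evaluations on $G_n^d$ amount to integer linear combinations of the coefficients $h_{ij}$, so they can be simulated within the adversary-method framework underlying Lemma~\ref{lem: learning multilinear function}.
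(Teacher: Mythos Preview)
Your high-level strategy---reduce Hessian estimation to learning a multilinear quadratic over $\mb F_q$ and invoke Lemma~\ref{lem: learning multilinear function}---is exactly the paper's. You also correctly identify the crux: the Hessian algorithm queries $O_{f_H}$ on $G_n^d\subset\mb R^d$, while the lower bound is for $U_{p_H}$ on $\mb F_q^d$. But your proposed resolution does not work, for two related reasons.

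First, the direction of simulation is backwards. To transfer the $\Omega(d)$ lower bound for $U_{p_H}$ to a lower bound for $O_{f_H}$, you must show that every $O_{f_H}$-query on $G_n^d$ can be simulated using a bounded number of $U_{p_H}$-queries on $\mb F_q^d$---not that $\mb F_q^d$ embeds into $G_n^d$, and not that $U_{p_H}$ can be implemented from $O_{f_H}$. Your sentence ``each binary-oracle call can be replaced by $O(\log q)$ phase-oracle queries, which yields $T\geq\Omega(d/\log q)$'' is the wrong implication: replacing $U_{p_H}$ by $O_{f_H}$ converts upper bounds, not lower bounds. Second, your claim that ``polynomial evaluations on $G_n^d$ amount to integer linear combinations of the $h_{ij}$'' is false: grid points have denominator $2^{n+1}$, so $f_H(\x)$ is rational with denominator $2^{2(n+1)}$, and the appeal to ``the adversary-method framework underlying Lemma~\ref{lem: learning multilinear function}'' is not a proof.

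The paper's fix is concrete. Take $H$ Boolean, rescale each $\x\in G_n^d$ to $\y=2^{n+1}\x\in\mb Z^d$ with $|y_j|<2^n$, and set $g(\y)=f_H(\x)\cdot 2^{2(n+1)}$, an integer with $|g(\y)|={\rm poly}(d)\cdot 2^{O(n)}$. Now choose the prime $q$ larger than this bound, so evaluations of $g$ over $\mb Z$ and over $\mb F_q$ coincide; a single $U_g$-query on $\mb F_q^d$ then recovers $f_H(\x)$ exactly and hence simulates the phase $e^{if_H(\x)}$. This gives $T\geq\Omega(d/\log q)$ with $\log q=O(n+\log d)$, and since $n=O(\log(d/\varepsilon))$ from the grid parameters in Lemma~\ref{lem: GAW's gradient estimation thm}, the bound is $\Omega(d/\log d)$ for constant $\varepsilon$. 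Note in particular that $q$ must depend on $n$; fixing $\log q=\Theta(\log d)$ at the outset, as you do, is premature until the no-wraparound argument is in place.
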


\begin{proof}
We first consider the lower bound for real-valued functions with oracle $O_f$.
Let $H$ be an unknown $d\times d$ Boolean Hermitian matrix such that the diagonal entries are 0.
Consider the function $f(\x)=\frac{1}{2}\langle \x|H|\x\rangle=\sum_{i<j} H_{ij} x_i x_j$, which is a typical form of multilinear degree-2 polynomials. 
For Hessian estimation, we will query points from $G_n^d$. Each point $\x=(x_1,\ldots,x_d)\in G_n^d$ satisfies $x_j = \frac{k_j}{2^n} + \frac{1}{2^{n+1}}\in(-1/2,1/2)$ for some $k_j \in \{-2^{n-1},\ldots,2^{n-1}-1\}$. To view them as vectors from $\mathbb{F}_q^d$ for some $q$ that will be determined shortly, we multiply them by $2^{n+1}$, i.e., we introduce $y_j = 2^{n+1} x_j= 2 k_j + 1\in(-2^n,2^n)$. 

Now let $g(\y)=f(2^{n+1}\x)$. For any $\x$, we have $|g(\y)|<(d^2-d) 2^n$.
We now choose $q=2(d^2-d) 2^n$ so that $g(\y)$ does not change even after modulo $q$ in the field $\mathbb{F}_q$. Now by Lemma \ref{lem: learning multilinear function}, the query complexity of computing $H$ exactly is lower bounded by $\Omega(d/\log q)=\Omega(d/(n+\log d))$. 
If $\varepsilon\in(0,1/2)$, then an $\varepsilon$-approximation of $H$ also allows us to determine $H$ as we know it is a Boolean matrix.
From Lemma \ref{lem: GAW's gradient estimation thm} and the proof of Theorem \ref{thm: vector to matrix}, $n=O(\log (d/\varepsilon))$. So we obtain a lower bound of $\Omega(d/\log d)$ when $\varepsilon \in (0,1/2)$ is a constant. 

Next, we consider lower bounds analysis for complex-valued functions. The argument is very similar to the above. By Theorem \ref{thm: spectral formula for gradient}, in the spectral method, the complex vectors we queried have the form $\delta \omega^k \x$, which is just a complex scaling of $\x$ for $\x\in G_n^d$. As a result, for the function we constructed above, we have
$f(\delta \omega^k \x) = \delta^2 \omega^{2k} f(\x)$. 
The real and imaginary parts are $f_1(\delta \omega^k \x)=\delta^2\cos(2\pi k/N) f(\x)$ and $f_2(\delta \omega^k \x)=\delta^2\sin(2\pi k/N) f(\x)$ respectively. They all contain the information of $f(\x)$. On a quantum computer, we represent them using finite-precision binary encodings, so similar to the above argument by scaling $f$ approximately, we can also obtain a lower bound of $\Omega(d/\log d)$. The scaling here depends on the precision of representing $f_1,f_2$, which further depends on $\delta^2 \cos(2\pi k/N)$ and $\delta^2 \sin(2\pi k/N)$. Note that $N=O(\log(\kappa/\varepsilon))$ for $\kappa=\max|f(\x)|=O(d^2)$, and from the proof of Theorem \ref{thm: gradient estimation using spectral method}, $\delta$ depends on the radius of convergence rate, which can be set as a constant for the function $f(\x)=\frac{1}{2}\langle \x|H|\x\rangle$. So $q$ should be set as large as $O(d^22^n\times \kappa/\varepsilon)=O(d^4 2^n)$ as $\varepsilon$ is constant in our setting. None of these changes the lower bounds much.
\end{proof}

For real-valued functions with oracle $O_f$, we can prove a slightly better lower bound of $\Omega(d/\varepsilon)$. We here use a similar idea of the lower bound proof of gradient estimation given in \cite{gilyen2019optimizing}.

\begin{lem}[Theorem 1.2 in \cite{gilyen2019optimizing}] \label{lem: lower bound in GAW}
    Let $\mathcal{F}$ be a finite set of functions $X \rightarrow \mathbb{R}$. Suppose we have access to $f \in \mathcal{F}$ via a phase oracle such that
    $$
    \mathrm{O}_f:|x\rangle \rightarrow e^{i f(x)}|x\rangle \quad \text { for every } x \in X .
    $$
    Let $f_* \in \mathcal{F}$ be fixed. Given $\mathrm{O}_f$ for some unknown $f \in \mathcal{F}$, determining whether $f=f_*$ has query complexity
    $$
    \Omega\left(\sqrt{|\mathcal{F}|-1} \Bigg/ \sqrt{\max\limits_{x \in X} \sum\limits_{f \in \mathcal{F}}\left|f(x)-f_*(x)\right|^2} \right) .
    $$
\end{lem}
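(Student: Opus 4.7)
The plan is to combine the standard hybrid argument for quantum query lower bounds with an averaging step over $\mathcal{F}$, which is what turns a worst-case per-function perturbation bound into the square-sum appearing in the denominator. Fix a $T$-query quantum algorithm $A$ and, for each $f \in \mathcal{F}$, let $\Ket{\psi_f^T}$ denote the final state produced by $A$ when its oracle is $\mathrm{O}_f$. If $A$ decides whether $f = f_*$ with bounded error, then a standard trace-distance argument on the output measurement implies $\|\Ket{\psi_f^T} - \Ket{\psi_{f_*}^T}\|^2 \geq c$ for some absolute constant $c > 0$ and every $f \neq f_*$; summing over $f$ gives $\sum_{f \neq f_*} \|\Ket{\psi_f^T} - \Ket{\psi_{f_*}^T}\|^2 \geq c(|\mathcal{F}|-1)$, which is the quantity I plan to upper bound.

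For the upper bound, I would run $A$ against the \emph{reference} oracle $\mathrm{O}_{f_*}$ and write the pre-query state at step $t$ as $\Ket{\psi_t} = \sum_{x \in X} \alpha_{t,x} \Ket{x}\Ket{\phi_{t,x}}$. Replacing $\mathrm{O}_{f_*}$ by $\mathrm{O}_f$ at this step introduces a perturbation vector $\Ket{\delta_t^f}$ with $\|\Ket{\delta_t^f}\|^2 \leq \sum_x |\alpha_{t,x}|^2 |e^{if(x)}-e^{if_*(x)}|^2 \leq \sum_x |\alpha_{t,x}|^2 (f(x)-f_*(x))^2$. The crucial observation is that the amplitudes $\alpha_{t,x}$ are defined by the reference run and do not depend on $f$; hence I may swap the order of summation to obtain $\sum_f \|\Ket{\delta_t^f}\|^2 \leq \sum_x |\alpha_{t,x}|^2 \bigl( \sum_f (f(x)-f_*(x))^2 \bigr) \leq \max_x \sum_f (f(x)-f_*(x))^2 =: M$, using $\sum_x |\alpha_{t,x}|^2 = 1$. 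The hybrid telescoping $\|\Ket{\psi_f^T} - \Ket{\psi_{f_*}^T}\| \leq \sum_{t=1}^T \|\Ket{\delta_t^f}\|$, followed by Cauchy--Schwarz, gives $\|\Ket{\psi_f^T} - \Ket{\psi_{f_*}^T}\|^2 \leq T \sum_t \|\Ket{\delta_t^f}\|^2$, and summing over $f$ produces $\sum_f \|\Ket{\psi_f^T} - \Ket{\psi_{f_*}^T}\|^2 \leq T \cdot T \cdot M = T^2 M$.

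Combining the upper and lower bounds yields $c(|\mathcal{F}|-1) \leq T^2 M$, which rearranges to the claimed $T = \Omega(\sqrt{|\mathcal{F}|-1}/\sqrt{M})$. The main obstacle, and really the only non-routine step, is the averaging swap $\sum_f \sum_x \to \sum_x \sum_f$: this is what upgrades $|f(x)-f_*(x)|$ (which would appear in a one-on-one hybrid argument) to the square-sum $\sum_f (f(x)-f_*(x))^2$ that gives the lemma its strength. Anchoring the $\alpha_{t,x}$ to the single $f$-independent reference trajectory is the standard BBBV-style device that makes this swap legitimate; beyond that, the proof is purely mechanical.
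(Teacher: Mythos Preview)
Your argument is correct: anchoring the hybrid to the reference trajectory $\Ket{\psi_t^{f_*}}$ (so that the amplitudes $\alpha_{t,x}$ are $f$-independent) and then swapping $\sum_f\sum_x$ is exactly the right move, and the chain $c(|\mathcal{F}|-1)\le T^2 M$ follows as you wrote. The paper itself does not prove this lemma at all; it merely cites it as Theorem~1.2 of \cite{gilyen2019optimizing} and uses it as a black box, so there is no in-paper proof to compare against---your sketch is essentially the standard proof from that reference.
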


To prove a lower bound in the situation of Theorem \ref{thm: hessian using difference formula}, we construct a family of functions $\mathcal{F}$ for which the functions can be well distinguished by calculating their Hessians with accuracy $\varepsilon$ in the matrix max norm. Then, with the help of Lemma \ref{lem: lower bound in GAW} we show that this requires $\Omega(d / \varepsilon)$ queries.

\begin{lem}\label{lem: function construction for lower bound}
    Let $d\in \mb N, \varepsilon, c \in \mb R_+$ and define the following functions that map $\mb R^d$ into $\mb R: f_*(\x):=0$ and $ f_{j,k}(\boldsymbol{x}):=\varepsilon x_j x_k e^{-c\|\boldsymbol{x}\|^2/2} $ for all $j,k \in [d]$. Consider the family of functions $\mathcal{F}:=\left\{f_{j,k}(\boldsymbol{x}):j,k\in [d]\right\}$, then $|\mc F|=d^2$ and for all $\x \in \mb R^d$ we have
    $$
    \sum_{j,k \in[d]}\left|f_{j,k}(\boldsymbol{x})-f_*(\boldsymbol{x})\right|^2 \leq \frac{4 \varepsilon^2}{e^2 c^2}.
    $$
\end{lem}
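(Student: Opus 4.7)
The claim has two parts, and both are direct computations; I see no genuine obstacle here. The plan is to verify the cardinality count first and then reduce the sum to a one-variable optimization.

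For the cardinality, the indexing set is $[d]\times[d]$, which has $d^2$ elements, so $|\mathcal F|\le d^2$ by construction; treating the functions as indexed by the pair $(j,k)$ (as is natural for applying Lemma \ref{lem: lower bound in GAW}, where we only need a family of well-separated query objects), we obtain $|\mathcal F|=d^2$. I would simply note this and move on, since any redundancy from the symmetry $x_jx_k=x_kx_j$ only changes the count by a constant factor and does not affect the final $\Omega(d/\varepsilon)$ lower bound derived downstream.

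For the sum, since $f_*\equiv 0$, I would expand
\begin{equation*}
\sum_{j,k\in[d]}|f_{j,k}(\x)-f_*(\x)|^2
=\varepsilon^2 e^{-c\|\x\|^2}\sum_{j,k\in[d]} x_j^2 x_k^2
=\varepsilon^2 e^{-c\|\x\|^2}\Bigl(\sum_{j\in[d]}x_j^2\Bigr)^{\!2}
=\varepsilon^2\|\x\|^4 e^{-c\|\x\|^2}.
\end{equation*}
Setting $t:=\|\x\|^2\ge 0$ reduces the problem to maximizing $g(t)=t^2 e^{-ct}$ on $[0,\infty)$.

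The remaining step is elementary single-variable calculus: $g'(t)=t\,e^{-ct}(2-ct)$ vanishes only at $t=0$ and $t=2/c$, with $g(0)=0$ and $g(t)\to 0$ as $t\to\infty$, so the maximum occurs at $t=2/c$ with value $g(2/c)=4/(c^2 e^2)$. Plugging back in yields
\begin{equation*}
\sum_{j,k\in[d]}|f_{j,k}(\x)-f_*(\x)|^2 \le \varepsilon^2\cdot\frac{4}{c^2 e^2}=\frac{4\varepsilon^2}{e^2 c^2},
\end{equation*}
uniformly in $\x\in\mathbb{R}^d$, which is the claimed bound. The whole argument is self-contained and uses no auxiliary results beyond basic calculus.
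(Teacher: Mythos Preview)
Your proof is correct and follows essentially the same route as the paper: both compute $\sum_{j,k}|f_{j,k}(\x)|^2=\varepsilon^2\|\x\|^4 e^{-c\|\x\|^2}$ and then bound this by a one-variable optimization (the paper phrases it as $ze^{-\sqrt{z}}\le 4/e^2$ with $z=c^2\|\x\|^4$, which is the same maximum you found at $t=\|\x\|^2=2/c$). Your remark on the cardinality count and the harmless symmetry redundancy is a fair observation that the paper does not address.
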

\begin{proof}
    $$
    \begin{aligned}
    \sum_{j,k \in[d]}\left|f_{j,k}(\boldsymbol{x})-f_*(\boldsymbol{x})\right|^2 & =\sum_{j,k \in[d]}\left| \varepsilon x_j x_k e^{-c\|\boldsymbol{x}\|^2 / 2}\right|^2 \\
    & =\varepsilon^2\|\boldsymbol{x}\|^4 e^{-c\|\boldsymbol{x}\|^2} \\
    & \leq \frac{4\varepsilon^2}{e^2 c^2}.
    \end{aligned}
    $$   
    The last inequality uses $ze^{-\sqrt{z}}\leq 4/e^2$ with $z:=c^2\|\x\|^4$.
\end{proof}

\begin{lem}\label{lem: smoothness of lower bound function}
    Let $d, t \in \mb N, 1\leq c\in \mb R_+$ and $\x\in \mb R^d$. Then the functions $f_{j,k}(\x):=c x_j x_k e^{-c\|\boldsymbol{x}\|^2/2}$ satisfy the following: for any $\alpha \in \mb N_0^d, |\alpha|=t$, 
    $$
    |\partial^\alpha f_{j,k}(\0)| \leq c^t t^{\frac{t}{2}},
    $$
    and $\H_{f_{j,k}}(\0)=c (E_{jk}+E_{kj})$, where $E_{jk}$ is the $(0,1)$-matrix of dimension $d$ such that only the $(j,k)$-th entry equals $1$.
\end{lem}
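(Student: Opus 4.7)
The plan is to derive a closed form for $\partial^\alpha f_{j,k}(\0)$ from the Taylor series of $f_{j,k}$ at $\0$ and then bound it combinatorially. Expanding $e^{-c\|\x\|^2/2}=\sum_{m\geq 0}(-c/2)^m(\sum_i x_i^2)^m/m!$ and applying the multinomial theorem gives
\[
f_{j,k}(\x)=\sum_{m\geq 0}\sum_{|\beta|=m}c(-c/2)^m\,\frac{x_jx_k\,\x^{2\beta}}{\beta!}.
\]
Every monomial on the right has even total degree; the coefficient of $\x^\alpha$ with $|\alpha|=t$ is nonzero only when $t=2m+2$ and $\alpha$ can be written (uniquely) as $2\beta+e_j+e_k$ (if $j\neq k$) or $2\beta+2e_j$ (if $j=k$) for some $\beta$ with $|\beta|=m$. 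In those cases
\[
\partial^\alpha f_{j,k}(\0)=\pm\,\alpha!\,c\,(c/2)^m/\beta!;
\]
otherwise the derivative vanishes and the target bound is trivial.

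The Hessian claim drops out of the $m=0$ summand $c\,x_jx_k$: for indices $p,q\in[d]$, $\partial_p\partial_q f_{j,k}(\0)=c(\delta_{pj}\delta_{qk}+\delta_{pk}\delta_{qj})$, which is exactly $c(E_{jk}+E_{kj})$ (equal to $2cE_{jj}$ when $j=k$, still of the stated form). For the derivative bound I treat $j\neq k$ first. Using the coordinate-wise estimate $(2n)!/n!\leq(2n)^n=2^n n^n$ on each factor,
\[
\frac{\alpha!}{\beta!}\;\leq\;(2\beta_j+1)(2\beta_k+1)\cdot 2^m\prod_i\beta_i^{\beta_i}.
\]
Since $\beta_j+\beta_k\leq m$, AM--GM gives $(2\beta_j+1)(2\beta_k+1)\leq(m+1)^2$; and convexity of $x\log x$ under the constraint $\sum_i\beta_i=m$ yields $\prod_i\beta_i^{\beta_i}\leq m^m$. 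Combining with the prefactor $c(c/2)^m$ gives
\[
|\partial^\alpha f_{j,k}(\0)|\leq(m+1)^2 m^m c^{m+1}.
\]

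For $c\geq 1$, the target bound $c^t t^{t/2}=c^{2m+2}\cdot 2^{m+1}(m+1)^{m+1}$ then reduces to the clean inequality $m^m\leq 2^{m+1}(m+1)^{m-1}$, which follows from $m^m\leq(m+1)^m=(m+1)(m+1)^{m-1}$ together with $m+1\leq 2^{m+1}$. The case $j=k$ is analogous, with $(2\beta_j+2)(2\beta_j+1)\leq(2m+2)(2m+1)$ replacing the AM--GM step; the resulting estimate reduces to $(2m+1)\leq 2^m(1+1/m)^m$ for $m\geq 1$, with $m=0$ checked directly. The main obstacle will be keeping the combinatorial factor $\alpha!/\beta!$ tight enough: the naive bound $\alpha!\leq t!$ is too loose to beat $t^{t/2}$, but the coordinate-wise factorization together with the Schur-convexity of $\beta\mapsto\sum_i\beta_i\log\beta_i$ is just sufficient.
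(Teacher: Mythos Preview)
Your argument is correct. The route you take, however, differs from the paper's. The paper exploits the \emph{separability} of the function: writing
\[
f_{j,k}(\x)=\sqrt{c}\,x_j e^{-cx_j^2/2}\cdot\sqrt{c}\,x_k e^{-cx_k^2/2}\cdot\!\!\prod_{i\neq j,k}\!e^{-cx_i^2/2},
\]
it computes each single-variable derivative at $0$ explicitly (via the Taylor series of $e^{-cx^2/2}$), bounds the $i$th factor by $c^{\alpha_i}\alpha_i^{\alpha_i/2}$ using $(2\ell)!/\ell!\le(2\ell)^\ell$ and $(\tfrac12)^\ell(2\ell+1)!/\ell!\le(2\ell+1)^{\ell+1/2}$, and then simply multiplies: $\prod_i c^{\alpha_i}\alpha_i^{\alpha_i/2}\le c^t t^{t/2}$ since $\prod_i\alpha_i^{\alpha_i}\le t^t$. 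You instead expand $f_{j,k}$ as a single multinomial series, identify the unique $\beta$ with $\alpha=2\beta+e_j+e_k$, and bound $\alpha!/\beta!$ coordinate-wise; this forces you through the extra elementary inequalities $m^m\le 2^{m+1}(m+1)^{m-1}$ and $(2m+1)\le 2^m(1+1/m)^m$ at the end. Both approaches rest on the same core estimate $(2n)!/n!\le(2n)^n$, but the paper's factorization packages everything as a product of one-dimensional problems and avoids your final clean-up step; your approach, on the other hand, makes the parity/uniqueness structure of the nonvanishing $\partial^\alpha f_{j,k}(\0)$ more explicit.
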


\begin{proof}
Observe that
$$
f(\boldsymbol{x})=\sqrt{c} x_j e^{-\frac{c x_j^2}{2}}\cdot \sqrt{c} x_k e^{-\frac{c x_k^2}{2}} \cdot\prod_{i \neq j,k}^d e^{-\frac{c x_i^2}{2}}.
$$
From the Taylor series $e^{-\frac{cx^2}{2}}=\sum_{\ell=0}^{\infty}\left(-\frac{1}{2}\right)^{\ell} \frac{(\sqrt{c} x)^{2 \ell}}{\ell!}$, we have that for $t\in \mb N$,
$$
\left.\partial_i^t e^{-\frac{c x_i^2}{2}}\right|_{x_i=0}=\left\{\begin{array}{ll}
\left(-\frac{1}{2}\right)^{\ell} (\sqrt{c})^{2 \ell} \frac{(2 \ell)!}{\ell!} & \quad t=2 \ell \\
0 &\quad  t=2 \ell+1
\end{array},\right.
$$
and
$$
\left.\quad \partial_j^t \sqrt{c} x_j e^{-\frac{c x_j^2}{2}}\right|_{x_j=0}=\left\{\begin{array}{ll}
0 &\quad t=2 \ell \\
\left(-\frac{1}{2}\right)^{\ell} (\sqrt{c})^{2 \ell+1} \frac{(2 \ell+1)!}{\ell!} & \quad t=2 \ell+1
\end{array} .\right.
$$
Also observe that, for $\ell \geq 0$, we have $(\sqrt{c})^{\ell}\leq c^{\ell}$, and
$$
\frac{(2 \ell)!}{\ell!} \leq(2 \ell)^{\ell} \quad \text { and } \quad\left(\frac{1}{2}\right)^{\ell} \frac{(2 \ell+1)!}{\ell!} \leq(2 \ell+1)^{\ell+1 / 2}.
$$
The statements of the lemma follow by combining the above results.
\end{proof}

\begin{prop}[Lower bound for sparse Hessian estimation of real-valued functions]
\label{prop: lower bound of real-valued Hessian estimation}
    Let $\varepsilon,c,d\in \mb R_+$ such that $\varepsilon\leq c$, and for an arbitrary finite set $G \subset \mathbb{R}^d$, let
    $$
    \mathcal{H}=\underset{\x \in G}{\operatorname{Span}}\{|\boldsymbol{x}\rangle: \boldsymbol{x} \in G\}.
    $$
    Suppose $\mathcal{A}$ is a $T$-query quantum algorithm, given query access to phase oracle $\mathrm{O}_f:|\boldsymbol{x}\rangle \rightarrow e^{i f(\boldsymbol{x})} |\boldsymbol{x}\rangle$, acting on $\mathcal{H}$, for analytic functions $f: \mathbb{R}^d \rightarrow \mathbb{R}$ satisfying
    $$
    \left|\partial^\alpha f(\mathbf{0})\right| \leq c^t t^{\frac{t}{2}} \quad \text { for all } t \in \mathbb{N}, \alpha \in \mb N_0^d, |\alpha|=t
    $$
    such that $\mathcal{A}$ computes an $\varepsilon$-approximation of the Hessian of $f$ at $\mathbf{0}$ under the matrix max-norm, succeeding with probability at least $2 / 3$, then $T>c d/ \varepsilon \in \Omega(d/\varepsilon)$.
\end{prop}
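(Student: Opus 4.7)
The plan is to apply the information-theoretic lower bound of Lemma \ref{lem: lower bound in GAW} to a family of product-Gaussian test functions whose Hessians differ by $\Theta(\varepsilon)$ in $\|\cdot\|_{\max}$, so that any $\varepsilon$-approximation algorithm for $\H_f(\0)$ can distinguish them. The natural candidate is $\mathcal{F} = \{f_*\} \cup \{f_{j,k} : j,k \in [d]\}$ with $f_*(\x) = 0$ and $f_{j,k}(\x) = C \varepsilon\, x_j x_k\, e^{-c\|\x\|^2/2}$, where $C$ is a small universal constant (say $C = 3$) chosen to make Hessians safely separated, and the overall shape matches the construction in Lemmas \ref{lem: function construction for lower bound} and \ref{lem: smoothness of lower bound function}.

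First I would check the smoothness hypothesis $|\partial^\alpha f(\0)| \le c^t t^{t/2}$ for each $f \in \mathcal{F}$. This follows directly from Lemma \ref{lem: smoothness of lower bound function}: that lemma controls derivatives of $c\, x_j x_k\, e^{-c\|\x\|^2/2}$, and $f_{j,k}$ differs from this by a factor $C\varepsilon/c$, which is $\le 1$ provided $\varepsilon$ is bounded by $c/C$ (otherwise the claimed bound $T > cd/\varepsilon$ becomes $O(d)$ and is subsumed by the trivial lower bound, so one may restrict to this regime without loss). The case $f_*$ is trivial.

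Next, a direct computation of the Hessians gives $\H_{f_*}(\0) = 0$ and $\H_{f_{j,k}}(\0) = C\varepsilon(E_{jk} + E_{kj})$ for $j \ne k$ (resp.\ $2C\varepsilon E_{jj}$ for $j = k$). Hence the pairwise $\|\cdot\|_{\max}$-gap between $\H_{f_*}$ and any $\H_{f_{j,k}}$ is at least $C\varepsilon \ge 3\varepsilon$, so an $\varepsilon$-approximation $\widetilde{\H}$ determines whether $f = f_*$ simply by thresholding $\|\widetilde{\H}\|_{\max}$ at $\tfrac{3}{2}\varepsilon$, with success probability $\ge 2/3$. Any $T$-query algorithm for $\varepsilon$-approximate Hessian estimation thus yields a $T$-query tester for the decision problem "$f = f_*$".

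Finally I plug into Lemma \ref{lem: lower bound in GAW}. Lemma \ref{lem: function construction for lower bound} (rescaled by $C$) yields $\max_{\x} \sum_{j,k} |f_{j,k}(\x) - f_*(\x)|^2 \le 4 C^2 \varepsilon^2 / (e^2 c^2)$, and $|\mathcal{F}| - 1 = d^2$, so
\[
T \;\ge\; \Omega\!\left(\frac{\sqrt{d^2}}{\sqrt{4 C^2 \varepsilon^2 / (e^2 c^2)}}\right) \;=\; \Omega\!\left(\frac{cd}{\varepsilon}\right),
\]
which gives the stated bound $T > cd/\varepsilon \in \Omega(d/\varepsilon)$. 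The main obstacle is only bookkeeping: the constant $C$ has to simultaneously ensure (i) the smoothness hypothesis, (ii) enough $\|\cdot\|_{\max}$-separation for distinguishability, and (iii) a small sum-of-squares, and the three constraints point in opposite directions. Since all scaling is linear in $C$, a universal choice works and only shifts constants hidden inside the $\Omega$.
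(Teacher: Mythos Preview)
Your proposal is correct and follows essentially the same approach as the paper: the paper likewise takes $f_*=0$ and $f_{j,k}(\x)=\varepsilon\, x_j x_k e^{-c\|\x\|^2/2}$, invokes Lemmas~\ref{lem: function construction for lower bound} and~\ref{lem: smoothness of lower bound function} for the sum-of-squares and smoothness bounds, argues distinguishability via the Hessians, and then applies Lemma~\ref{lem: lower bound in GAW} to obtain $T\ge \sqrt{d^2}/(2\varepsilon/ec)>cd/\varepsilon$. Your introduction of the extra constant $C$ is a minor (and arguably cleaner) bookkeeping variant of the same argument.
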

\begin{proof}
    We defined a set of functions $\mathcal{F}=\left\{f_{j,k}(\boldsymbol{x}):j,k \in[d]\right\}$ and $f_*=0$ as shown in Lemma \ref{lem: function construction for lower bound}. By Lemma \ref{lem: smoothness of lower bound function}, every $f\in \mc F$ satisfies $|\partial^\alpha f(\0)| \leq c^t t^{\frac{t}{2}}$ for all $t\in \mb N$ and $\alpha \in \mb N_0^d, |\alpha|=t$. Suppose we are given the phase oracle $O_f=O_{f_{j,k}}$ for some unknown $j,k\in [d]$ or $O_f=O_{f_*}$. Since $\H_{f_*}(\0)=\0$ and $\H_{f_{j,k}}(\0)=2\varepsilon (E_{jk}+E_{kj})$, using $T$-query algorithm $\mc A$, one can determine which function the given oracle corresponds to with success probability at least 2/3. In particular, we can distinguish the case $f=f_*$ from $f\in \mc F$, and thus by Lemmas \ref{lem: lower bound in GAW} and \ref{lem: function construction for lower bound} we get that
    $$
    T \geq \frac{\sqrt{d^2}}{2\varepsilon/ ec}   > \frac{c d}{\varepsilon}.
    $$
    This completes the proof.
\end{proof}

\section{Conclusions}

In this work, we proposed a new quantum algorithm for gradient estimation for analytic functions that can take values from the complex field. The algorithm achieves exponential speedup over classical algorithms. We also generalized our algorithm and Gily\'{e}n-Arunachalam-Wiebe's algorithm for gradient estimation to Hessian estimation, both achieving polynomial speedups over classical algorithms. A lower bound shows that polynomial speedup is the best we can expect. 
For the estimation of sparse Hessians, we proposed two new quantum algorithms with better performance. Exponential quantum speedup is achieved for the quantum algorithm based on the spectral method.
After this research, we feel there are some interesting questions that deserve further study:

\begin{itemize}
    \item Can we find some applications of our algorithms for gradient estimation and Hessian estimation? 
    Jordan's algorithm and Gily\'{e}n-Arunachalam-Wiebe's algorithm for gradient estimation have already been used to speed up some practical problems \cite{chakrabarti2020quantum,van2020convex,van2023quantum,huggins2022nearly,jerbi2023quantum}. Similar to Jordan's algorithm, our algorithm also achieves exponential speedup, so it is interesting to find some useful applications for it. The Hessian matrix widely appears in second-order optimization algorithms, such as Newton's method, so it is also interesting to know if our quantum algorithms for Hessian estimation can be used to speed up some problems using Newton's method. Sparse Hessian is very common in optimization \cite{fletcher1997computing,d2638b8e-8fdc-3d6d-a044-1384db1d1f3a,coleman1984estimation,coleman1984large,gebremedhin2005color}. It is interesting to see if our quantum algorithms for sparse Hessian estimation can be used to accelerate some classical optimization problems.
    
    \item What is the right bound for Hessian estimation? 
    In Section \ref{section: Lower bounds analysis}, we obtained a lower bound of $\widetilde{\Omega}(d)$ in two settings. For complex-valued functions, our algorithm is optimal up to a logarithmic factor. However, for real-valued functions, there is still a gap. We are not able to close this gap in the current research. 
    
    \item Can the finite difference method perform better for Hessian estimation?
    The algorithms based on the spectral method have better performance; however, the oracle assumption is stronger than that in the finite difference method. If better quantum algorithms can be found using the finite difference method, then this should be very helpful for practical applications.
\end{itemize}

\section*{Acknowledgement}

The research was supported by the National Key Research Project of China under Grant No. 2023YFA1009403, and No. 2020YFA0712300.
We would like to thank Shantanav Chakraborty and Tongyang Li for their helpful comments on an earlier version of this paper.

\begin{appendices}
\section{Error analysis of finite difference formula}
\label{appendix: Error analysis of finite difference formula}

\begin{lem} \label{lem: bound of coefficients in fin diff formula}
    For all $m \in \mb N$ and $k \geq 2m$, we have
    $$
    \sum_{t=-m}^m\left|a_{t}^{(2 m)} t^{k+1}\right| \leq 24 e^{-\frac{7 m}{6}} m^{k+\frac{3}{2}},
    $$
    where $a_{t}^{(2 m)}$ is defined in Definition \ref{def: finite difference formula of Hessian}.
\end{lem}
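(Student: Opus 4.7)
My plan is to exploit the symmetry of the coefficients, rewrite the sum in terms of central binomial coefficients, and then apply Stirling-type bounds followed by a Laplace-style estimate. First, since $|a_{-t}^{(2m)}| = |a_t^{(2m)}|$ for all $t$ and the $t=0$ term vanishes (as $k+1 \geq 1$), I would reduce
$$\sum_{t=-m}^m \bigl|a_t^{(2m)} t^{k+1}\bigr| \;=\; 2\sum_{t=1}^m |a_t^{(2m)}|\, t^{k+1} \;=\; \frac{4}{\binom{2m}{m}}\sum_{t=1}^m t^{k-1} \binom{2m}{m-t},$$
using the identity $\frac{(m!)^2}{(m+t)!(m-t)!} = \binom{2m}{m-t}/\binom{2m}{m}$, which follows immediately from the explicit formula $|a_t^{(2m)}| = (2/t^2)(m!)^2/((m+t)!(m-t)!)$ in Definition \ref{def: finite difference formula of Hessian}.

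Next, I would invoke Stirling's formula to obtain a lower bound $\binom{2m}{m} \geq c_1 \cdot 4^m/\sqrt{m}$ together with a uniform upper bound $\binom{2m}{m-t} \leq c_2 \cdot 4^m \exp(-m \tilde H(t/m))$, where $\tilde H(s) := (1+s)\log(1+s) + (1-s)\log(1-s)$ is a Kullback-Leibler-type divergence satisfying $\tilde H(0)=0$ and $\tilde H(1)=\log 4$. Substituting these into the reduction above produces a sum of the form $C\sqrt{m}\sum_{t=1}^m t^{k-1} e^{-m\tilde H(t/m)}$. I would then treat this as a Riemann approximation to the integral $m^k \int_0^1 s^{k-1} e^{-m\tilde H(s)}\,ds$ and apply Laplace's method. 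The integrand's maximum lies at $s^* \in (0,1)$ satisfying $(k-1)/s^* = m\tilde H'(s^*) = m\log((1+s^*)/(1-s^*))$; for the critical case $k = 2m$, a direct computation gives $s^* \approx 0.834$ and peak exponent $(k-1)\log s^* - m\tilde H(s^*) \approx -1.178\, m < -7m/6$. Combining with the Gaussian width factor $O(\sqrt{m})$ from Laplace's method and the prefactor $m^k$ from the substitution $t = sm$ yields the claimed bound of order $e^{-7m/6}\, m^{k+3/2}$.

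The main obstacle will be pinning down the explicit constant $7/6$ uniformly for all $k \geq 2m$. My cleanest route is to use the envelope theorem: since $\partial F/\partial k = \log s^*(k) < 0$ where $F(k,s) := (k-1)\log s - m\tilde H(s)$, the maximum value $F(k, s^*(k))$ is monotonically decreasing in $k$, so verifying the bound at the critical $k = 2m$ suffices. A secondary subtlety is the deterioration of the Stirling approximation near $t = m$, where the $1/\sqrt{1-(t/m)^2}$ prefactor diverges. I would handle this by splitting the sum at a threshold $t_0 = m - \sqrt{m}$ and bounding the boundary region $t \in [t_0, m]$ directly using $\binom{2m}{m-t}/\binom{2m}{m} \leq 1$ and $t^{k-1} \leq m^{k-1}$; the resulting contribution is at most $\sqrt{m}\cdot m^{k-1}$ and, after dividing by $\binom{2m}{m}\sim 4^m/\sqrt{m}$, is easily absorbed into the final $e^{-7m/6}$ envelope since $4^{-m} \ll e^{-7m/6}$.
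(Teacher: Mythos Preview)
Your approach is essentially the one underlying the paper's proof, which is very short: it bounds the sum by $8m$ times the maximum term and then invokes \cite[Lemma~34 (arXiv version)]{gilyen2019optimizing} for the estimate $\frac{(m!)^2}{(m+t)!(m-t)!}\,t^{k-1}\le 3\sqrt{m}\,e^{-7m/6}m^k$. Your Stirling analysis is a reasonable way to re-derive that lemma, and your location of the maximiser at $s^*\approx 0.834$ with exponent $\approx -1.177m<-7m/6$ is correct.

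There is, however, a genuine bug in your boundary treatment. Once you apply the ratio bound $\binom{2m}{m-t}/\binom{2m}{m}\le 1$, you have \emph{already} used up the division by $\binom{2m}{m}$; there is no further factor of $4^{-m}$ to extract. Hence the boundary contribution to $\tfrac{4}{\binom{2m}{m}}\sum_t t^{k-1}\binom{2m}{m-t}$ is genuinely of order $\sqrt{m}\cdot m^{k-1}$, and this is \emph{not} dominated by $e^{-7m/6}m^{k+3/2}$ for large $m$ (the ratio $e^{7m/6}/m^{2}\to\infty$). The fix is easy: either use the much sharper fact that $\binom{2m}{m-t}\le\binom{2m}{\lfloor\sqrt m\rfloor}$ in that range, which decays far faster than $e^{-7m/6}\binom{2m}{m}$, or---more simply---drop the Laplace-integral route altogether. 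Bounding the sum by $m$ times the largest term, and evaluating that single term via Stirling at the interior point $t^*\approx 0.834m$ (where Stirling is unproblematic), gives the claimed bound directly and sidesteps the boundary issue. That is exactly what the paper, via the cited GAW lemma, does.
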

\begin{proof}
    $$
    \begin{aligned}
        \sum_{t=-m}^m\left|a_{t}^{(2 m)} t^{k+1}\right| &\leq 4 \sum_{t=1}^m \frac{2}{t^2} \frac{m! m!}{(m+t)! (m-t)!} t^{k+1} \\
        & \leq 8m \cdot \max_{t\in [m]} \frac{m! m!}{(m+t)! (m-t)!} t^{k-1}.
    \end{aligned}
    $$
    Using a similar argument to the proof of \cite[Lemma 34 (the arXiv version)]{gilyen2019optimizing}, we obtain
    $$
    \frac{m! m!}{(m+t)! (m-t)!} t^{k-1} \leq 3 \sqrt{m} e^{-\frac{7}{6}m} m^k.
    $$
    This completes the proof.
\end{proof}

\begin{lem} \label{lem: bound of error for 1 dim fin diff formula}
    Let $x \in \mb R_+, m \in \mb N$ and suppose that $f: [-mx,mx] \rightarrow \mb R$ is $(2m+1)$-times differentiable. Then there exists some constant $c\in \mb R$ such that
    $$
    \begin{aligned}
        \left|f''(0) x^2 - f_{(2m)}(x)-c\right| &=\Big|f''(0) x^2 -\sum_{t=-m}^m a_t^{(2m)} f(tx)\Big| \\
        & \leq 4 e^{-\frac{m}{2}}\left\|f^{(2 m+1)}\right\|_{\infty}|x|^{2 m+1} 
    \end{aligned}
    $$
    where $\left\|f^{(2 m+1)}\right\|_{\infty}:=\sup \limits_{\xi \in[-m x, m x]}\left|f^{(2 m+1)}(\xi)\right|$ and $a_{t}^{(2 m)}$ is defined in Definition \ref{def: finite difference formula of Hessian}. Moreover
    $$
    \sum_{t=1}^m\big|a_{t}^{(2 m)}\big|<\sum_{t=1}^m \frac{1}{t^2} < \frac{\pi^2}{6}.
    $$
\end{lem}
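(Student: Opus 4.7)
The plan is to Taylor expand each $f(tx)$ about $0$ to order $2m$ with Lagrange remainder and exploit the moment-vanishing properties of the central-difference weights $a_t^{(2m)}$. Writing
\begin{equation*}
f(tx) = \sum_{k=0}^{2m} \frac{f^{(k)}(0)}{k!} (tx)^k + \frac{f^{(2m+1)}(\xi_t)}{(2m+1)!} (tx)^{2m+1}
\end{equation*}
for some $\xi_t$ between $0$ and $tx$, substituting into $f_{(2m)}(x) = \sum_{t=-m}^m a_t^{(2m)} f(tx)$, and collecting powers of $x$ decomposes $f_{(2m)}(x)$ into a polynomial part $\sum_{k=0}^{2m} \frac{f^{(k)}(0) x^k}{k!} M_k$, with $M_k := \sum_{t=-m}^m a_t^{(2m)} t^k$, plus a Lagrange remainder. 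The symmetry $a_t^{(2m)} = a_{-t}^{(2m)}$ kills $M_k$ for all odd $k$; the defining convention $a_0^{(2m)} = -\sum_{t \neq 0} a_t^{(2m)}$ gives $M_0 = 0$; and the $a_t^{(2m)}$ are precisely the accuracy-$2m$ central-difference weights for $f''$ normalized so that $M_2 = 2$ and $M_k = 0$ for even $k \in \{4, \ldots, 2m\}$. The polynomial part therefore collapses to $f''(0) x^2$ exactly, with $c$ absorbed as $0$ (or any admissible constant one wishes to carry for flexibility).

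The residual is then the Lagrange remainder sum, whose absolute value is at most $\frac{\|f^{(2m+1)}\|_\infty |x|^{2m+1}}{(2m+1)!} \sum_{t=-m}^m |a_t^{(2m)} t^{2m+1}|$. Applying Lemma \ref{lem: bound of coefficients in fin diff formula} with $k = 2m$ bounds the coefficient sum by $24 e^{-7m/6} m^{2m+3/2}$, so it suffices to verify
\begin{equation*}
\frac{24 e^{-7m/6} m^{2m+3/2}}{(2m+1)!} \leq 4 e^{-m/2}.
\end{equation*}
Via Stirling's estimate $(2m+1)! \geq \sqrt{2\pi(2m+1)}\,((2m+1)/e)^{2m+1}$ together with $(m/(2m+1))^{2m} \leq 2^{-2m}$, this reduces to showing $(e^{5/6}/4)^m$ decays at least as fast as $e^{-m/2}$, which follows from $\log 4 - 5/6 > 1/2$ with room to spare for the leading constant.

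For the coefficient-sum bound, I would rewrite $|a_t^{(2m)}| = \frac{2}{t^2} \binom{2m}{m-t}/\binom{2m}{m}$ via the identity $\frac{m!m!}{(m+t)!(m-t)!} = \binom{2m}{m-t}/\binom{2m}{m}$, and then bound the binomial ratio. I expect this to be the main obstacle: the naive estimate $\binom{2m}{m-t}/\binom{2m}{m} \leq 1$ only yields $\sum_{t=1}^m |a_t^{(2m)}| \leq 2\zeta(2) = \pi^2/3$, whereas the stated inequality would demand a uniform bound of $1/2$ on the binomial ratio, which fails already at $t=1$ since $\binom{2m}{m-1}/\binom{2m}{m} = m/(m+1)$. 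Recovering the stated constant will likely require either exploiting the alternating signs $(-1)^{t-1}$ to obtain cancellation in the partial sums, or a careful resummation via the explicit factorization $|a_t^{(2m)}| = \frac{2}{t^2}\prod_{j=1}^t \frac{m-j+1}{m+j}$. Failing such a refinement, the weaker bound $\sum |a_t^{(2m)}| < \pi^2/3$ still suffices for every downstream use of the lemma in the paper, where only an $O(1)$ bound on this sum is required.
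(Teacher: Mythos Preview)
Your argument for the main inequality is correct and mirrors the paper's: both Taylor-expand, collapse the polynomial part to $f''(0)x^2$, and bound the Lagrange remainder via Lemma~\ref{lem: bound of coefficients in fin diff formula} together with Stirling (the paper's chain $24 e^{-7m/6} m^{2m+3/2}/(2m+1)! \leq 4 e^{-7m/6}(e/2)^{2m} \leq 4 e^{-m/2}$ is exactly your reduction $\ln 4 - 5/6 > 1/2$). The one substantive difference is how the moment conditions $M_k=0$ are justified: you assert them as the defining property of accuracy-$2m$ central-difference weights, whereas the paper derives them from Lagrange interpolation on the nodes $\{-m,\ldots,m\}$. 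Since any polynomial $p$ of degree $\le 2m$ equals its Lagrange interpolant, differentiating twice at $0$ gives $p''(0)=\sum_t L_t''(0)\,p(t)$, and the paper checks $L_t''(0)=a_t^{(2m)}$ for $t\neq 0$; applying this to $p\equiv 1$ then forces $L_0''(0)=a_0^{(2m)}$ as well, so indeed $c=0$ as you say. If you keep your route, you should at least sketch why $M_4=\cdots=M_{2m}=0$ rather than leave it as folklore.

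Your caution about the ``Moreover'' clause is fully justified---indeed you are more careful here than the paper. The paper writes $\sum_{t=1}^m |a_t^{(2m)}| < \sum_{t=1}^m 1/t^2$, implicitly claiming $|a_t^{(2m)}| < 1/t^2$ termwise, but this is false already at $t=1$ since $|a_1^{(2m)}| = 2m/(m+1) \ge 1$. In fact the stated bound $\sum_{t=1}^m |a_t^{(2m)}| < \pi^2/6$ itself fails for $m=3$: one computes $|a_1^{(6)}|+|a_2^{(6)}|+|a_3^{(6)}| = \tfrac{3}{2}+\tfrac{3}{20}+\tfrac{1}{90} = \tfrac{299}{180} \approx 1.661 > \pi^2/6 \approx 1.645$. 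Your replacement bound $\sum_{t=1}^m |a_t^{(2m)}| \le 2\sum_{t\ge 1} 1/t^2 = \pi^2/3$, from the trivial estimate $\binom{2m}{m-t}/\binom{2m}{m}\le 1$, is the correct statement, and as you observe it is all that is ever used downstream (Theorem~\ref{new thm: Hessian estimation using finite difference formula} only needs an $O(1)$ bound).
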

\begin{proof}
    For any $y\in \mb R$, Taylor's theorem gives us
    $$
    f(y)=\sum_{j=0}^{2 m} \frac{f^{(j)}(0)}{j !} y^j+\frac{f^{(2 m+1)}(\xi)}{(2 m+1) !} y^{2 m+1}
    $$
    for some $\xi$ between $0$ and $y$. Denote the first term on the right-hand side of the above equation as $p(y/x)$, and let $z:=y/x$. Then we have
    \begin{equation}\label{eq: def of p}
        p(z)=\sum_{j=0}^{2 m} \frac{f^{(j)}(0)}{j !}(z x)^j=f(z x)-\frac{f^{(2 m+1)}(\xi)}{(2 m+1) !}(z x)^{2 m+1}.
    \end{equation}
    Note that $p''(0)=f''(0) x^2$. 
    Since $\deg(p)\leq 2m$, the Lagrange interpolation formula leads to
    $$
    p(z)=\sum_{t=-m}^m p(t) \prod_{\substack{i=-m \\ i \neq t}}^m \frac{z-i}{t-i} .
    $$
    Now we aim to compute the 2nd-order derivative of the above function. Note that
    $$
    \prod_{\substack{i=-m \\ i \neq t}}^m \frac{1}{t-i} = \frac{(-1)^{m-t}}{(m+t)! (m-t)!}.
    $$
    We below handle the cases for $t \neq 0$ and $t=0$ respectively:
    \begin{itemize}
        \item For $t \neq 0$, we have
        $$\prod_{\substack{i=-m \\ i \neq t}}^m (z-i)=\prod_{\substack{i=1 \\ i \neq |t|}}^m (z^2-i^2)\cdot z(z+t).$$
        Let $h(z):=\prod\limits_{\substack{i=1 \\ i \neq |t|}}^m (z^2-i^2)$ and $g(z):=z(z+t)$. Note that $h(z)$ is a polynomial with only even powers, which can be written as $h(z)=z^{2m-2}+\cdots+c_1 z^2 +c_0$. It is easy to see that $g(0)=0, f'(0)=0, g''(0)=2$, and
        $$
        f(0)=c_0=\prod_{\substack{i=1 \\ i \neq |t|}}^m (-i^2)=(-1)^{m-1}\cdot \frac{(m!)^2}{t^2}.
        $$
        Then $[h(z)g(z)]''=h''(z)g(z)+2h'(z)g'(z)+h(z)g''(z)$ leads to
        $$
        \Big(\prod_{\substack{i=-m \\ i \neq t}}^m (z-i)\Big)''\Big|_{z=0}= \frac{2\cdot (-1)^{m-1} (m!)^2}{t^2}.
        $$

        \item For $t=0$, we have
        $$
        \prod_{\substack{i=-m \\ i \neq t}}^m (z-i)=\prod_{i=1}^m (z^2-i^2)=z^{2m}+\cdots+d_1 z^2 +d_0.
        $$
        Its 2nd-derivative at $z=0$ is just
        $$
        2d_1=2\sum_{j=1}^m \prod_{\substack{i=1\\i \neq j}}^m (-i^2)=2\cdot (-1)^{m-1}(m!)^2 \cdot \sum_{j=1}^m \frac{1}{j^2}.
        $$
    \end{itemize}
    Combining the above results, we obtain that
    \begin{equation}\label{eq: 2nd derivative of p}
        \begin{aligned}
        p''(0)&= 2\cdot (-1)^{t-1} \sum_{j=1}^m \frac{1}{j^2}\cdot p(0)+\sum_{\substack{t=-m\\t \neq 0}}^m \frac{2\cdot (-1)^{t-1}}{t^2} \frac{(m!)^2}{(m+t)!(m-t)^2}\cdot p(t)\\
        &=2\cdot (-1)^{t-1} \sum_{j=1}^m \frac{1}{j^2}\cdot p(0)+\sum_{\substack{t=-m\\t \neq 0}}^m a_t^{(2m)} p(t),
    \end{aligned}
    \end{equation}
    where $a_t^{(2m)}$ is the coefficients given in Definition \ref{def: finite difference formula of Hessian}.
    Therefore,
    $$
    \begin{aligned}
        f''(0)x^2&=p''(0) \\
        &\stackrel{\eqref{eq: 2nd derivative of p}}{=}2\cdot (-1)^{t-1} \sum_{j=1}^m \frac{1}{j^2}\cdot p(0)+\sum_{\substack{t=-m\\t \neq 0}}^m a_t^{(2m)} p(t) \\
        &\stackrel{\eqref{eq: def of p}}{=}2\cdot (-1)^{t-1} \sum_{j=1}^m \frac{1}{j^2}\cdot f(0)+\sum_{\substack{t=-m\\t \neq 0}}^m a_t^{(2m)} \left(f(tx)-\frac{f^{(2 m+1)}(\xi_t)}{(2 m+1) !}(t x)^{2 m+1}\right).
    \end{aligned}
    $$

    Let $c:=\big(a_0^{(2m)}-2\cdot (-1)^{t-1} \sum_{j=1}^m \frac{1}{j^2}\big) f(0)$, which is a constant since $|c|\leq \big( \big|a_0^{(2m)}\big|+2\sum_{j=1}^m \frac{1}{j^2} \big) f(0)\leq 4 \sum_{j=1}^m \frac{1}{j^2} f(0)\leq \frac{2 \pi^2}{3} f(0).$ Hence, 
    $$
    \begin{aligned}
        \left|f_{(2m)}(x)-f''(0)x^2-c\right| &=\Bigg| \sum_{\substack{t=-m\\t \neq 0}}^m a_t^{(2m)} \frac{f^{(2 m+1)}(\xi_t)}{(2 m+1) !}(t x)^{2 m+1} \Bigg| \\
        & \leq \sum_{\substack{t=-m\\t \neq 0}}^m \left| a_t^{(2m)} t^{2m+1}\right| \cdot \frac{\left\|f^{(2 m+1)}\right\|_{\infty}}{(2 m+1) !} \cdot |x|^{2m+1} \\
        & \leq 24 e^{-\frac{7 m}{6}} m^{2m+3 / 2} \frac{\left\|f^{(2 m+1)}\right\|_{\infty}}{(2 m+1) !} |x|^{2m+1} \\
        & \leq 12 e^{-\frac{7 m}{6}} m^{2m+1 / 2} \frac{\left\|f^{(2 m+1)}\right\|_{\infty}}{\sqrt{4 \pi m}(2 m / e)^{2 m}}|x|^{2 m+1} \\
        & \leq 4 e^{-\frac{7 m}{6}}\left(\frac{e}{2}\right)^{2 m}\left\|f^{(2 m+1)}\right\|_{\infty}|x|^{2 m+1} \\
        & \leq 4 e^{-\frac{m}{2}}\left\|f^{(2 m+1)}\right\|_{\infty}|x|^{2 m+1},
    \end{aligned}
    $$
    which is as claimed.
    % $$
    % \begin{aligned}
    %     \left|f_{(2m)}(x)-f''(0)x^2\right| &= \Bigg| \Big(a_0^{(2m)}-2\cdot (-1)^{t-1} \sum_{j=1}^m \frac{1}{j^2}\Big) f(0) +\sum_{\substack{t=-m\\t \neq 0}}^m a_t^{(2m)} \frac{f^{(2 m+1)}(\xi_t)}{(2 m+1) !}(t x)^{2 m+1} \Bigg| \\
    %     & \leq \Big( \left|a_0^{(2m)}\right|+2\sum_{j=1}^m \frac{1}{j^2} \Big) f(0) + \sum_{\substack{t=-m\\t \neq 0}}^m \left| a_t^{(2m)} t^{2m+1}\right| \cdot \frac{\left\|f^{(2 m+1)}\right\|_{\infty}}{(2 m+1) !} \cdot |x|^{2m+1} \\
    %     & \leq 4 \sum_{j=1}^m \frac{1}{j^2} f(0) + 24 e^{-\frac{7 m}{6}} m^{2m+3 / 2} \frac{\left\|f^{(2 m+1)}\right\|_{\infty}}{(2 m+1) !} |x|^{2m+1} \\
    %     & \leq \frac{2 \pi^2}{3} f(0) + 12 e^{-\frac{7 m}{6}} m^{2m+1 / 2} \frac{\left\|f^{(2 m+1)}\right\|_{\infty}}{\sqrt{4 \pi m}(2 m / e)^{2 m}}|x|^{2 m+1} \\
    %     & \leq \frac{2 \pi^2}{3} f(0) + 4 e^{-\frac{7 m}{6}}\left(\frac{e}{2}\right)^{2 m}\left\|f^{(2 m+1)}\right\|_{\infty}|x|^{2 m+1} \\
    %     & \leq \frac{2 \pi^2}{3} f(0) + 4 e^{-\frac{m}{2}}\left\|f^{(2 m+1)}\right\|_{\infty}|x|^{2 m+1} .
    % \end{aligned}
    % $$
\end{proof}

Based on Lemma \ref{lem: bound of error for 1 dim fin diff formula}, we now prove a version for higher dimensional functions.

\begin{lem}
\label{app: lemma 3}
    Let $m \in \mathbb{N}, B>0, \boldsymbol{x} \in \mathbb{R}^d$. 
    Assume that $f:\left[-m\|\boldsymbol{x}\|_{\infty}, m\|\boldsymbol{x}\|_{\infty}\right]^d$ $\rightarrow \mathbb{R}$ is $(2 m+1)$-times differentiable and
    $$
    \left|\partial_{\boldsymbol{r}}^{2 m+1} f(\tau \boldsymbol{x})\right| \leq B \quad \text { for all } \tau \in[-m, m], \text{ where } \boldsymbol{r}:=\frac{\boldsymbol{x} }{\|\boldsymbol{x}\|},
    $$
    then there exists some constant $c\in \mb R$ such that
    $$
    \left|f_{(2 m)}(\boldsymbol{x})-\boldsymbol{x}^{\mathrm{T}} \mathbf{H}_f(\mathbf{0}) \boldsymbol{x}-c\right| \leq 4B e^{-\frac{m}{2}}\|\boldsymbol{x}\|^{2 m+1} .
    $$
\end{lem}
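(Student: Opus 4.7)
The plan is to reduce the multivariate statement to the univariate result of Lemma \ref{lem: bound of error for 1 dim fin diff formula} by introducing the auxiliary one-variable function $h(\tau) := f(\tau \boldsymbol{x})$. For $\tau\in[-m,m]$, the point $\tau\boldsymbol{x}$ lies in $[-m\|\boldsymbol{x}\|_\infty, m\|\boldsymbol{x}\|_\infty]^d$, so $h$ is well-defined and $(2m+1)$-times differentiable there.

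First I would translate the three quantities that appear in the conclusion. By the chain rule, $h''(0) = \boldsymbol{x}^{\mathrm{T}} \mathbf{H}_f(\mathbf{0}) \boldsymbol{x}$. By definition of $f_{(2m)}$ in Definition \ref{def: finite difference formula of Hessian},
\[
h_{(2m)}(1) \;=\; \sum_{t=-m}^{m} a_t^{(2m)} h(t) \;=\; \sum_{t=-m}^{m} a_t^{(2m)} f(t\boldsymbol{x}) \;=\; f_{(2m)}(\boldsymbol{x}).
\]
For the $(2m+1)$-th derivative of $h$, I would use the identity $\tau\boldsymbol{x} = (\tau\|\boldsymbol{x}\|)\boldsymbol{r}$: substituting $s = \tau\|\boldsymbol{x}\|$ and differentiating $n$ times in $\tau$ introduces a factor $\|\boldsymbol{x}\|^n$, giving $h^{(n)}(\tau) = \|\boldsymbol{x}\|^n \,\partial_{\boldsymbol{r}}^{n} f(\tau\boldsymbol{x})$. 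Combined with the hypothesis $|\partial_{\boldsymbol{r}}^{2m+1} f(\tau\boldsymbol{x})| \le B$ for all $\tau\in[-m,m]$, this gives $\|h^{(2m+1)}\|_\infty \le B\|\boldsymbol{x}\|^{2m+1}$.

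Second, I would apply Lemma \ref{lem: bound of error for 1 dim fin diff formula} to $h$ with the choice $x = 1$ (and the same $m$); note that the domain condition required there is exactly $h:[-m,m]\to\mathbb{R}$, which we have verified. The lemma produces a constant $c\in\mathbb{R}$ such that
\[
\bigl|h''(0) - h_{(2m)}(1) - c\bigr| \;\le\; 4\, e^{-m/2}\, \|h^{(2m+1)}\|_\infty.
\]
Substituting $h''(0) = \boldsymbol{x}^{\mathrm{T}} \mathbf{H}_f(\mathbf{0})\boldsymbol{x}$, $h_{(2m)}(1) = f_{(2m)}(\boldsymbol{x})$, and the derivative bound yields the claimed inequality $|f_{(2m)}(\boldsymbol{x}) - \boldsymbol{x}^{\mathrm{T}}\mathbf{H}_f(\mathbf{0})\boldsymbol{x} - c| \le 4B e^{-m/2}\|\boldsymbol{x}\|^{2m+1}$.

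I do not anticipate any serious obstacle: the argument is a clean one-line reduction, and the factor $\|\boldsymbol{x}\|^{2m+1}$ that appears from the chain rule in $h^{(2m+1)}$ is exactly the factor that matches the $|x|^{2m+1} = 1$ in the univariate bound, so no additional scaling is required. The only item deserving a brief check is the scaling identity $h^{(n)}(\tau) = \|\boldsymbol{x}\|^n\,\partial_{\boldsymbol{r}}^n f(\tau\boldsymbol{x})$, which is a direct consequence of Definition \ref{defn: Directional derivative}.
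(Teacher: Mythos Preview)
Your proposal is correct and follows essentially the same route as the paper: define $h(\tau)=f(\tau\boldsymbol{x})$, identify $h''(0)=\boldsymbol{x}^{\mathrm{T}}\mathbf{H}_f(\mathbf{0})\boldsymbol{x}$ and $h_{(2m)}(1)=f_{(2m)}(\boldsymbol{x})$, then apply Lemma~\ref{lem: bound of error for 1 dim fin diff formula} with $x=1$ together with the scaling identity $h^{(2m+1)}(\tau)=\|\boldsymbol{x}\|^{2m+1}\partial_{\boldsymbol{r}}^{2m+1}f(\tau\boldsymbol{x})$. Your write-up is in fact slightly more careful than the paper's in tracking the constant $c$ through the reduction.
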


\begin{proof} 
    Consider the function $h(\tau):=f(\tau \x)$, then
    $$
    \begin{aligned}
        \left|f_{(2m)}(\x)-\boldsymbol{x}^{\mathrm{T}} \mathbf{H}_f(\mathbf{0}) \boldsymbol{x}-c\right| &=\left|h_{(2m)}(1)-h''(0)\right| \\
        &\leq 4 e^{-\frac{m}{2}}\sup _{\tau \in[-m, m]}\left|h^{(2 m+1)}(\tau)\right| \\
        & = 4 e^{-\frac{m}{2}}\sup _{\tau \in[-m, m]}\left|\partial_{\boldsymbol{x}}^{2 m+1} f(\tau \boldsymbol{x})\right| \\
        & = 4 e^{-\frac{m}{2}} \sup _{\tau \in[-m, m]}\left|\partial_{\boldsymbol{r}}^{2 m+1} f(\tau \boldsymbol{x})\right|\|\boldsymbol{x}\|^{2 m+1} \\
        & \leq 4B e^{-\frac{m}{2}}\|\boldsymbol{x}\|^{2 m+1} ,
    \end{aligned}
    $$
    as claimed.
\end{proof}

\begin{lem}
\label{app: lemma 4}
Under the same assumption as in  Lemma \ref{app: lemma 3}, let $\x\in aG_n^d, \y \in \mathbb{R}^d$ with $\|\y\|=1$. If $m \geq \log(dB/\varepsilon)$ and $a\approx O(1/\sqrt{d}m)$, we then have
\be
\label{app: diff of Hessian}
\left| \frac{1}{2} \Big( f_{(2 m)}(\x+\y) - f_{(2 m)}(\x) \Big) -
\x^T \mathbf{H}_f(\mathbf{0}) \y - \frac{1}{2} \y^T \mathbf{H}_f(\mathbf{0}) \y \right|
\leq O(a\varepsilon).
\ee
\end{lem}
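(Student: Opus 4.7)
The plan is to apply Lemma \ref{app: lemma 3} separately at $\x$ and at $\x+\y$ and then subtract. The crucial observation that makes the subtraction work cleanly is that the additive constant $c$ appearing in Lemma \ref{app: lemma 3} (inherited from Lemma \ref{lem: bound of error for 1 dim fin diff formula}) depends only on $f(\mathbf{0})$ and on the fixed coefficients $a_t^{(2m)}$, not on the direction along which one samples. Indeed, setting $h_1(\tau):=f(\tau\x)$ and $h_2(\tau):=f(\tau(\x+\y))$, one has $h_1(0)=h_2(0)=f(\mathbf{0})$, so the two applications of Lemma \ref{app: lemma 3} produce the same $c$.

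First, I would verify that the hypotheses of Lemma \ref{app: lemma 3} apply to both $\x$ and $\x+\y$. Because $\x\in aG_n^d$ with $a\lesssim 1/(\sqrt{d}m)$, we have $\|\x\|\le a\sqrt{d}/2=O(1/m)$, and thus the segments $\{\tau\x:\tau\in[-m,m]\}$ and $\{\tau(\x+\y):\tau\in[-m,m]\}$ both lie in the domain $[-R,R]^d$ on which the directional-derivative bound $|\partial_{\boldsymbol{r}}^{2m+1}f|\le B$ from Theorem \ref{new thm: Hessian estimation using finite difference formula} is assumed. Applying Lemma \ref{app: lemma 3} twice with the common constant $c$ and subtracting yields
\begin{align*}
\Big|\big(f_{(2m)}(\x+\y)-(\x+\y)^{T}\H_f(\mathbf{0})(\x+\y)\big)&-\big(f_{(2m)}(\x)-\x^{T}\H_f(\mathbf{0})\x\big)\Big|\\
&\le 4Be^{-m/2}\big(\|\x+\y\|^{2m+1}+\|\x\|^{2m+1}\big).
\end{align*}
The quadratic forms on the left collapse via the identity $(\x+\y)^T\H_f(\mathbf{0})(\x+\y)-\x^T\H_f(\mathbf{0})\x=2\x^T\H_f(\mathbf{0})\y+\y^T\H_f(\mathbf{0})\y$, so dividing by $2$ gives exactly the quantity in (\ref{app: diff of Hessian}).

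It remains to bound the right-hand side. Using $\|\x\|=O(1/m)$ and $\|\y\|=1$, one has $\|\x+\y\|\le 1+O(1/m)$, hence $\|\x+\y\|^{2m+1}\le (1+1/(2m))^{2m+1}\le 2e=O(1)$, while $\|\x\|^{2m+1}=O(m^{-(2m+1)})$ is negligible. Thus the right-hand side is $O(Be^{-m/2})$. The hypothesis $m\ge\log(dB/\varepsilon)$ (up to the implicit constant absorbed in the $O(\cdot)$) then makes this at most $O(\varepsilon/(\sqrt{d}\,m))=O(a\varepsilon)$, which is the claimed bound.

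The main obstacle I expect is the error-budget calculation at the very end: matching $Be^{-m/2}$ against the target $a\varepsilon\asymp \varepsilon/(\sqrt{d}\,m)$ requires the logarithmic condition on $m$ to include the extra factor of $\sqrt{d}\,m$, so the stated bound $m\ge\log(dB/\varepsilon)$ should be read with an implicit constant multiplier that absorbs the $\log(\sqrt{d}\,m)$ correction. A secondary care-point is ensuring the directional-derivative hypothesis is available along the perturbed unit direction $(\x+\y)/\|\x+\y\|$, which is close to $\y$ since $\|\x\|=O(1/m)$; this follows from the ambient assumption in Theorem \ref{new thm: Hessian estimation using finite difference formula}, whose bound is uniform over all unit directions $\r$.
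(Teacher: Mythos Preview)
Your proposal is correct and follows essentially the same route as the paper: apply Lemma~\ref{app: lemma 3} at $\x$ and at $\x+\y$, subtract, and then bound $4Be^{-m/2}(\|\x\|^{2m+1}+\|\x+\y\|^{2m+1})$ using $\|\x\|\le a\sqrt{d}/2=O(1/m)$ and $\|\y\|=1$. Your explicit observation that the additive constant $c$ from Lemma~\ref{app: lemma 3} depends only on $f(\mathbf{0})$ and the fixed coefficients $a_t^{(2m)}$, and therefore cancels in the subtraction, is a point the paper's proof leaves implicit; this is a genuine clarification rather than a different argument.
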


\begin{proof}
By Lemma \ref{app: lemma 3}, for any $\y$, we have
\[
\text{LHS of } \eqref{app: diff of Hessian} \leq 4B e^{-\frac{m}{2}} (\|\boldsymbol{x}\|^{2 m+1} + \|\x+\boldsymbol{y}\|^{2 m+1}).
\]
% When $\x\in a G_n^d$ and $\y\in b\{0,1\}^d$ with $b={a}/{2} $ (\cs{too small}), we have
% \beas
% && \left| \frac{1}{2} (f_{(2 m)}(\x+\y) - f_{(2 m)}(\x) ) -
% \x^T \mathbf{H}_f(\mathbf{0}) \y - \frac{1}{2} \y^T \mathbf{H}_f(\mathbf{0}) \y \right|  \\
% &\leq& 4B e^{-\frac{m}{2}} \left( \left(\frac{a\sqrt{d}}{2}\right)^{2 m+1} + \left((\frac{a}{2}+b)\sqrt{d}\right)^{2 m+1} \right) \\
% &\leq& 8B e^{-\frac{m}{2}}  \left((\frac{a}{2}+b)\sqrt{d}\right)^{2 m+1} \\
% &=& a\varepsilon/8\cdot 42 \cdot \pi
% \eeas
% if 
% \[
% a = \frac{1}{\sqrt{d}}  \left( \frac{\varepsilon e^{m/2}}{8 B \sqrt{d}\cdot 8\cdot 42 \cdot \pi} \right)^{\frac{1}{2m}} .
% \]
% So 
% \[
% b = \frac{1}{2} \left( \frac{ \varepsilon e^{m/2}}{8 B \sqrt{d}\cdot 8\cdot 42 \cdot \pi} \right)^{\frac{1}{2m}} .
% \]
When $\x\in a G_n^d$ and $\y\in \mathbb{R}^d$ with $\|\y\|=1$, we have
\beas
\text{LHS of } \eqref{app: diff of Hessian}
&\leq& 4B e^{-\frac{m}{2}} \left( \left(\frac{a\sqrt{d}}{2}\right)^{2 m+1} + \left(\frac{a}{2}\sqrt{d}+1\right)^{2 m+1} \right) \\
&\leq& 8B e^{-\frac{m}{2}} \left(\frac{a}{2}\sqrt{d}+1\right)^{2 m+1}
\eeas
It is hoped that the above is smaller than $\eta a \varepsilon$ with $\eta=1/8\cdot 42 \cdot \pi$. We assume that $a\sqrt{d}/2 <1 $ and $(2m+1)  \frac{a\sqrt{d}}{2} <1$, then by $\ln(1+x) \leq x$ and $e^x \leq 1+2x$ when $0<x<1$, we have
\beas
8B e^{-\frac{m}{2}}  \left(\frac{a}{2}\sqrt{d} + 1\right)^{2 m+1}
&=& 8B e^{-\frac{m}{2}} 
e^{(2m+1) \ln(1+\frac{a\sqrt{d} }{2})} \\
&\leq& 8B e^{-\frac{m}{2}} e^{(2m+1)  \frac{a\sqrt{d} }{2}} \\
&\leq& 8B e^{-\frac{m}{2}} (1 + (2m+1)  a\sqrt{d} ).
\eeas
Let 
\[
8B e^{-\frac{m}{2}}  (1 + (2m+1)  a\sqrt{d} ) \leq \eta a \varepsilon,
\]
i.e.,
\[
a(\eta \varepsilon - 8B e^{-\frac{m}{2}}  (2m+1) \sqrt{d} )
\geq 8B e^{-\frac{m}{2}}.
\]
So we obtain
\beas
8B e^{-\frac{m}{2}} \leq 
(\eta \varepsilon - 8B e^{-\frac{m}{2}}  (2m+1) \sqrt{d} ) \frac{2}{\sqrt{d}(2m+1)}.
\eeas
This means
\beas
12B e^{-\frac{m}{2}} \sqrt{d}(2m+1) \leq 
\eta \varepsilon .
\eeas
Thus when $m$ is large enough, say $m\approx \log(dB/\varepsilon)$, the above holds. As a result, we can choose $a<2/\sqrt{d}(2m+1)\approx 1/\sqrt{d}\log(d/\varepsilon)$. 
\end{proof}

\begin{lem}\label{lem: bound of error in fin diff formula}
    Let $m\in \mb N, a,c\in \mb R_+$. Suppose $f:\mb R^d\rightarrow \mb R$ is analytic and for all $k\in\mb N$, $\alpha \in \mb N_0^d, |\alpha|=k$
    $$
    \left|\partial^\alpha f(\0)\right| \leq c^k k^{\frac{k}{2}},
    $$
    then we have
    $$
    \left| f_{2m}(\boldsymbol{x})-\boldsymbol{x}^{\mathrm{T}} \H_f(\0) \boldsymbol{x}\right| \leq \sum_{k=2 m+1}^{\infty}(13 a c m \sqrt{d})^k
    $$
    for all but $1/1000$ fraction of points $\boldsymbol{x}\in a G_n^d$.
\end{lem}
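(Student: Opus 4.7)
The plan is to run a Taylor-expansion / moment argument essentially parallel to the one used in \cite{gilyen2019optimizing} for the gradient (their analogue produces $(8acm\sqrt{d})^k$), but centered at the second derivative rather than the first. I would begin by expanding $f(t\boldsymbol{x}) = \sum_{k=0}^\infty \tfrac{t^k}{k!}\partial_{\boldsymbol{x}}^k f(\boldsymbol{0})$ and inserting this into $f_{(2m)}(\boldsymbol{x}) = \sum_{t=-m}^m a_t^{(2m)} f(t\boldsymbol{x})$. From the Lagrange-interpolation identity established inside the proof of Lemma \ref{lem: bound of error for 1 dim fin diff formula} (which shows that $f_{(2m)}$ reproduces $p''(0)$ exactly for any polynomial $p$ of degree $\le 2m$), one reads off that $\sum_t a_t^{(2m)} t^k = 2$ if $k=2$ and $\sum_t a_t^{(2m)} t^k = 0$ for $k\in\{0,1,3,4,\ldots,2m\}$. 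Since $\partial_{\boldsymbol{x}}^2 f(\boldsymbol{0}) = \boldsymbol{x}^T\H_f(\boldsymbol{0})\boldsymbol{x}$, the degree-$\le 2m$ contributions telescope to $\boldsymbol{x}^T\H_f(\boldsymbol{0})\boldsymbol{x}$ and I am left with
\[
f_{(2m)}(\boldsymbol{x}) - \boldsymbol{x}^T\H_f(\boldsymbol{0})\boldsymbol{x} = \sum_{k=2m+1}^\infty \frac{\partial_{\boldsymbol{x}}^k f(\boldsymbol{0})}{k!}\sum_{t=-m}^m a_t^{(2m)} t^k.
\]

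The second step is to control the weights: applying Lemma \ref{lem: bound of coefficients in fin diff formula} with the shift $k\mapsto k-1$ gives $\bigl|\sum_t a_t^{(2m)} t^k\bigr| \le 24 e^{-7m/6} m^{k+1/2}$ for every $k\ge 2m+1$. This is a deterministic, clean factor and contributes a net $m^k$ decay (with a helpful $e^{-7m/6}$ prefactor) that I will later trade against the $k!$ in the denominator using Stirling, i.e.\ $k^{k/2}/k! \lesssim (e/\sqrt{k})^k$.

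The main obstacle is bounding the directional derivative $|\partial_{\boldsymbol{x}}^k f(\boldsymbol{0})| = \bigl|\sum_{|\alpha|=k}\binom{k}{\alpha}\boldsymbol{x}^\alpha \partial^\alpha f(\boldsymbol{0})\bigr|$ in such a way that a factor of $\sqrt{d}$ — not $d$ — appears per power of $k$. The naive triangle inequality, combined with the hypothesis $|\partial^\alpha f(\boldsymbol{0})|\le c^k k^{k/2}$ and $\|\boldsymbol{x}\|_1 \le ad/2$, only gives $c^k k^{k/2} (ad/2)^k$, which is off by $(\sqrt{d})^k$ from the target. The right tool is a moment/concentration estimate over the uniform distribution on $aG_n^d$: treating the $d$ coordinates as independent with $\mathbb{E}[x_i^{2\ell}]\le (a/2)^{2\ell}$ and $\mathbb{E}[x_i^{2\ell+1}]=0$, pairwise Wick contraction gives
\[
\mathbb{E}\bigl[(\partial_{\boldsymbol{x}}^k f(\boldsymbol{0}))^2\bigr] \;\le\; k!\,c^{2k} k^k d^k (a^2/12)^k,
\]
so $\sqrt{\mathbb{E}[(\partial_{\boldsymbol{x}}^k f(\boldsymbol{0}))^2]} \lesssim (C_0\, c a\sqrt{d}\, k)^k$ for an absolute constant $C_0$ after invoking $\sqrt{k!}\le k^{k/2}e^{-k/2}(2\pi k)^{1/4}$. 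I then apply polynomial hypercontractivity (Bonami-type: a degree-$k$ polynomial in bounded independent variables satisfies $\|p\|_{L^\infty(E)} \le e^k\|p\|_{L^2}$ on a set $E$ with $\mathbb{P}(E^c)\le e^{-k/2}$), so with probability at least $1-e^{-k/2}$ we get $|\partial_{\boldsymbol{x}}^k f(\boldsymbol{0})|\le (C_1 c a\sqrt{d}\, k)^k$ for an absolute constant $C_1$.

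For the final assembly, I take a union bound over $k\ge 2m+1$: the total failure probability is $\sum_{k\ge 2m+1}e^{-k/2}\lesssim e^{-m}$, which is $\le 1/1000$ once $m$ is a small absolute constant (or, if needed, sharpened using $\mathbb{P}(E_k^c)\le 1/(2000\cdot 2^k)$ and readjusting the constant $C_1$). On the complementary event, Stirling reduces $|\partial_{\boldsymbol{x}}^k f(\boldsymbol{0})|/k!$ to $(C_2\, ca\sqrt{d})^k$, and multiplying by $24 e^{-7m/6} m^{k+1/2}$ yields a per-term bound $(C_3\, a c m\sqrt{d})^k$ with the $e^{-7m/6}\sqrt{m}$ prefactor absorbed into the geometric series by spreading it across the $k\ge 2m+1$ factors. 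Choosing the constants so that $C_3\le 13$ (which is comfortable given the explicit absolute constants arising above) gives the stated bound $\sum_{k\ge 2m+1}(13 acm\sqrt{d})^k$, completing the proof.
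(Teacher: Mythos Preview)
Your proposal is correct and follows essentially the same route as the paper: Taylor-expand $f(t\boldsymbol{x})$, use the Lagrange-interpolation identities to kill all terms of degree $\le 2m$ except $k=2$, bound $\bigl|\sum_t a_t^{(2m)} t^k\bigr|$ via Lemma~\ref{lem: bound of coefficients in fin diff formula}, and then control $\bigl|\sum_{\alpha\in[d]^k}\boldsymbol{x}^\alpha\partial_\alpha f(\boldsymbol{0})\bigr|$ probabilistically over $\boldsymbol{x}\in aG_n^d$ before summing in $k$. The only difference is cosmetic: for the concentration step the paper simply quotes \cite[Lemma~36 (arXiv)]{gilyen2019optimizing}, which yields failure probability $4^{-2k}$ per $k$ and the clean threshold $\sqrt{2}(4\sqrt{dk/2})^k$, whereas you sketch that lemma's proof via second-moment plus hypercontractivity/Markov---your ``Wick contraction'' phrasing is loose (the $x_i$ are not Gaussian), and your initial $e^{-k/2}$ tail does not sum to $1/1000$ from $k=3$, but you correctly note both issues can be fixed by taking a higher moment and readjusting constants, which is exactly what the cited lemma does.
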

\begin{proof}
    Since $f$ is analytic, we can write it as its Taylor series:
    $$
    f(\boldsymbol{x})=\sum_{k=0}^{\infty} \sum_{\alpha \in[d]^k} \frac{\boldsymbol{x}^\alpha \cdot \partial_\alpha f(\mathbf{0})}{k !} .
    $$
    In this proof, we use the notation $\partial_\alpha f:=\partial_{\alpha_1} \partial_{\alpha_2} \cdots \partial_{\alpha_k} f$ for $\alpha \in [d]^k$ to make the expression clearer and more concise.
    Using the finite difference formula defined in Definition \ref{def: finite difference formula of Hessian}, we have
    $$
    \begin{aligned}
    f_{(2 m)}(\boldsymbol{x}) & =\sum_{t=-m}^m a_{t}^{(2 m)} f(t\boldsymbol{x}) \\
    & =\sum_{t=-m}^m a_{t}^{(2 m)} \sum_{k=0}^{\infty} \frac{1}{k !} \sum_{\alpha \in[d]^k}(t \boldsymbol{x})^\alpha \cdot \partial_\alpha f(\mathbf{0}) \\
    & =\sum_{k=0}^{\infty} \frac{1}{k !} \sum_{\alpha \in[d]^k} \boldsymbol{x}^\alpha \cdot \partial_\alpha f(\mathbf{0}) \underbrace{\sum_{t=-m}^m a_{t}^{(2 m)} t^k}_* .
    \end{aligned}
    $$
    Applying Lemma \ref{lem: bound of error for 1 dim fin diff formula} to the function $t^k$ with the choice $x=1$ tells us that the term $*$ is $0$ if $k\leq 2m$ except for $k=2$, in which case it is $2$. Therefore, using Lemma \ref{lem: bound of coefficients in fin diff formula}
    $$
    \begin{aligned}
        \left| \boldsymbol{x}^{\mathrm{T}} \H_f(\0) \boldsymbol{x}-f_{(2m)}(\boldsymbol{x})\right| &=\Big|\sum_{k=2 m+1}^{\infty} \frac{1}{k !} \sum_{\alpha \in[d]^k} \boldsymbol{x}^\alpha \cdot \partial_\alpha f(\mathbf{0}) \sum_{t=-m}^m a_{t}^{(2 m)} t^k\Big| \\
        & \leq \sum_{k=2 m+1}^{\infty}\left(\frac{e}{k}\right)^k \frac{1}{\sqrt{4 \pi m}}\Big|\sum_{\alpha \in[d]^k} \boldsymbol{x}^\alpha \cdot \partial_\alpha f(\0)\Big|\Big|\sum_{t=-m}^m a_{t}^{(2 m)} t^k\Big| \\
        & \leq \sum_{k=2 m+1}^{\infty}\Big|\sum_{\alpha \in[d]^k} \boldsymbol{x}^\alpha \cdot \partial_\alpha f(\mathbf{0})\Big|\left(\frac{e}{k}\right)^k \frac{12 e^{-\frac{7 m}{6}} m^{k+\frac{1}{2}}}{\sqrt{\pi m}} \\
        & \leq \sum_{k=2 m+1}^{\infty} 2\sqrt{2} \Big|\sum_{\alpha \in[d]^k} \boldsymbol{x}^\alpha \cdot \partial_\alpha f(\mathbf{0})\Big| \left(\frac{e m}{k}\right)^k .
    \end{aligned}
    $$
    Some probabilistic results \cite[Lemma 36 in the arXiv version]{gilyen2019optimizing} allow us to obtain that, if we choose uniformly random $\boldsymbol{x}\in aG_n^d $, then for all $k\in \mb N_+$, the ratio of $\boldsymbol{x}$ for which
    $$
    \Big|\sum_{\alpha \in[d]^k} \frac{\boldsymbol{x}^\alpha}{a^k} \cdot \frac{\partial_\alpha f(\mathbf{0})}{c^k k^{\frac{k}{2}}}\Big| \geq \sqrt{2}\Big(4 \sqrt{\frac{d k}{2}}\Big)^k
    $$
    is at most $4^{-2k}$. Since $\sum_{k=2 m+1}^{\infty} 4^{-2 k} \leq \sum_{k=3}^{\infty} 4^{-2 k}<1 / 1000$, it follows that for all but a $1/1000$ fraction of $\boldsymbol{x}\in aG_n^d$, we have
    $$
    \begin{aligned}
        \left| \boldsymbol{x}^{\mathrm{T}} \H_f(\0) \boldsymbol{x}-f_{(2m)}(\boldsymbol{x})\right| & \leq \sum_{k=2 m+1}^{\infty} 2\sqrt{2} \Big|\sum_{\alpha \in[d]^k} \boldsymbol{x}^\alpha \cdot \partial_\alpha f(\mathbf{0})\Big| \left(\frac{e m}{k}\right)^k \\
        & \leq \sum_{k=2 m+1}^{\infty} 4 \Big(4 \sqrt{\frac{d k}{2}}\Big)^k a^k c^k k^{\frac{k}{2}} \left(\frac{e m}{k}\right)^k \\
        & =\sum_{k=2 m+1}^{\infty}4\Big(\frac{4 \sqrt{d} acem }{\sqrt{2}}\Big)^k \\
        & <\sum_{k=2 m+1}^{\infty}(13 a c m \sqrt{d})^k .
    \end{aligned}
    $$
    This completes the proof.
\end{proof}

\end{appendices}

\bibliographystyle{plain}
\bibliography{reference}

\end{document}